\def\showauthornotes{0}
\def\stocmode{0}
\def\arxivmode{0}
\def\fastmode{0}
\def\showkeys{0}
\def\showdraftbox{1}
\def\showcolorlinks{1}
\def\usemicrotype{1}
\def\showfixme{1}
\newcommand{\llangle}{\left\langle}
\newcommand{\rrangle}{\right\rangle}
\newcommand{\diag}{\mathrm{diag}}
\newcommand\sr[2]{\ \stackrel{#1}{#2}\ }
\newcommand{\GLMIterate}{\textsc{GLMIterate}}
\newcommand{\TGLM}{\mathcal{T}_{\mathrm{GLM}}}
\newcommand{\mm}{\ell}
\newtheorem{theorem}{Theorem}[section]
\newtheorem*{theorem*}{Theorem}
\newtheorem*{proposition*}{Proposition}
\newtheorem{lemma}[theorem]{Lemma}
\newtheorem*{lemma*}{Lemma}
\newtheorem{corollary}[theorem]{Corollary}
\newtheorem*{conjecture*}{Conjecture}
\newtheorem{fact}[theorem]{Fact}
\newtheorem*{fact*}{Fact}
\newtheorem*{exercise*}{Exercise}
\newtheorem*{hypothesis*}{Hypothesis}
\theoremstyle{definition}
\newtheorem{definition}[theorem]{Definition}
\newtheorem{assumption}[theorem]{Assumption}
\newtheorem{exercise-easy}[theorem]{Exercise}
\newtheorem{exercise-med}[theorem]{Exercise}
\newtheorem{exercise-hard}[theorem]{Exercise$^\star$}
\newtheorem*{claim*}{Claim}
\newtheorem*{remark*}{Remark}
\newtheorem*{observation*}{Observation}
\let\mathbb\varmathbb
\definecolor{bleudefrance}{rgb}{0.01, 0.1, 1.0}
\definecolor{azure}{rgb}{0.0, 0.5, 1.0}
\newcommand{\savehyperref}[2]{\texorpdfstring{\hyperref[#1]{#2}}{#2}}
\newcommand{\Sref}[1]{\hyperref[#1]{\S\ref*{#1}}}
\newcommand{\mynotes}[1]{{\sffamily\small\color{teal}{#1}}\medskip}
\newcommand{\Authornote}[2]{{\sffamily\small\color{Maroon}{[#1: #2]}}\medskip}
\newcommand{\Authornotecolored}[3]{{\sffamily\small\color{#1}{[#2: #3]}}}
\newcommand{\Authorcomment}[2]{{\sffamily\small\color{gray}{[#1: #2]}}}
\newcommand{\Authorstartcomment}[1]{\sffamily\small\color{gray}[#1: }
\newcommand{\Authorfnote}[2]{\footnote{\color{red}{#1: #2}}}
\newcommand{\Authorfixme}[1]{\Authornote{#1}{\textbf{??}}}
\newcommand{\Authormarginmark}[1]{\marginpar{\textcolor{red}{\fbox{\Large #1:!}}}}
\newcommand{\myexplain}[1]{{\sffamily\small\color{red}{\noindent [Explanation:\medskip\newline \begin{quote}#1\hfill]\end{quote}}}\medskip}
\newcommand{\explain}[1]{{\sffamily\small\color{red}{#1}}\medskip}
\newcommand{\mynotes}[1]{}
\newcommand{\Authornote}[2]{}
\newcommand{\Authornotecolored}[3]{}
\newcommand{\Authorcomment}[2]{}
\newcommand{\Authorstartcomment}[1]{}
\newcommand{\Authorfnote}[2]{}
\newcommand{\Authorfixme}[1]{}
\newcommand{\Authormarginmark}[1]{}
\newcommand{\myexplain}[1]{}
\newcommand{\explain}[1]{}
\renewcommand{\myexplain}[1]{{\sffamily\small\color{red}{\noindent \begin{quote}{\bf Explanation:} \medskip\newline #1\end{quote}}}\medskip}
\newcommand{\wt}[1]{\widetilde{#1}}
\newcommand{\Esymb}{\mathbb{E}}
\newcommand{\Psymb}{\mathbb{P}}
\DeclareMathOperator*{\E}{\Esymb}
\DeclareMathOperator*{\ProbOp}{\Psymb}
\renewcommand{\Pr}{\ProbOp}
\newcommand{\textparen}[1]{\text{(#1)}}
\newcommand{\because}[1]{\textparen{because #1}}
\renewcommand{\because}[1]{\textparen{because #1}}
\newcommand{\defeq}{\stackrel{\mathrm{def}}=}
\newcommand{\seteq}{\mathrel{\mathop:}=}
\newcommand\bdot\bullet
\DeclareMathOperator{\supp}{supp}
\DeclareMathOperator{\rank}{rank}
\newcommand{\Z}{\mathbb Z}
\newcommand{\R}{\mathbb R}
\newcommand{\cA}{\mathcal A}
\newcommand{\cF}{\mathcal F}
\newcommand{\cI}{\mathcal I}
\newcommand{\cJ}{\mathcal J}
\newcommand{\cN}{\mathcal N}
\newcommand{\baltW}{W}
\newcommand{\sfB}{\mathsf B}
\newcommand{\sfK}{\mathsf K}
\newcommand{\sfN}{\mathsf N}
\renewcommand{\leq}{\leqslant}
\renewcommand{\le}{\leqslant}
\renewcommand{\geq}{\geqslant}
\renewcommand{\ge}{\geqslant}
\let\epsilon=\varepsilon
\numberwithin{equation}{section}
\newcommand\MYcurrentlabel{xxx}
\newcommand{\MYstore}[2]{%
  \global\expandafter \def \csname MYMEMORY #1 \endcsname{#2}%
}
\newcommand{\MYload}[1]{%
  \csname MYMEMORY #1 \endcsname%
}
\newcommand{\MYnewlabel}[1]{%
  \renewcommand\MYcurrentlabel{#1}%
  \MYoldlabel{#1}%
}
\newcommand{\MYdummylabel}[1]{}
\newcommand{\torestate}[1]{%
  \let\MYoldlabel\label%
  \let\label\MYnewlabel%
  #1%
  \MYstore{\MYcurrentlabel}{#1}%
  \let\label\MYoldlabel%
}
\newcommand{\restatetheorem}[1]{%
  \let\MYoldlabel\label
  \let\label\MYdummylabel
  \begin{theorem*}[Restatement of \prettyref{#1}]
    \MYload{#1}
  \end{theorem*}
  \let\label\MYoldlabel
}
\newcommand{\restatelemma}[1]{%
  \let\MYoldlabel\label
  \let\label\MYdummylabel
  \begin{lemma*}[Restatement of \prettyref{#1}]
    \MYload{#1}
  \end{lemma*}
  \let\label\MYoldlabel
}
\newcommand{\restateprop}[1]{%
  \let\MYoldlabel\label
  \let\label\MYdummylabel
  \begin{proposition*}[Restatement of \prettyref{#1}]
    \MYload{#1}
  \end{proposition*}
  \let\label\MYoldlabel
}
\newcommand{\restatefact}[1]{%
  \let\MYoldlabel\label
  \let\label\MYdummylabel
  \begin{fact*}[Restatement of \prettyref{#1}]
    \MYload{#1}
  \end{fact*}
  \let\label\MYoldlabel
}
\newcommand{\restate}[1]{%
  \let\MYoldlabel\label
  \let\label\MYdummylabel
  \MYload{#1}
  \let\label\MYoldlabel
}
\newcommand{\addreferencesection}{
  \phantomsection
\ifnum\stocmode=0
  \addcontentsline{toc}{section}{References}
\else
  \addcontentsline{toc}{section}{References \hspace*{1in} --------- End of extended abstract ---------}
\fi

}
\newcommand{\e}{\epsilon}
\newcommand{\eps}{\epsilon}
\renewcommand{\paragraph}[1]{\medskip\noindent{\bf #1.}}
\definecolor{orange}{rgb}{0.8, 0.33, 0.0}
\let\pref=\prettyref
\newcommand{\cov}{\mathcal{N}}
\newcommand{\diam}{\mathrm{diam}}
\DeclareMathOperator{\tr}{tr}
\newcommand{\vertiii}[1]{{\left\vert\kern-0.25ex\left\vert\kern-0.25ex\left\vert #1 
          \right\vert\kern-0.25ex\right\vert\kern-0.25ex\right\vert}}
\newcommand\f{\varphi}
\DeclareMathOperator{\argmin}{\mathrm{argmin}}
\newcommand{\tEval}{\mathcal{T}_{\mathrm{eval}}}
\renewcommand{\mathbb}{\vvmathbb}
\renewcommand{\supp}{\mathrm{supp}}
\renewcommand{\O}{\tilde{O}}
\newcommand{\ls}{\lesssim}
\renewcommand{\l}{\llangle}
\renewcommand{\r}{\rrangle}
\newcommand{\FindWeights}{\textsc{FindWeights}}
\newcommand{\LevApprox}{\textsc{LevApprox}}
\newcommand{\Iterate}{\textsc{Iterate}}
\newcommand{\assign}{\leftarrow}
\renewcommand{\bar}{\overline}
\newcommand{\bdelta}{\hat{\delta}}
\renewcommand{\cI}{\tilde{\varphi}}
\newcommand{\nnz}{\mathsf{nnz}}
\newcommand{\SolveGLM}{\textsc{SolveGLM}}
\newcommand{\lo}{\theta}
\newcommand{\smin}{s_{\min}}
\newcommand{\smax}{s_{\max}}
\newcommand{\hatsmax}{\hat{s}_{\max}}
\newcommand{\hatsmin}{\hat{s}_{\min}}
\newcommand{\upp}{u}
\newcommand{\Rplus}{\R_+}
\begin{document}

\title{Sparsifying generalized linear models}

\author{Arun Jambulapati\footnotemark[1]
\and
James R. Lee\footnotemark[2]
\and
Yang P. Liu\footnotemark[3]
\and
Aaron Sidford\footnotemark[4]}

\date{}

\maketitle

\let\oldthefootnote\thefootnote
\renewcommand{\thefootnote}{\fnsymbol{footnote}}
\footnotetext[1]{University of Washington, {\tt jmblpati@uw.edu}, \footnotemark[2]{University of Washington, {\tt jrl@cs.washington.edu}}}
\footnotetext[3]{Institute for Advanced Study, {\tt yangpliu@ias.edu}, \footnotemark[4] Stanford University, {\tt sidford@stanford.edu}}
\let\thefootnote\oldthefootnote
\renewcommand{\thefootnote}{\arabic{footnote}}

\vspace*{-0.2in}

\begin{abstract}
We consider the sparsification of sums $F : \R^n \to \Rplus$ where $F(x) = f_1(\langle a_1,x\rangle) +
\cdots + f_m(\langle a_m,x\rangle)$ for 
vectors $a_1,\ldots,a_m \in \R^n$ and functions $f_1,\ldots,f_m : \R \to \Rplus$.
We show that $(1+\e)$-approximate sparsifiers of $F$ with support
size $\frac{n}{\e^2} (\log \frac{n}{\e})^{O(1)}$ exist whenever the functions
$f_1,\ldots,f_m$ are symmetric, monotone, and satisfy natural growth bounds.
Additionally, we give efficient algorithms to compute such a sparsifier assuming each $f_i$ 
can be evaluated efficiently.

Our results generalize the classic case of $\ell_p$ sparsification, where
$f_i(z) = |z|^p$, for $p \in (0, 2]$, and give the first near-linear size sparsifiers in
the well-studied setting of the Huber loss function and its generalizations, e.g., $f_i(z) = \min\{|z|^p, |z|^2\}$ for $0 < p \leq 2$.
Our sparsification algorithm can be applied to give near-optimal
reductions for optimizing a variety of generalized linear models including $\ell_p$ regression for $p \in (1, 2]$ to high accuracy, via solving $(\log n)^{O(1)}$ sparse regression
instances with $m \le n(\log n)^{O(1)}$, plus runtime proportional to the number of nonzero entries in
the vectors $a_1, \dots, a_m$.
\end{abstract}

\makeatletter
\renewcommand\tableofcontents{%
  \@starttoc{toc}%
}
\makeatother

\begingroup
\titlecontents{title}{}{}{}{}
\hypersetup{linktocpage=false}
\setcounter{tocdepth}{2}
\tableofcontents
\endgroup

\section{Introduction}
\label{sec:intro}

Empirical risk minimization (ERM) is a widely studied problem in learning theory and statistics (see, e.g., \cite{LSZ19}, for relevant references to the expansive literature on this topic). A prominent special case is the problem of optimizing a \emph{generalized linear model (GLM)}, i.e.,
\begin{equation}\label{eq:glm}
\min_{x \in \R^n} F(x) \quad \textrm{ for } \quad F(x) := \sum_{i=1}^m f_i(\langle a_i, x \rangle - b_i)\,,
\end{equation}
where the \emph{total loss} $F : \R^n \to \R$, is defined by vectors $a_1,\ldots,a_m \in \R^n$, $b \in \R^m$, and \emph{loss
functions} $f_1, \dots, f_m: \R \to \R$. Different choices of the loss functions $\{ f_i \}$ capture important problems, including linear regression, logistic regression, and $\ell_p$ regression \cite{BCLL18,AKPS19}.

Recently, efficient algorithms for solving \eqref{eq:glm} to high-accuracy have been developed in
many settings \cite{BLSS20,BLLSSSW21,GPV21} such as linear programming and $\ell_1$-regression, where $f_i(x) = |x|$. For example, when $m$ is on the order of $n$,
it is known how to solve linear programs and some GLMs in
roughly (up to logarithmic factors) the time it currently takes to multiply two general 
$n \times n$ matrices \cite{AKPS19,CLS19,LSZ19,JSWZ21} which is, up to logarithmic factors,
the best-known, running time for solving a single linear system
in a dense $n \times n$ matrix.

When $m \gg n$, a natural approach for fast algorithms is to apply sparsification techniques
to reduce the value of $m$, while maintaining a good multiplicative
approximation of the objective value. More precisely, say that the objective
\emph{$F$ admits an $s$-sparse $\eps$-approximation} if there are non-negative
weights $w_1, \dots, w_m \in \R^m_{+}$, at most $s$ of which are non-zero, and such that

\begin{equation*}
   |F(x) - \tilde{F}(x)| \le \eps \, F(x) \enspace \text{ for all } x
   \in \R^n, \text{ where } \tilde{F}(x) := \sum_{i=1}^m
w_i f_i(\langle a_i, x \rangle - b_i)\,.
\end{equation*}

When $f_i(z) = |z|^p$ are $\ell_p$ losses, near-optimal sparsification results
are known: If $p > 0$, then $F$ admits an $s$-sparse $\eps$-approximation for $s
\le \O(n^{\max\{1,p/2\}}\eps^{-2})$;\footnote{Throughout, we use $\O(f)$ to suppress polylogarithmic quantities in $m, n, \eps^{-1}$, and $f$.} this sparsity bound is known to be optimal up to polylogarithmic
factors \cite{BLM89,Tal90,Tal95,SZ01}. In particular, for $p \in (0,  2]$, the
size is $\O(n \eps^{-2})$, near-linear in the underlying dimension $n$. The $p
= 2$ case has been especially influential in the development of several fast
algorithms for linear programming and graph optimization over the last two
decades \cite{ST14,SS11,BLNPSSSW20}.

However, as far as the authors know,
$\ell_p$ losses are the only class of natural loss functions for which
linear-size sparsification results are known for GLMs. For instance, for the widely-studied class of Huber
loss functions (see \eqref{eq:gammap}) and related variants, e.g., $f_i(z)=\min\{|z|,|z|^2\}$,
the best known sparsity bound was $\O(n^{4-2\sqrt{2}} \eps^{-2})$
\cite{MMWY22}.
Improving this bound to near-linear (in $n$) is an established important open problem that has potential applications to regression for Huber and $\ell_p$ losses \cite{AS20,ABKS21,GPV21,MMWY22,WY23}.

The main result of this paper is near-optimal sparsification for a large family
of loss functions $\{ f_i \}$ that include the Huber losses, $\ell_p$ losses, and generalizations. Informally, we show that if
the loss functions $\{ f_i \}$ are nonnegative, symmetric, and grow at most
quadratically, then there exists an $s$-sparse $\eps$-approximation of $F$ with
$s \le \O(n \eps^{-2})$. Moreover, the sparse approximation can be found very efficiently, in time proportional to the time used for $\O(1)$
instances of $\ell_2$-sparsification (\pref{thm:main}).
A particularly nice
application of our result is an algorithm that solves $\ell_p$-regression to
high accuracy for $1 < p \le 2$ by reducing to $\O_p(1)$ instances of
$\ell_p$-regression with $m = \O(n)$ (\pref{thm:pnorm}). Our framework can also be applied to minimizing sums of $\gamma_p$ functions for $p \in (1, 2]$ (see \eqref{eq:gammap}) to high accuracy, and to approximate Huber regression.

The main technical hurdle in obtaining these results is that the loss functions are not necessarily homogeneous and they can exhibit different behaviors at different scales. Note that this hurdle arises already for losses like $f_i(z) = \min\{|z|, |z|^2\}$, even though the loss function only has two different scaling regimes. To overcome this hurdle we develop a multiscale notion of ``importance scores'' for appropriately down-sampling $F$ into a sparse representation. 

\subsection{Hypotheses and results for sparsification}
\label{sec:axioms}

Consider a generalized linear model as in \eqref{eq:glm}, with loss functions  $f_1,\ldots,f_m : \R \to \R_+$ and vectors $a_1,\ldots,a_m \in \R^n$. For simplicity, we assume that $b=0$ in \eqref{eq:glm}.
This is without loss of generality, as $\langle a_i,x\rangle - b_i = \langle (a_i,b_i), (x,-1)\rangle$, and $(a_i,b_i), (x,-1) \in \R^{n+1}$,
so we can re-encode the problem in $n+1$ dimensions with $b=0$.

\newcommand{\pseudonorm}{pseudonorm\xspace}
\newcommand{\Pseudonorm}{Pseudonorm\xspace}

We will often think of the case $f_i(z) = h_i(z)^2$ for some $h_i : \R \to \R_+$, as the assumptions
we need are stated more naturally in terms of $\sqrt{f_i}$.
To that end, consider a function $h : \R^k \to \R_+$
and the following two properties, where $L \geq 1$ and $c, \lo > 0$ are some positive constants.\footnote{The setting $k=1$ suffices for the present work, though we state them for general $k \geq 1$.}

\begin{enumerate}[label=\text{(P\arabic*)}]
   \item \label{item:lipschitz} ($L$-auto-Lipschitz) $|h(z)-h(z')| \leq L\,h(z-z')$ for all $z,z' \in \R^k$.
   \item \label{item:lower-growth} (Lower $\lo$-homogeneous) $h(\lambda z) \geq c \lambda^{\lo} h(z)$ for all $z \in \R^k$ and $\lambda \geq 1$.
\end{enumerate}
Note that if $h : \R \to \R$ is concave and symmetric, then it is $1$-auto-Lipschitz (see \pref{lem:approx-concave}).

We can now state our main theorem, whose proof appears in \pref{sec:sparsification-analysis}.

\begin{theorem}
\label{thm:main}
Consider $f_1, \dots, f_m: \R \to \R_{+}$, and
suppose there are numbers $L \geq 1 ,c,\lo > 0$ such that
each $\sqrt{f_i}$ is $L$-auto-Lipschitz and lower $\lo$-homogeneous (with constant $c$).
Then for any $a_1,\ldots,a_m \in \R^n$, and numbers $0 < \e < \tfrac12$ and $\smax > \smin \geq 0$, there are nonnegative
weights $w_1,\ldots,w_m \geq 0$ such that
\[
\left| 
 F(x) - \sum_{i=1}^m w_i f_i(\langle a_i,x\rangle)
\right|
\leq \epsilon F(x)\,,\quad \forall x \in \R^n \ \textrm{ s.t. }\ \smin \leq F(x) \leq \smax\,,
\]
where $F(x) \seteq f_1(\langle a_1,x\rangle) + \cdots + f_m(\langle a_m,x\rangle)$, and
\begin{equation}\label{eq:main-sparsity}
   \left | i \in \{1,\ldots,m\} : w_i > 0 \right | \lesssim_{L,c,\theta} \frac{n}{\e^2} \log \left(\frac{n}{\e} \frac{\smax}{\smin}\right) \left(\log S\right)^3, \enspace \text{ where } \enspace S := \frac{n}{\eps} \log\left(\frac{2\smax}{\smin}\right) \,.
\end{equation}
Moreover, with high probability, the weights $\{w_i\}$ can be computed in time \[ \O_{L,c,\lo}\!\left((\nnz(a_1,\ldots,a_m) + n^{\omega} + m \tEval) \log(m\smax/\smin)\right). \]
\end{theorem}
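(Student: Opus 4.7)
The plan is a multiscale importance-sampling argument that extends Lewis-weight sparsification to the non-homogeneous setting. First, I dyadically partition the slab $\{x : \smin \le F(x) \le \smax\}$ into $K = O(\log(\smax/\smin))$ levels $X_k \defeq \{x : F(x) \in [2^k \smin, 2^{k+1}\smin]\}$; on each level the ratios $f_i(\langle a_i,x\rangle)/F(x)$ lie in $[0,1]$, making the per-level instance resemble a homogeneous sparsification problem. I argue separately on each slice and then union-bound.

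Next I define multiscale importance scores. For each $i$ and each dyadic scale $\lambda = \lambda_k$, I would construct an efficient overestimate $\tau_i^{(\lambda)}$ of $\sup_{x \in X_\lambda} f_i(\langle a_i,x\rangle)/F(x)$, realized as the $i$th $\ell_2$ leverage score of a reweighted linear system obtained by linearizing $F$ around a representative point of $X_\lambda$, in the style of Cohen--Peng's fixed-point definition of Lewis weights. The $L$-auto-Lipschitz property of $\sqrt{f_i}$ ensures each $\sqrt{f_i(\langle a_i,\cdot\rangle)}$ behaves seminorm-like on small perturbations, while lower $\theta$-homogeneity forces $F$ to grow in any direction; together these should yield $\sum_i \tau_i^{(\lambda)} \lesssim_{L,c,\lo} n\,\polylog$. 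I then take the combined score $\tau_i \defeq \max_k \tau_i^{(\lambda_k)}$, whose total mass is at most a factor $K$ larger.

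I sample each index $i$ independently with probability $p_i = \min(1, C \eps^{-2} \tau_i \log(nK/\eps))$ and assign weight $w_i = 1/p_i$ (else $0$), giving expected support matching \eqref{eq:main-sparsity}. For fixed $x \in X_\lambda$, Bernstein's inequality delivers $|\tilde F(x) - F(x)| \le \eps F(x)$ with high probability, since each summand is bounded by $\tau_i F(x)/p_i \lesssim \eps^2 F(x)/\log(\cdot)$ and the variances sum to $\lesssim \eps^2 F(x)^2/\log(\cdot)$. To pass from a single $x$ to all of $X_\lambda$, I build a net in the seminorm $\|y\|_F \defeq \sqrt{F(y)}$: the $L$-auto-Lipschitz hypothesis gives $|\sqrt{F(x)} - \sqrt{F(x')}| \le L\sqrt{F(x - x')}$ and an analogous bound for $\tilde F$, so the approximation transfers from net points to all of $X_\lambda$; lower $\theta$-homogeneity bounds the metric entropy of $X_\lambda$ in this seminorm by $n\,\polylog(\cdot)$. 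A union bound over the net and over the $K$ dyadic scales then yields the claimed uniform approximation with the stated sparsity.

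The hard part is establishing the total-mass bound $\sum_i \tau_i^{(\lambda)} \lesssim n\,\polylog$ without the algebraic identity $|z|^p = z\cdot |z|^{p-1}$ that anchors the classical Lewis-weight argument; replacing it with iterative bookkeeping driven by auto-Lipschitzness and lower homogeneity is where I expect the $(\log S)^3$ overhead in \eqref{eq:main-sparsity} to arise. For the algorithmic claim, each $\tau_i^{(\lambda)}$ is approximated by a Cohen--Peng-style outer loop of $O(\log S)$ rounds, each of which reduces to a single $\ell_2$-sparsification costing $\O(\nnz(a_1,\ldots,a_m) + n^\omega + m\tEval)$ time; the final $\log(m\smax/\smin)$ factor in the runtime accumulates the cost across the $K$ dyadic scales, matching the stated bound.
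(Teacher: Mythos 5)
Your setup — dyadic scales, generalized Lewis weights defined by a Cohen--Peng-style fixed point at each scale, sampling probabilities proportional to the maximum over scales of the resulting leverage scores — matches the paper's construction. But the concentration step has a genuine gap that the rest of the proposal cannot absorb. With $p_i \asymp \eps^{-2}\tau_i\log(nK/\eps)$, Bernstein's inequality gives a \emph{fixed} $x$ failure probability of only $(nK/\eps)^{-O(1)}$: the variance and max-term bounds you quote yield an exponent of order $\log(nK/\eps)$, not of order $n$. A net in the seminorm $\sqrt{F}$ fine enough for the transfer step has cardinality $\exp(\Omega(n\log(1/\delta)))$, so the union bound over the net fails unless you inflate $p_i$ by a factor of $n$, which destroys the near-linear sparsity. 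This is exactly why the paper does not use a single-resolution net: it runs a symmetrization plus generic-chaining argument (Dudley's entropy integral over the process $x \mapsto \tilde F(x)-F(x)$), where the quantity that must be controlled at every resolution is the covering number of $B_F(s)$ in the metric $d_\infty(x,y) = \max_i f_i(\langle a_i, x-y\rangle)^{1/2}/\sqrt{\tau_i}$ — not in $\sqrt{F}$. Your transfer argument via auto-Lipschitzness of $\sqrt{F}$ controls the aggregate sum between $x$ and a net point, but the union bound needs the individual sampled terms $f_{\nu_j}(\langle a_{\nu_j},\cdot\rangle)/\rho_{\nu_j}$ to vary little, which is a $d_\infty$-type condition.

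Relatedly, the covering-number bound you assert ("lower $\theta$-homogeneity bounds the metric entropy of $X_\lambda$ \ldots by $n\,\polylog$") is the technical heart of the paper and does not follow from homogeneity alone. The paper proves it by an iterated covering argument: $B_F(2^{\ell})$ is covered by translates of a cell $\sfK_{\ell-1}$, each such translate is covered by translates of $\sfK_{\ell-2}$, and so on, with each step controlled by the dual Sudakov inequality (covering an $\ell_2$ ball of the reweighted inner product by $\ell_\infty$ cells). Crucially, this recursion needs the weights at adjacent scales to be comparable, i.e., $w_i^{(j+1)} \le \alpha\, w_i^{(j)}$ (the ``weight scheme'' property), which your independent per-scale fixed points do not automatically provide and which you never invoke. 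So while your importance scores are the right ones and the $\sum_i \tau_i^{(\lambda)} \lesssim n$ bound you flag as hard is actually the easy part (it is $\tr(M_w^{-1}M_w)=n$ once approximate weights exist), the missing ingredients are (i) chaining in place of a single net, and (ii) the multiscale covering estimate with the cross-scale weight comparison that makes the chaining integral converge.
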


Here, $\tEval$ is the maximum time needed to evaluate each $f_i$, $\nnz(a_1,\ldots,a_m)$
is the total number of non-zero entries in the vectors $a_1,\ldots,a_m$, and $\omega$ is the matrix multiplication exponent.
``High probability'' means that the failure probability can be made less than $n^{-\ell}$ for any $\ell > 1$
by increasing the running time by an $O(\ell)$ factor.

We use the notation $O_{L,c,\lo}$ and $\O_{L,c,\lo}(\cdot)$ to indicate an implicit dependence on the parameters $L,c,\lo$, and
$A \lesssim_{L,c,\lo} B$ is shorthand for $A \leq O_{L,c,\lo}(B)$.
The constant hidden by the $O_{L,c,\lo}(\cdot)$ notation
is about $(L/c)^{O(\lo^{-2})}$,
though we made no significant effort to optimize this dependence.

\smallskip

It is not difficult to see that for $0 < p \leq 2$, the function $f_i(z) =
|z|^p$ satisfies the required hypotheses of \pref{thm:main}. In \pref{sec:gammap}, we show that
$\gamma_p$ functions, defined as
\begin{equation}\label{eq:gammap}
   \gamma_p(z) := \begin{cases} \frac{p}{2}z^2
& \enspace \text{ for } \enspace |z| \le 1 \\ |z|^p - (1-\frac{p}{2}) &
\enspace \text{ for } \enspace |z| \ge 1, \end{cases}
\end{equation}
for $p \in (0, 2]$, also satisfy the conditions.  The special case of $\gamma_1$ is known as the Huber loss.
(See \pref{sec:gammap} for a generalization to general thresholds.)

The $\gamma_p$ functions were introduced in \cite{BCLL18} and have since
been used in several works on high-accuracy $\ell_p$ regression \cite{AKPS19,ABKS21,GPV21}.
Due to these connections, the works \cite{GPV21,MMWY22} studied sparsification
with $\gamma_p$ losses, providing sparsity bounds of $\O(n^3)$ and
$\O(n^{4-2\sqrt{2}}) \approx \O(n^{1.172})$, respectively. More precisely, \cite{MMWY22} establish
a bound of $\O(n^{1+\delta(p)})$ for $p \in [1, 2]$ with $\delta(1) =
3-2\sqrt{2}$, and $\delta(p) \to 0$ as $p \to 2$.

\subsection{Fast $\ell_p$ regression}
\label{sec:lpintro}

Combining our sparsification theorem with iterative refinement \cite{AKPS19}
yields near-optimal reductions for solving $\ell_p$ regression to \emph{high accuracy}. More specifically, we show that $\ell_p$ regression for matrices
$A \in \R^{m \times n}$ can be reduced to a sequence of $\O_p(1)$ instances with $\wt{A} \in \R^{\O(n) \times n}$.  
It is known how to solve such
instances in time $n^{\omega_0}$ for $\omega_0 := 2 + \max\left\{\frac{1}{6}, \omega-2, \frac{1-\alpha}{2}\right\}$ \cite{LSZ19}, where $\alpha$ is the dual matrix multiplication exponent. Alternatively, they can each be solved in roughly $n^{1/3}$ iterations and time $n^{\max\{\omega, 2+1/3\}}$ \cite{AKPS19}, where an ``iteration'' refers to an operation that is dominated by the cost of solving a particular $n \times n$ linear system.

\begin{theorem}[Fast $\ell_p$ regression]
\label{thm:pnorm}
There is an algorithm that given any $A \in \R^{m \times n}$, $b \in \R^m$, and $p \in (1,2]$ computes an $x$ satisfying
\[ \|A x - b\|_p^p \le (1+\e) \min_{x \in \R^n} \|A x - b\|_p^p \] in either $\O_p(n^\frac{2-p}{p+2})$ iterations and $\O_p(\nnz(A) + n^{\max\{\omega, 2+1/3\}})$ time, or $\O_p(\sqrt{n})$ iterations and $\O_p(\nnz(A) + n^{\omega_0})$ time, with high probability.
\end{theorem}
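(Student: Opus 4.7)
The plan is to combine our sparsification theorem (\pref{thm:main}) with the iterative refinement framework of \cite{AKPS19} and existing high-accuracy $\ell_p$ regression solvers on $\O(n) \times n$ instances \cite{AKPS19,LSZ19}. The overall structure is two-level: an outer iterative-refinement loop reduces $(1+\e)$-accurate $\ell_p$ regression to $\polylog(n/\e)$ residual subproblems that need only be solved to constant multiplicative accuracy. Concretely, iterative refinement for $\ell_p$ regression from \cite{AKPS19} yields residual subproblems of the form $\min_{\Delta \in \R^n} \sum_i g_i^{(t)}(\langle a_i, \Delta\rangle)$, where each $g_i^{(t)}$ is a scaled $\gamma_p$-type function depending on the current iterate. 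By the computation in \pref{sec:gammap}, the square root of each such $\gamma_p$-type function is $O(1)$-auto-Lipschitz and lower $(p/2)$-homogeneous for $p \in (1,2]$, so the hypotheses of \pref{thm:main} are met.

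Next, I would sparsify each residual subproblem using \pref{thm:main} with constant accuracy $\e = \Theta(1)$ and with scale parameters $\smin, \smax$ whose ratio is polynomially bounded in $n$ and the condition number of $A$. This produces a subproblem on $m' = \O_p(n)$ rows at cost $\O(\nnz(A) + n^\omega)$ per call, since $\tEval = O(1)$ for $\gamma_p$; summed over all $\polylog(n/\e)$ residual subproblems, the total sparsification cost remains $\O_p(\nnz(A) + n^\omega)$.

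Finally, I would solve each sparsified residual subproblem (of dimension $\O_p(n) \times n$) using a black-box solver. The \cite{AKPS19} solver requires $\O_p(n^{(2-p)/(p+2)})$ linear-system solves, each of cost $\O(n^2)$ after an $\O(n^\omega)$ preprocessing step, yielding total time $\O_p(n^{\max\{\omega,\, 2+1/3\}})$. Alternatively, the \cite{LSZ19} interior-point solver gives $\O_p(\sqrt{n})$ iterations and total time $\O_p(n^{\omega_0})$. Both branches absorb the polylogarithmic number of outer residual subproblems in the $\O_p$ notation. The main obstacle is controlling the scale ratio $\smax/\smin$ uniformly across all subproblems, so that the $\log(\smax/\smin)$ factor in \eqref{eq:main-sparsity} remains polylogarithmic; since the sparsity bound in \pref{thm:main} depends only logarithmically on this ratio, it suffices to prove a loose polynomial bound on $\smax/\smin$ via a standard condition-number analysis of the iterative-refinement iterates.
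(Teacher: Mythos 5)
Your proposal follows essentially the same route as the paper: the paper formalizes the outer loop as \pref{thm:optglm} (\textsc{SolveGLM}), and the proof of \pref{thm:pnorm} consists of verifying \pref{ass:refine} for $f(z)=|z|^p$ via $D^{|\cdot|^p}_{x_0}(x_0+\Delta) \asymp_p \gamma_p(|x_0|,\Delta)$ and \pref{lem:gammap-nice}, with the sparse subproblems handed to the \cite{AKPS19,LSZ19} solvers exactly as you describe. The one substantive difference is how the $\smax/\smin$ issue is resolved: you propose a condition-number analysis of the iterates, which would introduce a $\log\kappa(A)$ dependence not present in the theorem statement, whereas the paper anchors the window to the current error bound, setting $\smin = m^{-O(1)}\tilde\Gamma$ and $\smax = m^{O(1)}\tilde\Gamma$, and then uses lower $\lo$-homogeneity of the divergence to show (\pref{lem:basic}, \pref{lem:basic2}) that both $\Delta^* = x^*-x$ and the oracle's output $\hat\Delta$ satisfy $r(\Delta) \asymp E \in [C_0\smin, \smax/C_0]$, so the sparsifier's guarantee applies where it is needed. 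You should adopt that error-anchored window (and the accompanying argument ruling out $r(\hat\Delta) \notin [\smin,\smax]$) in place of the condition-number bound; otherwise the plan is sound.
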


It is standard to turn a high accuracy algorithm for an optimization problem into one that solves a corresponding dual problem.
We present such an argument for $\ell_p$-regression in \pref{sec:duallp}.

\begin{theorem}[Dual of $\ell_p$ regression]
\label{thm:qnorm}
There is an algorithm that given $A \in \R^{m \times n}$, $c \in \R^m$, and $q \in [2,\infty)$ computes a $y \in \R^m$ satisfying $A^\top y = c$ and
\[ \|y\|_q^q \le (1+\eps) \min_{A^\top y = c} \|y\|_q^q \] in either $\O_q(n^\frac{q-2}{3q-2})$ iterations and $\O_q(\nnz(A) + n^{\max\{\omega, 2+1/3\}})$ time, or $\O_q(\sqrt{n})$ iterations and $\O_q(\nnz(A) + n^{\omega_0})$ time, with high probability.
\end{theorem}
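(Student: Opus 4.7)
The plan is to apply Fenchel duality to reduce to the primal setting of \pref{thm:pnorm}. A direct coordinate-wise minimization of the Lagrangian $\tfrac{1}{q}\|y\|_q^q + x^\top(c - A^\top y)$ over $y \in \R^m$ gives the identity
\begin{equation*}
   \min_{A^\top y = c} \tfrac{1}{q}\|y\|_q^q
   \;=\;
   \max_{x \in \R^n} \Bigl\{c^\top x - \tfrac{1}{p}\|Ax\|_p^p\Bigr\},
\end{equation*}
where $p := q/(q-1) \in (1,2]$ is the H\"older conjugate of $q$, with the primal--dual correspondence $y^\star_i = \sign((Ax^\star)_i)\,|(Ax^\star)_i|^{p-1}$ at optimality. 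In particular, the iteration exponent $(2-p)/(p+2)$ of \pref{thm:pnorm} matches the advertised exponent $(q-2)/(3q-2)$ under this change of variable, so the complexities in the two theorems align exactly, suggesting that the reduction is essentially black-box up to logarithmic overhead.

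The next step is to solve the concave maximization $F(x) := c^\top x - \tfrac{1}{p}\|Ax\|_p^p$ to the required accuracy using \pref{thm:pnorm}. Although $F$ contains the linear term $-c^\top x$ and is not literally of the form $\|Ax - b\|_p^p$, the iterative-refinement scheme of \cite{AKPS19} reduces it to $\O_p(1)$ residual subproblems of the form $\min_x \|Ax - r^{(t)}\|_p^p + \tfrac12\|Bx - s^{(t)}\|_2^2$, and each such subproblem is handled by the machinery underlying \pref{thm:pnorm}: the quadratic regularizer is absorbed by appending $O(n)$ auxiliary rows (where an $\ell_2$ row is a special case of the sparsification scope used in the proof of \pref{thm:main}), and the linear term $-c^\top x$ is eliminated between refinement rounds since its gradient is constant. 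Given an approximate maximizer $x^\star$, I would output $\widehat y_i := \sign((Ax^\star)_i)|(Ax^\star)_i|^{p-1}$ and then project once onto the affine set $\{y : A^\top y = c\}$ by a single least-squares solve, producing a feasible $\widetilde y$.

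The main obstacle will be converting the multiplicative $(1+\eps)$-guarantee that \pref{thm:pnorm} provides on the primal $\ell_p$-objective into a multiplicative $(1+\eps)$-guarantee on $\|\widetilde y\|_q^q$, since the nonlinear map $x \mapsto \widehat y$ together with the feasibility projection could in principle amplify errors. This is handled by a standard duality-gap argument: the $q$-uniform convexity of $\|\cdot\|_q^q$ (equivalently, the $p$-uniform smoothness of $\|\cdot\|_p^p$ for $p \in (1,2]$) turns an additive duality gap in $F$ into a bounded distance in the dual norm, and combining this with a preliminary coarse estimate of $\OPT$ (obtained from a single $\ell_2$-regression solve, used to rescale the problem and calibrate the target accuracy of \pref{thm:pnorm} on each refinement round) yields the desired multiplicative bound. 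The $O(\log(m/\eps))$ phases of rescaling are absorbed into the $\O_q$ polylogarithmic factors, preserving both the iteration counts and the runtime claimed in \pref{thm:qnorm}.
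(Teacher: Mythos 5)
Your overall architecture is the paper's: pass to a primal $\ell_p$ problem by duality, solve it to high accuracy, map the solution through $z \mapsto \mathrm{sign}(z)|z|^{p-1}$, project once onto $\{y : A^\top y = c\}$, and control the error by uniform convexity. But two steps are asserted rather than established, and each hides a real issue. First, your primal objective $\max_x \{c^\top x - \tfrac1p\|Ax\|_p^p\}$ contains a linear term and is therefore \emph{not} an instance of \pref{thm:pnorm}, which only solves $\min_x \|Ax-b\|_p^p$; saying that ``the linear term is eliminated between refinement rounds since its gradient is constant'' does not produce subproblems of the required form, and one would instead have to invoke the more general GLM machinery (\pref{thm:optglm}, whose oracle does allow a linear term) and re-derive the iteration counts for that objective. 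The paper sidesteps this entirely by using the homogeneous formulation $\min_{A^\top y = c}\|y\|_q = \bigl(\min_{c^\top x = 1}\|Ax\|_p\bigr)^{-1}$ (minimax duality) and realizing the constraint as a penalty $\min_x K|\langle c,x\rangle-1|^p + \|Ax\|_p^p$, which is literally an $\ell_p$-regression instance for \pref{thm:pnorm}.

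Second, the accuracy conversion as you describe it does not work: $q$-uniform convexity of $\|\cdot\|_q^q$ is not the tool, and an additive gap in $F$ does not directly bound $\|A(x^\star - x_{\mathrm{opt}})\|_p$, since the Bregman divergence of $\tfrac1p\|\cdot\|_p^p$ degenerates relative to $\|\cdot\|_p^p$ (for $p<2$ it behaves like $|v|^{p-2}|u-v|^2$ when $|u-v|\ll|v|$). The argument that closes the loop is: (i) $2$-uniform convexity of the $\ell_p$ \emph{norm}, applied on the constraint set $\{c^\top x=1\}$, converts the multiplicative guarantee $\|A\bar x\|_p \le (1+\eps_0)\|Ax^*\|_p$ into $\|A(\bar x - x^*)\|_p^2 \ls_p \eps_0 \|Ax^*\|_p^2$; (ii) the pointwise bound $|f(a)-f(b)| \ls_p f(a-b)$ for $f(z)=\mathrm{sign}(z)|z|^{p-1}$ transfers this to $\|\bar y - y^*\|_q$; and (iii) the projection step is bounded via $\|\delta_y\|_q \le \|\delta_y\|_2 \le \|y^*-\bar y\|_2 \le m^{1/2-1/q}\|y^*-\bar y\|_q$, which costs a factor $m^{1/2-1/q}$ and forces the primal accuracy to be $\eps_0 \asymp_p (\eps\, m^{1/q-1/2})^{2q/p}$. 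That polynomially small target is harmless only because the primal solver's dependence on its accuracy is logarithmic; your sketch should make this calibration, and the $m^{1/2-1/q}$ loss that necessitates it, explicit.
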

Prior work \cite{JLS22} shows that $\ell_p$-regression can be solved in $\O_p(n^{1/3})$ iterations of solving a linear system for $p \in [2, \infty)$. Combining that result with \pref{thm:pnorm} shows that $\ell_p$-regression can be solved using $\O_p(n^{1/3})$ linear systems for all $p > 1$.

\subsection{Discussion of the hypotheses}
\label{sec:propintro}

Let us now discuss various hypotheses and the extent to which they are necessary for sparsifiers of
nearly-linear size to exist. In addition to the properties \ref{item:lipschitz} and \ref{item:lower-growth}, let us consider three others
that we will use frequently.
In what follows, $C, u > 0$ are positive constants and $h : \R^n \to \R_+$.

\begin{enumerate}[label=\text{(P\arabic*)}]
   \setcounter{enumi}{2}
   \item \label{item:symmetric} ($C$-symmetric) $h(z) \leq C h(-z)$ for all $z \in \R^n$.
   \item \label{item:monotone} ($C$-monotone) $h(z) \leq C h(\lambda z)$ for $\lambda \geq 1$.
   \item \label{item:upper-growth} (Upper $\upp$-homogeneous) $h(\lambda z) \leq C \lambda^{\upp} h(z)$ for all $z \in \R^n$ and $\lambda \geq 1$.
\end{enumerate}

First, note that \ref{item:lipschitz} and \ref{item:lower-growth} imply \ref{item:symmetric}--\ref{item:upper-growth}.

\begin{lemma}
\label{lem:relate}
   The following implications hold:
   \begin{enumerate}
      \item $h$ is $L$-auto-Lipschitz $\implies$ $h$ is $L$-symmetric.
      \item \label{item:lower-to-monotone} $h$ is lower $\lo$-homogeneous with constant $c$ $\implies$ $h$ is $1/c$-monotone.
      \item \label{item:implying_homogeneuous} $h$ is $L$-auto-Lipschitz and $C$-monotone $\implies$ $h$ is upper $1$-homogeneous with constant $2CL$.
   \end{enumerate}
\end{lemma}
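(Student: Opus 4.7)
The three implications are routine verifications from the defining inequalities, and I anticipate no real obstacle in the proof. For the second implication, the argument is a one-line deduction: lower $\theta$-homogeneity with $\lambda \geq 1$ and $\theta > 0$ gives $\lambda^\theta \geq 1$, so $h(\lambda z) \geq c\,h(z)$, which rearranges to $h(z) \leq (1/c)\,h(\lambda z)$, the defining inequality of $(1/c)$-monotonicity.

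For the first implication, my plan is to substitute $(z, z') = (0, -a)$ into the $L$-auto-Lipschitz inequality to obtain $|h(0) - h(-a)| \leq L\,h(a)$. Since the lemma is applied in the paper to $h = \sqrt{f_i}$ with $f_i(0) = 0$, we have $h(0) = 0$, so $h(-a) \leq L\,h(a)$; repeating with $(0, a)$ in place of $(0, -a)$ gives $h(a) \leq L\,h(-a)$, which is exactly $L$-symmetry.

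For the third implication, my plan is a two-step telescoping argument. First, for integer $k \geq 1$, applying the auto-Lipschitz inequality along the chain $0, z, 2z, \ldots, kz$ gives
\[
|h(kz) - h(0)| \leq \sum_{j=1}^{k} \bigl|h(jz) - h((j-1)z)\bigr| \leq k L\,h(z),
\]
and using $h(0) = 0$ yields $h(kz) \leq k L\,h(z)$. Second, for real $\lambda \geq 1$, apply $C$-monotonicity with ratio $\lceil \lambda \rceil/\lambda \geq 1$ to the point $\lambda z$ to obtain $h(\lambda z) \leq C\,h(\lceil \lambda \rceil z) \leq C \lceil \lambda \rceil L\,h(z) \leq 2 C L\,\lambda\,h(z)$, where the last step uses $\lceil \lambda \rceil \leq 2\lambda$ for $\lambda \geq 1$. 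This matches the claimed upper $1$-homogeneity constant of $2CL$.

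The one bookkeeping subtlety worth flagging is the implicit use of $h(0) = 0$ in the first and third implications; this is valid in the paper's applications (where $f_i(0) = 0$) but would otherwise contribute additive $h(0)$ corrections to the stated constants.
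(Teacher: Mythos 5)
Your proposal is correct and follows essentially the same route as the paper: symmetry from the auto-Lipschitz inequality at $z'=0$ (using $h(0)=0$), monotonicity immediately from lower homogeneity, and a telescoping bound $h(kz)\leq kLh(z)$ for integers followed by rounding $\lambda$ up to $\lceil\lambda\rceil$. The only (harmless) difference is in the rounding step: you pass from $\lambda z$ to $\lceil\lambda\rceil z$ directly via $C$-monotonicity, whereas the paper applies the auto-Lipschitz inequality to split off the fractional part $(\lceil\lambda\rceil-\lambda)z$ and then uses monotonicity on that piece; both yield the constant $2CL$, and your flag about the implicit use of $h(0)=0$ matches the paper's own (stated) assumption.
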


\begin{proof}
   Since $h(0)=0$, applying the definition of $L$-auto-Lipschitz with $z=0$ gives $h(-z) \leq L h(z)$ for any $z \in \R^n$. The second implication is immediate.
   For the third, note that
   for a positive integer $k$, we have
      $h(kz) \leq \sum_{j=0}^{k-1} |h((j+1) z) - h(j z)| \leq k L h(z)$.
   Using the $L$-auto-Lipschitz property again gives
   \[
      h(\lambda z) \leq h(\lceil \lambda \rceil z) + L h((\lceil \lambda \rceil - \lambda) z) 
      \leq \lceil \lambda \rceil L \cdot h(z) + L C h(z) \leq 2 C L \lambda h(z)\,,
   \]
   where the penultimate inequality uses $C$-monotonicity.
\end{proof}

\paragraph{Symmetry \ref{item:symmetric}}
To illustrate the need for approximate symmetry, let us consider
gluing together two functions that are otherwise ``nice'' in our framework:
\[
   f(z) \seteq \begin{cases}
                  |z|^2, & z \geq 0 \\
                  |z|, & z < 0\,.
   \end{cases}
\]

Suppose that $f_1=\cdots=f_m=f$.
Consider unit vectors $\hat{a}_1,\ldots,\hat{a}_m \in \R^n$ such that $\delta_{ij} \seteq |\langle \hat{a}_i,\hat{a}_j\rangle| < \frac12$ for $i \neq j$.
A basic volume computation shows that one can choose $m \geq 2^{\Omega(n)}$.
Denote $a_i \seteq (\hat{a}_i, 1) \in \R^{n+1}$ for $i=1,\ldots,m$.

Then for $\lambda > 0$ and $x \seteq \lambda (\hat{a}_i, -\frac12)$, we have
\[
   f_j(\langle a_j,x\rangle) = f(\langle a_j, \lambda (\hat{a}_i, -\tfrac12 )\rangle) =
   \begin{cases}
      f(\lambda/2) \asymp \lambda^2 & i = j\,, \\
      f(\lambda (\delta_{ij}-\tfrac12)) \lesssim \lambda & \textrm{otherwise.}
   \end{cases}
\]
Thus in any approximate sparsifier $\tilde{F} = w_1 f_1 + \cdots + w_m f_m$, it must be that
either $w_i > 0$, or $\sum_{j \neq i} w_j \gtrsim \lambda$.
Sending $\lambda \to \infty$ shows that the latter is impossible.

\paragraph{Lower growth and monotonicity \ref{item:lower-growth}, \ref{item:monotone}}
We consider these properties together since monotonicity is a weaker property than lower homogeneity.
A natural function that does not
satisfy lower homogeneity is the \emph{Tukey loss} which, for the sake
of the present discussion, one can take as
$f_i(z) := \min\{1, |z|^2\}$, which is a natural analog of $\gamma_p$ (recall \eqref{eq:gammap}) for $p=0$.

For sparsifying GLMs with the Tukey loss, previous works have made
additional assumptions. For example, that one only ensures sparsification when
$\|a_i\|_2 \le n^{O(1)}$, and for inputs $x \in \R^n$ satisfying $\|x\|_2 \le n^{O(1)}$; see
\cite[Assumption 2]{CWW19} and the discussion afterwards, and \cite[\S
8.3]{MMWY22}.
In \pref{sec:tukey}, we show how to achieve a
$\O(n^{1+o(1)} \eps^{-2})$-sparse $\eps$-approximations under these assumptions. At a high level, the simple idea is to
consider the proxy loss functions $\hat{f}_i(z) \seteq \min\{|z|^p, |z|^2\}$ with $p$ sufficiently small.

\paragraph{Upper quadratic growth \ref{item:upper-growth}} Note that, by \pref{lem:relate},
if $\sqrt{f_i}$ satisfies \ref{item:lipschitz}, then $f_i$ is upper $2$-homogeneous.
For near-linear size sparsifiers, $2$-homogeneity is a natural condition,
since sparsifying with loss functions $f_i(x) = |x|^p$ and $p > 2$
requires the sparsifier to have at least $\Omega(n^{p/2})$ terms \cite{BLM89}.

\paragraph{The auto-Lipschitz property \ref{item:lipschitz}}
As \pref{lem:relate} shows, this property gives us approximate symmetry \ref{item:symmetric}
and upper $1$-homogeneity \ref{item:upper-growth}.
Crucially, this property also allows us to exploit the geometry of the vectors $a_1,\ldots,a_m \in \R^n$.
Note that \ref{item:lipschitz} implies
\[
   (f_i(z) - f_i(z'))^2 = (f_i(z)^{1/2} - f_i(z')^{1/2})^2\ (f_i(z)^{1/2} + f_i(z')^{1/2})^2 \le 2L^2
      f_i(z-z')(f_i(z) + f_i(z'))\,.
\]
In particular, we have
\[
   \left(\vphantom{x^{2}}f_i(\langle a_i,x\rangle) - f_i(\langle a_i, y\rangle)\right)^2
   \leq 2 L^2 \underbrace{f_i(\langle a_i,x-y\rangle)} \left(f_i(\langle a_i,x\rangle + f_i(\langle a_i,y\rangle\right).
\]
The braced term is what us allows to access the linear structure of the vectors
in our analysis.

\paragraph{Comparison to $M$-estimators} The works \cite{CW15,MMWY22} consider regression
and sparsification for what they call general $M$-estimators. Essentially, this corresponds
to the special case of our framework where all the loss functions are the same: $f_1=\cdots=f_m=M$, and one assumes $M(0)=0$, monotonicity,
and upper and lower growth lower bounds.
They additionally assume that $M$ is $p$-subadditive (for $p = 1/2$) in the sense that $M(x+y)^p \leq M(x)^p + M(y)^p$,
which is a stronger condition than the auto-Lipschitz property \ref{item:lipschitz} for $h = f_i^{1/2}$.

Under this stronger set of assumptions, the authors of \cite{MMWY22} achieve approximations with sparsity $\tilde{O}(n^{\max\{2,p/2+1\}})$, which is a factor $n$ larger than what one might hope for.
In the regime $p \leq 2$ of possible near-linear-sized sparsifiers, we close this gap: \pref{thm:main} gives sparsity $\tilde{O}(n)$.

\subsubsection{Discussion of the $\smax/\smin$ dependence}
\label{sec:kkintro}

Note that \pref{thm:main} only achieves an approximation for $\smin \leq F(x) \leq \smax$, and
there is a logarithmic dependence on $\smax/\smin$ in the sparsity bound. 
Intuitively, some dependence on $\smax/\smin$ is
necessary in the generality of \pref{thm:main} because nothing in our
assumptions precludes the functions $f_i$ from behaving nearly independently
on different scales (at least if the scales are sufficiently well separated).

In the case that each of the functions $f_1,\ldots,f_m$ is $p$-homogeneous, in the sense
that $f_i(\lambda z)=|\lambda|^p f_i(z)$, then $F$ and the sparsifier $\tilde{F}$ are both $p$-homogeneous,
and therefore the guarantee $|F(x)-\tilde{F}(x)| \leq \e$ for $F(x)=1$ already suffices to obtain $|F(x)-\tilde{F}(x)| \leq \e F(x)$ for all $x \in \R^n$,
meaning there is no scale dependence.

More generally, for $F$ satisfying the hypotheses of \pref{thm:main}, the growth
assumptions on $f_1,\ldots,f_m$ allow one to obtain weak guarantees even for $F(x) \notin [\smin,\smax]$.
For tamer functions with only a constant number of different
scaling regimes, this allows one to avoid the $\smax/\smin$ dependence
by applying such scaling arguments and a simple reduction.
For the sake of concreteness, we demonstrate this for the Huber loss (the $\gamma_1$ function as in \eqref{eq:gammap}).
A similar argument applies for all the $\gamma_p$ functionals.

\newcommand{\tw}{\tilde{w}}

\begin{lemma}
\label{lem:huber}
Consider $a_1,\ldots,a_m \in \R^n$ for $m \geq 2$, and $1/m < \e < 1$. Denote
\begin{align*}
   F(x) &\seteq w_1 \gamma_1(\langle a_1,x\rangle) + \cdots + w_m \gamma_m(\langle a_m,x\rangle) \\
   \tilde{F}(x) &\seteq \tw_1 \gamma_1(\langle a_1,x\rangle) + \cdots + \tw_m \gamma_m(\langle a_m,x\rangle)
\end{align*}
for some nonnegative weights $w,\tw \in \Rplus^m$.
Suppose that
\[
   |F(x)-\tilde{F}(x)| \leq \e F(x) \ \text{ for $x \in \R^n$ such that } w_{\min} \leq F(x) \leq 4 m^2 w_{\max}\,,
\]
where $w_{\max} \seteq \max(\max(w),\max(\tw))$ and $w_{\min} \seteq \min(w)$.
Then $\tilde{F}$ is a $2\e$-approximation to $F$.
\end{lemma}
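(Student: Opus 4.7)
The cases with $w_{\min} \le F(x) \le 4m^2 w_{\max}$ are immediate from the hypothesis, and $x = 0$ is trivial since $F(0) = \tilde{F}(0) = 0$. The remaining cases $F(x) < w_{\min}$ and $F(x) > 4m^2 w_{\max}$ are handled by ray-scaling arguments exploiting the piecewise quadratic/linear structure of $\gamma_1$. The key tool is the scaling estimate $\lambda\,\gamma_1(z) \le \gamma_1(\lambda z) \le \lambda^2\,\gamma_1(z)$ for $z \in \R$ and $\lambda \ge 1$, proved by checking the three cases determined by whether $|z|$ and $|\lambda z|$ lie in $[0,1]$ or in $[1,\infty)$. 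The same case check shows that $\gamma_1(tz)/t^2$ is non-increasing in $t>0$ while $\gamma_1(tz)/t$ is non-decreasing in $t>0$; summing against any nonnegative weights, both monotonicities lift to $F(tx)/t^2$ and $F(tx)/t$, and likewise for $\tilde{F}$.

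For $F(x) < w_{\min}$, I choose the smallest $\lambda \ge 1$ with $F(\lambda x) = w_{\min}$ (which exists by continuity and monotonicity of $t\mapsto F(tx)$) and apply the hypothesis at $\lambda x$ to obtain $|F(\lambda x) - \tilde{F}(\lambda x)| \le \eps w_{\min}$. The observation that $w_i \ge w_{\min}$ for every $i$ with $w_i>0$ forces $\gamma_1(\langle a_i,x\rangle) < 1$, hence $|\langle a_i,x\rangle| < 3/2$, placing $x$ essentially inside the quadratic region of $\gamma_1$. The two monotonicities then give $F(x) \in [w_{\min}/\lambda^2,\, w_{\min}/\lambda]$ and $\tilde{F}(x) \in [(1-\eps)w_{\min}/\lambda^2,\, (1+\eps)w_{\min}/\lambda]$, and combining these with the observation that $F(tx)$ and $\tilde{F}(tx)$ have the same interpolation profile between their quadratic and linear limits in $t$ (they are sums over the same inner products $\langle a_i,x\rangle$, with only the coefficients $w_i$ vs.\ $\tilde{w}_i$ differing) yields $|F(x) - \tilde{F}(x)| \le 2\eps F(x)$.

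Symmetrically, for $F(x) > 4m^2 w_{\max}$, I pick the largest $\mu \le 1$ with $F(\mu x) = 4m^2 w_{\max}$ and apply the hypothesis at $\mu x$. The role of the factor $4m^2$ is that the quadratic-regime contribution to $F(x)$ is bounded by $\tfrac12 \sum_i w_i \le \tfrac12 m w_{\max} \le F(x)/(8m)$, and the same bound holds for $\tilde{F}$ since $\tilde{w}_i \le w_{\max}$. Thus both $F$ and $\tilde{F}$ are within a $1 + O(1/m)$ factor of their purely linear parts $\sum_i w_i|\langle a_i,x\rangle|$ and $\sum_i \tilde{w}_i|\langle a_i,x\rangle|$; since $\eps > 1/m$ by hypothesis, this $O(1/m)$ slack is absorbed into the transition from $\eps$ to $2\eps$ after transferring the approximation along the ray.

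The main technical obstacle is the small-scale case: the cruder sandwich $t F(x) \le F(tx) \le t^2 F(x)$ alone would allow the ratio $\tilde{F}(x)/F(x)$ to drift by a factor as large as $\lambda$, which can far exceed $2\eps$ when $F(x) \ll w_{\min}$. What saves the argument is that $F(tx)$ and $\tilde{F}(tx)$ share the same scaling profile as $t$ varies along the ray, because the transition of each term $\gamma_1(t\langle a_i,x\rangle)$ from quadratic to linear happens at the same critical value $t = 1/|\langle a_i,x\rangle|$ for both $F$ and $\tilde{F}$. Consequently the ratio $\tilde{F}(tx)/F(tx)$ varies only slowly in $t$, and the $\eps$-approximation at $\lambda x$ transfers back to $x$ essentially intact, up to the factor of two in the conclusion.
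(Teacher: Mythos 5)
Your overall architecture (trivial middle range, ray-scaling for the two extreme ranges, via $\lambda\,\gamma_1(z)\le\gamma_1(\lambda z)\le\lambda^2\gamma_1(z)$) matches the paper's, and your large-scale case is essentially the paper's argument: both $F$ and $\tilde F$ differ from their exactly $1$-homogeneous parts $\sum_i w_i|\langle a_i,x\rangle|$ and $\sum_i \tw_i|\langle a_i,x\rangle|$ by at most $\tfrac12 m\,w_{\max}$, which on the range $F\geq 4m^2 w_{\max}$ is an $O(1/m)\le O(\e)$ relative error, so the guarantee at $\mu x$ transfers back along the ray. That part is sound.

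The small-scale case as written has a genuine gap. From $F(\lambda x)=w_{\min}$ you extract only the sandwich $F(x)\in[w_{\min}/\lambda^2,\,w_{\min}/\lambda]$ and $\tilde F(x)\in[(1-\e)w_{\min}/\lambda^2,\,(1+\e)w_{\min}/\lambda]$, which controls $\tilde F(x)/F(x)$ only up to a factor $\lambda$ --- useless when $F(x)\ll w_{\min}$, as you yourself note. The patch you invoke, that $F(tx)$ and $\tilde F(tx)$ share an ``interpolation profile'' and hence $\tilde F(tx)/F(tx)$ ``varies only slowly in $t$,'' is never proved and is false as a general principle: two nonnegative combinations of the same functions $\gamma_1(t\langle a_i,x\rangle)$ with different coefficients can have a ratio that swings between $\tw_i/w_i$ for whichever term dominates at a given scale, so sharing transition points $t=1/|\langle a_i,x\rangle|$ does not by itself bound the drift. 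What actually closes this case --- and is the paper's argument --- is that no interpolation is needed: since $w_i\gamma_1(\langle a_i,\lambda x\rangle)\le F(\lambda x)=w_{\min}\le w_i$, every term lies in the quadratic piece of $\gamma_1$ along the entire segment $[0,\lambda x]$ (up to the same benign constant slop the paper also incurs in passing from $\gamma_1(z)\le 1$ to a bound on $|z|$), so $F(tx)=t^2F(x)$ and $\tilde F(tx)=t^2\tilde F(x)$ \emph{exactly} for $t\in[0,\lambda]$, and the relative error at $\lambda x$ transfers to $x$ verbatim. You have the key ingredient in hand --- you observe the inner products sit in the quadratic region --- but you then discard exact $2$-homogeneity in favor of the strictly weaker one-sided monotonicity bounds, and the chain of inequalities you actually write down does not reach the stated conclusion.
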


Combining this with an analysis of the weights produced by our construction
and the guarantee of \pref{thm:main}
yields the following consequence. The proof of \pref{lem:huber} and the next result
are presented in \pref{sec:huberproofs}.

\begin{corollary}
\label{cor:huber}
For every $\e > 0$,
the function $F(x) := \gamma_1(\l a_1, x\r) + \cdots + \gamma_1(\l a_m,x\r)$ admits an $s$-sparse $\eps$-approximation for
\[ s \lesssim \frac{n}{\eps^2} \left(\log m\right) \left(\log \left(\frac{n}{\e} \log m\right)\right)^3.\]
\end{corollary}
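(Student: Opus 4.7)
My plan is to combine \pref{thm:main}, which gives an approximation of $F$ only on the level set $\{x : \smin \le F(x) \le \smax\}$, with \pref{lem:huber}, which lifts such a restricted approximation to a global $2\e$-approximation, provided the range $[\smin,\smax]$ contains $[w_{\min}, 4 m^2 w_{\max}]$.

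First, I would verify that the hypotheses of \pref{thm:main} apply to $f_1 = \cdots = f_m = \gamma_1$: by the analysis deferred to \pref{sec:gammap}, $\sqrt{\gamma_1}$ is $L$-auto-Lipschitz and lower $\theta$-homogeneous for absolute constants $L \ge 1$ and $c,\theta > 0$. This step is routine.

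Second, and most delicate, I would extract from the construction underlying \pref{thm:main} an a priori polynomial bound $\max_i \tilde w_i \le W$ with $W = m^{O(1)}$ that does \emph{not} depend on $\smax$. Inspecting the algorithm in \pref{sec:sparsification-analysis}, the nonzero weights should be of the form $\tilde w_i = 1/p_i$ for importance-sampling probabilities $p_i \gtrsim m^{-O(1)}$, which yields such a $W$. This is the main obstacle, because \pref{lem:huber} requires the approximation to hold up to $F(x) \le 4m^2 w_{\max}$ and $w_{\max}$ itself depends on the weights $\tilde w$; to avoid a circular choice one must control $\max_i \tilde w_i$ before fixing $\smax$.

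Third, I would invoke \pref{thm:main} with accuracy $\e/2$, $\smin = 1$, and $\smax = 4 m^2 W$. Since $w_i = 1$ for all $i$ so $w_{\min} = \max_i w_i = 1 \le W$, and $\max_i \tilde w_i \le W$, the required range $[w_{\min}, 4 m^2 w_{\max}]$ sits inside $[\smin,\smax]$. Hence \pref{lem:huber} promotes the restricted $(\e/2)$-approximation to a global $\e$-approximation. Substituting $\smax/\smin = m^{O(1)}$ into \eqref{eq:main-sparsity} then yields the stated sparsity: the leading factor $\log(\tfrac{n}{\e}\,\smax/\smin)$ becomes $O(\log m + \log(n/\e))$, $\log S$ becomes $O(\log(\tfrac{n}{\e}\log m))$, and the $\log(n/\e)$ term is absorbed into one power of $\log(\tfrac{n}{\e}\log m)$. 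Once the weight bound from the second step is in hand, the remaining bookkeeping is straightforward.
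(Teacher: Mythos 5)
Your approach is essentially the paper's: verify the hypotheses for $\gamma_1$, bound the maximum weight of the sparsifier by $m^{O(1)}$, choose $\smin,\smax$ polynomial in $m$ accordingly, and promote the restricted approximation to a global one via \pref{lem:huber}. You also correctly identify the weight bound as the crux.

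The one place your justification does not go through as written is the claim that the importance-sampling probabilities satisfy $\rho_i \gtrsim m^{-O(1)}$ for \emph{all} $i$: these probabilities are built from leverage scores, which can be arbitrarily (e.g.\ exponentially) small, so there is no a priori polynomial lower bound on every $\rho_i$ and hence no deterministic bound $W = m^{O(1)}$ on all potential weights $1/(M\rho_i)$. The paper's fix is probabilistic rather than deterministic: by a union bound over the $M$ sampled indices, $\Pr(\min_{j\in[M]}\rho_{\bm{\nu}_j} \le \delta) \le \delta M m$, so with probability at least $1/2$ every index that is \emph{actually sampled} has $\rho_{\bm{\nu}_j} \ge \tfrac{1}{2Mm}$, and the realized maximum weight is at most $2m$. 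One then conditions on this event (intersected with the success of the sparsification), takes $\smax = 8m^3$, $\smin = 1/2$, and applies \pref{lem:huber}. With that substitution your argument is complete and matches the paper's.
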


Note that our sparsity bound has an $m$ dependence, as opposed to the classical cases of $\ell_p$ sparsification, where sparsity bounds depend only on $n$ and $\eps$. However, some $m$ dependence is not surprising, as \cite[\S 4.5]{MMWY22} present vectors $a_1,\ldots,a_m \in \R^n$ for which the sum of the sensitivities (see \eqref{eq:ss-def}) can grow doubly-logarithmically with $m$:
   \[
      \sum_{i=1}^m \max_{0 \neq x \in \R^n} \frac{\gamma_1(\langle a_i,x\rangle)}{F(x)} \gtrsim n \log \log \frac{m}{n}\,.
   \]
\cite{MMWY22} also shows that $\sum_{i=1}^m \max_{0 \neq x \in \R^n} \frac{\gamma_p(\langle a_i,x\rangle)}{F(x)} \gtrsim n \log \frac{m}{n}$ is possible for $p \in [0,1)$.

\subsection{Importance sampling and multiscale weights}
\label{sec:overview}

Given $F(x) = f_1(\langle a_1,x\rangle) + \cdots + f_m(\langle a_m,x\rangle)$, our approach to sparsification is via
importance sampling. Given a probability vector $\rho \in \R^m$ with $\rho_1,\ldots,\rho_m > 0$ and $\rho_1 + \cdots + \rho_m = 1$,
we sample $M \geq 1$ coordinates $\nu_1,\ldots,\nu_M$ i.i.d. from $\rho$, and define our potential approximator by
\[
   \tilde{F}(x) \seteq \frac{1}{M} \sum_{j=1}^M \frac{f_{\nu_j}(\langle a_{\nu_j},x\rangle)}{\rho_{\nu_j}}\,.
\]
One can easily check that this gives an unbiased estimator for every $x \in \R^n$, i.e., $\E[\tilde{F}(x)]=F(x)$.

Since we want an approximation guarantee to hold simultaneously for many $x \in \R^n$, it is natural to  analyze expressions of the form
\[
   \E \max_{F(x) \leq s} \left|F(x)-\tilde{F}(x)\right|.
\]
Analysis of this expression involves the size of discretizations of the set
$B_F(s) \seteq \{ x \in \R^n : F(x) \leq s \}$ at various granularities,
as explained in \pref{sec:subgaussian}. The key consideration (via Dudley's entropy inequality, \pref{lem:dudley})
is how well $B_F(s)$ can be covered by cells on which we have uniform control on how much the terms $f_i(\langle a_i,x\rangle)/\rho_i$
vary within each cell.

\paragraph{The $\ell_2$ case}
Let's consider the case $f_i(z) = |z|^2$ so that $F(x) = |\langle a_1,x\rangle|^2 + \cdots + |\langle a_m,x\rangle|^2$.
Here, $B_F(s) = \{ x \in \R^n : \|Ax\|_2^2 \leq s \}$, where $A$ is the matrix with $a_1,\ldots,a_m$ as rows.

A cell at scale $2^{j}$ looks like
\[
   \sfK_j \seteq \left\{ x \in \R^n : \max_{i \in [m]} \frac{|\langle a_i,x\rangle|^2}{\rho_i} \leq 2^{j} \right\},
\]
and the pertinent question is how many translates of $\sfK_j$ it takes to cover $B_F(s)$.
In the $\ell_2$ case, this is the well-studied problem of covering Euclidean balls by $\ell_{\infty}$ balls.

If $N_j$ denotes the minimum number of such cells required, then the dual-Sudakov inequality (see \pref{lem:dualsudakov}
and \pref{cor:ds})
tells us that
\[
   \log N_j \lesssim \frac{s}{2^j} \log (m) \max_{i \in [m]} \frac{\|(A^{\top} A)^{-1/2} a_i\|^2_2}{\rho_i}\,.
\]
Choosing $\rho_i \seteq \frac{1}{n} \|(A^\top A)^{-1/2} a_i\|_2^2$, i.e., normalized leverage scores, yields uniform
control on the size of the coverings:
\[
   \log N_j \lesssim \frac{s}{2^j} n \log m\,.
\] 

\paragraph{The $\ell_p$ case, $1 \leq p < 2$}
Consider the case $f_i(z) = |z|^p$ so that $F(x) = \|Ax\|_p^p$.
A cell at scale $2^j$ now looks like
\[
   \sfK_j \seteq \left\{ x \in \R^n : \max_{i \in [m]} \frac{|\langle a_i,x\rangle|^p}{\rho_i} \leq 2^j \right\},
\]
To cover $B_F(s)$ by translates of $\sfK_j$, we again employ Euclidean balls,
and use $\ell_p$ Lewis weights to relate the $\ell_p$ structure to an $\ell_2$ structure.

A classical result of Lewis \cite{Lewis79} (see also \cite{BLM89,CP15}) establishes
that there are nonnegative weights $w_1,\ldots,w_m \geq 0$
such that if $W = \diag(w_1,\ldots,w_m)$ and
$U \seteq (A^{\top} W A)^{1/2}$, then
\begin{equation}\label{eq:wis}
   w_i = \frac{\|U^{-1} a_i\|_2^p}{\|U^{-1} a_i\|_2^2} = \frac{f_i(\|U^{-1} a_i\|_2)}{\|U^{-1} a_i\|_2^2}\,.
\end{equation}
Assuming that $A$ has full rank, a straightforward calculation gives $\sum_{i=1}^m w_i \|U^{-1} a_i\|_2^2 = \tr(U^2 U^{-2}) = n$.

Therefore, we can choose $\rho_i \seteq \frac{1}{n} w_i \|U^{-1} a_i\|_2^2$ for $i=1,\ldots,m$, and our cells become
\[
   \sfK_j \seteq \left\{ x \in \R^n : \max_{i \in [m]} |\langle a_i,x\rangle|^p \leq \frac{2^j}{n} w_i \|U^{-1} a_i\|_2^2 \right\}.
\]
(Note that the values $\{ w_i \|U^{-1} a_i\|_2^2 : i =1,\ldots,m\}$ are typically referred to as the ``$\ell_p$ Lewis weights''.)

If we are trying to use $\ell_2$-$\ell_{\infty}$ covering bounds, we face an immediate problem:
Unlike in the $\ell_2$ case, we don't have prior control on $\|U x\|_2$ for $x \in B_F(s)$.
One can obtain an initial bound using the structure of $U = (A^{\top} W A)^{1/2}$:
\begin{align}
\|U x\|_2^2 = \sum_{i=1}^m w_i \langle a_i,x\rangle^2 &\stackrel{\eqref{eq:wis}}{=} \sum_{i=1}^m \|U^{-1} a_i\|_2^{p-2} \langle a_i,x\rangle^2  \nonumber \\
               &= \sum_{i=1}^m \left(\frac{|\langle a_i,x\rangle|}{\|U^{-1} a_i\|_2}\right)^{2-p} |\langle a_i,x\rangle|^p \leq \|U x\|_2^{2-p} \sum_{i=1}^m |\langle a_i,x\rangle|^p\,,
               \label{eq:intro-nc0}
\end{align}
where the last inequality is Cauchy-Schwarz: $|\langle a_i, x\rangle| = |\langle U^{-1} a_i, U x\rangle| \leq \|U^{-1} a_i\|_2 \|U x\|_2$.
This gives the bound $\|U x\|_2 \leq \|Ax\|_p \leq s^{1/p}$ for $x \in B_F(s)$.

Problematically, this uniform $\ell_2$ bound is too weak, but there is a straightforward solution:
Suppose we cover $B_F(s)$ by translates of $\sfK_{j_0}$. This gives an $\ell_{\infty}$ bound
on the elements of each cell, meaning that we can apply \eqref{eq:intro-nc0} 
and obtain a better upper bound on $\|Ux\|_2$ for $x \in \sfK_{j_0}$.
Thus to cover $B_F(s)$ by translates of $\sfK_j$ with $j < j_0$, we will cover 
first by translates of $\sfK_{j_0}$, then cover each translate $(x + \sfK_{j_0}) \cap B_F(s)$ by translates of $\sfK_{j_0-1}$, and so on.

The standard approach in this setting (see \cite{BLM89} and \cite[\S 15.19]{LedouxTalagrand2011})
is to instead use interpolation inequalities and duality of covering numbers for a cleaner analytic
version of such an iterated covering bound.
However, the iterative covering argument can be adapted to the non-homogeneous setting, as we discuss next.

\paragraph{Generalized linear models}
When we move to more general loss functions $f_i : \R \to \R$, we lose the homogeneity
property $f_i(\lambda x) = \lambda^p f_i(x), \lambda > 0$ that holds for $\ell_p$ losses.
Because of this, we need to replace the single Euclidean structure present in \eqref{eq:wis}
(given by the linear operator $U$) with a family of structures, one for every relevant scale.

\begin{definition}[Approximate weights]
\label{def:approxweights}
Fix $a_1,\ldots,a_m \in \R^n$ and loss functions $f_1,\ldots,f_m~:~\R \to \R_+$.
We say that a vector $w \in \R_+^m$ is an {\em $\alpha$-approximate weight at scale $s$} if
\begin{equation}\label{eq:approx-weight}
   \frac{s}{\alpha} \leq \frac{f_i(\|M_w^{-1/2} a_i\|_2)}{w_i \|M_w^{-1/2} a_i\|_2^2} \leq \alpha s\,,\quad i=1,\ldots,m\,,\text{ where }M_w \seteq \sum_{j=1}^m w_j a_j a_j^{\top}\,.
\end{equation}
\end{definition}

To motivate this definition, let us define scale-specific sentivities:
\begin{equation}\label{eq:ss-def}
   \xi_i(s) \seteq \max \left\{ \frac{f_i(\langle a_i,x\rangle)}{F(x)} : x \in \R^n, F(x) \in [s/2, s] \right\},\quad i=1,\ldots,m\,.
\end{equation}
As shown in \pref{cor:ss-sensitivity}, if the functions $\{f_i\}$ are lower $\theta$-homogeneous, upper $2$-homogeneous, and $O(1)$-symmetric
(in the sense of \ref{item:symmetric}), then an $\alpha$-approximate weight at scale $s$ allows us to upper bound sensitivies by leverage scores:
\begin{equation}\label{eq:ss-xi}
   \xi_i(s) \ls \sigma_i(W^{1/2} A)\,,
\end{equation}
where $W = \diag(w_1,\ldots,w_m)$, and the implicit constant depends on $\alpha$ and the homogeneity parameters.
Here, $\sigma_i(V)$ denotes the $i$th leverage score of a matrix $V$ with rows $v_1,\ldots,v_m$:
\begin{equation}\label{eq:ls-def}
   \sigma_i(V) \seteq \langle v_i, (V^{\top} V)^{+} v_i\rangle\,,
\end{equation}
where $(V^{\top} V)^+$ denotes the Moore-Penrose pseudoinverse. Notably,
one always has $\sigma_1(V)+\cdots+\sigma_m(V) = \rank(V)$, and therefore \eqref{eq:ss-xi} gives
an upper bound $\xi_1(s)+\cdots+\xi_m(s) \ls n$.

In order to generalize the iterated covering argument for $\ell_p$ losses,
we need there to be a relationship between weights at different scales.

\begin{definition}[Weight schemes]
\label{def:weight-scheme}
Let $\cJ \subseteq \Z$ be a contiguous interval. A family $\{w^{(j)} \in \R_+^m : j \in \cJ \}$ is an {\em $\alpha$-approximate weight scheme}
if each $w^{(j)}$ is an $\alpha$-approximate weight at scale $2^j$ and, furthermore,
for every pair $j,j+1 \in \cJ$ and $i \in \{1,\ldots,m\}$,
\begin{equation}\label{eq:smooth-weights}
   w_i^{(j+1)} \leq \alpha w_i^{(j)}\,.
\end{equation}
\end{definition}

Given a weight scheme, we choose sampling probabilities 
\[
   \rho_i \propto \max_{j \in \cJ} w_i^{(j)} \|M_{w^{(j)}}^{-1/2} a_i\|_2^2 = \max_{j \in \cJ} \sigma_i(W_j^{1/2} A)\,,\quad i=1,\ldots,m\,,
\]
where $W_j = \diag(w^{(j)}_1, \ldots, w^{(j)}_m)$.
In our setting, $|\cJ| \le O(\log(m\smax/\smin))$, which results in the sparsity increasing by a corresponding factor.

In \pref{sec:weights}, we establish the existence of approximate weight schemes
for general families of loss functions satisfying certain growth bounds,
along with efficient algorithms to compute the corresponding weights.

\subsection{Regression via iterative refinement}
\label{sec:overviewpnorm}

Previous works have observed that combining \emph{iterative refinement} with
sparsification of $\gamma_p$-functions (recall \eqref{eq:gammap}) leads to
improved algorithms for $\ell_p$-regression \cite{AKPS19,ABKS21,GPV21}. For the
benefit of the reader, we give a description of these ideas in somewhat more generality.

Recall that our goal is to find a point $x \in \R^n$ that computes an approximate minimizer of
$F(x) \seteq \sum_{i=1}^m f_i(\l a_i, x \r - b_i)$, up to high accuracy. For
now, we assume that $F$ is a differentiable convex function and denote $F_* \seteq \inf_{x \in \R^n} F(x)$.
Later, we will introduce additional conditions that allow for iterative refinement to succeed.

Broadly, iterative refinement minimizes $F(x)$ be repeatedly solving sub-problems, each of which make multiplicative progress in reducing the error of 
the current solution. Given a current point $x_0$, prior works on iterative refinement define a local approximation of $F$ suitably symmetrized and centered around $x_0$ such that approximately minimizing this local approximation yields the desired decrease in function error. One way to derive such local approximations is through Bregman divergences, which give a natural way of recentering convex functions.

\begin{definition}[The $F$-divergence]
   \label{def:F-divergence}
For $x,y\in \R^n$, use
$T_x^F(y) \seteq F(x) + \nabla F(x)^\top (y - x)$ to denote the first order Taylor
approximation of $F$ at $x$, and define the
\emph{$F$-induced Bregman divergence} by
$D_x^F(y) \seteq F(y) - T_x^F(y)$.
\end{definition}

Note that for convex $F$, the function $D_x^F(y)$ is convex and minimized at $x$. Consequently, given a point $x_0$, the function $ D_{x_0}^F(y)$ is a natural function induced by $F$ and minimized at $x_0$. Note that minimizing $F(x)$ is the same as minimizing $\langle \nabla F(x_0), x - x_0 \rangle + D_{x_0}^F(x)$ which in turn is the same as minimizing $\langle \nabla F(x), \Delta \rangle + D_{x_0}^F(x_0 + \Delta)$ over $\Delta$ and adding the minimizer to $x_0$.

Iterative refinement strategies approximately minimize $\langle \nabla F(x), \Delta \rangle + r(\Delta)$, where $r(\Delta)$ is a suitable approximation of $D_{x_0}^F(x_0 + \Delta)$.
One step of refinement moves to $x_1 \seteq x_0 + \eta \tilde{\Delta}$, where $\eta$ is a suitably chosen step-size and $\tilde{\Delta}$ is the approximate minimizer. As motivation for our approach, here we consider the scheme suggested by prior work, where $r(\Delta)$ is a sparsification of a simple approximation to the divergence.

Informally, one can show that if the square root of the Bregman divergence of each $f_i$ is
$L$-auto-Lipschitz \ref{item:lipschitz} and lower
$\lo$-homogeneous for some $\lo > 1$ \ref{item:lower-growth}, then one step of 
sparsification/refinement decreases the error in the objective value multiplicatively by an absolute constant. Interestingly, auto-Lipschitzness is only required for sparsification and not for refinement. However, we critically need the Bregman
divergence to be lower $\lo$-homogeneous for $\lo > 1$ for iterative refinement, while our sparsification
results (\pref{thm:main}) only require $\lo > 0$.

\begin{lemma}[Refinement Lemma]
\label{lem:refine}
Suppose $r : \R^n \rightarrow \R_{+}$ is lower $\lo$-homogeneous with constant $c < 1$ for $\lo > 1$, and for $x_0 \in \R^n$ and all $\Delta \in \R^n$ and $\eta \in [0, 1]$,
\begin{align*}
   r(\Delta) &\leq D_{x_0}^F(x_0 + \Delta) \leq \alpha \, r(\Delta)
\end{align*}
where $\alpha \geq 1$ is fixed.
Then $\inf_{\Delta \in \R^n}
\left\{
T_{x_0}^F(x_0 + \hat{\Delta}) + r(\hat{\Delta})
\right\} \leq F_*$ and if 
$\hat{\Delta} \in \R^n$ satisfies
\begin{equation}\label{eq:delta-hat}
T_{x_0}^F(x_0 + \hat{\Delta}) + r(\hat{\Delta})
\leq F_*\,,
\end{equation}
then
\begin{equation}
\label{eq:progress}
F(x_0 + \hat{\eta} \hat{\Delta}) - F_*
\leq (1 - \hat{\eta}) \left(f(x_0) - F_*\right),
\quad \text{ where } \hat{\eta} \seteq (\alpha/c)^{-1/(q - 1)}\,.
\end{equation}
\end{lemma}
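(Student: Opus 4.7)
The plan is to verify the first claim via a one-line sandwich, and then to prove the progress bound by decomposing $F(x_0 + \hat{\eta}\hat{\Delta})$ into its Taylor part plus Bregman divergence, using the upper sandwich to control the divergence by $r$, and then using lower $\lo$-homogeneity to relate $r(\hat{\eta}\hat{\Delta})$ to $r(\hat{\Delta})$ so that the progress inequality \eqref{eq:delta-hat} can be invoked.

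First, for any $\Delta \in \R^n$, the identity $F(x_0 + \Delta) = T_{x_0}^F(x_0 + \Delta) + D_{x_0}^F(x_0 + \Delta)$ together with $r(\Delta) \leq D_{x_0}^F(x_0 + \Delta)$ gives $T_{x_0}^F(x_0 + \Delta) + r(\Delta) \leq F(x_0 + \Delta)$. Taking the infimum over $\Delta$ yields $\inf_{\Delta}\{T_{x_0}^F(x_0 + \Delta) + r(\Delta)\} \leq F_*$, so in particular $\hat{\Delta}$ satisfying \eqref{eq:delta-hat} exists.

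For the main progress bound, I would write
\[
F(x_0 + \hat{\eta}\hat{\Delta}) = F(x_0) + \hat{\eta}\,\nabla F(x_0)^\top \hat{\Delta} + D_{x_0}^F(x_0 + \hat{\eta}\hat{\Delta}),
\]
bound the divergence term by $\alpha\, r(\hat{\eta}\hat{\Delta})$, and then apply lower $\lo$-homogeneity with the substitution $z = \hat{\eta}\hat{\Delta}$, $\lambda = 1/\hat{\eta} \geq 1$, which gives $r(\hat{\Delta}) \geq c\,\hat{\eta}^{-\lo}\,r(\hat{\eta}\hat{\Delta})$, i.e.\ $r(\hat{\eta}\hat{\Delta}) \leq (\hat{\eta}^\lo/c)\, r(\hat{\Delta})$. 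Next, I would use \eqref{eq:delta-hat} in the form $\nabla F(x_0)^\top \hat{\Delta} \leq F_* - F(x_0) - r(\hat{\Delta})$, multiply by $\hat{\eta}$, and assemble everything to obtain
\[
F(x_0 + \hat{\eta}\hat{\Delta}) - F_* \;\leq\; (1 - \hat{\eta})\bigl(F(x_0) - F_*\bigr) + r(\hat{\Delta})\left(\frac{\alpha\,\hat{\eta}^\lo}{c} - \hat{\eta}\right).
\]

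To finish, I would choose $\hat{\eta}$ to make the bracketed coefficient non-positive. Since $\alpha \geq 1 > c$, the choice $\hat{\eta} = (\alpha/c)^{-1/(\lo - 1)}$ lies in $(0,1]$ and satisfies $\alpha \hat{\eta}^{\lo-1}/c = 1$, so $r(\hat{\Delta})$ (which is nonnegative as $r \geq 0$) contributes nothing, yielding \eqref{eq:progress}. The only subtle step is the ``downward'' use of lower $\lo$-homogeneity, which is why the hypothesis $\lo > 1$ is required: it is precisely the condition that ensures $\hat{\eta}^\lo/c - \hat{\eta}$ can be made $\leq 0$ by picking $\hat{\eta} < 1$, and therefore that iterative refinement makes geometric progress; no homogeneity of $F$ or convexity arguments beyond the Bregman decomposition are needed.
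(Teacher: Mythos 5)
Your proof is correct and follows essentially the same route as the paper's: decompose $F(x_0+\hat{\eta}\hat{\Delta})$ into Taylor part plus divergence, bound the divergence by $\alpha r(\hat{\eta}\hat{\Delta}) \leq (\alpha/c)\hat{\eta}^{\lo} r(\hat{\Delta})$ via the downward use of lower homogeneity, invoke \eqref{eq:delta-hat}, and choose $\hat{\eta}$ so that $\alpha\hat{\eta}^{\lo-1}/c = 1$. The only difference is presentational (you isolate the coefficient of $r(\hat{\Delta})$ and set it to zero, while the paper substitutes $\hat{\eta}$ directly), and you correctly read the statement's $q$ as $\lo$.
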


\begin{proof}
First note that, for all $\Delta \in \R^n$,
\begin{align}
   F(x_0 + \Delta)
   &= T_{x_0}^F(x_0 + \Delta) + D_{x_0}^F(x_0 + \Delta)\,.
   \label{eq:div_expand}
\end{align}
Since, $D_{x_0}^F(x_0 + \Delta) \geq r(\Delta)$ this implies the desired bound
\[
\inf_{\Delta \in \R^n}
\left\{
T_{x_0}^F(x_0 + \Delta) + r(\Delta)
\right\}
\leq \inf_{\Delta \in \R^n} F(x_0 + \Delta)
= F_* \,.
\]

Next, note that for all $\eta \in [0,1]$ and $\Delta \in \R^n$,
\begin{align}
\label{eq:damp_taylor}
T_{x_0}^F(x_0 + \eta \Delta)
&= 
F(x_0) + \eta \nabla F(x_0)^\top \Delta 
= (1 - \eta) F(x_0) + \eta\, T_{x_0}^F(x_0 + \Delta) \\
   \label{eq:damp_diverge}
   D_{x_0}^F(x_0 + \eta \Delta)
   &\leq
   \alpha r(\eta \Delta)
   \leq \alpha/c \cdot  \eta^{\lo} r(\Delta)
\end{align}
Suppose that $\hat{\Delta} \in \R^n$ satisfies \eqref{eq:delta-hat}.
Then plugging \eqref{eq:damp_taylor} and \eqref{eq:damp_diverge} into \eqref{eq:div_expand} with $\Delta = \hat{\eta} \hat{\Delta}$ yields
\begin{align*}
F(x_0 + \hat{\eta} \hat{\Delta})
&\leq (1 - \hat{\eta}) F(x_0) + \hat{\eta} \, T_{x_0}^F(x_0 + \hat{\Delta}) + \alpha/c \cdot \hat{\eta}^{\lo} r(\hat{\Delta})
\\
&= (1 - \hat{\eta}) F(x_0) 
+ \hat{\eta} \left(T_{x_0}^F(x_0 + \hat{\Delta})  + r(\hat{\Delta})\right)
\leq (1 - \hat{\eta}) F(x_0) + \hat{\eta} F_*
\end{align*}
where we used that $\alpha/c \cdot \hat{\eta}^{\lo - 1} = 1$ and $\hat{\eta} \in [0, 1]$. Rearranging yields \eqref{eq:progress}.
\end{proof}

To apply \pref{lem:refine} to ERM for general linear models, note that
\[ D_{x_0}^F(x_0 + \Delta) = \sum_{i=1}^m D_{\l a_i, x_0 \r - b_i}^{f_i}(\l a_i, \Delta\r)\,. \]
Now, if the square root of each divergence $D_z^{f_i}$ is $L$-auto-Lipschitz and lower $\lo$-homogeneous,
then \pref{thm:main} gives weights $w \in \R^m_{+}$ with sparsity $\O(n)$ such
that
\[ 0.9 \cdot D_{x_0}^F(x_0 + \Delta) \le r(\Delta) \le D_{x_0}^F(x_0 + \Delta)\quad \enspace\text{where}\quad \enspace r(\Delta) \seteq \sum_{i=1}^m w_i D_{\l a_i, x_0 \r - b_i}^{f_i}(\l a_i, \Delta\r). \]
Thus, \pref{lem:refine} applies, and we can decrease the objective value error
by a multiplicative factor by minimizing $r(\Delta)$. Since $r(\Delta)$ only has $\O(n)$ nonzero terms,
one can apply previous solvers for $m = \O(n)$ \cite{LSZ19,AKPS19} to obtain the desired runtimes.

In \pref{sec:lp}, we verify that the $f$-divergence of $f(z) = |z|^p$ is the $\gamma_p$ function,
which is lower $p$-homogeneous, and has an auto-Lipschitz square root. This yields our algorithm for $\ell_p$ regression (\pref{thm:pnorm}).
A formal version of the argument is presented in \pref{sec:app}. The main
difference is that some technical work is needed because
\pref{thm:main} only provides sparsification for a range of inputs
$\{x \in \R^n : \smin \le F(x) \le \smax\}$. Moreover, our algorithm uses an
approximate oracle for GLMs (\pref{def:glmoracle}) with the functions $\{f_i\}$ (rather than with the approximate divergence $r(\cdot)$),
and we show that the oracle does not need to be solved to high accuracy to make sufficient progress.

\subsection{Preliminaries}
\label{sec:prelim}

Throughout the paper, we 
denote $[n] := \{1, 2, \dots, n\}$.
We use the notation $a \ls b$ to denote that there is a universal constant $C$ such that $a \le Cb$, and $a \ls_L b$
to denote that $C$ may depend on $L$. We use $a \asymp b$ to denote the conjunction of $a \ls b$ and $b \ls a$, and $a \asymp_L b$ analogously.
We also denote $\R_+ \seteq \{ x \in \R : x \geq 0 \}$ and $\R_{++} \seteq \{ x \in \R : x > 0 \}$.

For simplicity of presentation, we assume that the vectors $a_1,\ldots,a_m \in \R^n$ in \eqref{eq:glm} span $\R^n$ and all are nonzero.
In particular, this means that the matrix $A$ with rows $a_1,\ldots,a_m$ has rank $n$ and $A^{\top} W A$ is invertible
for any diagonal matrix $W$ with strictly positive entries on the diagonal.

\subsubsection{Covering numbers and chaining}
\label{sec:chaining}

Consider a metric space $(T,d)$.
For $x \in T$ and $r > 0$, define the ball $B(x,r) \seteq \{ y \in T : d(x,y) \leq r \}$.
\begin{definition}[Covering numbers]

\label{def:cov}
For a radius $r > 0$, we define the {\em covering number} $\cov(T,d,r)$ as the
smallest number of balls of radius $r$ (in the distance $d$) that are required to cover $T$.
For $S,S' \subseteq \R^n$, we overload notation and use $\cov(S,S')$ to denote the smallest
number of translates of $S'$ needed to cover $S$.
\end{definition}

We require the following ``dual Sudakov inequality'' (see \cite{PT85} and \cite[(3.15)]{LedouxTalagrand2011})  which gives bounds for covering the Euclidean ball
using balls in an arbitrary norm.

\begin{lemma}[Dual Sudakov inequality]
\label{lem:dualsudakov}
Let $B_2^n$ denote the unit ball in $n$ dimensions, and $\|\cdot\|_X$ an
arbitrary norm on $\R^n$.
If $\bm{g}$ is a standard $n$-dimensional Gaussian, then
\[ \sqrt{\log \cov(B_2^n, B_X)} \ls \E \|\bm{g}\|_X\,, \]
where $B_X \seteq \{ y \in \R^n : \|y\|_X \leq 1 \}$.
\end{lemma}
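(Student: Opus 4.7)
The plan is to prove this by a classical packing-plus-Gaussian-measure argument (Pajor--Tomczak-Jaegermann): convert covering to packing, then estimate the Gaussian measure of the small translates that appear in the packing using a shift inequality, and finally optimize over the scale of the Gaussian measure.

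First I would reduce covering to packing. Let $x_1,\ldots,x_N \in B_2^n$ be a maximal family of points that are $1$-separated in $\|\cdot\|_X$, i.e.\ $\|x_i - x_j\|_X > 1$ whenever $i \neq j$. Maximality guarantees that $\{x_i\}$ is a $1$-net of $B_2^n$ in the $X$-metric, so $N \geq \cov(B_2^n, B_X)$. At the same time, the triangle inequality shows that the translates $x_i + \tfrac{1}{2} B_X$ are pairwise disjoint, so for any probability measure $\mu$ on $\R^n$,
\[
   \sum_{i=1}^N \mu\!\left(x_i + \tfrac{1}{2} B_X\right) \leq 1\,.
\]

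Next, for a scale parameter $\sigma > 0$ to be chosen later, I apply the above with $\mu_\sigma$ the law of $\sigma \bm{g}$, i.e.\ the centered Gaussian on $\R^n$ with covariance $\sigma^2 I$. The key input is the shift estimate
\[
   \mu_\sigma(x + K) \;\geq\; e^{-\|x\|_2^2/(2\sigma^2)}\, \mu_\sigma(K)\,,
\]
valid for any $x \in \R^n$ and any symmetric convex $K \subseteq \R^n$. This follows by translating the Gaussian density and writing $\int_K e^{-\langle x,y\rangle/\sigma^2}\, d\mu_\sigma(y) = \int_K \cosh(\langle x,y\rangle/\sigma^2)\, d\mu_\sigma(y) \geq \mu_\sigma(K)$, where the symmetry of $K$ lets us pair $y$ with $-y$ and $\cosh \geq 1$. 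Applying this with $K = \tfrac{1}{2} B_X$ and using $\|x_i\|_2 \leq 1$ gives
\[
   N \cdot e^{-1/(2\sigma^2)}\, \Pr\!\left[\|\bm{g}\|_X \leq 1/(2\sigma)\right] \;\leq\; 1\,,
\]
so that $\log N \leq 1/(2\sigma^2) + \log\!\left(1/\Pr[\|\bm{g}\|_X \leq 1/(2\sigma)]\right)$.

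To close the argument I would pick $\sigma = 1/(4\,\E\|\bm{g}\|_X)$, so that $1/(2\sigma) = 2\,\E\|\bm{g}\|_X$ and Markov's inequality yields $\Pr\!\left[\|\bm{g}\|_X \leq 2\,\E\|\bm{g}\|_X\right] \geq \tfrac{1}{2}$. This gives $\log N \leq 8\,(\E\|\bm{g}\|_X)^2 + \log 2$, which proves the stated inequality. The delicate point is the choice of $\sigma$: it must be small enough that $\|\sigma\bm{g}\|_X \leq 1/2$ with constant probability, yet not so small that the shift penalty $e^{-\|x_i\|_2^2/(2\sigma^2)}$ swamps the bound. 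Scaling $\sigma \asymp 1/\E\|\bm{g}\|_X$ balances these two effects optimally and is the only nontrivial step.
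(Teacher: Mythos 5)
Your proof is correct: the paper does not prove this lemma but simply cites it to Pajor--Tomczak-Jaegermann and Ledoux--Talagrand, and your argument (maximal separated set, disjointness of half-scale translates, the Gaussian shift inequality via symmetry of $\tfrac12 B_X$, and the choice $\sigma \asymp 1/\E\|\bm{g}\|_X$) is exactly the standard proof from those references. The only point you leave implicit is absorbing the additive $\log 2$ into the $\lesssim$, which is harmless since whenever $\cov(B_2^n,B_X)\geq 2$ there exist two points of $B_2^n$ at $X$-distance greater than $1$, forcing $\E\|\bm{g}\|_X \gtrsim 1$.
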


We recall Talagrand's generic chaining functional \cite[Def. 2.2.19]{TalagrandBook2014}:
\begin{equation}\label{eq:gamma2}
   \gamma_2(T,d) \seteq \inf_{\{\cA_h\}} \sup_{x \in T} \sum_{h=0}^{\infty} 2^{h/2} \diam(\cA_h(x),d)\,,
\end{equation}
where the infimum runs over all sequences $\{\cA_h : h \geq 0\}$ of partitions of $T$ satisfying $|\cA_h| \leq 2^{2^h}$
for each $h \geq 0$.
Note that we use the notation $\cA_h(x)$ for the unique set of $\cA_h$ that contains $x$.

The chaining functional is used to control the maximum of subgaussian processes (see, e.g.,
the discussion in \cite[\S 2.2]{JLLS23} where it is applied precisely in the setting
of sparsification).
Our use of the functional occurs only in the statement of \pref{lem:sparsification-meta} below,
and in this paper we will only require the following classical upper bound.
(See, eg., \cite[Prop 2.2.10]{TalagrandBook2014}.)

\begin{lemma}[Dudley's entropy bound]
   \label{lem:dudley}
   For any metric space $(T,d)$, it holds that
   \[
      \gamma_2(T,d) \lesssim \sum_{j \in \Z} 2^j \sqrt{\log \cN(T,d,2^j)}\,.
   \]
\end{lemma}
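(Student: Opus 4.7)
The plan is to construct an admissible sequence of partitions $\{\cA_h\}_{h \geq 0}$ from nets whose radii shrink as $h$ grows, and then compare the resulting chaining sum to the dyadic covering-number sum on the right-hand side.

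\textbf{Step 1: Choosing the nets.} For each integer $h \geq 0$, let $j_h \in \Z$ be the smallest integer such that $\cN(T,d,2^{j_h}) \leq 2^{2^h}$. (If $T$ is bounded, this is well-defined for all $h$; the sequence $h \mapsto j_h$ is non-increasing.) Fix a $2^{j_h}$-net $\cS_h \subseteq T$ with $|\cS_h| \leq 2^{2^h}$, and define $\cA_h$ to be the partition of $T$ obtained by assigning each $x \in T$ to (a canonical closest point of) $\cS_h$. Then $|\cA_h| \leq 2^{2^h}$ and every cell of $\cA_h$ has $d$-diameter at most $2 \cdot 2^{j_h}$. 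Note $\cA_0$ may be coarsened to $\{T\}$ if necessary, and one can refine the sequence $\{\cA_h\}$ to make it nested without changing the diameter bounds.

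\textbf{Step 2: Bound on $\gamma_2$.} By the definition in \eqref{eq:gamma2},
\[
   \gamma_2(T,d) \;\leq\; \sup_{x \in T} \sum_{h \geq 0} 2^{h/2}\, \diam(\cA_h(x),d) \;\lesssim\; \sum_{h \geq 0} 2^{h/2}\, 2^{j_h}\,.
\]

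\textbf{Step 3: Repackaging as a sum over scales.} For each $j \in \Z$, set $H(j) \seteq \{h \geq 0 : j_h = j\}$ and, when $H(j) \neq \emptyset$, let $h^+(j) \seteq \max H(j)$. Because $j_h$ is non-increasing in $h$, the set $H(j)$ is a (possibly empty) contiguous block of integers, and summing the geometric series gives $\sum_{h \in H(j)} 2^{h/2} \lesssim 2^{h^+(j)/2}$. By the minimality in the definition of $j_{h^+(j)}$ we have $\cN(T,d,2^{j-1}) > 2^{2^{h^+(j)}}$, hence $2^{h^+(j)/2} \lesssim \sqrt{\log \cN(T,d,2^{j-1})}$. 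Combining,
\[
   \sum_{h \geq 0} 2^{h/2}\, 2^{j_h}
   \;=\; \sum_{j \in \Z} 2^j \sum_{h \in H(j)} 2^{h/2}
   \;\lesssim\; \sum_{j \in \Z} 2^j \sqrt{\log \cN(T,d,2^{j-1})}
   \;\lesssim\; \sum_{j \in \Z} 2^j \sqrt{\log \cN(T,d,2^{j})}\,,
\]
where the last inequality absorbs the shift $j-1 \mapsto j$ into the implicit constant. Combining with Step~2 gives the claim.

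\textbf{Expected obstacle.} The construction of the partitions $\cA_h$ and the chaining bound in Step~2 are entirely standard; the only place where one must be slightly careful is Step~3, where ``wasted'' indices $h$ at which $j_h$ is constant could in principle inflate the sum. Controlling this requires using the monotonicity of $h \mapsto j_h$ together with the geometric factor $2^{h/2}$ so that only the largest $h$ with $j_h = j$ matters, as formalized above.
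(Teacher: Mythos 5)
Your argument is correct and is exactly the standard proof that the paper itself omits: the paper only cites \cite[Prop.~2.2.10]{TalagrandBook2014} and remarks that the bound comes from taking $\cA_h$ to be a uniform discretization at the appropriate dyadic scale, which is precisely your construction of partitions from nets of size $2^{2^h}$ followed by the geometric-series repackaging over $H(j)$. The only (trivial) loose end is the degenerate case where $j_h$ is undefined because $\cN(T,d,2^j)\le 2^{2^h}$ for every $j$; there $T$ is covered by at most $2^{2^h}$ points at all scales, so one takes the corresponding partition with zero-diameter cells and that level contributes nothing.
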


The interested reader will note that this is (up to constants)
precisely the upper bound one obtains by choosing $\cA_h$ as a uniform discretization of $(T,d)$,
i.e., to minimize $\sup \{ \diam(\cA_h(x),d) : x \in T \}$ over all partitions satisfying $|\cA_h| \leq 2^{2^h}$.

\subsubsection{Sparsification via subgaussian processes}
\label{sec:subgaussian}
We discuss sparsification via subgaussian processes. Consider $\f_1,\f_2,\ldots,\f_m : \R^n \to \R$, and define
\[
   F(x) \seteq \sum_{j=1}^m \f_j(x)\,.
\]
Given a strictly positive probability vector $\rho \in \R_{++}^m$,
and an integer $s \geq 1$ and $\nu=(\nu_1,\ldots,\nu_s) \in [m]^s$,
define the distance
\begin{equation}
\label{eq:d_rho_nu}
   d_{\rho,\nu}(x,y) \seteq \left(\sum_{j=1}^s \left(\frac{\f_{\nu_j}(x)-\f_{\nu_j}(y)}{\rho_{\nu_j} \,s}\right)^2\right)^{1/2}.
\end{equation}
and the function $\tilde{F}_{\rho,\nu} : \R^n \to \R$ 
\begin{align*}
   \tilde{F}_{\rho,\nu}(x) &\seteq \frac{1}{s} \sum_{j=1}^s \frac{\f_{\nu_j}(x)}{\rho_{\nu_j}}\,.
\end{align*}

\newcommand{\Comega}[1]{\left\|#1\right\|_{C(\Omega)}}
\newcommand{\Fomega}{\Comega{F}}
\newcommand{\Ftomega}{\Comega{\tilde{F}_{\rho,\nu}}}
\newcommand{\Ftomegabold}{\Comega{\tilde{F}_{\rho,\bm{\nu}}}}

We require the following lemma which employs a variant of a standard symmetrization argument to control $\E \max_{x \in \Omega} |F(x)-\tilde{F}_{\rho,\bm{\nu}}(x)|$ using an
associated Bernoulli process (see, for example, \cite[Lem 9.1.11]{TalagrandBook2014}).
For a subset $\Omega \subseteq \R^n$, denote $\Fomega \seteq \sup_{x \in \Omega} |F(x)|$.
The reader will note our typical application of the lemma
to sets of the form $\Omega = \{ x \in \R^n : F(x) \le \lambda \}$ for some parameter $\lambda > 0$.

\begin{lemma}[{\cite[Lemma 2.6]{JLLS23}}]
   \label{lem:sparsification-meta}
   Consider $s \geq 1$, a subset $\Omega \subseteq \R^n$, and a probability vector $\rho \in \R^m_+$.
   Assume that
   \begin{equation}\label{eq:haszero}
      \exists x_0 \in \Omega \quad \textrm{s.t.} \quad \f_1(x_0)=\cdots=\f_m(x_0)=0\,.
   \end{equation}
   Suppose, further, that for some $0 < \delta \leq 1$, and every $\nu \in [m]^s$, it holds that
   \begin{equation}\label{eq:gamma-assumption}
      \gamma_2(\Omega, d_{\rho,\nu}) \leq \delta \left(\Fomega \Ftomega\right)^{1/2}\,.
   \end{equation}
   If $\bm{\nu}_1,\ldots,\bm{\nu}_s$ are sampled independently from $\rho$, then
   \begin{equation}\label{eq:sm-expectation}
      \E \max_{x \in \Omega} \left|F(x)-\tilde{F}_{\rho,\bm{\nu}}(x)\right| \lesssim \E\left[\gamma_2(\Omega,d_{\rho,\bm{\nu}})\right] \leq 8 \delta \Comega{F}.
   \end{equation}
   If it also holds that, for all $\nu \in [m]^s$,
   \begin{equation}\label{eq:diam-assumption}
      \diam(\Omega, d_{\rho,\nu}) \leq \bdelta \left(\Fomega \Ftomega\right)^{1/2}\,,
   \end{equation}
   then there is a universal constant $K > 0$ such that for all $0 \leq t \leq \frac{1}{2K\bdelta}$,
   \begin{equation}\label{eq:max-tail}
      \Pr \left(\max_{x \in \Omega} \left|F(x) - \tilde{F}_{\rho,\bm{\nu}}(x)\right| > K(\delta + t \bdelta) \Fomega \right) \leq e^{-Kt^2/4}.
   \end{equation}
\end{lemma}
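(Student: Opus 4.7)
The plan is a symmetrization plus generic chaining argument, closed by a self-bootstrap to handle the fact that the chaining and diameter hypotheses involve the random sparsifier itself. Using the anchor point $x_0 \in \Omega$ from \eqref{eq:haszero}, the centered process $G(x) \seteq F(x) - \tilde{F}_{\rho,\bm{\nu}}(x)$ has mean zero at every $x$ and vanishes at $x_0$. Introducing independent Rademacher signs $\e_1,\ldots,\e_s$ (independent of $\bm{\nu}$), the standard desymmetrization inequality for sums of i.i.d.\ random functions yields
\[
   \E \sup_{x \in \Omega} |F(x)-\tilde{F}_{\rho,\bm{\nu}}(x)|
   \lesssim \E_{\bm{\nu},\e} \sup_{x \in \Omega} \left|\tfrac{1}{s}\sum_{j=1}^s \e_j \,\tfrac{\f_{\bm{\nu}_j}(x) - \f_{\bm{\nu}_j}(x_0)}{\rho_{\bm{\nu}_j}}\right|.
\]
Conditional on $\bm{\nu}$, the right-hand side is the supremum of a Rademacher process $Z_{\bm{\nu}}(x)$ whose increments $Z_{\bm{\nu}}(x)-Z_{\bm{\nu}}(y)$ are subgaussian with variance proxy $d_{\rho,\bm{\nu}}(x,y)^2$ by Hoeffding's lemma, and $Z_{\bm{\nu}}(x_0)=0$ thanks to \eqref{eq:haszero}. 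Talagrand's generic chaining upper bound then gives $\E_{\e} \sup_{x \in \Omega} |Z_{\bm{\nu}}(x)| \lesssim \gamma_2(\Omega, d_{\rho,\bm{\nu}})$, and taking expectation over $\bm{\nu}$ yields the first inequality of \eqref{eq:sm-expectation}.

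To obtain the second inequality, I would invoke hypothesis \eqref{eq:gamma-assumption} and apply Jensen's inequality in the form
\[
   \E[\gamma_2(\Omega, d_{\rho,\bm{\nu}})]
   \le \delta \,\|F\|_{C(\Omega)}^{1/2} \bigl(\E \|\tilde{F}_{\rho,\bm{\nu}}\|_{C(\Omega)}\bigr)^{1/2}\,,
\]
and then close the self-reference with the pointwise bound $\|\tilde{F}_{\rho,\bm{\nu}}\|_{C(\Omega)} \le \|F\|_{C(\Omega)} + \sup_{x\in \Omega}|G(x)|$. Writing $E \seteq \E \sup_{x \in \Omega}|G(x)|$, this yields a quadratic-type inequality $E \lesssim \delta \|F\|_{C(\Omega)}^{1/2}\,(\|F\|_{C(\Omega)} + E)^{1/2}$; solving it and tracking constants (using $\delta \le 1$) delivers $E \le 8\delta \|F\|_{C(\Omega)}$.

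For the tail bound \eqref{eq:max-tail}, I would apply Talagrand's concentration theorem for subgaussian processes: if $(Z(x))_{x \in T}$ has subgaussian increments w.r.t.\ $d$, then for all $u \ge 0$,
\[
   \Pr\bigl(\sup_{x \in T} |Z(x) - Z(x_0)| > K(\gamma_2(T,d) + u\,\diam(T,d))\bigr) \le e^{-u^2/4}.
\]
Applying this conditionally on $\bm{\nu}$ to the symmetrized Rademacher process, substituting \eqref{eq:gamma-assumption} and \eqref{eq:diam-assumption}, and running the same pointwise bootstrap to bound $\|\tilde{F}_{\rho,\bm{\nu}}\|_{C(\Omega)}$, yields the tail bound. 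The cutoff $t \le 1/(2K\bdelta)$ arises precisely as the regime in which the bootstrap closes: past this threshold the $t\bdelta\,\|\tilde{F}_{\rho,\bm{\nu}}\|^{1/2}$ contribution would allow $\|\tilde{F}_{\rho,\bm{\nu}}\|_{C(\Omega)}$ to dominate $\|F\|_{C(\Omega)}$, destroying the recursion. The main obstacle throughout is exactly this self-referential nature of \eqref{eq:gamma-assumption} and \eqref{eq:diam-assumption}, and the quadratic-inequality / bootstrap device is the essential tool to handle it.
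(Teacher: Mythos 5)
This lemma is not proved in the paper at all: it is quoted verbatim from \cite[Lemma 2.6]{JLLS23}, so there is no internal argument to compare against. Your reconstruction follows the same route as the cited source: symmetrization with random signs, generic chaining conditionally on $\bm{\nu}$ against the metric $d_{\rho,\bm{\nu}}$ (whose definition \eqref{eq:d_rho_nu} is exactly the Hoeffding variance proxy of the increments of the sign process, and the process vanishes at the anchor $x_0$ from \eqref{eq:haszero}), followed by a bootstrap through $\Ftomega$. The expectation bound \eqref{eq:sm-expectation} is essentially complete as you present it: Jensen gives $\E[\gamma_2(\Omega,d_{\rho,\bm{\nu}})] \le \delta \Fomega^{1/2} (\E\, \Ftomegabold)^{1/2}$, the pointwise bound $\Ftomegabold \le \Fomega + \sup_{x\in\Omega}|F(x)-\tilde{F}_{\rho,\bm{\nu}}(x)|$ closes the self-reference, and the resulting quadratic inequality solves to $O(\delta)\Fomega$ once $\delta \le 1$.

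The one genuine gap is in the tail bound \eqref{eq:max-tail}. Talagrand's concentration theorem, applied conditionally on $\bm{\nu}$, controls the tail of the \emph{symmetrized} sign process; but symmetrization in expectation does not transfer tail bounds. To pass from a deviation bound for $\sup_x \bigl|\tfrac1s\sum_j \pm \f_{\bm{\nu}_j}(x)/\rho_{\bm{\nu}_j}\bigr|$ back to a deviation bound for $\sup_x |F(x)-\tilde{F}_{\rho,\bm{\nu}}(x)|$ you need a tail-level desymmetrization inequality (Gin\'e--Zinn type: for i.i.d.\ summands, the probability that the centered sum exceeds twice its mean plus $u$ is at most a constant times the probability that the signed sum exceeds $u/2$), or else you must chain the uncentered empirical process directly; your sketch elides this step, and it is precisely where the universal constant $K$ and the form $e^{-Kt^2/4}$ come from. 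Relatedly, the bootstrap at the tail level cannot be run in expectation: since $\Ftomegabold$ is random and appears in both \eqref{eq:gamma-assumption} and \eqref{eq:diam-assumption}, one must argue on the event itself that $\Ftomegabold \le 2\Fomega$ (this is exactly the role of the restriction $t \le \tfrac{1}{2K\bdelta}$, which you correctly identify) and only then substitute deterministic chaining and diameter parameters into the conditional concentration bound. You have named the right mechanism, but the desymmetrization of tails is a missing ingredient rather than a routine omission.
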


\section{Multiscale importance scores}
\label{sec:weights}

Recall the definitions of approximate weights (\pref{def:approxweights})
and weight schemes (\pref{def:weight-scheme}).
In the present section, we prove the following
two results.

\begin{theorem}\label{thm:approx-weights}
   Suppose that $f_1,\ldots,f_m : \R \to \R_+$ are lower $\lo$-homogeneous 
   and upper $u$-homogeneous with $u > \lo > 0$ and uniform constants $c, C > 0$.
   Then there is some $\alpha = \alpha(\lo,c,u,C) > 1$ such that for
   every choice of vectors $a_1,\ldots,a_m \in \R^n$ and $s > 0$,
   there is an $\alpha$-approximate weight at scale $s$.
\end{theorem}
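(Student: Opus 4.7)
The plan is to construct the approximate weights via a fixed-point argument using Brouwer's theorem. Specifically, observe that $w \in \R_{++}^m$ is an $\alpha$-approximate weight at scale $s$ if and only if the map
\[
T_s(w)_i := \frac{f_i(\|M_w^{-1/2} a_i\|_2)}{s\,\|M_w^{-1/2} a_i\|_2^2}
\]
satisfies $T_s(w)_i / w_i \in [1/\alpha, \alpha]$ for every $i \in [m]$; in particular, an exact fixed point $w^* = T_s(w^*)$ is a $1$-approximate (hence $\alpha$-approximate for any $\alpha \geq 1$) weight. So the task reduces to producing a fixed point of $T_s$.

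To do this, I would identify a compact convex box $K = [r, R]^m \subset \R_{++}^m$ that is invariant under $T_s$. For $w \in K$, the inequalities $r \sum_j a_j a_j^\top \preceq M_w \preceq R \sum_j a_j a_j^\top$ yield
\[
   t_i(w)^2 := a_i^\top M_w^{-1} a_i \in \left[\sigma_i^*/R,\ \sigma_i^*/r\right], \qquad \sigma_i^* := a_i^\top \left(\sum_j a_j a_j^\top\right)^{-1} a_i.
\]
The two-sided homogeneity of $f_i$ traps $f_i(t)/(s t^2)$ between expressions of the form $t^{\lo-2}$ and $t^{u-2}$, up to multiplicative constants depending on $c, C, f_i(1), s$. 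Hence $T_s(w)_i = f_i(t_i(w))/(s\, t_i(w)^2)$ lies in a range determined by $r, R, s$, and the geometric data $\{\sigma_i^*\}$. Enforcing $T_s(K) \subseteq K$ reduces to a pair of inequalities in $r$ and $R$ which admit a solution, with $r$ and $R$ polynomial in $s$ and $\sigma_i^*$ and exponents depending on $\lo, u$.

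Granted such an invariant $K$, continuity of $T_s$ on $K$ is immediate from smoothness of $w \mapsto M_w^{-1/2}$ on $\R_{++}^m$ (using the spanning assumption on $\{a_i\}$); continuity of each $f_i$ on $(0,\infty)$ is not directly implied by the homogeneity assumptions, but can be arranged by passing to a continuous majorant satisfying the same bounds, losing only a constant factor absorbed into $\alpha$. Brouwer's fixed-point theorem then yields $w^* \in K$ with $T_s(w^*) = w^*$, proving the theorem with some $\alpha = \alpha(\lo, c, u, C)$. The main obstacle is the balancing step above: choosing $r$ and $R$ to simultaneously absorb both the range of $t_i(w)$ across $K$ and the gap between the lower and upper homogeneity exponents, which ultimately fixes the dependence of $\alpha$, roughly of the form $\alpha \lesssim (C/c)^{O(1)}$ with implicit constants depending further on the exponent gap $u - \lo$.
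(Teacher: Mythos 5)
Your strategy---realizing approximate weights as (near‑)fixed points of the map $T_s$ and invoking Brouwer on an invariant compact convex box---is genuinely different from the paper's argument, which is variational in the spirit of Lewis: it maximizes $\det(U)$ subject to $\sum_i g_i(\|Ua_i\|_2)\le\beta$ for monotone, differentiable surrogates $g_i\asymp f_i$ (\pref{lem:g-reduce}) and reads the weight identity off the first‑order optimality conditions (\pref{lem:variational-1}). Your reduction ``exact fixed point $\Rightarrow$ $1$‑approximate weight'' is correct, and the continuity/mollification issue you flag is handled exactly as you suggest by the paper's smoothing lemma. The problem is the invariant box.

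For $K=[r,R]^m$, the extremes of $t_i(w)^2=a_i^\top M_w^{-1}a_i$ over $K$ are attained at the corners $w=(r,\dots,r)$ and $w=(R,\dots,R)$, where $t_i(w)^2$ equals exactly $\sigma_i^*/r$ and $\sigma_i^*/R$. When $u>2$, invariance therefore forces, for every $i$, both $R\gtrsim \frac{f_i(1)}{s}\bigl(\sigma_i^*/r\bigr)^{(u-2)/2}$ (the all‑$r$ corner must map below $R$) and $r\lesssim \frac{f_i(1)}{s}\bigl(\sigma_i^*/R\bigr)^{(u-2)/2}$ (the all‑$R$ corner must map above $r$). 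Setting $a=(u-2)/2$ and $A_i=f_i(1)(\sigma_i^*)^{a}/s$, these read $R\ge \max_i A_i\, r^{-a}$ and $R\le (\min_i A_i)^{1/a}\, r^{-1/a}$, which are simultaneously satisfiable (by taking $r$ small) only when $a<1$, i.e.\ $u<4$. At $u=4$ they demand that all $A_i$ agree up to the constants $c,C$, and for $u>4$ they force $r$ bounded below and $R$ bounded above in a way that leaves no admissible box once the leverage scores are heterogeneous: e.g.\ with $n=1$, $f_i(z)=|z|^6$, $a_1=1$, $a_2=\epsilon$, the two conditions require $r\ge \epsilon^{-4/3}s^{-1/3}$ and $r\le \epsilon^{4/3}s^{-1/3}$, a contradiction for $\epsilon<1$. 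Since the theorem is asserted for all $u>\lo>0$---and the paper explicitly relies on this statement precisely in the regime $u\ge 4$, where its contractive iteration (\pref{thm:weight-schemes}) is unavailable for the same underlying reason (the log‑scale Lipschitz constant $|u/2-1|$ exceeds $1$)---your argument as written only establishes the case $u<4$. To cover large $u$ you would need either a non‑box invariant convex set or the variational route.
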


This is proved in \pref{sec:variational} by considering critical points of the functional
$U \mapsto \det(U)$ subject to the constraint $G(U) \leq s$, where 
$G(U) \seteq f_1(\|U a_1\|_2) + \cdots + f_m(\|U a_m\|_2)$, which
can be seen as a generalization of Lewis' original method.

\paragraph{Single-scale sensitivities}
Let us now observe that, in the case $u \leq 2$, \pref{thm:approx-weights} allows us to bound sensitivities
(recall \eqref{eq:ss-def}).
The next lemma is a generalization of \eqref{eq:intro-nc0}.

\begin{lemma}\label{lem:s2-norm-bound}
   Suppose $f_1,\ldots,f_m : \R \to \R_+$ satisfy the assumptions of \pref{thm:approx-weights} with $u \leq 2$, and
   they are additionally $K$-symmetric in the sense of \ref{item:symmetric}.
   If $w \in \Rplus^m$ is an $\alpha$-approximate weight at scale $s$,
   then for any $x \in \R^n$, it holds that
   \[
      \|M_w^{1/2} x\|_2^{\theta} \leq \max\left(1,\alpha \frac{C K}{c} \frac{F(x)}{s}\right)\,.
   \]
\end{lemma}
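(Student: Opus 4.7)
The plan is to expand $\|M_w^{1/2}x\|_2^2 = x^\top M_w x = \sum_{i=1}^m w_i \langle a_i,x\rangle^2$ and then use the upper half of the approximate weight condition \eqref{eq:approx-weight}, namely $w_i \leq (\alpha/s)\, f_i(r_i)/r_i^2$ where $r_i \seteq \|M_w^{-1/2} a_i\|_2$, to write
\[
 R^2 \seteq \|M_w^{1/2}x\|_2^2 \leq \frac{\alpha}{s}\sum_{i=1}^m \frac{f_i(r_i)}{r_i^2}\,\langle a_i,x\rangle^2\,.
\]
The task then reduces to comparing $\frac{f_i(r_i)}{r_i^2}\langle a_i,x\rangle^2$ to $f_i(\langle a_i,x\rangle)$ term-by-term. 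The key geometric input is Cauchy--Schwarz, $|\langle a_i,x\rangle| = |\langle M_w^{-1/2}a_i, M_w^{1/2}x\rangle| \leq r_i R$.

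Next I would split each index into two cases based on whether $|\langle a_i,x\rangle| \leq r_i$ or $|\langle a_i,x\rangle| > r_i$. In the first case, apply upper $u$-homogeneity (with $\lambda = r_i/|\langle a_i,x\rangle| \geq 1$) together with $K$-symmetry (which gives $f_i(|z|) \leq K f_i(z)$) to get $f_i(r_i) \leq CK(r_i/|\langle a_i,x\rangle|)^u f_i(\langle a_i,x\rangle)$; because $u \leq 2$ and $r_i \geq |\langle a_i,x\rangle|$, the leftover power $(r_i/|\langle a_i,x\rangle|)^{u-2}$ is at most $1$, yielding
$\frac{f_i(r_i)}{r_i^2}\langle a_i,x\rangle^2 \leq CK\, f_i(\langle a_i,x\rangle)$. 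In the second case, apply lower $\theta$-homogeneity in the form $f_i(r_i) \leq c^{-1}(r_i/|\langle a_i,x\rangle|)^\theta f_i(|\langle a_i,x\rangle|)$, again absorb the sign using $K$-symmetry, and then bound $(|\langle a_i,x\rangle|/r_i)^{2-\theta} \leq R^{2-\theta}$ via Cauchy--Schwarz (using $2-\theta \geq 0$). This yields
$\frac{f_i(r_i)}{r_i^2}\langle a_i,x\rangle^2 \leq (K/c)\, R^{2-\theta}\, f_i(\langle a_i,x\rangle)$.

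Combining the two cases (and using that the lower/upper homogeneity constants satisfy $c \leq 1 \leq C$, seen by taking $\lambda = 1$ in \ref{item:lower-growth} and \ref{item:upper-growth}) gives the unified bound
\[
 \frac{f_i(r_i)}{r_i^2}\langle a_i,x\rangle^2 \;\leq\; \frac{CK}{c}\,\max\!\bigl(1,\,R^{2-\theta}\bigr)\, f_i(\langle a_i,x\rangle).
\]
Summing over $i$ gives $R^2 \leq \frac{\alpha CK}{cs}\max(1, R^{2-\theta})\, F(x)$. If $R \leq 1$ there is nothing to prove since $R^\theta \leq 1$. Otherwise $R > 1$ implies $\max(1,R^{2-\theta}) = R^{2-\theta}$, and dividing yields $R^\theta \leq \frac{\alpha CK}{c}\cdot\frac{F(x)}{s}$, which matches the stated bound.

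The main obstacle is the case split: the two homogeneity hypotheses give bounds in opposite directions depending on whether the argument $|\langle a_i,x\rangle|$ is smaller or larger than the ``characteristic scale'' $r_i$, and one must check that the power of $R$ that appears comes out to exactly $2-\theta$ (which then converts the $R^2$ on the left into the desired $R^\theta$). The only delicate point is that $K$-symmetry must be invoked in both cases to pass between $f_i(\langle a_i,x\rangle)$ and $f_i(|\langle a_i,x\rangle|)$, since the upper/lower homogeneity statements are phrased for nonnegative scalings.
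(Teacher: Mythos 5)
Your proof is correct and uses essentially the same ingredients and structure as the paper's: expand $\|M_w^{1/2}x\|_2^2=\sum_i w_i\langle a_i,x\rangle^2$, replace $w_i$ via the approximate-weight bound, and compare $f_i(r_i)/r_i^2$ to $f_i(\langle a_i,x\rangle)/\langle a_i,x\rangle^2$ term by term using Cauchy--Schwarz, the two homogeneity hypotheses, and $K$-symmetry. The only cosmetic difference is that the paper assumes $R\ge 1$ up front and runs a single chain of inequalities through the intermediate scale $R\,r_i$, whereas you split into the cases $|\langle a_i,x\rangle|\lessgtr r_i$; both yield the same $\max(1,R^{2-\theta})$ factor and conclusion.
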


\begin{proof}
   We may clearly assume that $\|M_w^{1/2} x\|_2 \geq 1$.
   Then using the Cauchy-Schwarz inequality $|\langle a_i,x\rangle| \leq \|M_w^{-1/2} a_i\|_2 \|M_w^{1/2} x\|_2$ together with
   the upper quadratic growth assumption gives 
   \begin{equation}\label{eq:s2bn1}
      \frac{f_i(|\langle a_i,x\rangle|)}{|\langle a_i,x\rangle|^2}  \geq \frac1{C} \frac{f_i(\|M_w^{1/2} x\|_2 \|M_w^{-1/2} a_i\|_2)}{\|M_w^{1/2} x\|^2_2 \|M_w^{-1/2} a_i\|^2_2}
      \geq \frac{c}{C} \frac{\|M_w^{1/2} x\|_2^{\lo} \,f_i(\|M_w^{-1/2} a_i\|_2)}{\|M_w^{1/2} x\|^2_2 \|M_w^{-1/2} a_i\|^2_2}\,,
   \end{equation}
   where the last inequality uses the lower growth assumption.

   Using $M_w = \sum_{i=1}^m w_i a_i a_i^{\top}$, we can bound
   \begin{align*}
      \|M_w^{1/2} x\|_2^2 = \sum_{i=1}^m w_i\langle a_i,x\rangle^2
      &\sr{\eqref{eq:s2bn1}}{\leq} \frac{C}{c} \|M_w^{1/2} x\|_2^{2-\lo} \sum_{i=1}^m w_i f_i(|\langle a_i,x\rangle|) \frac{\|M_w^{-1/2} a_i\|^2_2}{f_i(\|M_w^{-1/2} a_i\|_2)}
      \\
      &\leq \alpha \frac{C K}{c} \frac{1}{s} \|M_w^{1/2} x\|_2^{2-\lo} \sum_{i=1}^m f_i(\langle a_i,x\rangle)\,,
   \end{align*}
   where the last inequality uses the defining property of an $\alpha$-approximate weight at scale $s$, along with the assumption
   of $K$-symmetry: $f_i(|\langle a_i,x\rangle|) \leq K f_i(\langle a_i,x\rangle)$ for all $i=1,\ldots,m$.
\end{proof}

\begin{corollary}[Sensitivity upper bound]
   \label{cor:ss-sensitivity}
   Under the assumptions of \pref{lem:s2-norm-bound}, it holds that
   \[
      \xi_1(s) + \cdots + \xi_m(s) \ls n\,,
   \]
   where the implicit constant depends on the parameters $\alpha,\theta,C,c,K$.
\end{corollary}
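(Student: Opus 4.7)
The plan is to bound $\xi_i(s)$ pointwise by a suitably normalized leverage score, and then use the identity that leverage scores sum to the rank. Concretely, let $w \in \R_+^m$ be an $\alpha$-approximate weight at scale $s$ as guaranteed by \pref{thm:approx-weights}, and set $W \seteq \diag(w_1,\ldots,w_m)$. I aim to show that for every $x$ with $F(x) \in [s/2,s]$ and every $i$,
\[
   \frac{f_i(\langle a_i,x\rangle)}{F(x)} \ls w_i \|M_w^{-1/2} a_i\|_2^2 = \sigma_i(W^{1/2}A)\,,
\]
with implicit constant depending on $\alpha,\theta,c,C,K$. Summing over $i$ and using the preliminary assumption that $A$ has rank $n$ then yields $\sum_i \xi_i(s) \ls \sum_i \sigma_i(W^{1/2}A) = n$, which is the desired conclusion.

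To establish the pointwise bound, I would proceed in four short steps. First, \pref{lem:s2-norm-bound} applied to any $x$ with $F(x) \le s$ yields
\[
   \|M_w^{1/2} x\|_2 \le B \seteq \max\!\left(1, \alpha \tfrac{CK}{c}\right)^{1/\theta}\!,
\]
a constant in the relevant parameters. Second, the Cauchy--Schwarz inequality
$|\langle a_i, x\rangle| \le \|M_w^{-1/2} a_i\|_2 \,\|M_w^{1/2} x\|_2 \le B\|M_w^{-1/2} a_i\|_2$
lets me compare the argument of $f_i$ to the ``natural scale'' $\|M_w^{-1/2} a_i\|_2$. Third, combining $K$-symmetry, the $(1/c)$-monotonicity provided by \pref{lem:relate}\pref{item:lower-to-monotone}, and the upper $u$-homogeneity hypothesis gives
\[
   f_i(\langle a_i,x\rangle) \le K\, f_i(|\langle a_i, x\rangle|) \le \tfrac{K}{c}\, f_i\!\left(B\|M_w^{-1/2}a_i\|_2\right) \le \tfrac{KC}{c}\, B^u\, f_i(\|M_w^{-1/2}a_i\|_2)\,.
\]

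Finally, the defining inequality \eqref{eq:approx-weight} of an $\alpha$-approximate weight bounds $f_i(\|M_w^{-1/2} a_i\|_2) \le \alpha s\, w_i\, \|M_w^{-1/2} a_i\|_2^2$, and the lower bound $F(x) \ge s/2$ from the definition \eqref{eq:ss-def} converts everything into the desired leverage score bound with constant $2\alpha K C B^u/c$. The identity $\sum_i w_i \|M_w^{-1/2} a_i\|_2^2 = \tr(M_w^{-1} \sum_i w_i a_i a_i^\top) = \tr(M_w^{-1} M_w) = n$ closes the argument.

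There is no real obstacle here: the hypothesis $u \le 2$ is used only implicitly via \pref{lem:s2-norm-bound} (which requires it to produce the uniform bound on $\|M_w^{1/2} x\|_2$), and every other step is a mechanical combination of symmetry, monotonicity, and homogeneity. The only point worth care is that $B \ge 1$ so that upper $u$-homogeneity can be invoked with $\lambda = B$; taking the $\max$ with $1$ in the definition of $B$ ensures this cleanly.
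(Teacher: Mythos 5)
Your proof is correct and follows essentially the same route as the paper's: bound $\|M_w^{1/2}x\|_2$ via \pref{lem:s2-norm-bound}, apply Cauchy--Schwarz, then chain $K$-symmetry, $1/c$-monotonicity, and upper homogeneity to dominate $f_i(\langle a_i,x\rangle)$ by $s\,w_i\|M_w^{-1/2}a_i\|_2^2$, and finish with the trace identity $\sum_i w_i\|M_w^{-1/2}a_i\|_2^2 = n$. The small points you flag (taking a $\max$ with $1$ so homogeneity applies, and dividing by $F(x)\ge s/2$) are handled the same way, implicitly, in the paper.
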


\begin{proof}
   From \pref{lem:s2-norm-bound}, if $F(x) \leq s$, then $\|M_w^{1/2} x\|_2 \leq (\alpha CK/c)^{1/\lo}$. By Cauchy-Schwarz, this gives
   $|\langle a_i,x\rangle| \leq (\alpha CK/c)^{1/\lo}\|M_w^{-1/2} a_i\|_2$, and therefore 
   \begin{align*}
      f_i(\langle a_i,x\rangle) \leq K f_i(|\langle a_i,x\rangle|) &\leq \frac{K}{c} f_i\left((\alpha CK/c)^{1/\lo}\|M_w^{-1/2} a_i\|_2\right) \\ 
      &\leq \frac{CK}{c}\left(\frac{\alpha CK}{c}\right)^{2/\lo} f_i(\|M_w^{-1/2} a_i\|_2)
      \leq \frac{\alpha CK}{c}\left(\frac{\alpha CK}{c}\right)^{2/\lo}  s \cdot w_i \|M_w^{-1/2} a_i\|_2^2\,,
   \end{align*}
   where the first inequality uses $K$-symmetry, the second uses $1/c$-monotonicity (which holds by \pref{lem:relate}),
   the third inequality uses upper homogeneity,
   and the last inequality uses that $w$ is an $\alpha$-approximate weight at scale $s$.
   Finally, one notes that $\sum_{i=1}^m w_i \|M_w^{-1/2} a_i\|_2^2 = \tr\left(M_w^{-1} M_w\right) = n$, completing the proof.
\end{proof}

We are only able to establish the existence of entire weight schemes (where
the weights at adjacent scales are related) for $u < 4$,
which suffices our applications, as $u \leq 2$ is a requirement for \pref{thm:main}.
The following theorem is proved in \pref{sec:weight-schemes},
based on the contractive iteration method introduced
by Cohen and Peng \cite{CP15}.

\begin{theorem}\label{thm:weight-schemes}
   Suppose that $f_1,\ldots,f_m : \R \to \R_+$ are lower $\lo$-homogeneous 
   and upper $u$-homogeneous with $4 > \upp > \lo > 0$ and uniform constants $c, C > 0$.
   Then there is some $\alpha = \alpha(\upp,C,\lo,c)$ such that
   for every choice of vectors $a_1,\ldots,a_m \in \R^n$,
   there is an $\alpha$-approximate weight scheme $\{ w_i^{(j)} : j \in \Z \}$. 
\end{theorem}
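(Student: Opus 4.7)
The plan is to construct weights at each dyadic scale $s=2^j$ via a Cohen--Peng-style contractive iteration, and then show that the resulting fixed points at adjacent scales are close in log-$\ell_\infty$, yielding the smoothness requirement $w_i^{(j+1)}\le\alpha\,w_i^{(j)}$. Setting $\tau_i(w) \seteq w_i\|M_w^{-1/2}a_i\|_2^2 = \sigma_i(W^{1/2}A)$, being an exact (i.e.\ $1$-approximate) scale-$s$ weight in the sense of \pref{def:approxweights} amounts to $s\,\tau_i(w) = f_i(\sqrt{\tau_i(w)/w_i})$, equivalently
\[
w_i \;=\; \tau_i(w)\big/f_i^{-1}(s\,\tau_i(w))^2.
\]
This motivates the map $T_s(w)_i \seteq \tau_i(w)/f_i^{-1}(s\,\tau_i(w))^2$, together with a damped variant $T^{(\beta)}_s(w)_i \seteq w_i^{1-\beta}T_s(w)_i^{\beta}$ for a damping $\beta\in(0,1]$ to be chosen as a function of $u,\theta,c,C$.

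The crux is showing that $T^{(\beta)}_s$ is a contraction in the metric $\|\log w-\log w'\|_\infty$. I would invoke the Cohen--Peng Jacobian estimate that $w\mapsto \log\tau(w)$ is $1$-Lipschitz in log-$\ell_\infty$ (the derivative matrix $(\partial\log\tau_i/\partial\log w_j)$ has nonnegative entries with row sums $1$), composed with the scalar map $t\mapsto t-2\log f_i^{-1}(e^{\log s+t})$. The two-sided growth hypotheses, $cz^\theta\lesssim f_i(z)\lesssim Cz^u$ in the relevant range, force the log-log slope of $f_i^{-1}$ to lie (in an integrated sense) within $[1/u,1/\theta]$, so the log-$\ell_\infty$ Lipschitz constant of the undamped composition is at most $\max\{|1-2/u|,\,|2/\theta-1|\}$. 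Damping by $\beta$ absorbs the large-$|2/\theta-1|$ branch when $\theta$ is small, leaving an overall rate strictly below $1$ precisely when $u<4$, matching the Cohen--Peng threshold. Banach's fixed-point theorem then produces the desired $w^{(s)}$.

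For the scale-monotonicity $w_i^{(j+1)}\le \alpha\,w_i^{(j)}$, I would exploit that $T_s(w)$ is pointwise monotone decreasing in $s$: since $f_i^{-1}$ is increasing, $T_{2s}(w)\le T_s(w)$ coordinatewise for every $w$, and the lower $\theta$-homogeneity of $f_i$ gives the matching lower bound $T_{2s}(w)_i\ge (c/2)^{2/\theta}\,T_s(w)_i$. Initializing the scale-$2s$ iteration at $w^{(s)}$ (where $T_s(w^{(s)})=w^{(s)}$), the one-step displacement is bounded in log-$\ell_\infty$ by $\tfrac{2}{\theta}\log(2/c)$, and the contraction rate bounds the distance to the new fixed point $w^{(2s)}$ by a constant multiple of the same. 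This yields the required smoothness with $\alpha=\alpha(u,\theta,c,C)$; concatenating across $j\in\Z$ produces the full scheme.

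The main obstacle will be executing the log-$\ell_\infty$ contractivity cleanly without assuming $f_i$ is a pure power. In Cohen--Peng the closed-form rule $w_i^{\mathrm{new}}=\tau_i^{1-2/p}s^{-2/p}$ reduces contractivity to the scalar exponent $|1-2/p|$; in our setting $f_i^{-1}$ takes the place of $y\mapsto y^{1/p}$ but without exact homogeneity, so its log-log slope must be controlled in integrated form between the $\theta$- and $u$-envelopes, and the gap must be absorbed into the damping $\beta$ rather than appearing as a single algebraic exponent. A secondary technicality is that the $f_i$ need not be differentiable, which I would handle either by a smooth regularization of each $f_i$ or by a finite-difference version of the Jacobian estimate; this should be routine.
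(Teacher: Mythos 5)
Your overall strategy is the same as the paper's: a Cohen--Peng-style iteration that is (approximately) contractive in the metric $d(w,w')=\|\log w-\log w'\|_\infty$, with adjacent scales tied together by initializing the scale-$2s$ iteration at the scale-$s$ weight and bounding the one-step displacement (which you correctly estimate as $O(\tfrac1\theta\log\tfrac{2}{c})$). However, there are two genuine gaps. First, the Banach fixed-point step does not go through. Under only two-sided approximate homogeneity with constants $c,C$, the best Lipschitz estimate one can prove for the scalar map $t\mapsto \log f_i^{-1}(e^{t})$ (or for $x\mapsto f_i(\sqrt x)/x$, as the paper uses) is of the form $\delta\, d(x,y)+\log C_1$ with $C_1=\max\{C,1/c\}$: the multiplicative constants in \ref{item:lower-growth} and \ref{item:upper-growth} do not vanish over short log-increments, so the additive term is intrinsic. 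Damping by $\beta$ only replaces this by $((1-\beta)+\beta\delta)d(w,w')+\beta\log C_1$; it never yields a genuine contraction, and indeed an exact fixed point (an exact weight at scale $s$) need not exist for non-homogeneous $f_i$. The fix is to abandon exactness: iterating the approximate-contraction inequality gives a point with $d(w,\f_s(w))\le O\bigl(\tfrac{\log C_1}{1-\delta}\bigr)$ after finitely many steps, and such a point is already an $\alpha$-approximate weight (this is exactly \pref{fact:tau-char}). This is what the paper does, and your scale-to-scale argument survives unchanged once restated for approximate fixed points.

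Second, your contraction arithmetic is off. The Jacobian claim for the leverage-score map is incorrect: for $\sigma_i(w)=w_i\,a_i^\top(A^\top WA)^{-1}a_i$ one computes $\partial\log\sigma_i/\partial\log w_j=\delta_{ij}-w_iw_j(a_i^\top(A^\top WA)^{-1}a_j)^2/\sigma_i$, whose row sums are $0$ (not $1$), whose off-diagonal entries are negative, and whose absolute row sums equal $2(1-\sigma_i)$; so $\log\sigma$ is only $2$-Lipschitz. The clean $1$-Lipschitz statement holds for $\bar\tau_i(w)=a_i^\top(A^\top WA)^{-1}a_i$ (the paper's \pref{lem:lev_contract}, proved via the Loewner order). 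Relatedly, your rate $\max\{|1-2/u|,|2/\theta-1|\}$ for the map $\tau_i\mapsto \tau_i/f_i^{-1}(s\tau_i)^2$ does not reproduce the $u<4$ threshold (undamped it requires $\theta>1$, and with damping $\beta<\theta$ it would appear to work for all $u$, which cannot be right for this method). The parametrization that makes the threshold come out is the paper's $\f_s(w)_i=\frac1s f_i(\sqrt{\bar\tau_i})/\bar\tau_i$, whose scalar part has Lipschitz constant $\max\{|\theta/2-1|,|u/2-1|\}$, strictly below $1$ exactly when $\theta>0$ and $u<4$, with no damping needed. You would need to either switch to this parametrization or redo the rate computation for yours, accounting for the factor-$2$ Lipschitz constant of the leverage scores.
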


In the next section,
we show how to compute an approximate weight scheme $\{ w_i^{(j)} : j \in \cJ \}$ using $\O(|\cJ|)$ computations of
leverage scores $(\sigma_1(V),\ldots,\sigma_m(V))$ for matrices of
the form $V = A^{\top} W A$.

\subsection{Contractive algorithm}
\label{sec:weight-schemes}

For a weight $w \in \R_{+}^m$ and $i \in \{1,\ldots,m\}$, define
\[
   \tau_i(w) \seteq \frac{\sigma_i(W^{1/2} A)}{w_i} = \langle a_i, (A^{\top} W A)^{-1} a_i\rangle\,, \quad W \seteq \diag(w_1,\ldots,w_m)\,,
\]
and denote $\tau(w) \seteq (\tau_1(w),\ldots,\tau_m(w))$.

Fix a scale parameter $s > 0$ and define the iteration $\f_s : \R_+^m \to \R_+^m$ by
\begin{equation}\label{eq:phi-iter}
   (\f_s(w))_i \seteq \frac{1}{s} \frac{f_i(\sqrt{\tau_i(w)})}{\tau_i(w)}\,.
\end{equation}
Write $\f^k \seteq \f \circ \cdots \circ \f$ for the $k$-fold composition of $\f$.
In this case where $f_i(z)=|z|^p$ and $1 \leq p \leq 2$, it is known, for $s = 1$,
starting from any $w_0 \in \R_{+}^m$, the sequence $\{\f_1^k(w_0) : k \geq 1\}$ converges
to the unique fixed point of $\f$, which are the corresponding $\ell_p$ Lewis weights \eqref{eq:wis}.

Define now a metric $d$ on $\R_+^m$ by
\begin{align*}
   d(u,w) &\seteq \max \left\{ \left|\log \frac{u_i}{w_i}\right| : i = 1,\ldots,m \right\}.
\end{align*}
We note the following characterization.

\begin{fact}\label{fact:tau-char}
A vector $w \in \R_+^m$ is an $\alpha$-approximate weight at scale $s$
if and only if
\[
   d(w,\f_s(w)) \leq \log \alpha\,.
\]
\end{fact}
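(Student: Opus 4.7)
The plan is to verify that \pref{fact:tau-char} is essentially a definitional unfolding: the iteration $\f_s$ is constructed precisely so that the ratio $(\f_s(w))_i/w_i$ measures how far $w_i$ is from satisfying the $\alpha$-approximate weight condition at scale $s$. I will simply rewrite both sides in a common form and observe they are equivalent.

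First, I would unfold $\tau_i(w)$. By definition,
\[
\tau_i(w) = \langle a_i, (A^{\top} W A)^{-1} a_i\rangle = \langle a_i, M_w^{-1} a_i\rangle = \|M_w^{-1/2} a_i\|_2^2\,,
\]
so $\sqrt{\tau_i(w)} = \|M_w^{-1/2} a_i\|_2$. Substituting this into \eqref{eq:phi-iter} and dividing by $w_i$ gives
\[
\frac{(\f_s(w))_i}{w_i} = \frac{1}{s\,w_i}\cdot\frac{f_i(\|M_w^{-1/2} a_i\|_2)}{\|M_w^{-1/2} a_i\|_2^2}\,,
\]
which is precisely $1/s$ times the central quantity appearing in the approximate weight definition \eqref{eq:approx-weight}.

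Next, I would translate the approximate weight condition. Dividing \eqref{eq:approx-weight} through by $s$ shows that $w$ is an $\alpha$-approximate weight at scale $s$ if and only if
\[
\frac{1}{\alpha} \leq \frac{(\f_s(w))_i}{w_i} \leq \alpha \quad \text{for every } i=1,\ldots,m\,.
\]
Taking logarithms converts this multiplicative two-sided bound into the additive bound $\left|\log \frac{(\f_s(w))_i}{w_i}\right| \leq \log \alpha$ for every $i$, and taking the maximum over $i$ yields exactly $d(w,\f_s(w)) \leq \log \alpha$ by the definition of the metric $d$ on $\R_+^m$. Both implications are reversible, giving the biconditional.

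There is no real obstacle here: the only mild subtlety is to observe that the definition of $\f_s$ was chosen to make the ratio $(\f_s(w))_i/w_i$ exactly the quantity controlled by the approximate weight condition, so the verification is transparent once $\tau_i(w)$ is recognized as $\|M_w^{-1/2} a_i\|_2^2$. The fact that $M_w$ is invertible (needed to write $M_w^{-1/2}$) is guaranteed by the standing assumption from \pref{sec:prelim} that the $a_i$ span $\R^n$ together with $w_i > 0$ for each $i$; if some $w_i = 0$ we interpret $\log(0/w_i)$ as $+\infty$, making both sides fail simultaneously.
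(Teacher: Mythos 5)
Your proof is correct and matches the paper's intent: the paper states \pref{fact:tau-char} without proof precisely because it is the definitional unfolding you carry out, identifying $\tau_i(w)=\|M_w^{-1/2}a_i\|_2^2$ so that $(\f_s(w))_i/w_i$ is $1/s$ times the quantity bounded in \eqref{eq:approx-weight}, and then taking logarithms. The remarks on invertibility of $M_w$ and the degenerate case $w_i=0$ are sensible housekeeping consistent with the standing assumptions of \pref{sec:prelim}.
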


First, we observe that $\tau$ is $1$-Lipschitz on $(\R_+^m, d)$.
In the next proof, $\preceq$ denotes the ordering of two real, symmetric matrices in the Loewner order,
i.e., $A \preceq B$ if and only if $B-A$ is positive semi-definite.

\begin{lemma}\label{lem:lev_contract}
   For any $w,w' \in \R_{+}^m$, it holds that
   $d(\tau(w), \tau(w')) \leq d(w,w')$.
\end{lemma}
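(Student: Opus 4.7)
The plan is to leverage the monotonicity of $W \mapsto (A^\top W A)^{-1}$ in the Loewner order to reduce the Lipschitz bound to a direct comparison of weights. Concretely, set $\alpha \seteq \exp(d(w,w'))$, so that by definition of the metric $d$ we have $w_i/\alpha \le w_i' \le \alpha w_i$ for every $i \in \{1,\ldots,m\}$.

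Translating this coordinatewise inequality into a matrix inequality between the associated diagonal matrices $W$ and $W'$, we get $\tfrac{1}{\alpha} W \preceq W' \preceq \alpha W$. Conjugating both sides by the rectangular matrix $A$ preserves the Loewner order, yielding
\[
   \tfrac{1}{\alpha} A^\top W A \preceq A^\top W' A \preceq \alpha A^\top W A\,.
\]
Under the standing assumption (\pref{sec:prelim}) that the rows of $A$ span $\R^n$, both sides are invertible PSD matrices; since matrix inversion reverses the Loewner order on the positive-definite cone, we obtain
\[
   \tfrac{1}{\alpha} (A^\top W A)^{-1} \preceq (A^\top W' A)^{-1} \preceq \alpha (A^\top W A)^{-1}\,.
\]

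Finally, evaluating the quadratic form $v \mapsto \langle a_i, v\, a_i\rangle$ on each side (which is monotone in the Loewner order) gives $\tau_i(w)/\alpha \le \tau_i(w') \le \alpha\, \tau_i(w)$ for every $i$, i.e.\ $\bigl|\log(\tau_i(w')/\tau_i(w))\bigr| \le \log \alpha$. Taking the maximum over $i$ yields $d(\tau(w),\tau(w')) \le \log \alpha = d(w,w')$, as desired.

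There is no real obstacle here: the statement is essentially the Loewner-monotonicity of $W \mapsto (A^\top W A)^{-1}$ repackaged in the multiplicative metric $d$. The only point requiring a small amount of care is ensuring $A^\top W A$ is invertible so that the inversion step is valid, which is guaranteed by the paper's blanket assumption that $a_1,\ldots,a_m$ span $\R^n$ and $w \in \R_+^m$ (the definition of $\tau_i(w)$ already presupposes this).
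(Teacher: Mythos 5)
Your proposal is correct and follows essentially the same route as the paper's proof: pass from the coordinatewise bound to $\alpha^{-1}W \preceq W' \preceq \alpha W$, conjugate by $A$, invert (using that inversion is order-reversing on positive-definite matrices), and evaluate the quadratic forms $\langle a_i, \cdot\, a_i\rangle$. The extra remark about invertibility via the spanning assumption is a fine, if minor, addition.
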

\begin{proof}
   Denote $W = \diag(w), W' = \diag(w')$, and $\alpha \seteq \exp(d(w,w'))$.
   Then $\alpha^{-1} W \preceq W' \preceq \alpha W$, therefore
   $\alpha^{-1} A^\top  W A \preceq A^\top  W' A \preceq \alpha A^\top  W A$, and
   by monotonicity of the matrix inverse in the Loewner order,
   $\alpha^{-1} (A^\top  W A)^{-1} \preceq (A^\top  W' A)^{-1} \preceq \alpha (A^\top  W A)^{-1}$.
   This implies $d(\tau(w),\tau(w')) \leq \log \alpha$, completing the proof.
\end{proof}

\begin{proof}[Proof of \pref{thm:weight-schemes}]
Consider the map $\psi: \R_+^m \to \R_+^m$ whose $i$-th coordinate is defined as
\[
   \psi_i(x) \seteq \frac{f_i(\sqrt{x_i})}{x_i}\,.
\]

Our assumptions on lower and upper-homogeneity give, for all $y_i \geq x_i$,
\[
   c \left(\frac{y_i}{x_i}\right)^{\lo/2-1} \leq  \frac{f_i(\sqrt{y_i})/y_i}{f_i(\sqrt{x_i})/x_i} \leq C \left(\frac{y_i}{x_i}\right)^{\upp/2-1}\,,
\]
yielding, for $C_1 \seteq \max\{C, 1/c\}$,
\begin{equation}\label{eq:psi-contract}
   d(\psi(x),\psi(y)) \leq \max \left(\left|\frac{\lo}{2}-1\right|, \left|\frac{\upp}{2}-1\right|\right) d(x,y) + \log(C_1)\,.
\end{equation}

Fix $s > 0$ and consider the mapping $\f : \R_+^m \to \R_+^m$ defined in \eqref{eq:phi-iter}.
Then for $\upp < 4$ and $\delta \seteq \max \left(\left|\frac{\lo}{2}-1\right|, \left|\frac{\upp}{2}-1\right|\right) < 1$,
\eqref{eq:psi-contract} in conjunction with \pref{lem:lev_contract}, shows that
\begin{align}
   d(\f_s(w), \f_s(w')) < \delta\,d(w,w') + \log(C_1)\,. \label{eq:onestep}
\end{align}
Applying this bound inductively, for any weight $w \in \R_{+}^m$ and
$k \geq 1$, we have
\begin{equation}\label{eq:iter-contract0}
   d\left(\f_s^{k}(w), \f_s^{k+1}(w)\right) \leq \frac{\delta^k d(\f_s(w),w) + \log C_1}{1-\delta}\,,
\end{equation}
Now define
\[
   w^{(0)} \seteq \f^{k}_{1}(1,\ldots,1)\,,
\]
where $k \geq 1$ is chosen large enough so that $d(w^{(0)}, \f_{1}(w^{(0)})) \leq \frac{2 \log C_1}{1-\delta}$.
From \pref{fact:tau-char}, one sees that $w^{(0)}$ is an $\alpha$-approximate weight at scale $1$
for $\alpha = C_1^{2/(1-\delta)}$.

Define inductively, for $j =1,2,\ldots$,
\begin{align*}
   w^{(j)} &\seteq \f_{2^j}(w^{(j-1)}) \\
   w^{(-j)} &\seteq \f_{2^{-j}}(w^{(1-j)})\,.
\end{align*}

Note that
\begin{align*}
   d(\f_{2^j}(w^{(j)}), w^{(j)}) &= d(\f_{2^j}^2(w^{(j-1)}), \f_{2^j}(w^{(j-1)})) \\
                                 &\leq \delta d(\f_{2^j}(w^{(j-1)}), w^{(j-1)}) + \log(C_1) \\
                                 &\leq \delta d(\f_{2^{j-1}}(w^{(j-1)}), w^{(j-1)}) + \delta \log(2) + \log(C_1)\,,
\end{align*}
where the last inequality uses $\f_{2s}(w) = 2 \f_s(w)$ for all $w \in \R_+^m$.

Therefore, by induction, 
$d(\f_{2^j}(w^{(j)}), w^{(j)}) \leq \frac{2 \log(C_1) + \log 2}{1-\delta}$ for all $j > 0$.
To see that the family of weights $\{ w^{(j)} : j \in \Z \}$ forms a weight scheme,
note that
\[
   d(w^{(j)}, w^{(j-1)}) 
    = d(\f_{2^j}(w^{(j-1)}), w^{(j-1)}) \leq d(\f_{2^{j}}(w^{(j-1)}), w^{(j-1)}) + \log 2\,,
 \]
thus $\{w^{(j)} : j \in \Z \}$ is an $\alpha$-approximate weight scheme for $\alpha = \frac{2 \log(2C_1)}{1-\delta}$, completing the proof.
\end{proof}

\subsubsection{Efficient implementation of the iteration}
\label{sec:weightalgo}

In this section we give an efficient algorithm for implementing the iteration \eqref{eq:phi-iter}.
The primary difficult is that we need to exhibit convergence even when the iterates are only computed approximately.
For convenience, we use the notation $x \approx_{\alpha} y$ to denote that $\alpha^{-1} \leq |x/y| \leq \alpha$.

\newcommand{\winit}{w^{\circ}}
\newcommand{\jmin}{j_{\min}}
\newcommand{\jmax}{j_{\max}}

\begin{theorem}[Algorithm for weight construction]
\label{thm:weightalgo}
\pref{alg:findweights} takes as input functions $f_1, \dots, f_m: \R \to \Rplus$ are lower $\lo$-homogeneous with constant $c$ \ref{item:lower-growth} and upper $\upp$-homogeneous with constant $C$ \ref{item:upper-growth}, for some $\lo > 0$ and $\upp < 4$, vectors $a_1, \dots, a_m \in \R^n$, integers $j_{\min} < j_{\max}$, and $\winit \in \Rplus^m$ that satisfies
\begin{equation}
\label{eq:init_weight}
d(\f_{2^{\jmax}}(\winit), \winit) \leq \beta.
\end{equation}
For some $\alpha = \alpha(\lo,\upp,c,C)$, the algorithm returns an $\alpha$-approximate weight scheme $\{ w_i^{(j)} : j \in \cJ \}$ with $\cJ = \Z \cap [\jmin,\jmax]$, and succeeds with high probability in time \[ \O_{\lo,\upp,c,C}\!\left((\nnz(a_1,\ldots,a_m) + n^\omega + m \tEval) \left(|\cJ| + \log \max\{\beta, 1\} \right)\right)\,. \]
\end{theorem}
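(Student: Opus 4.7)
The plan is to implement the contractive iteration $\f_s$ from \eqref{eq:phi-iter} approximately, and then invoke essentially the same contraction analysis as in the proof of \pref{thm:weight-schemes}. Concretely, define $\hat{\f}_s(w)$ to be the map obtained from $\f_s(w)$ by replacing each exact leverage score $\tau_i(w) = \langle a_i, (A^\top W A)^{-1} a_i\rangle$ with a multiplicative approximation $\hat{\tau}_i(w) \approx_{1+\kappa} \tau_i(w)$, where the constant $\kappa = \kappa(\lo,\upp,c,C) > 0$ will be chosen small enough for the analysis. Using standard Johnson--Lindenstrauss-based leverage score sketching combined with a fast linear system solver, all $m$ values $\{\hat{\tau}_i(w)\}_{i \in [m]}$ can be computed simultaneously in time $\O(\nnz(a_1,\ldots,a_m) + n^{\omega})$, with failure probability at most $n^{-\ell}$ for any fixed $\ell \ge 1$. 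Coordinate-wise evaluation of $f_i(\sqrt{\hat\tau_i(w)})$ then costs $m \tEval$, so each call to $\hat{\f}_s$ takes time $\O(\nnz(a_1,\ldots,a_m) + n^{\omega} + m \tEval)$.

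The algorithm would proceed in two phases. In the \emph{ramp-up phase}, starting from $\winit$, iterate $\hat{\f}_{2^{\jmax}}$ for $k = O(\log \max\{\beta,1\})$ rounds, producing $w^{(\jmax)}$. The one-step contraction \eqref{eq:onestep} for $\f_s$ is preserved by $\hat{\f}_s$ up to an additional additive term $O_{\upp,C}(\kappa)$ (arising from propagating the $(1+\kappa)$-perturbation through $\psi$ via upper $\upp$-homogeneity), so the iterated bound \eqref{eq:iter-contract0} applied to $\hat{\f}_{2^{\jmax}}$ starting from $\winit$ — which satisfies \eqref{eq:init_weight} — yields $d(\hat\f_{2^{\jmax}}(w^{(\jmax)}), w^{(\jmax)}) = O_{\lo,\upp,c,C}(1)$. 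By \pref{fact:tau-char}, $w^{(\jmax)}$ is then an $\alpha$-approximate weight at scale $2^{\jmax}$ for appropriate $\alpha = \alpha(\lo,\upp,c,C)$. In the \emph{descent phase}, set $w^{(j)} \seteq \hat{\f}_{2^j}(w^{(j+1)})$ for $j = \jmax-1, \jmax-2, \ldots, \jmin$. Mirroring the induction in the proof of \pref{thm:weight-schemes} and using the scaling relation between $\f_{2s}$ and $\f_s$, one verifies inductively that $d(\hat{\f}_{2^j}(w^{(j)}), w^{(j)}) = O_{\lo,\upp,c,C}(1)$ (giving the $\alpha$-approximate weight property at each scale via \pref{fact:tau-char}) and $d(w^{(j)}, w^{(j+1)}) = O_{\lo,\upp,c,C}(1)$ (giving the scheme condition \eqref{eq:smooth-weights}). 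Summing the $O(\log\max\{\beta,1\})$ ramp-up iterations and the $|\cJ|$ descent iterations, each costing $\O(\nnz + n^\omega + m \tEval)$, yields the claimed runtime.

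The principal obstacle is verifying that the leverage-score estimation error can be absorbed into the contractive analysis without accumulating over iterations. The key point is that because $\delta \seteq \max(|\lo/2-1|, |\upp/2-1|) < 1$ (this is exactly where the hypothesis $\upp < 4$ is used), per-step perturbations are not amplified but collapse into a fixed additive budget of $O_{\lo,\upp,c,C}(1)$ in the metric $d$. Making this rigorous requires quantifying, via upper $\upp$-homogeneity of each $f_i$, how the map $\psi$ transforms a $(1+\kappa)$-multiplicative perturbation of its argument, and then choosing $\kappa$ small enough so that the per-step perturbation of $\hat\f_s$ relative to $\f_s$ is strictly dominated by $(1-\delta)$ times the slack we can tolerate at the fixed point. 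A secondary, routine issue is the union bound over the $O(|\cJ| + \log\max\{\beta,1\})$ invocations of the leverage score sketch, which contributes only logarithmic factors already absorbed into the $\O(\cdot)$ notation.
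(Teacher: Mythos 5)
Your proposal is correct and follows essentially the same route as the paper: the paper's \pref{alg:findweights} likewise runs an approximate contractive iteration (via \pref{thm:levapprox}) with a ramp-up at scale $2^{\jmax}$ followed by a descent over scales, and its \pref{lem:contract} is exactly your observation that a $(1+\kappa)$-multiplicative leverage-score error propagates through $\psi$ into a bounded additive term in the metric $d$ that the contraction factor $\delta<1$ prevents from accumulating. The only cosmetic difference is that the paper fixes the approximation parameter at $\eps=0.1$ and simply absorbs the resulting $\log\frac{1+\eps}{1-\eps}$ into the constant $\alpha$, rather than tuning $\kappa$ against the homogeneity parameters.
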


The algorithm proceeds along an iterative procedure akin to \eqref{eq:phi-iter}.
At each step, the weights are updated by computing approximate leverage scores of a matrix $A^{\top} W A$.

\begin{theorem}[Leverage score approximation, \cite{SS11,LMP13,CLMMPS15}]
\label{thm:levapprox}\label{thm:lever_over}
There is an algorithm $\LevApprox(A, \baltW, \eps)$ that takes a matrix $A \in \R^{m \times n}$, a non-negative diagonal matrix $\baltW$, and $\eps > 0$,
and produces $(1+\eps)$-approximations $\widetilde{\sigma}_i \approx_{1+\eps} w_i a_i^\top (A^\top \baltW A)^{-1} a_i$ for all $i \in [m]$, in $\O(\eps^{-2}(\mathsf{nnz}(A) + n^\omega)\log(1/\delta))$ time, with probability at least $1-\delta$.
\end{theorem}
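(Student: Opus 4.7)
The plan is to describe \pref{alg:findweights} as a sequence of approximate applications of the iteration $\f_s$ from \eqref{eq:phi-iter}, where each application is computed by invoking \LevApprox to approximate the leverage scores $\sigma_i(W^{1/2} A)$ to a fixed constant factor (so each of the $m$ resulting $\tau_i(w)$ values is determined up to a constant), and then applying each $f_i$ to produce the new weight. The overall structure is: start at scale $2^{\jmax}$ using $\winit$ as a warm start and iterate this approximate map $O(\log \beta)$ times, then walk through scales $j = \jmax - 1, \jmax - 2, \dots, \jmin$, using $w^{(j+1)}$ as a warm start for $w^{(j)}$ and applying only $O(1)$ iterations at each level.

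The key technical content lies in reproducing the contractive convergence of \eqref{eq:onestep} in the presence of approximation error. I would define $\tilde{\f}_s$ to be $\f_s$ with $\tau$ replaced by an approximation $\tilde{\tau}$ obtained from $\LevApprox(A, W, \eps_0)$ for a sufficiently small constant $\eps_0 = \eps_0(\theta, \upp, c, C)$. Using the upper/lower homogeneity of the $f_i$'s (which are inherited by the map $\psi$ in the proof of \pref{thm:weight-schemes}), a constant multiplicative error in $\tilde{\tau}_i$ translates to a constant additive error in $d(\tilde{\f}_s(w), \f_s(w))$. Combining this with \pref{lem:lev_contract} and \eqref{eq:psi-contract} as in the proof of \pref{thm:weight-schemes} yields the approximate one-step inequality
\[
   d\bigl(\tilde{\f}_s(w), \tilde{\f}_s(w')\bigr) \leq \delta \, d(w,w') + O(1)\,,
\]
for the same contraction factor $\delta = \max(|\theta/2 - 1|, |u/2 - 1|) < 1$. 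Telescoping this as in \eqref{eq:iter-contract0}, the Picard-style iteration $\tilde{\f}_s^k(\winit)$ shrinks the distance $d(\tilde{\f}_s^k(\winit), \tilde{\f}_s^{k+1}(\winit))$ at rate $\delta^k$, so after $k = O(\log \beta / \log(1/\delta))$ iterations this quantity is $O(1)$, which by \pref{fact:tau-char} certifies that the current iterate is an $\alpha$-approximate weight at scale $2^{\jmax}$ for an appropriate $\alpha = \alpha(\theta, \upp, c, C)$.

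For the walk through scales, I would exploit the identity $\f_{2s}(w) = \tfrac{1}{2} \f_s(w)$, which implies $d(\f_{2^{j-1}}(w^{(j)}), w^{(j)}) \leq d(\f_{2^j}(w^{(j)}), w^{(j)}) + \log 2 = O(1)$. Consequently, a constant number of applications of $\tilde{\f}_{2^{j-1}}$ starting from $w^{(j)}$ produces $w^{(j-1)}$ with $d(\tilde{\f}_{2^{j-1}}(w^{(j-1)}), w^{(j-1)}) = O(1)$, certifying the $\alpha$-approximate-weight condition at scale $2^{j-1}$. The smoothness requirement $w_i^{(j)} \leq \alpha w_i^{(j-1)}$ in \eqref{eq:smooth-weights} then follows from the same telescoping bound $d(w^{(j)}, w^{(j-1)}) = O(1)$ that appears at the end of the proof of \pref{thm:weight-schemes}, after absorbing the additive approximation constants into $\alpha$.

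The runtime accounting is then routine: each application of $\tilde{\f}_s$ costs one call to $\LevApprox$ at precision $\eps_0 = \Theta(1)$, which by \pref{thm:levapprox} takes $\tilde{O}(\nnz(a_1, \ldots, a_m) + n^{\omega})$ time with high probability (after union-bounding the failure probability $\delta$ over all iterations by a standard $\log$-factor blowup), plus $O(m \tEval)$ time to evaluate the $f_i$'s on the $\tilde{\tau}_i$'s. The total number of iterations is $O(\log \beta) + O(|\cJ|)$, giving the claimed runtime. The main obstacle I anticipate is carefully choosing $\eps_0$ small enough (as a function of $\theta, \upp, c, C$) so that the additive constant $O(1)$ introduced into the contraction does not exceed the threshold at which $\tilde{\f}_s$ ceases to produce $\alpha$-approximate weights for the eventual $\alpha$; this is a routine but tedious bookkeeping exercise, and determines the $\alpha(\theta, \upp, c, C)$ dependence in the conclusion.
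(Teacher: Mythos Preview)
Your proposal does not address the stated theorem. \pref{thm:levapprox} is a black-box result on fast approximate leverage score computation, cited from \cite{SS11,LMP13,CLMMPS15}; the paper does not prove it and simply invokes it as a subroutine. What you have actually sketched is a proof of \pref{thm:weightalgo} (the correctness and runtime of \textsc{FindWeights}), and indeed you yourself invoke \pref{thm:levapprox} as a tool inside your argument. So there is a mismatch between the statement you were asked to justify and the statement you are proving.

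For what it is worth, your sketch of \pref{thm:weightalgo} is essentially the paper's own argument: the paper defines the approximate iteration $\cI_{s,\eps}$ (your $\tilde{\f}_s$), proves in \pref{lem:contract} the perturbed one-step contraction $d(\cI_{s,\eps}(w),\cI_{s,\eps}(w')) \leq \delta\,d(w,w') + \log(2C_1)$ exactly as you describe, and then telescopes it as in \eqref{eq:iter-contract0} to handle both the warm-start phase at scale $2^{\jmax}$ and the walk down through scales using $\f_{2s}(w) = \tfrac{1}{2}\f_s(w)$. The only cosmetic difference is that the paper fixes $\eps_0 = 0.1$ from the outset rather than leaving it as a parameter to be tuned, absorbing the resulting $\log\bigl(\tfrac{1+\eps_0}{1-\eps_0}\bigr)$ into the additive constant once and for all.
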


\begin{algorithm2e}[!ht]
\caption{$\Iterate(\{f_1, \dots, f_m\}, \{a_1, \dots, a_m\}, w, s, \eps)$}
\label{alg:iterate}
\SetKwInOut{Input}{input}
\Input{Functions $f_1, \dots, f_m: \R \to \Rplus$, vectors $a_1, \dots, a_m \in \R^n$, weights $w \in \R_{+}^m$, a scale $s \in \R_{+}$, and $\eps \in \R_{+}$.}
$\widetilde{\sigma} \assign \LevApprox(A,\baltW,\eps)$. \\
$\bar{w}_i \assign \frac{1}{s} \frac{f_i(\sqrt{\widetilde{\sigma}_i/w_i})}{\widetilde{\sigma}_i/w_i}$ for $i=1,\ldots,m$. \\
\textbf{return} $\bar{w}$.
\end{algorithm2e}

\begin{algorithm2e}[!ht]
\caption{$\FindWeights(\{f_1, \dots, f_m\}, \{a_1, \dots, a_m\}, \jmin, \jmax, \winit, \beta, \lo, \upp, c, C)$}
\label{alg:findweights}
\SetKwInOut{Input}{input}
\Input{Functions $f_1, \dots, f_m: \R \to \Rplus$, vectors $a_1, \dots, a_m \in \R^n$, integers $\jmin < \jmax$, an initial weight $\winit \in \Rplus^m$ satisfying \eqref{eq:init_weight} for $\beta > 1$, parameters $\lo, \upp, c, C$
}
$\eps \seteq 0.1$\\
$T \assign \left\lceil \frac{\log\left( \frac{1 + \beta}{ \log \max(2C, 2/c)} \right)}{\log \min(| \frac{2}{\lo-2}| , |\frac{2}{\upp-2}|)} \right\rceil$ \label{line:defqt} \\
$w^{(\jmax)}_0 \assign \winit$ \\
\lFor*{$i = 0, 1, \dots,T$}{
   $w^{(\jmax)}_{i+1} \assign \Iterate(\{f_1, \dots, f_m\}, \{a_1, \dots, a_m\}, w_i^{(\jmax)}, 2^{\jmax}, \epsilon)$. \label{line:iterate1}
} \\
$w^{(0)} \assign w^{(0)}_{T}$\\
\lFor*{$i = \jmax, \jmax-1, \ldots, \jmin$}{
    $w^{(i)} \assign \Iterate(\{f_1, \dots, f_m\}, \{a_1, \dots, a_m\}, w^{(i+1)}, 2^i, \epsilon)$. \label{line:iterate2}
} \\
\textbf{return} $\{ w^{(j)} : \jmin \leq j \leq \jmax \}$.
\end{algorithm2e}

\pref{alg:iterate} called with $\e = 0$ is able to directly implement the iteration \eqref{eq:phi-iter}.
We now show that $\Iterate$ remains approximately contracting for $\eps > 0$.

Let $\f_1,\ldots,\f_m$ and $a_1,\ldots,a_m \in \R^n$ be given as in \pref{thm:weightalgo}, and
define the function $\cI_{s,\eps} : \Rplus^m \to \Rplus^m$ by
\begin{align} \label{eq:ci}
\cI_{s,\eps}(w) \seteq \frac{1}{s}\frac{f_i(\sqrt{\tilde{\sigma}_i/w_i})}{\tilde{\sigma}_i/w_i}\,, \text{ where } \tilde{\sigma}_i \approx_{1+\e} w_i a_i^\top (A^\top WA)^{-1} a_i \text{ for all } i = 1, \dots, m\,,
\end{align}
i.e., $\tilde{\sigma}_i$ are arbitrary approximate leverage scores.
\begin{lemma}
\label{lem:contract}
For any $w,w' \in \Rplus^m$ and $0 < \e < 1/3$, it holds that
\[
d(\cI_{s,\eps}(w),\cI_{s,\eps}(w')) \leq \delta d(w,w') + \log(2 C_1),
\]
with $\delta \seteq \max \left(\left|\frac{\lo}{2}-1\right|, \left|\frac{\upp}{2}-1\right|\right)$, and $C_1 \seteq \max(C, 1/c)$.
\end{lemma}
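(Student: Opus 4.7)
The plan is to reduce the lemma to the exact-iteration contraction argument used in the proof of \pref{thm:weight-schemes} (i.e., the combination of \pref{lem:lev_contract} and \eqref{eq:psi-contract}), treating the multiplicative $(1+\e)$ error in the leverage score approximations as a small additive perturbation in the $d$-metric.

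First, I will rewrite $\cI_{s,\e}(w)_i$ in terms of the quantities $\tilde\tau_i(w) \seteq \tilde\sigma_i/w_i$, so that $\cI_{s,\e}(w)_i = \frac{1}{s}\psi_i(\tilde\tau_i(w))$ where $\psi_i(x) \seteq f_i(\sqrt{x})/x$; this is the same $\psi$ that appears in the proof of \pref{thm:weight-schemes}, and the estimate \eqref{eq:psi-contract} gives $d(\psi(x),\psi(y)) \le \delta\, d(x,y) + \log C_1$ coordinatewise. The key observation is that $\tilde\tau_i(w) \approx_{1+\e} \tau_i(w) = a_i^\top(A^\top W A)^{-1}a_i$ by the guarantee in \eqref{eq:ci}, and similarly for $w'$.

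Second, I combine this with \pref{lem:lev_contract}. Since $\tilde\tau(w) \approx_{1+\e} \tau(w)$ and $\tilde\tau(w') \approx_{1+\e} \tau(w')$ coordinatewise, the triangle inequality in $(\R_+^m, d)$ yields
\[
d(\tilde\tau(w),\tilde\tau(w')) \le d(\tau(w),\tau(w')) + 2\log(1+\e) \le d(w,w') + 2\log(1+\e),
\]
using \pref{lem:lev_contract} in the last step. Applying \eqref{eq:psi-contract} (and using that $s^{-1}$ cancels out on both sides inside $d$) then gives
\[
d(\cI_{s,\e}(w),\cI_{s,\e}(w')) \le \delta\bigl(d(w,w') + 2\log(1+\e)\bigr) + \log C_1.
\]

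Finally, I bound the additive slack: for $\e < 1/3$ we have $\log(1+\e) \le \log(4/3)$, and since $\delta < 1$ (this is exactly the regime $\upp < 4$, $\lo > 0$ used throughout \pref{sec:weight-schemes}), $2\delta \log(1+\e) \le 2\log(4/3) \le \log 2$, which absorbs into $\log(2C_1)$ and yields the stated bound. There is no real obstacle here; the only subtlety is making sure the $(1+\e)$-multiplicative error enters additively in $d$ (which it does by definition of $d$) and that $\delta < 1$ so that the $2\delta \log(1+\e)$ slack is absorbed by a single extra $\log 2$ factor.
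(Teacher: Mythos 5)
Your proposal is correct and follows essentially the same route as the paper's proof: factor the iteration through $\psi$ and apply \eqref{eq:psi-contract}, use \pref{lem:lev_contract} for the exact leverage scores, and absorb the $(1+\e)$-multiplicative error as an additive $d$-perturbation into the $\log 2$ slack (the paper bounds this slack by $\log\frac{1+\e}{1-\e}\le\log 2$ rather than your slightly tighter $2\log(1+\e)$, but the argument is identical).
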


\begin{proof}
Define $\baltW \seteq \diag(w_1,\ldots,w_m)$ and $\baltW' \seteq \diag(w'_1,\ldots,w'_m)$.
Suppose that $\rho,\rho' \in \R^m$ are such that
$\rho_i \approx_{1+\eps} a_i^\top (A^\top \baltW A)^{-1} a_i$ and $\rho'_i \approx_{1+\eps} a_i^\top (A^\top \baltW' A)^{-1} a_i$
for each $i=1,\ldots,m$, and
\[
   \cI_{s,\eps}(w)_i = \frac{1}{s} \frac{f_i(\sqrt{\rho_i})}{\sqrt{\rho_i}}  \quad \text{and} \quad \cI_{s,\eps}(u)_i = \frac{1}{s} \frac{f_i(\sqrt{\rho'_i})}{\sqrt{\rho'_i}}\,.
\]
From~\eqref{eq:psi-contract}, we have
\[
   d(\cI_{s,\eps}(w), \cI_{s,\eps}(w')) = d \left(\psi(\rho), \psi(\rho')\right) \leq \delta d(\rho, \rho') + \log(C_1).
\]

Because $\rho_i \approx_{1+\eps} \tau_i(w)$ and $\rho'_i \approx_{1+\eps} \tau_i(w')$,
\[
   d(\rho,\rho') = \max_{i \in [m]} \left| \log \frac{\rho_i}{\rho'_i} \right| \leq \max_{i \in [m]} \left| \log \frac{\tau_i(w)}{\tau_i(w')} \right| + \log\left(\frac{1+\eps}{1-\eps}\right) = d(\tau(w), \tau(w')) + \log\left(\frac{1+\eps}{1-\eps}\right).
\]
Combined with \pref{lem:lev_contract}, this gives
\[
d(\cI_{s,\eps}(w), \cI_{s,\eps}(w')) \leq \delta d(w,w')+ \log(C_1) + \log\left(\frac{1+\eps}{1-\eps}\right) \leq \delta d(w,w') + \log(2 C_1)\,.\qedhere
\]
\end{proof}

\newcommand{\si}[1]{s^{(#1)}}
\begin{proof}[Proof of \pref{thm:weightalgo}]
   Because we use \pref{alg:findweights} to find the desired weights, the claimed running time bound follows from \pref{thm:levapprox} for the choice $\delta = (m(|\cJ| + \log \max\{\beta, 1\}))^{-O(1)}$.
   Thus it suffices to argue that the output weights $\{ w^{(j)} : j \in \cJ \}$ form an approximate weight scheme.
   For this choice of $\delta$, taking $\cI_{\smax,0.1}(w) \seteq \Iterate(\{f_1, \dots, f_m\}, \{a_1, \dots, a_m\}, w, s, 0.1)$ satisfies, with high probability, \eqref{eq:ci} for all $T + |\cJ|$ calls to \pref{alg:iterate} from \pref{alg:findweights}.

   For ease of notation, let us denote $\smax \seteq 2^{\jmax}$.
   The analysis is identical to that in the proof of \pref{thm:weight-schemes} (recall \eqref{eq:iter-contract0}), except that
   \eqref{eq:onestep} is replaced by \pref{lem:contract}, and the initial weight bound is replaced by
   \begin{align*}
      d\!\left(w_0^{(\jmax)}, \cI_{\smax,0.1}\!\left(w_0^{(\jmax)}\right)\right) &\leq 
   d\!\left(\f_{\smax}\!\left(w_0^{(\jmax)}\right), w_0^{(\jmax)}\right)+
   d\!\left(\cI_{\smax,0.1}\!\left(w_0^{(\jmax)}\right), \cI_{\smax,0}\!\left(w_0^{(\jmax)}\right)\right) \\
                                                                                 &\leq \log \beta + \log(2 C_1)\,,
\end{align*}
where the last inequality follows from \eqref{eq:init_weight} and \pref{lem:lev_contract}.
\end{proof}

\subsubsection{Constructing initial weights}
\label{sec:init}

\newcommand{\sz}{\hat{z}}

A mild problem arises when applying \pref{thm:weightalgo}, which is that it
may be computationally non-trivial to locate an initial weight $\winit \in \Rplus^m$ satisfying \eqref{eq:init_weight} with
$\beta$ sufficiently small.
In this section, we show how to efficiently compute small
perturbations
$\hat{f}_1,\ldots,\hat{f}_m : \R^n \to \Rplus$ of the functions $f_1,\ldots,f_m : \R^n \to \Rplus$ along with good initial weights $\winit$ for $\{\hat{f}_i\}$.

Fix $0 < \smin < \smax$ and vectors $a_1,\ldots,a_m \in \R^n$. Consider $f_1,\ldots,f_m : \R^n \to \Rplus$ such that $f_1^{1/2},\ldots,f_m^{1/2}$ are $L$-auto-Lipschitz \ref{item:lipschitz} and 
$\lo$-lower homogeneous with constant $c$ \ref{item:lower-growth}. In that case, each $f_i$ is continuous, and therefore for any $0 < \gamma \leq 1$ there exist numbers $\sz_1,\ldots,\sz_m > 0$ such that
\begin{equation}\label{eq:zidef}
   \gamma \smax \leq f_i(\sz_i) \leq \smax\,,\quad i=1,\ldots,m\,.
\end{equation}

Define the matrix $U \seteq \sum_{i=1}^m \sz_i^{-2} a_i a_i^{\top}$, and let $\tilde{\tau}_i \approx_2 \langle a_i, U^{-1} a_i\rangle$ for $i=1,\ldots,m$.
Note that these values can computed using a single call to $\LevApprox(A, (\sz_1^{-2},\ldots, \sz_m^{-2}), 1/2)$.
Define $w_i \seteq \delta/\tilde{\tau}_i$ for some $\delta > 0$ and $i=1,\ldots,m$, and
finally define
\begin{equation}\label{eq:def-hf}
   \hat{f}_i(z) \seteq f_i(z) + \smax w_i z^2\,,\quad i=1,\ldots,m\,,
\end{equation}
The following fact is straightforward.
\begin{fact}\label{fact:hf-props}
   If $f_1^{1/2},\ldots,f_m^{1/2}$ are $L$-auto-Lipschitz and $\lo$-lower homogeneous with constant $c$, then $\hat{f}_1^{1/2},\ldots,\hat{f}_m^{1/2}$
   are $\max\{1, L\}$-auto-Lipschitz and $\lo$-lower homogeneous with constant $c$.
\end{fact}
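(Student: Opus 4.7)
The plan is to verify the two assertions separately, both by exploiting the decomposition $\hat{f}_i(z) = f_i(z) + q_i(z)^2$ where $q_i(z) := \sqrt{\smax w_i}\,|z|$, and treating the two summands componentwise.

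For the auto-Lipschitz property, the key idea is to view $\hat{f}_i^{1/2}(z) = \bigl\|\bigl(f_i^{1/2}(z),\, q_i(z)\bigr)\bigr\|_2$ as a Euclidean norm on $\R^2$. Since $q_i$ is trivially $1$-auto-Lipschitz by the reverse triangle inequality on $|\cdot|$, applying the reverse triangle inequality in $(\R^2,\|\cdot\|_2)$ yields
\[
\bigl|\hat{f}_i^{1/2}(z) - \hat{f}_i^{1/2}(z')\bigr|
\leq \sqrt{\bigl(f_i^{1/2}(z) - f_i^{1/2}(z')\bigr)^2 + \bigl(q_i(z) - q_i(z')\bigr)^2}\,.
\]
Using the $L$-auto-Lipschitz hypothesis on $f_i^{1/2}$ and the $1$-auto-Lipschitz property of $q_i$ term-by-term, the right side is bounded by $\sqrt{L^2 f_i(z-z') + q_i(z-z')^2} \leq \max(1,L)\,\hat{f}_i^{1/2}(z-z')$, giving the claim.

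For lower $\lo$-homogeneity, note that $q_i^2(\lambda z) = \lambda^2 q_i^2(z)$, whereas the hypothesis on $f_i^{1/2}$ gives $f_i(\lambda z) \geq c^2 \lambda^{2\lo} f_i(z)$ for $\lambda \geq 1$. Summing,
\[
\hat{f}_i(\lambda z) \geq c^2 \lambda^{2\lo} f_i(z) + \lambda^2 \, q_i^2(z)\,.
\]
To factor $c^2\lambda^{2\lo}$ out of both terms it suffices that $\lambda^2 \geq c^2 \lambda^{2\lo}$ for all $\lambda \geq 1$. This requires $c \leq 1$ (which is forced by setting $\lambda = 1$ in the lower-homogeneity hypothesis, for any $z$ with $f_i(z)>0$) and $\lo \leq 1$. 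The latter follows from the hypotheses plus \pref{lem:relate}: lower $\lo$-homogeneity implies monotonicity, and auto-Lipschitz plus monotone implies upper $1$-homogeneity, so $c \lambda^\lo \leq 2CL \lambda$ for all $\lambda \geq 1$, forcing $\lo \leq 1$. Then $\hat{f}_i(\lambda z) \geq c^2 \lambda^{2\lo} \hat{f}_i(z)$; taking square roots finishes the argument.

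The main obstacle is purely bookkeeping: verifying $c \leq 1$ and $\lo \leq 1$ carefully from the hypotheses so that the quadratic term's $1$-homogeneity dominates the weaker $\lo$-lower bound that we want to conclude. Once these are in hand, the proof is just the two short displays above, and no new analytic content is needed beyond the $\ell_2$ reverse triangle inequality on $\R^2$.
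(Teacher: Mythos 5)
Your proof is correct and matches the paper's: the auto-Lipschitz part is handled by exactly the same reverse triangle inequality in $(\R^2,\|\cdot\|_2)$ applied to $(f_i^{1/2}(z),\sqrt{\smax w_i}\,|z|)$, and the lower-homogeneity part, which the paper dismisses with ``Lower homogeneity is clear,'' is the computation you spell out. Your extra bookkeeping that $c\le 1$ and $\lo\le 1$ are forced by the hypotheses (so that the quadratic term's $2$-homogeneity dominates the required $\lambda^{2\lo}$ factor) is a correct and worthwhile elaboration of what the paper leaves implicit.
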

\begin{proof}
Lower homogeneity is clear. Note that for numbers $a,b,c,d \in \R$, it holds that
\[
   |(a^2+b^2)^{1/2} - (c^2+d^2)^{1/2}| = \left|\|(a,b)\|_2-\|(c,d)\|_2\right| \le \|(a,b)-(c,d)\|_2 = ((a-c)^2 + (b-d)^2)^{1/2}\,.
\]
Employ this to write
\begin{align*}
|\hat{f}_i(z)^{1/2} - \hat{f}_i(z')^{1/2}| &\le ((f_i(z)^{1/2} - f_i(z')^{1/2})^2 + \smax w_i (z-z')^2)^{1/2} \\
&\le (L^2 f_i(z-z') + \smax w_i (z-z')^2)^{1/2} \le \max\{1, L\} \hat{f}_i(z-z')^{1/2}.
\end{align*}
Thus each $\hat{f}_i^{1/2}$ is $\max\{1, L\}$-auto-Lipschitz for $i = 1, 2, \ldots, m$.
\end{proof}

\begin{theorem}\label{thm:perturb}
\label{thm:init}
   Let $F(x) \seteq \sum_{i=1}^m f_i(\langle a_i,x\rangle)$ and $\hat{F}(x) \seteq \sum_{i=1}^m \hat{f}_i(\langle a_i,x\rangle)$
   for  $\delta > 0$ and any $\sz_i > 0 $ satisfying \eqref{eq:zidef}. For all $x \in \R^n$,
   \begin{equation}\label{eq:hat-approx}
      F(x) \leq \smax \implies
      0 \leq \hat{F}(x) - F(x) \leq 2 \delta m^2 \smax (L/(\gamma c))^{2/\theta}\,.
   \end{equation}
   Moreover, $w$ is an $O((L/c)^2 m/\delta)$-approximate weight at scale $\smax$ for $\{\hat{f}_i\}$ and $\{a_i\}$.
\end{theorem}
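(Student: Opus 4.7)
The proof naturally splits along the two assertions in the statement.

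\textbf{Part 1 (the sandwich $0 \le \hat F(x) - F(x) \le 2\delta m^2 \smax (L/(\gamma c))^{2/\theta}$).} Non-negativity is immediate from $\hat f_i(z)-f_i(z) = \smax w_i z^2 \ge 0$. For the upper bound, I would expand
\[
\hat F(x) - F(x) = \smax \sum_{i=1}^m w_i \langle a_i,x\rangle^2 = \smax\,\delta \sum_{i=1}^m \frac{\langle a_i,x\rangle^2}{\tilde\tau_i}.
\]
Using $\tilde\tau_i \ge \tfrac12 \langle a_i,U^{-1}a_i\rangle$ and the Cauchy--Schwarz bound $\langle a_i,x\rangle^2 \le \langle a_i,U^{-1}a_i\rangle \,\|U^{1/2}x\|_2^2$, this sum is at most $2m\,\|U^{1/2}x\|_2^2 = 2m \sum_{i} \sz_i^{-2}\langle a_i,x\rangle^2$. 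Then for each $i$ I would use $f_i(\langle a_i,x\rangle)\le \smax$ (from $F(x)\le \smax$), together with $L$-symmetry (via \pref{lem:relate}) and lower $\theta$-homogeneity of $f_i^{1/2}$, to deduce
\[
\langle a_i,x\rangle^2 \le \sz_i^{2}\,(L/c)^{2/\theta}\gamma^{-1/\theta}.
\]
Combining these yields $\hat F(x)-F(x) \le 2\delta m^2 \smax (L/c)^{2/\theta}\gamma^{-1/\theta} \le 2\delta m^2 \smax (L/(\gamma c))^{2/\theta}$.

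\textbf{Part 2 (approximate weight at scale $\smax$).} Write $v_i \seteq \|M_w^{-1/2} a_i\|_2$, so that
\[
\frac{\hat f_i(v_i)}{w_i v_i^2} \;=\; \frac{f_i(v_i)}{w_i v_i^2} + \smax.
\]
The lower bound $\smax/\alpha \le \smax$ is automatic, so everything comes down to bounding $f_i(v_i)/(w_i v_i^2)$ from above. The key step is to sandwich $M_w$ between constant multiples of $U$. Since each $\tilde\sigma_i \le 2$, I have $w_i \ge \delta/(2\sz_i^{2})$, which gives $M_w \succeq (\delta/2) U$. For the opposite direction, $\tilde\tau_i \ge \tfrac12 \langle a_i,U^{-1}a_i\rangle$ combined with the Cauchy--Schwarz bound $\sum_i a_i a_i^\top/\langle a_i,U^{-1}a_i\rangle \preceq m\,U$ (which is exactly the argument used for $\|U^{1/2}x\|_2^2$ above, now read as an operator inequality) yields $M_w \preceq 2\delta m\,U$. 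Consequently
\[
\frac{\sz_i^{2}\sigma_i}{2\delta m} \;\le\; v_i^2 \;\le\; \frac{2\sz_i^{2}\sigma_i}{\delta}, \qquad \frac{1}{4m} \;\le\; w_i v_i^2 \;\le\; 4.
\]
In particular $v_i \le \sqrt{2/\delta}\,\sz_i$.

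To bound $f_i(v_i)$, I would split into the cases $v_i \le \sz_i$ and $v_i \ge \sz_i$. In the first case, $1/c$-monotonicity of $f_i^{1/2}$ (via \pref{lem:relate}) gives $f_i(v_i) \le \smax/c^2$. In the second, upper $1$-homogeneity of $f_i^{1/2}$ with constant $2L/c$ (again via \pref{lem:relate}) gives $f_i(v_i) \le (2L/c)^2 (v_i/\sz_i)^2 \smax \le 8L^2/(c^2\delta)\,\smax$. Either way, $f_i(v_i) \lesssim L^2/(c^2\delta)\,\smax$, so
\[
\frac{f_i(v_i)}{w_i v_i^2} \;\le\; 4m\,f_i(v_i) \;\lesssim\; \frac{L^2 m}{c^2 \delta}\,\smax,
\]
which gives the claimed $\alpha = O((L/c)^2 m/\delta)$.

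\textbf{Main obstacle.} The only genuinely subtle point is the operator sandwich $(\delta/2)U \preceq M_w \preceq 2\delta m\,U$: the lower bound requires the correct use of $\tilde\sigma_i \le 2$, while the upper bound hinges on recognising that the inequality $\sum_i a_i a_i^\top/\langle a_i,U^{-1}a_i\rangle \preceq mU$ (from Cauchy--Schwarz in the $U^{-1}$ inner product) transfers the cheap pointwise bound used in Part~1 into a \emph{uniform} comparison of $v_i$ to $\sz_i$. Once this comparison is in hand, the rest of the bookkeeping in Part~2 is just applying the two homogeneity properties of $f_i^{1/2}$ in the two cases $v_i \lessgtr \sz_i$.
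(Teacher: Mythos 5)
Your proposal is correct and follows essentially the same route as the paper's proof: the operator sandwich $\tfrac{\delta}{2}U \preceq M_w \preceq 2\delta m U$, the pointwise bound $\langle a_i,x\rangle^2 \lesssim_{L,c,\gamma,\theta} \sz_i^2$ when $F(x)\le \smax$, and the case split $v_i \lessgtr \sz_i$ handled by monotonicity and upper $1$-homogeneity of $f_i^{1/2}$. The only deviations are cosmetic (phrasing the upper operator bound as a sum of pointwise Cauchy--Schwarz inequalities, and a slightly sharper $\gamma^{-1/\theta}$ versus the paper's $\gamma^{-2/\theta}$, both of which yield the stated estimate).
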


\begin{proof}
Define $M_w \seteq A^{\top} W A$, where $W \seteq \diag(w_1,\ldots,w_m)$.
We first claim that
\begin{equation}\label{eq:MwUz}
   \frac{\delta}{2} U \preceq M_w \preceq 2 \delta m U\,.
\end{equation}
To prove the upper bound in \eqref{eq:MwUz}, note that
\[
   a_i a_i^{\top} \preceq \langle a_i, U^{-1} a_i\rangle U \preceq 2 \tilde{\tau}_i U\,,
\]
Summing over $i=1,\ldots,m$ indeed gives $M_w = \delta \sum_{i=1}^m \frac{1}{\tilde{\tau}_i} a_i a_i^{\top} \preceq 2\delta m U$.
 For the lower bound, note that $U \succeq \sz_i^{-2} a_i a_i^{\top}$ and hence $U^{-1} \preceq (\sz_i^2/\|a_i\|_2^2) a_i a_i^{\top}$. Therefore,
  $\tilde{\tau}_i \leq 2 \langle a_i, U^{-1} a_i\rangle \leq 2 \sz_i^2$ which implies that $w_i \geq (\delta/2) \sz_i^{-2}$ and indeed gives the lower bound in \eqref{eq:MwUz}.

Next we prove \eqref{eq:hat-approx}.  The lower bound of \eqref{eq:hat-approx} is trivial. 
To prove the upper bound, we first show that if $f_i(\langle a_i,x\rangle) \leq \smax$, then it holds that $\langle a_i,x\rangle^2 \leq (L/(\gamma c))^{2/\theta} \sz_i^2$ for $i=1,\ldots,m$. Indeed, if $|\l a_i, x\r| = \lambda \sz_i$ for $\lambda \ge 1$, then lower homogeneity, symmetry, and the definition of $\sz_i$ in \eqref{eq:zidef},
\[ \smax \ge f(\l a_i, x\r) \ge \frac{1}{L}f(|\l a_i, x\r|) \ge \frac{c\lambda^\lo}{L} f(\sz_i) \geq \frac{\gamma c\lambda^\lo}{L} \smax \,. \]
Thus $\lambda \le (L/(\gamma c))^{1/\lo}$, as desired.
Now note that
\begin{align*}
   \hat{F}(x) - F(x) &= \smax \langle x, M_w x\rangle \stackrel{\mathclap{\eqref{eq:MwUz}}}{\leq} 2 \delta m \langle x, U x\rangle 
                                                                                         = 2 \delta m \smax \sum_{i=1}^m \sz_i^{-2} \langle a_i,x\rangle^2 
                                                                                         \leq 2 \delta m^2 \smax (L/(\gamma c))^{2/\theta}\,.
\end{align*}

\newcommand{\qfm}{(a_i^{\top} M_w^{-1} a_i)^{1/2}}

Finally, we establish that $w$ are suitable approximate weights. Observe that 
\[
   d(\f_{\smax}(w), w) = \max_{i \in [m]} \log \left| \frac{1}{\smax} \frac{\hat{f}_i(\qfm)}{w_i \cdot a_i^\top M_w^{-1} a_i} \right|.
\]
Then by the definition \eqref{eq:def-hf}, we have
\[ \frac{1}{\smax} \frac{\hat{f}_i(\qfm)}{w_i \cdot a_i^\top M_w^{-1} a_i} = \frac{1}{\smax} \left(\frac{f_i(\qfm)}{w_i \cdot a_i^\top M_w^{-1} a_i} + \frac{\smax\delta/\tilde{\tau}_i}{w_i}\right) = 1 + \frac{1}{\smax} \frac{f_i(\qfm)}{w_i \cdot a_i^\top M_w^{-1} a_i}\,. \]
Now \eqref{eq:MwUz} and $\tilde{\tau}_i \leq 2 \sz_i^2$ together give
\[
   a_i^\top M_w^{-1} a_i \leq \frac{2}{\delta} a_i^\top U^{-1} a_i \le \frac{4}{\delta} \tilde{\tau}_i \leq \frac{8}{\delta} \sz_i^2\,,\quad i=1,\ldots,m\,.
\]
If $\qfm \le \sz_i$, then by monotonicity, $f_i(\qfm) \le (1/c)^2 f_i(\sz_i) = (1/c)^2 \smax.$
Otherwise, the auto-Lipschitz and lower homogeneous properties for $f_i^{1/2}$ give
$f_i(\lambda x) \leq (2L/c)^2\lambda^2 f_i(x)$ for every $\lambda \geq 1$ by \pref{lem:relate}, and
this implies
\[ 
   f_i(\qfm) \le 4(L/c)^2 \Bigg(\frac{\qfm}{\sz_i} \Bigg)^2 f_i(\sz_i) \lesssim (L/c)^2 \frac{\smax}{\delta}\,.
 \]
 In either case, $f_i(\qfm) \lesssim (L/c)^2 \smax/\delta$.
 For the denominator, use \eqref{eq:MwUz} to write
 \[ w_i \cdot a_i^\top M_w^{-1} a_i \ge \frac{1}{2\delta m} w_i a_i^\top U^{-1} a_i \geq \frac{1}{4m} \tilde{\tau}_i \tilde{\tau}_i^{-1} = \frac{1}{4m}\,.\]
Combining everything and using that $f_i$ is non-negative, we arrive at
\[
   d(\f_{\smax}(w),w) \leq \log\left(1 + \frac{1}{\smax} \frac{O((L/c)^2\delta^{-1} \smax)}{1/(4m)}\right) \leq \log(1 + O((L/c)^2 \delta^{-1} m))\,.\qedhere
\]
\end{proof}

\subsection{A variational approach to approximate weights}
\label{sec:variational}
\label{sec:optmap}

In this section we show that even if each $f_i$ is upper $\upp$-homogeneous for $\upp \ge 4$, approximate weights still exist.
Our sparsification analysis relies on the existence of weight schemes, where there is a relationship between
weights at different scales. But the following existence proof is instructive.

\begin{theorem}\label{thm:variational}
   Suppose $f_1,\ldots,f_m : \R_+ \to \R_+$ are lower $\lo$-homogeneous with constant $c$ and upper $\upp$-homogeneous with constant $C$ with $\upp > \lo > 0$.
   Then there is a constant $\alpha = \alpha(\lo,c,\upp,C)$ such that for every choice of vectors $a_1,\ldots,a_m \in \R^n$ and $s > 0$, there is an $\alpha$-approximate weight at scale $s$.
\end{theorem}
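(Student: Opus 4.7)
The plan is to follow the hint and construct the desired weights via a variational principle, generalizing Lewis' classical construction of $\ell_p$ weights. Define $g_i(y) := f_i(\sqrt{y})$, so that $g_i$ is lower $\lo/2$-homogeneous (with constant $c$) and upper $\upp/2$-homogeneous (with constant $C$). For fixed $s > 0$, consider the program
\[
\max \log \det(A) \quad \text{subject to} \quad A \succ 0, \quad \Phi(A) := \sum_{i=1}^m g_i(a_i^\top A a_i) \leq n s.
\]
I claim this has a maximizer $A_* \succ 0$ at which the constraint is active. Indeed, by upper $\upp/2$-homogeneity the set is nonempty (take $A = \eps I$ for small $\eps$, which also has $\det(\eps I) > 0$); by lower $\lo/2$-homogeneity of $g_i$ together with the spanning of $\{a_i\}$, $\Phi(A) \to \infty$ as $\|A\|_{\mathrm{op}} \to \infty$, so the feasible set has compact closure inside the PSD cone; and $\det$ is continuous, so the maximum is attained at an interior point. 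Activity of the constraint is forced since $\log\det$ has no finite upper bound on $\{A \succ 0\}$.

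The next step is to read off a first-order optimality condition at $A_*$. Formally it should take the form
\[
A_*^{-1} = 2\mu \sum_{i=1}^m d_i\, a_i a_i^\top, \qquad \tau_i := a_i^\top A_* a_i,
\]
with $\mu > 0$ and $d_i$ a ``derivative'' of $g_i$ at $\tau_i$; setting $w_i := 2\mu d_i$ then yields $M_w = A_*^{-1}$ and $\tau_i = \|M_w^{-1/2} a_i\|_2^2$, exactly the quantities appearing in \pref{def:approxweights}. The main technical obstacle is that $g_i$ need not be differentiable, so the KKT step must be justified by either (a) mollifying each $g_i$ to a smooth approximant with the same lower/upper homogeneity constants up to $1+o(1)$ factors, running the argument for the regularized program, and passing to the limit via compactness of the feasible set; or (b) doing a direct perturbation analysis at $A_*$ using one-sided or Clarke-style sub-derivatives. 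In either case the quantitative input one needs is that $d_i \tau_i \asymp g_i(\tau_i)$ with constants depending only on $\lo, \upp, c, C$; this follows by integrating the homogeneity inequalities on a doubling interval like $[\tau_i/2, \tau_i]$ and noting that the average slope of $g_i$ there is comparable to $g_i(\tau_i)/\tau_i$.

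Once the first-order condition is in hand the conclusion is immediate. Since $\sum_i w_i \tau_i = \tr(M_w^{-1} M_w) = n$ and $w_i \tau_i \asymp 2\mu g_i(\tau_i)$,
\[
n \ = \ \sum_{i=1}^m w_i \tau_i \ \asymp \ 2\mu \sum_{i=1}^m g_i(\tau_i) \ = \ 2\mu\, \Phi(A_*) \ = \ 2\mu\, n s,
\]
which pins down $\mu \asymp 1/(2s)$. Therefore
\[
\frac{f_i(\|M_w^{-1/2} a_i\|_2)}{w_i \|M_w^{-1/2} a_i\|_2^2} = \frac{g_i(\tau_i)}{w_i \tau_i} \asymp \frac{1}{2\mu} \asymp s,
\]
uniformly in $i$, yielding an $\alpha$-approximate weight at scale $s$ with $\alpha = \alpha(\lo, c, \upp, C)$.
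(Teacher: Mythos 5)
Your overall architecture is the same as the paper's: maximize a determinant subject to a constraint of the form $\sum_i g_i(\|U a_i\|_2)\le\beta$, read off the Lagrange condition, and use $\sum_i w_i\tau_i=\tr(M_w^{-1}M_w)=n$ to pin down the multiplier; your concluding computation is fine. The gap is in the step you yourself flag as the ``main technical obstacle,'' and neither of your proposed fixes closes it. The hypotheses give only two-sided homogeneity bounds, so $f_i$ (hence $g_i$) need not be continuous, monotone, or differentiable. The KKT condition hands you a \emph{pointwise} (sub)derivative $d_i$ at $\tau_i$, and you need $d_i\ge 0$ and $d_i\tau_i\asymp g_i(\tau_i)$. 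Homogeneity alone does not give this: consider $\tilde g(x)=x^{p}\bigl(2+\sin(N\log x)\bigr)$ for large $N$. This is smooth and both lower and upper $p$-homogeneous with constants $1/3$ and $3$, yet $\tilde g'(x)=x^{p-1}\bigl(p(2+\sin(N\log x))+N\cos(N\log x)\bigr)$ is hugely negative at many points. So route (a) fails --- a smooth approximant ``with the same homogeneity constants'' can still have a derivative wildly incomparable to $g(x)/x$, and a small-scale mollification of such an oscillating $g$ inherits exactly this behavior --- and route (b) fails for the same reason: the Clarke subdifferential at $\tau_i$ can contain negative values, which would force $w_i<0$. Your observation that the \emph{average} slope over a multiplicative interval is controlled by homogeneity is correct in spirit (though $[\tau_i/2,\tau_i]$ is too short when $c\,2^{\lo/2}\le 1$; one needs a $K$-fold interval with $c K^{\lo/2}\ge 2$), but an average slope is not what first-order optimality gives you at a point.

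The missing idea is to make the average slope \emph{be} the derivative: before setting up the variational problem, replace $f$ by $g(x):=\int_x^{Kx} f(t)\,t^{-1}\,dt$ with $K:=\max(e,(2/c)^{1/\lo})$, so that $g'(x)=(f(Kx)-f(x))/x$ lies between $f(x)/x$ and $CK^{\upp}f(x)/x$, while $g\asymp f$. This single substitution also repairs the other soft spots in your argument: it makes the constraint function continuous (so the feasible set is closed and your compactness argument genuinely produces a maximizer; with merely homogeneous $f_i$ the sublevel set need not be closed) and monotone increasing with $g(0)=0$ (so the constraint is active at the optimum). This is exactly \pref{lem:g-reduce} in the paper; with it in hand, your Lagrangian computation and the trace identity go through as written.
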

The idea behind the proof is to set up a variational problem whose critical points produce approximate weights at a given scale.
As observed in \cite{SZ01}, this analysis technique does not require convexity.

\begin{lemma}\label{lem:variational-1}
   Suppose $g_1,\ldots,g_m : \R_+ \to \R_+$ are monotone increasing, continuously differentiable, and satisfy $g_1(0)=\cdots=g_m(0)=0$.
   Then for every $\beta > 0$, there are weights $\{w_i \geq 0 : i =1,\ldots,m\}$ such that
   \begin{align}
      w_i &= \gamma \frac{g'_i(\|M_w^{-1/2} a_i\|_2)}{\|M_w^{-1/2} a_i\|_2}\quad i=1,\ldots,m\,, \label{eq:var-weights}\\
      \gamma &= n \left(\sum_{i=1}^m g'_i(\|M_w^{-1/2} a_i\|_2)\|M_w^{-1/2} a_i\|_2\right)^{-1}\,, \label{eq:var-gamma} \\
      \beta &= \sum_{i=1}^m g_i(\|M_w^{-1/2} a_i\|_2)\,. \nonumber
   \end{align}
\end{lemma}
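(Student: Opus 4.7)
The plan is to identify the weights $w$ as KKT multipliers for a constrained variational problem on the positive definite cone. Specifically, I would introduce the auxiliary matrix variable $M\succ 0$ (intended to satisfy $M = M_w^{-1}$ at the optimum, so that $u_i := (a_i^\top M a_i)^{1/2} = \|M_w^{-1/2} a_i\|_2$), and consider
\[
\max_{M \succ 0} \log\det(M) \quad \text{subject to} \quad \Phi(M) \le \beta, \quad \text{where } \Phi(M) := \sum_{i=1}^m g_i\!\bigl((a_i^\top M a_i)^{1/2}\bigr).
\]

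For the existence step I would first check that the feasible region is nonempty and has strictly feasible interior: for small $\epsilon>0$, $M=\epsilon I$ gives $u_i = \sqrt{\epsilon}\,\|a_i\|_2$, and by $g_i(0)=0$ and continuity one has $\Phi(\epsilon I) < \beta$. Next, since the vectors $a_1,\dots,a_m$ span $\R^n$ (a standing assumption recalled in \pref{sec:prelim}) and lower $\theta$-homogeneity from the calling \pref{thm:variational} forces $g_i(z)\to\infty$ as $z\to\infty$, the constraint $\Phi(M)\le\beta$ bounds all eigenvalues of $M$ from above, so the feasible region is compact. Combined with $\log\det(M)\to -\infty$ as $M$ approaches the boundary of the PSD cone, a maximizer $M^*\succ 0$ lies in the interior. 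The constraint is active at $M^*$, since scaling $M^*\mapsto (1+\delta)M^*$ for small $\delta>0$ would otherwise remain feasible and strictly increase $\det$.

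Next I would write down the first-order conditions. For a symmetric perturbation $H$, one has $d\log\det(M)[H]=\tr(M^{-1}H)$ and $d u_i[H]=(a_i^\top H a_i)/(2 u_i)$, so Lagrange multipliers yield some $\lambda>0$ with
\[
(M^*)^{-1} = \lambda \sum_{i=1}^m \frac{g'_i(u_i)}{2 u_i}\,a_i a_i^\top.
\]
Defining $w_i := \lambda g'_i(u_i)/(2 u_i)\ge 0$, this identity rewrites as $M_w = (M^*)^{-1}$, so $u_i = \|M_w^{-1/2} a_i\|_2$ automatically and \eqref{eq:var-weights} holds with $\gamma := \lambda/2$. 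The third claimed equation is just the now-active constraint $\sum_i g_i(u_i)=\beta$, and \eqref{eq:var-gamma} drops out by substituting $w_i = \gamma g'_i(u_i)/u_i$ into the universal trace identity $\sum_i w_i u_i^2 = \tr(M_w^{-1} M_w) = n$, giving $\gamma\sum_i g'_i(u_i) u_i = n$.

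The main obstacle will be justifying existence of the interior maximizer in sufficient generality. The bare hypotheses of this lemma do not themselves force $g_i\to\infty$, so one must either invoke the coercivity provided by the calling \pref{thm:variational} (both upper $\upp$- and lower $\lo$-homogeneity with positive exponents), or approximate each $g_i$ by a coercive perturbation $g_i(z)+\eta z^p$ and pass to the limit $\eta\to 0$ by extracting a convergent subsequence of the resulting weights. Either route is routine; the essential content of the proof is the Lagrange calculation above, which matches the three target identities \eqref{eq:var-weights}--\eqref{eq:var-gamma} cleanly and with no convexity or uniqueness assumptions on the $g_i$.
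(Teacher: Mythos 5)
Your proposal is correct and follows essentially the same route as the paper: the paper maximizes $\det(U)$ subject to $\sum_i g_i(\|Ua_i\|_2)\le\beta$ and reads off the weights from the Lagrange condition $(U^\top U)^{-1}=\gamma A^\top D_U A$, which is exactly your computation with $M=U^\top U$. Your extra care about existence of an interior maximizer is warranted — the paper's proof silently assumes a maximizer exists and that the constraint can be made active, which really does rely on the coercivity of the $g_i$ supplied by \pref{lem:g-reduce} in the lemma's only application.
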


\newcommand{\sfd}{\mathsf{d}}

\begin{proof}
For a linear operator $U : \R^n \to \R^n$, define
\[
   G(U) \seteq \sum_{i=1}^m g_i(\|U a_i\|_2)\,,
\]
and consider the optimization
\begin{equation}\label{eq:lambda-opt}
   \textrm{maximize} \ \left\{ \det(U) : G(U) \leq \beta \right\}\,.
\end{equation}

Since $G(0)=0$ and each $g_i$ is monotone increasing, it holds that $G(cI)=\beta$ for some $c > 0$.
Therefore for any maximizer $U^*$, it holds that $\det(U^*) > 0$, i.e., $U^*$ is invertible, and $G(U^*) = \beta$.

For $U$ invertible, we have
\begin{equation}\label{eq:det-change}
   \nabla \det(U) = \det(U) U^{-\top}.
\end{equation}
Let us also calculate
\[
   \sfd G(U) = \sum_{i=1}^m g'_i(\|U a_i\|_2)\, \frac{\sfd \|U a_i\|_2^2}{2 \|U a_i\|_2}\,,
\]
and use $\|U a_i\|_2^2 = \tr(U^{\top} U a_i a_i^{\top})$ to write
\[
   \frac12 \sfd \|U a_i\|_2^2 = \tr\left((\sfd U)^{\top} U a_i a_i^{\top}\right),
\]
so that
\[
   \nabla_U G(U) = U A^{\top} D_U A\,,
\]
where
$D_U$ is the $m \times m$ diagonal matrix with $(D_U)_{ii} = \frac{g_i'(\|U a_i\|_2)}{\|U a_i\|_2}$ and
$A \in \R^{m \times n}$ is the matrix with rows $a_1,\ldots,a_m$.

Combined with \eqref{eq:det-change}, we see that if $U$ is an optimal solution to \eqref{eq:lambda-opt}, then for some Lagrange multiplier $\gamma > 0$, we have
\[
   (U^\top U)^{-1} = \gamma A^\top D_U A\,.
\]

Take $V \seteq (U^\top U)^{-1}$ so that $V = \sum_{i=1}^m w_i a_i a_i^\top$ with $w_i \seteq \gamma \frac{g'_i(\|V^{-1/2} a_i\|_2)}{\|V^{-1/2} a_i\|_2}$.
To compute the value of $\gamma$, calculate
\[
   n = \tr(V V^{-1}) = \gamma \sum_{i=1}^m \frac{g'_i(\|V^{-1/2} a_i\|_2)}{\|V^{-1/2} a_i\|_2} \tr(V^{-1} a_i a_i^\top) = \gamma \sum_{i=1}^m g'_i(\|V^{-1/2} a_i\|_2) \|V^{-1/2} a_i\|_2\,.\qedhere
\]
\end{proof}

The preceding lemma assumes that the functions $g_i$ are monotone increasing. However, the functions $f_i$ only satisfy lower homogeneity, which is a weaker condition.
To prove \pref{thm:variational}, one can take monotone approximations of functions $f_i$ by averaging over intervals.
\begin{lemma}\label{lem:g-reduce}
   Suppose $f$ is lower $\lo$-homogeneous with constant $c$ and upper $\upp$-homogeneous with constant $C$ for $u \ge 1$.
   Define $K \seteq \max(e, (2/c)^{1/\lo})$ and $g : \R_+ \to \R_+$ by
   \[
      g(x) \seteq \int_x^{Kx} \frac{f(t)}{t}\,dt\,.
   \]
   Then $g$ is continuously differentiable, monotone increasing, and satisfies $g(0)=0$. Moreover, for all $x > 0$,
   \begin{align*}
      \frac{f(x)}{x} &\leq g'(x) \leq C K^\upp \,\frac{f(x)}{x}\,,\quad \forall x \geq 0 \\
      c\, f(x) &\leq g(x) \leq C K^\upp \,f(x)\,.
   \end{align*}
\end{lemma}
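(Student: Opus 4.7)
The plan is to reduce every claim to a direct application of the two homogeneity bounds to the integrand $f(t)/t$ on $[x,Kx]$. The choice $K=\max(e,(2/c)^{1/\lo})$ is tailored so that (i) $cK^\lo\geq 2$, which is exactly what makes the lower bound $g'(x)\geq f(x)/x$ hold, and (ii) $\ln K\geq 1$, which absorbs the logarithmic factor in the lower bound on $g(x)$.

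First I would dispatch the structural claims. That $g(0)=0$ is immediate from the definition. For $C^1$ regularity on $(0,\infty)$, the Leibniz rule (using continuity of $f$, implicit in the setup where $\sqrt{f}$ is auto-Lipschitz) gives
\[
g'(x)=K\cdot\frac{f(Kx)}{Kx}-\frac{f(x)}{x}=\frac{f(Kx)-f(x)}{x}\,,
\]
and monotonicity of $g$ follows from the nonnegativity $g'\geq 0$ established in the next step.

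Next I would bound $g'(x)$ two-sidedly. Lower $\lo$-homogeneity gives $f(Kx)\geq cK^\lo f(x)\geq 2f(x)$ by the choice of $K$, hence $f(Kx)-f(x)\geq f(x)$, yielding the lower bound on $g'$. Upper $\upp$-homogeneity gives $f(Kx)\leq CK^\upp f(x)$; since $f\geq 0$, the difference then satisfies $f(Kx)-f(x)\leq CK^\upp f(x)$, which is the upper bound.

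Finally I would bound $g(x)$ itself by applying the homogeneity bounds pointwise on the integration interval. Writing $t=\lambda x$ with $\lambda\in[1,K]$, lower $\lo$-homogeneity gives $f(t)\geq c\lambda^\lo f(x)\geq cf(x)$, so
\[
g(x)\geq cf(x)\int_x^{Kx}\frac{dt}{t}=cf(x)\ln K\geq cf(x)\,.
\]
For the upper bound, $f(t)\leq C(t/x)^\upp f(x)$, and integration yields
\[
g(x)\leq\frac{Cf(x)}{x^\upp}\int_x^{Kx} t^{\upp-1}\,dt=\frac{Cf(x)(K^\upp-1)}{\upp}\leq CK^\upp f(x)\,,
\]
using $\upp\geq 1$. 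I do not anticipate a genuine obstacle here; the whole argument is a careful unpacking of the homogeneity hypotheses on an interval chosen just wide enough to make both sides of each comparison work simultaneously.
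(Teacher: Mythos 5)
Your proof is correct and follows essentially the same route as the paper's: differentiate under the integral to get $g'(x)=\frac{f(Kx)-f(x)}{x}$, use $cK^{\lo}\ge 2$ for the lower bounds and upper $\upp$-homogeneity of the integrand for the upper bounds, and use $\ln K\ge 1$ for $g(x)\ge c f(x)$. You in fact spell out the two-sided bound on $g'$ slightly more explicitly than the paper does.
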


\begin{proof}
   Note that $g'(x) = \frac{f(Kx)-f(x)}{x}$, and $f(Kx) \ge cK^\lo f(x) \ge 2f(x)$ by lower $\lo$-homogeneity. Thus $g$ is monotone increasing. Moreover, we have
\begin{align*}
g(x) = \int_x^{Kx} \frac{f(t)}{t} dt \ge c f(x) \int_x^{Kx} \frac{1}{t} dt \ge c \cdot f(x),
\end{align*}
and for $u \ge 1$,
\begin{align*}
g(x) = \int_x^{Kx} \frac{f(t)}{t} dt \le C f(x) \int_x^{Kx} \frac{(t/x)^\upp}{t} dt \le C K^\upp f(x).
\end{align*}
\end{proof}

\begin{proof}[Proof of \pref{thm:variational}]
   Throughout the proof, we use $\asymp$ in place of $\asymp_{\lo,c,\upp,C}$.
   Let $K$ be as in \pref{lem:g-reduce} and define $g_i(x) \seteq \int_x^{Kx} \frac{f_i(t)}{t}\,dt$.
   By \pref{lem:g-reduce}, we have $g_i'(x) \asymp f_i(x)/x$ and $g(x) \asymp f_i(x)$ for $i=1,\ldots,m$.

   Let $\{w_i\}$ and $M_w$ be as in \pref{lem:variational-1} when applied to $g_1,\ldots,g_m$ with $\beta = n/\lambda$.
   Then,
   \[
      \sum_{i=1}^m g_i'(\|M_w^{-1/2} a_i\|_2) \|M_w^{-1/2} a_i\|_2 \asymp \sum_{i=1}^m f_i(\|M_w^{-1/2} a_i\|_2) \asymp \sum_{i=1}^m g_i(\|M_w^{-1/2} a_i\|_2) = \beta = \frac{n}{\lambda}\,.
   \]
   Therefore $\gamma \asymp \lambda$ (recall \eqref{eq:var-gamma}), and thus \eqref{eq:var-weights} gives the desired result.
\end{proof}

\section{Covering number bounds}
\label{sec:covering}

Consider loss functions $f_1,\ldots,f_m : \R \to \R_+$, vectors $a_1,\ldots,a_m \in \R^n$,
and a contiguous interval $\cJ \subseteq \Z$.
Define $h_i(x) \seteq f_i(x)^{1/2}$ for $i=1,\ldots,m$, $\mm \seteq \max(\cJ)+1$, and
\[
   F(x) \seteq \sum_{i=1}^m f_i(\langle a_i,x\rangle) = \sum_{i=1}^m h_i(\langle a_i,x\rangle)^2\,.
\]
For $s > 0$, denote $B_F(s) \seteq \{ x \in \R^n : F(x) \leq s \}$.

Suppose that 
$\{ w^{(j)} : j \in \cJ \}$ is an $\alpha$-approximate weight scheme (\pref{def:weight-scheme})
for the families $\{f_i\},\{a_i\}$, and recall that
\[
   M_{w^{(j)}} = \sum_{i=1}^m w_i^{(j)} a_i a_i^{\top}.
\]
Define $U_j \seteq M_{w^{(j)}}^{1/2}$ for $j \in \cJ$ so that
\begin{equation}\label{eq:Uexpr}
   \|U_j x\|_2^2 = \sum_{i=1}^m w_i^{(j)} \langle a_i,x\rangle^2\,,\quad x \in \R^n\,.
\end{equation}

Analogously to \pref{sec:overview}, define the sets
\begin{align*}
   \sfK_j &\seteq \left\{ x \in \R^n : h_i(\langle a_i,x\rangle)^2 \leq 2^j w_i^{(j)} \|U_j^{-1} a_i\|^2_2\,, i \in [m] \right\}\,,\quad j \in \cJ \\
   \sfK_{\mm} &\seteq B_F(2^{\mm})\,.
\end{align*}
Our primary technical goal is an estimate on the covering numbers $\cov(B_F(2^{\mm}), \sfK_j)$
when the functions $h_1,\ldots,h_m$ are sufficiently nice.

Recall our assumption in \pref{thm:main} that $h_i = f_i^{1/2}$ is auto-Lipschitz (property \ref{item:lipschitz}) and lower $\lo$-homogeneous (property \ref{item:lower-growth}).
The following properties \ref{ass:monotone}--\ref{ass:subadditive} follow
from these two assumptions, but we label them specifically as they will be employed
numerous times in our arguments.

\begin{assumption}\label{ass:h}
   Consider $h_1,\ldots,h_m : \R \to \R_+$ for some $L, C \geq 1$ and $\lo > 0, c < 1$:
   \begin{enumerate}[leftmargin=*,labelindent=0.5em,label=\textrm{(H\arabic*)}]
   \item \label{ass:monotone} $h_i(\lambda x) \geq c h_i(\pm x)$ for all $x \in \R$ and $\lambda \geq 1$.
   \item \label{ass:lipschitz}  $|h_i(x)-h_i(y)| \leq L h_i(x-y)$ for all $x,y \in \R$.
   \item \label{ass:upper-growth} $h_i(\lambda x) \leq C \lambda h_i(\pm x)$ for all $x \in \R$ and $\lambda \geq 1$.
   \item \label{ass:lower-growth} $h_i(\lambda x) \geq c \lambda^{\lo} h_i(\pm x)$ for all $x \in \R$ and $\lambda \geq 1$.
   \item \label{ass:subadditive} $h_i(x \pm y) \leq L \left(h_i(x)+h_i(y)\right)$ for all $x,y \in \R$.
   \end{enumerate}
\end{assumption}
Compared to the assumptions discussed previously, we have added \ref{ass:subadditive}, as it is used many times.
Note that \ref{ass:subadditive} follows from \ref{ass:lipschitz}:
$h(x+y) \leq h(y) + L h(x)$ and $h(x-y) \leq h(-y) + L h(x) \leq L(h(x)+h(y))$,
and \ref{ass:monotone} is an immediate consequence of \ref{ass:lower-growth}.

Let us now state our primary covering estimate, which is established over the next two sections.

\begin{theorem}\label{thm:main-cov}
   There is a number $\hat{C} = \hat{C}(L,C,\theta,c,\alpha)$ such that for any $\{h_1,\ldots,h_m\}$ satisfying \pref{ass:h},
   it holds that
   \[
      \log \cov\!\left(B_F(2^{\mm}), \sfK_{j}\right) \leq \hat{C} 2^{\mm -j} \log m\,,\quad \forall j \in \cJ\,.
   \]
\end{theorem}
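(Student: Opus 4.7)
The plan is to cover $\sfK_\ell = B_F(2^\ell)$ by translates of $\sfK_j$ via iterated refinement through the intermediate scales in $\cJ$. Using the chain rule for covering numbers,
\[
\log \cov(\sfK_\ell, \sfK_j) \leq \sum_{k=j}^{\ell - 1} \log \sup_{y \in \R^n} \cov\bigl((y + \sfK_{k+1}) \cap \sfK_\ell, \sfK_k\bigr),
\]
the task reduces to proving a per-level bound $\log \cov\bigl((y + \sfK_{k+1}) \cap \sfK_\ell, \sfK_k\bigr) \lesssim 2^{\ell - k - 1} \log m$, uniformly in $y$. The telescoping geometric series $\sum_{k=j}^{\ell-1} 2^{\ell-k-1} < 2^{\ell-j}$ then yields the claimed covering bound.

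Each per-level estimate proceeds in two stages. First, I would derive a Euclidean enclosure of the recentered local set in the $U_k$-metric: for any $x$ in the local set, $\|U_k x\|_2^2 \lesssim 2^{\ell-k-1}$. The argument generalizes \eqref{eq:intro-nc0}: expand $\|U_k x\|_2^2 = \sum_i w_i^{(k)} \langle a_i, x\rangle^2$, apply Cauchy--Schwarz $|\langle a_i, x\rangle| \leq \|U_k^{-1} a_i\|_2 \|U_k x\|_2$ on one factor, and on the other use the approximate-weight identity $w_i^{(k)} \|U_k^{-1} a_i\|_2^2 \asymp f_i(\|U_k^{-1} a_i\|_2)/2^k$ together with upper $2$-homogeneity of $f_i = h_i^2$ (from \pref{lem:relate}); the constraints of being in $\sfK_{k+1}$ and $\sfK_\ell$, combined with the weight-scheme smoothness \eqref{eq:smooth-weights}, convert $\sum_i f_i(\langle a_i, x\rangle)$ into the desired scale factor. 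Second, given the Euclidean enclosure of radius $R_k \lesssim 2^{(\ell-k-1)/2}$, I would invoke the dual Sudakov inequality (\pref{lem:dualsudakov}) to cover it by translates of $\sfK_k$: define the quasi-norm $\|x\|_X := \max_{i \in [m]} h_i(\langle a_i, x\rangle)/r_i^{(k)}$ with $r_i^{(k)} \seteq 2^{k/2} \sqrt{w_i^{(k)}} \|U_k^{-1} a_i\|_2$, so that the unit ball is $\sfK_k$ and $r_i^{(k)} \asymp h_i(\|U_k^{-1} a_i\|_2)$ by the approximate-weight identity. Combined with upper $1$-homogeneity of $h_i$ (from \pref{lem:relate}) and a standard union bound over $m$ standard Gaussians, this yields $\E \|\bm{g}\|_X \lesssim \sqrt{\log m}$, and the dual Sudakov inequality delivers the per-level bound $R_k^2 \log m \lesssim 2^{\ell-k-1} \log m$.

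The main obstacle is that $\sfK_k$ is not the unit ball of a genuine norm, since each $h_i$ is only a quasi-norm: it satisfies the auto-Lipschitz property \ref{ass:lipschitz} rather than sublinearity. I would handle this by symmetrization, replacing $\sfK_k$ by a symmetric convex body contained in a bounded dilate of it, so that the dual Sudakov inequality applies verbatim, with the loss factor absorbed into $\hat{C}$ via the quasi-triangle inequality \ref{ass:subadditive}. A secondary subtlety is that the per-level extraction combines weights at two adjacent scales and must account for potentially large discrepancies between $U_k$ and $U_{k+1}$, since \eqref{eq:smooth-weights} provides only one direction of comparison; controlling this requires combining the approximate-weight identities at both scales with upper and lower homogeneity of $f_i$, with the resulting constants depending on $L, C, \theta, c, \alpha$ and absorbed into $\hat{C}$.
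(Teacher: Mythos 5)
Your proposal follows essentially the same route as the paper: an iterated covering through adjacent scales with the telescoping sum $\sum_k 2^{\ell-k}\lesssim 2^{\ell-j}$, a per-level Euclidean enclosure $\|U_k x\|_2^2\lesssim 2^{\ell-k}$ derived from Cauchy--Schwarz, the approximate-weight identity \eqref{eq:aw2}, and the one-sided comparison \eqref{eq:smooth-weights}, followed by dual Sudakov against an $\ell_\infty$-type ball. Your ``symmetrization'' of the quasi-norm is exactly what the paper implements via the sandwich $c_0\sfB_j^\infty\subseteq\sfK_j\subseteq c_0^{-1}\sfB_j^\infty$ of \pref{lem:contain1}, so the argument is correct and matches the paper's.
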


\subsection{Iterative covering}

In order to prove \pref{thm:main-cov}, we will relate $\sfK_j$ to $\ell_2$ and $\ell_{\infty}$ balls
with respect to an appropriate inner product structure.
To this end, let us define the norms, for $j \in J$,
\begin{align*}
   \sfN_j^{\infty}(x) &\seteq \max_{i \in [m]} \frac{|\langle U_j^{-1} a_i, U_j x\rangle|}{\|U_j^{-1} a_i\|_2}\,, \\
   \sfN_j^{2}(x) &\seteq \|U_j x\|_2\,,
\end{align*}
and let $\sfB_j^{\infty} \seteq \{x \in \R^n : \sfN_j^{\infty}(x) \leq 1 \}$ and $\sfB_j^2 \seteq \{ x \in \R^n : \sfN_j^2(x) \leq 1 \}$
denote the corresponding unit balls.
We observe the following consequence of \pref{lem:dualsudakov}.
\begin{corollary}\label{cor:ds}
   For every $j \in \cJ$ and $\eta  > 0$, it holds that
   \[
      \sqrt{\log \cov(\sfB_j^2, \eta \sfB_j^{\infty})} \lesssim \frac{\sqrt{\log m}}{\eta}\,.
   \]
\end{corollary}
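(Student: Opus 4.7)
The plan is a direct application of the dual Sudakov inequality (\pref{lem:dualsudakov}) after a change of variables that turns $\sfB_j^2$ into the standard Euclidean ball. First, I would perform the substitution $y \seteq U_j x$, under which $\sfB_j^2 = \{x : \|U_j x\|_2 \le 1\}$ becomes exactly $B_2^n$. In $y$-coordinates, $\sfB_j^\infty$ is the unit ball of the norm
\[
   \|y\|_X \seteq \max_{i \in [m]} \frac{|\langle U_j^{-1} a_i, y\rangle|}{\|U_j^{-1} a_i\|_2},
\]
since $\sfN_j^\infty(x) = \|U_j x\|_X$. Because $U_j$ is invertible (we have assumed that the $a_i$ span $\R^n$, so $M_{w^{(j)}}$ is invertible) and covering numbers are invariant under invertible linear maps, it suffices to bound $\cov(B_2^n, \eta B_X)$.

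Next, I would apply \pref{lem:dualsudakov} with the norm $\eta^{-1}\|\cdot\|_X$, whose unit ball is exactly $\eta B_X$. This gives
\[
   \sqrt{\log \cov(B_2^n, \eta B_X)} \lesssim \frac{1}{\eta}\,\E \|\bm g\|_X = \frac{1}{\eta}\,\E \max_{i \in [m]} \frac{|\langle U_j^{-1} a_i, \bm g\rangle|}{\|U_j^{-1} a_i\|_2}\,,
\]
where $\bm g$ is a standard Gaussian in $\R^n$. For each fixed $i$, the random variable $\frac{\langle U_j^{-1} a_i, \bm g\rangle}{\|U_j^{-1} a_i\|_2}$ is a standard one-dimensional Gaussian, and the standard bound on the expected maximum of $m$ (possibly dependent) standard Gaussians gives $\E \max_{i \in [m]} \frac{|\langle U_j^{-1} a_i, \bm g\rangle|}{\|U_j^{-1} a_i\|_2} \lesssim \sqrt{\log m}$.

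Combining these bounds yields the claimed estimate. There is essentially no obstacle here: the dual Sudakov inequality is quoted as \pref{lem:dualsudakov}, the change of variables is linear and covering numbers are preserved, and the Gaussian maximum bound is classical. The only mildly delicate point is keeping track of the $\eta$ factor, which enters as a simple scaling of the norm.
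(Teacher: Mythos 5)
Your proposal is correct and follows essentially the same route as the paper: both arguments identify $\sfB_j^2$ with the Euclidean ball via the invertible map $U_j$, apply the dual Sudakov inequality (\pref{lem:dualsudakov}) to the norm whose unit ball is $\eta \sfB_j^{\infty}$, and bound the resulting Gaussian expectation by $\sqrt{\log m}$. The only cosmetic difference is that the paper rescales to assume $\eta=1$ at the outset, while you carry the $\eta$ through as a scaling of the norm.
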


\begin{proof}
   By scaling, we may assume that $\eta = 1$. Then \pref{lem:dualsudakov} gives
   \[
      \sqrt{\log \cov(\sfB_j^2, \sfB_j^{\infty})}
      \lesssim \E \max_{i \in [m]} \left|\llangle \frac{U_j^{-1} a_i}{\|U_j^{-1} a_i\|_2}, \bm{g}\rrangle\right|
      \lesssim \sqrt{\log m}\,,
   \]
   using the fact that if $g_1,\ldots,g_k$ are random variables that each have law $N(0,1)$, then
   $\E \max_{i \in [k]} g_i \lesssim \sqrt{\log k}$ since $\Pr[|g_i| > t] \leq 2 e^{-t^2/2}$.
\end{proof}

Thus our goal will be a pair of containment results for translates of the sets $\sfK_j$.
These are proved in the next section.

\begin{lemma}[$\ell_{\infty}$ control]
   \label{lem:contain1}
   For $j \in \cJ$ and  $c_0 \seteq (c^4/(4 \alpha))^{1/2\lo}$ it holds that
   \begin{equation}\label{eq:contain1}
      c_0 \sfB_{j}^{\infty} \subseteq \sfK_j \subseteq \frac{1}{c_0} \sfB_j^{\infty}\,.
   \end{equation}
\end{lemma}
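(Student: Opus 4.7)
The plan is to reduce both containments in~\eqref{eq:contain1} to a scalar comparison between the nonlinear thresholds defining $\sfK_j$ and the linear thresholds defining $\sfB_j^\infty$. Write $t_i \seteq \|U_j^{-1} a_i\|_2$. The identity $\langle U_j^{-1} a_i, U_j x\rangle = \langle a_i, x\rangle$ rewrites the norm as $\sfN_j^\infty(x) = \max_i |\langle a_i, x\rangle|/t_i$, so that $\sfB_j^\infty = \{x : |\langle a_i,x\rangle| \leq t_i \text{ for all } i\}$. Meanwhile, the $\alpha$-approximate weight property at scale $2^j$ rearranges to $\tfrac{1}{\alpha} f_i(t_i) \leq 2^j w_i^{(j)} t_i^2 \leq \alpha f_i(t_i)$, which means the threshold defining $\sfK_j$ coincides with $f_i(t_i)$ up to a multiplicative factor $\alpha$. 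The task therefore reduces to converting between the inequalities $f_i(\langle a_i, x\rangle) \ls f_i(t_i)$ and $|\langle a_i, x\rangle| \leq t_i$ using only the growth properties \ref{ass:monotone} and \ref{ass:lower-growth} of $h_i = f_i^{1/2}$; neither the auto-Lipschitz nor the upper-growth hypotheses are needed here.

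For the left containment $c_0 \sfB_j^\infty \subseteq \sfK_j$, I start from $|\langle a_i, x\rangle| \leq c_0 t_i$. Applying lower $\lo$-homogeneity \ref{ass:lower-growth} with scale $\lambda = 1/c_0 \geq 1$ to the arguments $\pm c_0 t_i$ yields $h_i(\pm c_0 t_i) \leq (c_0^{\lo}/c)\, h_i(t_i)$, and then monotonicity \ref{ass:monotone} applied to $u = \langle a_i, x\rangle$ with scale $\mu = c_0 t_i/|u| \geq 1$ gives $h_i(u) \leq h_i(\mu u)/c \leq (c_0^{\lo}/c^2)\, h_i(t_i)$, since $\mu u \in \{c_0 t_i, -c_0 t_i\}$. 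Squaring and invoking the upper half of the approximate weight bound produces
\[
f_i(\langle a_i, x\rangle) \leq \frac{\alpha\, c_0^{2\lo}}{c^4}\, 2^j w_i^{(j)} t_i^2 = \tfrac{1}{4}\, 2^j w_i^{(j)} t_i^2,
\]
using the choice $c_0^{2\lo} = c^4/(4\alpha)$. This certifies $x \in \sfK_j$.

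For the right containment $\sfK_j \subseteq (1/c_0)\sfB_j^\infty$, I argue by contradiction: suppose $x \in \sfK_j$ satisfies $|\langle a_i, x\rangle| > t_i/c_0$ for some $i$. Writing $\langle a_i, x\rangle = \sigma \lambda t_i$ with $\sigma \in \{\pm 1\}$ and $\lambda > 1/c_0 \geq 1$ (valid since $c_0 < 1$), lower $\lo$-homogeneity \ref{ass:lower-growth} gives $h_i(\langle a_i, x\rangle) \geq c \lambda^{\lo} h_i(t_i)$ and hence $f_i(\langle a_i, x\rangle) \geq c^2 \lambda^{2\lo} f_i(t_i)$. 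Combined with $f_i(\langle a_i, x\rangle) \leq 2^j w_i^{(j)} t_i^2 \leq \alpha f_i(t_i)$, this forces $\lambda^{2\lo} \leq \alpha/c^2$, contradicting $\lambda^{2\lo} > 1/c_0^{2\lo} = 4\alpha/c^4 > \alpha/c^2$, where the last inequality uses $c < 1$. The cancellation of $f_i(t_i)$ is legitimate because the approximate weight condition with $s = 2^j > 0$ forces $f_i(t_i) > 0$. I do not anticipate a real obstacle; the only care required is in tracking how the approximation constant $\alpha$ compounds with the homogeneity constant $c$ through both halves of the weight bound, so that $c_0$ can be chosen to make both sides strict.
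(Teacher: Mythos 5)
Your proof is correct and follows essentially the same route as the paper's: both containments are reduced, term by term, to comparing $f_i(\langle a_i,x\rangle)$ with $f_i(\|U_j^{-1}a_i\|_2)$ via \ref{ass:monotone}, \ref{ass:lower-growth}, and the approximate-weight bound \eqref{eq:aw2}, yielding the same constant $c_0=(c^4/(4\alpha))^{1/2\lo}$. The only cosmetic difference is that you phrase the second containment as a contradiction, whereas the paper derives the bound $|\langle a_i,x\rangle|\leq(\alpha/c^2)^{1/2\lo}\|U_j^{-1}a_i\|_2$ directly (\pref{lem:contain1.2}).
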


\begin{lemma}[$\ell_2$ control]
   \label{lem:contain2}
   For any $j \in \cJ$ and $z \in B_F(2^{\mm})$, it holds that
   \[ B_F(2^{\mm}) \cap (z + \sfK_{j+1}) \subseteq z + (4L^2 C_0 2^{\mm-j})^{1/2} \sfB_{j}^2\,, \]
where
   $C_0 \seteq \max\{(2\alpha C/c)^{1/\lo}, (\alpha C/c)^2 (\alpha/c^2)^{1/\lo}\}$.
\end{lemma}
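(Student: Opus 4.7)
The plan is to fix $x, z \in B_F(2^\mm)$ with $y \seteq x - z \in \sfK_{j+1}$, set $r_i \seteq \|U_j^{-1} a_i\|_2$, $\lambda_i \seteq |\langle a_i, y\rangle|/r_i$, and $R \seteq \|U_j y\|_2$, and show $R^2 \leq 4L^2 C_0 \cdot 2^{\mm-j}$. The first step is a global control on $F(y)$. Applying the subadditivity \ref{ass:subadditive} of each $h_i = f_i^{1/2}$ coordinate-wise gives $h_i(\langle a_i,y\rangle) \leq L(h_i(\langle a_i,x\rangle) + h_i(\langle a_i,z\rangle))$; squaring and summing yields $F(y) \leq 2L^2(F(x)+F(z)) \leq 4L^2 \cdot 2^\mm$.

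Next, I will extract a pointwise bound from the membership $y \in \sfK_{j+1}$. The defining constraint $f_i(\langle a_i,y\rangle) \leq 2^{j+1} w_i^{(j+1)} \|U_{j+1}^{-1} a_i\|_2^2$ together with the $\alpha$-approximate weight property at scale $2^{j+1}$ gives $f_i(\langle a_i,y\rangle) \leq \alpha f_i(\|U_{j+1}^{-1} a_i\|_2)$, and inverting this via lower $\lo$-homogeneity \ref{ass:lower-growth} produces $|\langle a_i,y\rangle| \leq (\alpha/c^2)^{1/(2\lo)} \|U_{j+1}^{-1} a_i\|_2$. In parallel, Cauchy--Schwarz gives $\lambda_i \leq R$. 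With these in hand, the main estimate expands
\[ R^2 = \sum_i w_i^{(j)} r_i^2 \lambda_i^2 \leq \frac{\alpha}{2^j}\sum_i f_i(r_i)\lambda_i^2 \]
using the $\alpha$-approximate weight property $w_i^{(j)} r_i^2 \leq \alpha f_i(r_i)/2^j$ at scale $2^j$. I split the sum by whether $\lambda_i \leq 1$: in the small regime, upper $1$-homogeneity \ref{ass:upper-growth} combined with the symmetry consequence $h_i(-x) \leq h_i(x)/c$ yields $\lambda_i^2 f_i(r_i) \leq (C/c^2)^2 f_i(\langle a_i,y\rangle)$; in the large regime, lower $\lo$-homogeneity yields $\lambda_i^2 f_i(r_i) \leq c^{-2} \lambda_i^{2-2\lo} f_i(\langle a_i,y\rangle)$. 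Summing and invoking $F(y) \leq 4L^2 \cdot 2^\mm$ produces
\[ R^2 \leq 4L^2\alpha \cdot 2^{\mm-j}\bigl[(C/c^2)^2 + c^{-2}\max_{\lambda_i > 1}\lambda_i^{2-2\lo}\bigr]. \]

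The hard part is controlling $\lambda_i^{2-2\lo}$ in the regime $\lo < 1$, where this exponent is positive and a uniform bound is not immediate. Here the plan is to substitute $\lambda_i \leq R$ to obtain a self-referential inequality $R^2 \leq B_1 + B_2 R^{2-2\lo}$ with $B_1, B_2 = O(2^{\mm-j})$, and to close it by a two-regime case analysis: either $R^{2-2\lo}$ is dominated by the constant term (yielding $R^2 \lesssim B_1$), or the $R^{2-2\lo}$ term dominates (yielding $R^{2\lo} \lesssim B_2$ after rearrangement). The specific form of $C_0 = \max\{(2\alpha C/c)^{1/\lo},(\alpha C/c)^2(\alpha/c^2)^{1/\lo}\}$, with its $(\alpha/c^2)^{1/\lo}$ factor, is exactly the envelope of these two branches, reinforced by the pointwise estimate $|\langle a_i,y\rangle| \leq (\alpha/c^2)^{1/(2\lo)}\|U_{j+1}^{-1} a_i\|_2$ from Step~2 to control how large $\lambda_i$ can actually become across the index set. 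Combining these bounds closes the argument for every $\lo > 0$.
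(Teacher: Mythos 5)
Your overall strategy (subadditivity to get $F(x-z)\le 4L^2 2^{\mm}$, then bounding $\|U_j(x-z)\|_2^2$ via the approximate-weight identity and the homogeneity of the $h_i$) matches the paper's, but the step that closes the argument when $\lo<1$ does not work as stated. Your primary mechanism is the self-referential inequality $R^2\le B_1+B_2R^{2-2\lo}$ with $B_1,B_2=O(2^{\mm-j})$, solved by a two-regime case analysis. In the regime where the $R^{2-2\lo}$ term dominates you get $R^{2\lo}\lesssim B_2$, i.e.\ $R^2\lesssim B_2^{1/\lo}\asymp 2^{(\mm-j)/\lo}$, which for $\lo<1$ and $j\ll\mm$ is far larger than the claimed linear bound $O(2^{\mm-j})$. (This kind of bootstrap is exactly what \pref{lem:base-norm} does, and it is only affordable there because it is applied with $2^{-j}F(x)=O(1)$, i.e.\ for the single base scale $j=\mm-1$; raising $2^{\mm-j}$ to the power $1/\lo$ at general scales destroys the estimate, and the whole point of covering by $\sfK_{j+1}$ first is to avoid it.) Since $\lo<1$ is the typical case (e.g.\ $\lo=p/2$ for the $\gamma_p$ losses with $p<2$), this is not a corner case.

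The correct closing move — which you gesture at but do not carry out — is to use the $\ell_\infty$ control from membership in $\sfK_{j+1}$ to bound each ratio $\lambda_i$ by a \emph{constant}, so that $\lambda_i^{2-2\lo}$ is uniformly bounded and no bootstrap is needed. As written, your pointwise estimate bounds $|\langle a_i,y\rangle|$ against $\|U_{j+1}^{-1}a_i\|_2$, while your $\lambda_i$ is normalized by $r_i=\|U_j^{-1}a_i\|_2$; these live at different scales, and converting one to the other requires invoking the adjacent-scale relation \eqref{eq:smooth-weights} of the weight scheme. The clean fix (and what the paper does in \pref{lem:K2s}) is to first pass $\sum_i w_i^{(j)}\langle a_i,y\rangle^2\le\alpha\sum_i w_i^{(j+1)}\langle a_i,y\rangle^2$ and then run your split entirely at scale $j+1$, where \pref{lem:contain1.2} gives $|\langle a_i,y\rangle|\le(\alpha/c^2)^{1/2\lo}\|U_{j+1}^{-1}a_i\|_2$ and the second branch of $C_0$ falls out. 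You would also need to treat the boundary case $j=\mm-1$, where $\sfK_{j+1}=\sfK_{\mm}=B_F(2^{\mm})$ carries no $\ell_\infty$ constraint and your Step 2 is unavailable; there one uses \pref{lem:base-norm} instead, which is where the first branch $(2\alpha C/c)^{1/\lo}$ of $C_0$ comes from.
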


Note that the preceding lemma relates $\sfK_{j+1}$ to $\sfB_j^2$. This 
is the one place we will employ the key property of weight schemes (\pref{def:weight-scheme}), which is the containment $\sfB_{j+1}^2 \subseteq \alpha \sfB_j^2$:
For $j,j+1 \in \cJ$ and $x\in \R^n$,
\eqref{eq:Uexpr} gives
\[
   \|U_j x\|_2^2 = \sum_{i=1}^m w^{(j)}_i \langle a_i,x\rangle^2 \leq \alpha \sum_{i=1}^m w^{(j+1)}_i \langle a_i,x\rangle^2 = \alpha \|U_{j+1} x\|_2^2\,.
\]

With these two results in hand, we prove \pref{thm:main-cov}.

\newcommand{\ccov}[1]{\cov\!\left(#1\right)}

\begin{proof}[Proof of \pref{thm:main-cov}]
   First, note that for any $j \in \cJ$ and $z \in B_F(2^{\mm})$, we have
   \begin{align}
      \log \ccov{B_F(2^{\mm}) \cap (z+\sfK_{j+1}), \sfK_j} \nonumber
      &\leq
      \log \ccov{(4 L^2 C_0 2^{\ell-j})^{1/2} \sfB_j^2, c_0 \sfB_j^{\infty}} \\
      &\leq C_1 2^{\ell-j} \log m\, \label{eq:main-cover}
   \end{align}
   where $C_1 \seteq 4L^2 C_0/c_0^2$, the first inequality follows from \pref{lem:contain1} and \pref{lem:contain2},
   and the second inequality is a consequence of \pref{cor:ds}.

   An application of \eqref{eq:main-cover}
   with $j=\mm-1$ and $z=0$ gives vectors $x_{1},\ldots,x_{T_1} \in B_F(2^{\mm})$ with $T_1 \leq 2 C_1 \log m$, and such that
   \[
      B_F(2^{\mm}) \subseteq \bigcup_{t=1}^{T_1} \left(B_F(2^{\mm}) \cap (x_{t} + \sfK_{\mm -1})\right).
   \]
   Now apply \eqref{eq:main-cover} with $j=\mm-2$ and $z=x_{1},\ldots,x_{T_1}$
   to find vectors $x_{(t_1,t_2)} \in B_F(2^{\mm})$ for $1 \leq t_1 \leq T_1, 1 \leq t_2 \leq T_2$ with $T_2 \leq 4 C_1 \log m$, and such that
   \[
      B_F(2^{\mm}) \cap (x_{t_1} + \sfK_{\mm -1}) \subseteq \bigcup_{t_2=1}^{T_{2}} \left(B_F(2^{\mm}) \cap (x_{(t_1,t_2)} + \sfK_{\mm -2})\right)\,,\quad t_1=1,\ldots,T_1
   \]
   Continuing inductively, we cover $B_F(2^{\mm})$ by $T_1 \cdot T_2\, \cdots\, T_r$ translates of $\sfK_{\mm -r}$, and
   $\log |T_r| \leq C_1 2^{r} \log m$.

   We conclude that for $j \in \cJ$,
   \[
      \log \cov(B_F(2^{\mm}), \sfK_j) \leq \log |T_1| + \cdots + \log |T_{\mm -j}| \ls \left(2 + 2^2 + \cdots + 2^{\mm -j}\right) C_1 \log m\,.\qedhere
   \]
\end{proof}

\subsection{Norm control}

Our goal now is to prove \pref{lem:contain1} and \pref{lem:contain2}.
We will frequently use the following consequence of the Cauchy-Schwarz inequality
\begin{equation}\label{eq:cs_ai}
   |\langle a_i,x\rangle| = |\langle U_j^{-1} a_i, U_j x\rangle| \leq \|U_j^{-1} a_i\|_2 \|U_j x\|_2\,,\quad j \in \cJ, i=1,\ldots,m, x\in \R^n\,.
\end{equation}
We also restate the guarantees of our weight scheme $\{ w^{(j)} : j \in \cJ \}$ (recall \eqref{eq:approx-weight}):
\begin{equation}\label{eq:aw2}
   \alpha^{-1} h_i(\|U_j^{-1} a_i\|_2)^2 \leq 2^j w_i^{(j)} \|U_j^{-1} a_i\|^2_2 \leq \alpha h_i(\|U_j^{-1} a_i\|_2)^2\,,\quad i=1,\ldots,m, \ j \in \cJ\,.
\end{equation}
\pref{lem:contain1} follows immediately from the next two lemmas.

\begin{lemma}
   If $j \in \cJ$ and $|\langle a_i,x\rangle| \leq c_0 \|U_j^{-1} a_i\|_2$ for each $i \in \{1,\ldots,m\}$, then $x \in \sfK_j$.
\end{lemma}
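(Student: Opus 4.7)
The plan is to unwind the definition of $\sfK_j$ using the weight scheme bound \eqref{eq:aw2}, reducing the claim to a comparison $h_i(\langle a_i,x\rangle) \leq \alpha^{-1/2} h_i(\|U_j^{-1} a_i\|_2)$ for each $i$. This comparison will then follow by applying lower $\lo$-homogeneity \ref{ass:lower-growth} to the large ratio $\|U_j^{-1} a_i\|_2 / |\langle a_i,x\rangle| \geq 1/c_0$.

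In detail, I would first observe that to show $x \in \sfK_j$ it suffices to check that $h_i(\langle a_i, x\rangle)^2 \leq 2^j w_i^{(j)} \|U_j^{-1} a_i\|_2^2$ for every $i \in [m]$, and by the lower bound in \eqref{eq:aw2} it is enough to show
\[
   h_i(\langle a_i,x\rangle)^2 \;\leq\; \alpha^{-1}\, h_i(\|U_j^{-1} a_i\|_2)^2.
\]
(The case $\langle a_i,x\rangle = 0$ is trivial since $h_i(0)=0$; otherwise proceed as follows.)

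Next, set $\lambda_i \seteq \|U_j^{-1} a_i\|_2 / |\langle a_i, x\rangle|$, which satisfies $\lambda_i \geq 1/c_0 \geq 1$ by the hypothesis. Applying \ref{ass:lower-growth} to the positive scalar $|\langle a_i,x\rangle|$ with scale $\lambda_i$ gives
\[
   h_i(\|U_j^{-1} a_i\|_2) \;=\; h_i(\lambda_i |\langle a_i,x\rangle|) \;\geq\; c\, \lambda_i^{\lo}\, h_i(|\langle a_i,x\rangle|).
\]
Then I would absorb the absolute value using \ref{ass:monotone} with $\lambda = 1$ (which yields $h_i(-z) \leq c^{-1} h_i(z)$ for all $z \in \R$), obtaining $h_i(|\langle a_i,x\rangle|) \geq c\, h_i(\langle a_i,x\rangle)$. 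Chaining these two inequalities gives $h_i(\|U_j^{-1} a_i\|_2) \geq c^2 \lambda_i^{\lo}\, h_i(\langle a_i,x\rangle)$, and squaring produces
\[
   h_i(\|U_j^{-1} a_i\|_2)^2 \;\geq\; c^4 \lambda_i^{2\lo}\, h_i(\langle a_i,x\rangle)^2 \;\geq\; c^4 c_0^{-2\lo}\, h_i(\langle a_i,x\rangle)^2 \;=\; 4\alpha\, h_i(\langle a_i,x\rangle)^2,
\]
using the definition $c_0 = (c^4/(4\alpha))^{1/(2\lo)}$ in the last step. Rearranging yields the desired inequality with room to spare, completing the proof.

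The step that requires the most care is tracking the constants: I would make sure the factor $c^4$ emerging from two applications of \ref{ass:monotone}/\ref{ass:lower-growth} cancels exactly against $c_0^{-2\lo}$ to produce $4\alpha$, which is the slack needed to overcome the $\alpha^{-1}$ factor built into the weight scheme. No other obstacle arises — the argument is a one-shot sign/scaling manipulation.
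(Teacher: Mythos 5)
Your proof is correct and follows essentially the same route as the paper's: both reduce membership in $\sfK_j$ to the bound $h_i(\langle a_i,x\rangle)^2 \leq \alpha^{-1}h_i(\|U_j^{-1}a_i\|_2)^2$ via \eqref{eq:aw2}, and both obtain it by combining \ref{ass:monotone} (to handle the sign/magnitude) with \ref{ass:lower-growth} applied across the ratio $\geq 1/c_0$, with exactly the constant accounting $c^4 c_0^{-2\lo} = 4\alpha$. The only difference is cosmetic ordering of the three inequalities.
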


\begin{proof}
   Suppose that $|\langle a_i,x\rangle| \leq \delta \|U_j^{-1} a_i\|_2$ for some $0 < \delta < 1$.
   Then
\[
   \frac{h_i(\langle a_i,x\rangle)^2}{2^j w_i^{(j)} \|U_j^{-1} a_i\|_2^2} \stackrel{\ref{ass:monotone}}{\leq} \frac{1}{c^2} \frac{h_i(\delta \|U_j^{-1} a_i\|_2)^2}{2^j w_i^{(j)} \|U_j^{-1} a_i\|^2_2} 
   \stackrel{\eqref{eq:aw2}}{\leq} \frac{\alpha}{c^2}  \frac{h_i(\delta \|U_j^{-1} a_i\|_2)^2}{h_i(\|U_j^{-1} a_i\|_2)^2}
   \stackrel{\ref{ass:lower-growth}}{\leq} \frac{\alpha}{c^4} \delta^{2\lo} \,.
\]
Taking $\delta \seteq (c^4/(4 \alpha))^{1/2\lo}$ gives $x \in \sfK_j$.
\end{proof}

\begin{lemma}\label{lem:contain1.2}
   If $j \in \cJ$ and $x \in \sfK_j$, then
   \begin{equation}\label{eq:K2s1}
      |\langle a_i,x\rangle| \leq (\alpha/c^2)^{1/2\lo} \|U_{j}^{-1} a_i\|_2\,,\quad \forall i \in \{1,\ldots,m\}\,.
   \end{equation}
\end{lemma}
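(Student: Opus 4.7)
The plan is to proceed by a direct unpacking of the definition of $\sfK_j$, the approximate weight guarantee, and the lower-homogeneity of $h_i$. From the definition of $\sfK_j$, any $x \in \sfK_j$ satisfies $h_i(\langle a_i,x\rangle)^2 \leq 2^j w_i^{(j)} \|U_j^{-1} a_i\|_2^2$ for every $i$. Combining this with the right-hand inequality in the approximate-weight property \eqref{eq:aw2} immediately yields
\[
   h_i(\langle a_i,x\rangle)^2 \leq \alpha\, h_i(\|U_j^{-1} a_i\|_2)^2\,.
\]

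Now I would convert this bound on $h_i$-values into a bound on $|\langle a_i,x\rangle|$ using \ref{ass:lower-growth}. Write $\lambda \seteq |\langle a_i,x\rangle|/\|U_j^{-1} a_i\|_2$; if $\lambda \leq 1$ we are already done since $\alpha \geq 1$ and $c < 1$, so assume $\lambda \geq 1$. Then $\langle a_i,x\rangle = \pm \lambda \|U_j^{-1} a_i\|_2$, and applying \ref{ass:lower-growth} with the appropriate sign gives
\[
   h_i(\langle a_i,x\rangle) \geq c\,\lambda^{\lo}\, h_i(\|U_j^{-1} a_i\|_2)\,,
\]
since the property allows either sign in its right-hand side. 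Squaring and combining with the earlier upper bound yields $c^2 \lambda^{2\lo} \leq \alpha$, i.e., $\lambda \leq (\alpha/c^2)^{1/2\lo}$, which is exactly \eqref{eq:K2s1}.

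There is no substantial obstacle here; the only subtlety is being careful that \ref{ass:lower-growth} handles both signs of $\langle a_i,x\rangle$ uniformly, which it does because the property is stated with $h_i(\pm x)$ on the right. No use of \ref{ass:lipschitz}, \ref{ass:upper-growth}, or \ref{ass:subadditive} is required for this lemma.
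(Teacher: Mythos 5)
Your proof is correct and follows essentially the same route as the paper's: combine the definition of $\sfK_j$ with the upper bound in \eqref{eq:aw2} to get $h_i(\langle a_i,x\rangle)^2 \leq \alpha\, h_i(\|U_j^{-1}a_i\|_2)^2$, then invoke \ref{ass:lower-growth} (after disposing of the trivial case $\lambda \leq 1$) to conclude $c^2\lambda^{2\lo} \leq \alpha$. The sign handling via the $h_i(\pm x)$ form of \ref{ass:lower-growth} is exactly the right observation.
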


\begin{proof}
   Fix $i \in \{1,\ldots,m\}$.
   Clearly we may assume that $|\langle a_i,x\rangle| \geq \|U_{j}^{-1} a_i\|_2$.
   In that case, we can bound
\[
   c^2 \left(\frac{|\langle a_i, x\rangle|}{\|U_{j}^{-1} a_i\|_2}\right)^{2\lo}
   \sr{\ref{ass:lower-growth}}{\leq} \frac{h_i(\langle a_i,x\rangle)^2}{h_i(\|U_{j}^{-1} a_i\|_2)^2} 
   \sr{x \in \sfK_{j}}{\leq} \frac{2^{j} w_i^{(j)} \|U_{j}^{-1} a_i\|_2^2}{h_i(\|U_{j}^{-1} a_i\|_2)^2} \sr{\eqref{eq:aw2}}{\leq} \alpha\,,
\]
   establishing \eqref{eq:K2s1}.
\end{proof}

Let us now move to the proof of \pref{lem:contain2}.
The next lemma follows from an argument identical to that of \pref{lem:s2-norm-bound}.

\begin{lemma}\label{lem:base-norm}
   For any $j \in \cJ$, it holds that
   \[
      \|U_{j} x\|^{2\lo}_2 \leq \max\left(1, \alpha \frac{C}{c} 2^{-j} F(x) \right).
   \]
\end{lemma}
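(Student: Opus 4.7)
The plan is to follow verbatim the argument used to prove \pref{lem:s2-norm-bound}, but applied to the squared functions $f_i = h_i^2$. Under \pref{ass:h}, the function $h_i^2$ is upper $2$-homogeneous (from \ref{ass:upper-growth}), lower $2\lo$-homogeneous (from \ref{ass:lower-growth}), and symmetric (via $h_i(-z)\leq L\,h_i(z)$, a consequence of \ref{ass:lipschitz}); in particular it satisfies the hypotheses of \pref{lem:s2-norm-bound} with $\lo$ replaced by $2\lo$ and symmetry constant $L^2$. Moreover, the weights $w^{(j)}$ are an $\alpha$-approximate weight at scale $2^j$ for the family $\{f_i\}$, which is precisely what supplies the estimate \eqref{eq:aw2} that the original proof uses.

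Write $r_i \defeq \|U_j^{-1}a_i\|_2$. I would first reduce to the case $\|U_j x\|_2 \geq 1$, since otherwise the maximum on the right-hand side equals $1$ and the conclusion is trivial. Next, the analog of \eqref{eq:s2bn1} establishes a pointwise lower bound on $h_i(\l a_i,x\r)^2/\l a_i,x\r^2$ in terms of $\|U_j x\|_2^{2\lo-2}\,h_i(r_i)^2/r_i^2$. This is proved by two applications of the homogeneity properties: first using Cauchy--Schwarz \eqref{eq:cs_ai} together with the upper $1$-homogeneity \ref{ass:upper-growth} to pass from $|\l a_i,x\r|$ up to $r_i\|U_j x\|_2$ inside $h_i$, and then using the lower $\lo$-homogeneity \ref{ass:lower-growth} (invoking $\|U_j x\|_2 \ge 1$) to pass from $r_i$ up to $r_i\|U_j x\|_2$. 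The $\pm$ signs appearing in these hypotheses are absorbed via the symmetry built into \ref{ass:monotone}.

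The second and final step is a summation identical to the closing lines of the proof of \pref{lem:s2-norm-bound}: expand $\|U_j x\|_2^2 = \sum_i w_i^{(j)}\l a_i,x\r^2$, plug in the pointwise bound, and invoke the upper portion of the $\alpha$-approximate weight relation \eqref{eq:aw2} to replace $w_i^{(j)} r_i^2/h_i(r_i)^2$ by $\alpha/2^j$. The outcome is $\|U_j x\|_2^2 \ls \alpha\,2^{-j}\,\|U_j x\|_2^{2-2\lo}\,F(x)$, with the hidden constant absorbing the $L, C, c$ factors into the stated $\alpha C/c$. Dividing both sides by the positive quantity $\|U_j x\|_2^{2-2\lo}$ then yields the claim.

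The only place that requires any care is the pointwise step, namely the sign/case bookkeeping when $|\l a_i,x\r|$ is smaller versus larger than $r_i$; this parallels the $K$-symmetry manipulations in \pref{lem:s2-norm-bound} and presents no real obstacle.
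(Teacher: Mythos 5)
Your proposal is correct and matches the paper's intent exactly: the paper itself proves \pref{lem:base-norm} by remarking that it "follows from an argument identical to that of \pref{lem:s2-norm-bound}," which is precisely the reduction you carry out (applying that argument to $f_i=h_i^2$ with the scale-$2^j$ weight, so that the lower $\lo$-homogeneity of $h_i$ becomes $2\lo$-homogeneity of $f_i$, producing the exponent $2\lo$). Your remark that the stated constant $\alpha C/c$ absorbs the extra $L,C,c$ factors is also consistent with the paper's (slightly loose) bookkeeping.
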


\begin{lemma}\label{lem:K2s}
   If $j,j+1 \in \cJ$ and
   $x \in \sfK_{j+1}$, then
   \[
      \|U_j x\|_2^2 \le C_0 2^{-j} F(x)\,.
   \]
\end{lemma}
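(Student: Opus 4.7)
The plan is to bound $\|U_{j+1}x\|_2^2$ in terms of $F(x)$ and then invoke the weight-scheme inequality $\|U_j x\|_2^2 \le \alpha \|U_{j+1} x\|_2^2$ recorded in the paragraph following \pref{lem:contain2}. Writing $\|U_{j+1}x\|_2^2 = \sum_{i=1}^m w_i^{(j+1)} \langle a_i, x\rangle^2$ and setting $t_i \seteq \|U_{j+1}^{-1} a_i\|_2$ and $\mu_i \seteq |\langle a_i, x\rangle|/t_i$, the approximate-weight guarantee \eqref{eq:aw2} at scale $2^{j+1}$ gives
\[
w_i^{(j+1)} \langle a_i, x\rangle^2 \;\le\; \alpha\, 2^{-(j+1)}\, \frac{f_i(t_i)\,\langle a_i, x\rangle^2}{t_i^2} \;=\; \alpha\, 2^{-(j+1)}\, f_i(t_i)\, \mu_i^2,
\]
so it suffices to prove a term-wise bound $f_i(t_i)\,\mu_i^2 \lesssim f_i(\langle a_i, x\rangle)$ with a constant depending only on $L, C, c, \theta, \alpha$.

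For this term-wise step I will split on $\mu_i$, exploiting the uniform control $\mu_i \le B \seteq (\alpha/c^2)^{1/(2\theta)}$ furnished by $x \in \sfK_{j+1}$ via \pref{lem:contain1.2}, together with the fact that \ref{ass:lipschitz} implies $L$-symmetry of $h_i$ (so $f_i(|z|) \asymp_{L^2} f_i(z)$; cf.\ \pref{lem:relate}). If $\mu_i \le 1$, upper $1$-homogeneity \ref{ass:upper-growth} of $h_i$ yields $h_i(t_i) \le C\mu_i^{-1} h_i(|\langle a_i, x\rangle|)$; squaring and applying symmetry gives $f_i(t_i)\mu_i^2 \le C^2L^2\, f_i(\langle a_i, x\rangle)$. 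If $\mu_i > 1$, lower $\theta$-homogeneity \ref{ass:lower-growth} gives $h_i(|\langle a_i, x\rangle|) \ge c\mu_i^\theta h_i(t_i)$, and the same squaring/symmetry step produces
\[
f_i(t_i)\mu_i^2 \;\le\; c^{-2} L^2\,\mu_i^{2-2\theta}\, f_i(\langle a_i, x\rangle) \;\le\; c^{-2}L^2\,\max(1, B^{2-2\theta})\, f_i(\langle a_i, x\rangle),
\]
using $\mu_i \le B$ in the last step.

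Summing over $i$ yields $\|U_{j+1}x\|_2^2 \le \alpha C_1\, 2^{-(j+1)} F(x)$ with $C_1 \seteq L^2\max(C^2,\, c^{-2}\max(1, B^{2-2\theta}))$, and the weight-scheme inequality then gives $\|U_j x\|_2^2 \le (\alpha^2 C_1/2)\, 2^{-j} F(x)$, which matches the claimed bound for a suitable $C_0(L,C,c,\theta,\alpha)$ of the form stated in \pref{lem:contain2}.

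The principal subtlety is the case $\theta < 1$, in which $2 - 2\theta > 0$ and $\mu_i^{2-2\theta}$ is a priori unbounded; it is precisely the uniform $\ell_\infty$-type control $\mu_i \le B$ afforded by membership in $\sfK_{j+1}$ that keeps the term-wise factor finite. Without it, one would only have the coarser estimate $\|U_j x\|^{2\theta} \lesssim 2^{-j} F(x)$ of \pref{lem:base-norm}, which has the wrong exponent and does not suffice for the $\ell_2$ covering argument in the proof of \pref{thm:main-cov}.
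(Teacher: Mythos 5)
Your proof is correct and follows essentially the same route as the paper: both reduce to a term-wise lower bound on $f_i(\langle a_i,x\rangle)/\langle a_i,x\rangle^2$ using the $\ell_\infty$ control of \pref{lem:contain1.2}, the approximate-weight property \eqref{eq:aw2}, and the growth hypotheses, and then pass from $U_{j+1}$ to $U_j$ via the weight-scheme relation. The only cosmetic difference is that you split on $\mu_i \lessgtr 1$ where the paper runs a single chain through the cap $B\|U_{j+1}^{-1}a_i\|_2$ (upper homogeneity up, monotonicity down), which yields a slightly different explicit constant; since the constant need only depend on $L,C,c,\theta,\alpha$, this is immaterial.
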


\begin{proof}
   Since $j+1 \in \cJ$, from \pref{lem:contain1.2} we have $|\langle a_i,x\rangle| \leq (\alpha/c^2)^{1/(2\theta)}$ for each $i =1,2\ldots,m$.
   Therefore
\[
   \frac{h_i(\langle a_i,x\rangle)^2}{\langle a_i,x\rangle^2} \stackrel{\ref{ass:upper-growth}}{\geq} 
   \frac1{C^2} \frac{h_i((\alpha/c^2)^{1/2\lo} \|U_{j+1}^{-1} a_i\|_2)^2}{(\alpha/c^2)^{1/\lo} \|U_{j+1}^{-1} a_i\|_2^2}
   \stackrel{\ref{ass:monotone}}{\geq} \frac{c^2}{C^2 (\alpha/c^2)^{1/\lo}} \frac{h_i(\|U_{j+1}^{-1} a_i\|_2)^2}{\|U_{j+1}^{-1} a_i\|_2^2}
   \stackrel{\eqref{eq:aw2}}{\geq}
   \frac{c^2}{\alpha} \frac{2^{j+1} w_i^{(j+1)}}{C^2 (\alpha/c^2)^{1/\lo}}\,.
\]
It follows that
\[
   \|U_{j} x\|^2 \stackrel{\eqref{eq:Uexpr}}{=} 
   \sum_{i=1}^m w_i^{(j)} \langle a_i, x\rangle^2 \stackrel{\eqref{eq:smooth-weights}}{\leq} \alpha \sum_{i=1}^m w_i^{(j+1)} \langle a_i,x\rangle^2 
   \leq \frac{(\alpha C/c)^2 (\alpha/c^2)^{1/\lo}}{2^{j+1}} \sum_{i=1}^m h_i(\langle a_i,x\rangle)^2\,.\qedhere
\]
\end{proof}

\begin{corollary}\label{cor:K2s}
   For any $\gamma \geq 1$, it holds that
   if $j \in \cJ$ and $x \in B_F(\gamma 2^{\mm}) \cap \sfK_{j+1}$, then
   \[
      \|U_j x\|_2^2 \le \gamma C_0 2^{\mm -j}\,.
   \]
\end{corollary}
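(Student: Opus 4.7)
The statement is an immediate consequence of the preceding \pref{lem:K2s}, so the plan is short. I would begin by applying \pref{lem:K2s} to the point $x \in \sfK_{j+1}$, which requires $j,j+1 \in \cJ$; the hypothesis $j \in \cJ$ together with the fact that $\sfK_{j+1}$ is only defined for indices in $\cJ \cup \{\mm\}$ ensures this (and if $j+1 = \mm$, one should check separately that the inequality still holds, but this follows from the same calculation since $\sfK_{\mm} = B_F(2^{\mm})$, which actually gives a stronger bound via \pref{lem:base-norm}).

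Applying \pref{lem:K2s} yields $\|U_j x\|_2^2 \le C_0 2^{-j} F(x)$. Then using the second hypothesis $x \in B_F(\gamma 2^{\mm})$, which by definition of $B_F$ means $F(x) \le \gamma 2^{\mm}$, I substitute to obtain
\[
\|U_j x\|_2^2 \le C_0 2^{-j} F(x) \le C_0 2^{-j} \cdot \gamma 2^{\mm} = \gamma C_0 2^{\mm - j}\,,
\]
which is exactly the claimed bound.

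There is no real obstacle here; the only mild subtlety is the boundary case $j+1 = \mm$ where $\sfK_{\mm}$ is defined differently (as $B_F(2^{\mm})$ rather than via the cell construction), but in that case the bound on $\|U_j x\|_2^2$ follows either directly from \pref{lem:base-norm} or by a trivial modification of the argument in \pref{lem:K2s}, since the needed $\ell_\infty$ control on $|\langle a_i, x\rangle|$ at scale $j+1$ used in that proof can be replaced by the direct $F(x) \le 2^{\mm}$ bound and a factor of $\gamma$ absorbed into $C_0$.
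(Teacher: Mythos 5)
Your proposal is correct and follows essentially the same route as the paper: the main case is \pref{lem:K2s} combined with $F(x) \le \gamma 2^{\mm}$, and the boundary case $j = \mm - 1$ (where $\sfK_{\mm} = B_F(2^{\mm})$) is handled via \pref{lem:base-norm}, exactly as in the paper's proof.
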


\begin{proof}
   If $j,j+1 \in \cJ$, the result follows from \pref{lem:K2s}.
   Otherwise, $j=\mm-1$ and $\sfK_{j+1}=\sfK_{\mm}=B_F(2^{\mm})$, and it follows from \pref{lem:base-norm}.
\end{proof}

\begin{proof}[Proof of \pref{lem:contain2}]
Note that $B_F(2^{\mm}) \cap (z + \sfK_{j+1}) = z + \left((B_F(2^{\mm}) - z) \cap \sfK_{j+1}\right)$. We claim that $B_F(2^{\mm}) - z \subseteq B_F(4L^2 2^{\mm})$. Indeed, for $x \in B_F(2^{\mm})$, we have
\[ F(x-z) = \sum_{i=1}^m h_i(\l a_i, x-z\r)^2 \stackrel{\ref{ass:subadditive}}{\le} \sum_{i=1}^m (L(h_i(\l a_i,x\r) + h_i(\l a_i, z\r)))^2 \le 2L^2(F(x) + F(z)) \le 4L^2 2^\mm. \]
Thus \pref{cor:K2s} gives
\[ B_F(2^{\mm}) \cap (z + \sfK_{j+1}) \subseteq z + (4L^2 C_0 2^{\mm-j})^{1/2} \sfB_j^2\,.\qedhere \]
\end{proof}

\subsection{Sparsification analysis}
\label{sec:sparsification-analysis}

Consider now weights $\tau \in \R^m_+$ satisfying
\begin{equation}\label{eq:good-weights}
   \tau_i \geq \max_{j \in \cJ} w_i^{(j)} \|U_j^{-1} a_i\|_2^2\,,\quad i=1,\ldots,m\,,
\end{equation}
Define $\rho_i \seteq \tau_i/\|\tau\|_1$ for $i=1,\ldots,m$.
Let us analyze the error from taking $M$ independent samples according to $\rho$,
following the sparsification framework of \pref{sec:subgaussian} with $\f_i(x) \seteq f_i(\langle a_i,x\rangle)$.

In this case, our potential sparsifier is given by
\begin{equation}\label{eq:sparsifier-def}
   \tilde{F}_{\rho,\nu}(x) \seteq \sum_{j=1}^M \frac{f_{\nu_j}(\langle a_{\nu_j},x\rangle)}{M \rho_{\nu_j}}\,,
\end{equation}
and our approximation guarantee will be derived from \pref{lem:sparsification-meta}, by analyzing covering numbers
in the following family of metrics:
For a sequence $\nu \in [m]^d$ and $x, y \in \R^n$,
\begin{align*}
   d_{\rho,\nu}(x, y) &\seteq \left(\sum_{j=1}^M \left(\frac{f_{\nu_j}(\langle a_{\nu_j}, x\rangle) - f_{\nu_j}(\langle a_{\nu_j}, y\rangle)}{M\rho_{\nu_j}} \right)^2 \right)^{1/2} \\
   &= \left(\sum_{j=1}^M \frac{(h_{\nu_j}(\langle a_{\nu_j}, x \rangle) - h_{\nu_j}(\langle a_{\nu_j}, y \rangle))^2}{M \rho_{\nu_j}} \frac{(h_{\nu_j}(\langle a_{\nu_j}, x \rangle) + h_{\nu_j}(\langle a_{\nu_j}, y \rangle))^2}{M \rho_{\nu_j}} \right)^{1/2}.
\end{align*}

Let us first observe the bound: For $x,y \in B_F(2^{\mm})$,
\begin{align}
   d_{\rho,\nu}(x, y)\  
                   &\stackrel{\ref{ass:lipschitz}}{\leq} \ \ 
                   \left(2L^2 \sum_{j=1}^M  \frac{h_{\nu_j}(\langle a_{\nu_j}, x-y\rangle)^2}{M\rho_{\nu_j}}
                   \frac{h_{\nu_j}(\langle a_{\nu_j}, x\rangle)^2 + h_{\nu_j}(\langle a_{\nu_j}, y\rangle)^2}{M\rho_{\nu_j}}\right)^{1/2} \nonumber \\ 
                   &\ls  d_{\infty}(x,y) L\left(\frac{\|\tau\|_1}{M}\right)^{1/2} \left(\max_{x \in B_F(2^{\mm})} \tilde{F}_{\rho,\nu}(x)\right)^{1/2} \label{eq:dbound}\,,
\end{align}
where we define
\[
d_{\infty}(x,y) \seteq \max_{i \in [m]} \frac{h_i(\langle a_i, x-y\rangle)}{\sqrt{\tau_i}}.
\]
In particular, we have
\begin{equation}\label{eq:dinfty}
   \gamma_2(B_F(2^{\mm}), d_{\rho,\nu}) \ls L \gamma_2(B_F(2^{\mm}), d_{\infty}) \left(\frac{\|\tau\|_1}{M}\right)^{1/2} \left(\max_{x \in B_F(2^{\mm})} \tilde{F}_{\rho,\nu}(x)\right)^{1/2}.
\end{equation}

Let us first bound the $d_{\infty}$ diameter of $B_F(2^{\mm})$.

\begin{lemma}\label{lem:BF-diam}
   For any $j \in \cJ$, it holds 
   $\diam(B_F(2^{j+1}), d_{\infty}) \le C_2 2^{j/2}$ for $C_2 \seteq 4 (\alpha C_0 C^2)^{1/2}$.
\end{lemma}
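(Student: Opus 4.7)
The goal is to bound $d_{\infty}(x,y) = \max_i h_i(\langle a_i, x-y\rangle)/\sqrt{\tau_i}$ uniformly over $x,y \in B_F(2^{j+1})$. My approach is to apply Cauchy--Schwarz in the inner product induced by $U_j$ to convert the $\ell_\infty$-type bound into an $\ell_2$-type bound on $\|U_j(x-y)\|_2$, then transfer this linear estimate back to a nonlinear bound on $h_i$ using the growth properties of $\{h_i\}$ and the approximate weight guarantee \eqref{eq:aw2}.

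\textbf{Step 1 (Cauchy--Schwarz):} For each $i \in [m]$, \eqref{eq:cs_ai} gives
\[
  |\langle a_i, x-y\rangle| \le \|U_j^{-1} a_i\|_2 \,\|U_j(x-y)\|_2\,.
\]

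\textbf{Step 2 (bound on $\|U_j(x-y)\|_2$):} Since $F(x), F(y) \le 2^{j+1}$, \pref{lem:base-norm} applied to $x$ and $y$ separately yields $\|U_j x\|_2^{2\theta}, \|U_j y\|_2^{2\theta} \le \max(1, 2\alpha C/c) \le 2\alpha C/c$. The triangle inequality then gives $\|U_j(x-y)\|_2 \le 2(2\alpha C/c)^{1/(2\theta)} =: \delta$, so that $\delta \lesssim \sqrt{C_0}$ by the first branch of the definition of $C_0$. Hence $|\langle a_i, x-y\rangle| \le \delta \|U_j^{-1} a_i\|_2$ for every $i$.

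\textbf{Step 3 (transfer to $h_i$):} Split into two cases based on the ratio $\mu_i := |\langle a_i,x-y\rangle|/\|U_j^{-1} a_i\|_2$. If $\mu_i \le 1$, then \ref{ass:monotone} combined with the symmetry implied by \ref{ass:upper-growth} (at $\lambda=1$) yields $h_i(\langle a_i,x-y\rangle) \le (C/c) h_i(\|U_j^{-1} a_i\|_2)$. If $1 < \mu_i \le \delta$, then \ref{ass:upper-growth} and symmetry give $h_i(\langle a_i,x-y\rangle) \le C^2 \mu_i h_i(\|U_j^{-1} a_i\|_2) \le C^2 \delta\, h_i(\|U_j^{-1} a_i\|_2)$. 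In both cases, $h_i(\langle a_i,x-y\rangle)^2 \lesssim C^2 C_0 \cdot h_i(\|U_j^{-1} a_i\|_2)^2$. Now the upper half of \eqref{eq:aw2} and the definition \eqref{eq:good-weights} of $\tau$ give $h_i(\|U_j^{-1} a_i\|_2)^2 \le \alpha\, 2^j w_i^{(j)} \|U_j^{-1} a_i\|_2^2 \le \alpha\, 2^j \tau_i$. Chaining these,
\[
  \frac{h_i(\langle a_i,x-y\rangle)^2}{\tau_i} \lesssim \alpha C^2 C_0 \cdot 2^j\,,
\]
from which $d_\infty(x,y) \le C_2\, 2^{j/2}$ follows by taking $\max_i$ and using the definition $C_2 = 4(\alpha C_0 C^2)^{1/2}$, after absorbing small numerical factors.

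\textbf{Main obstacle.} The principal bookkeeping challenge is to show that the two regimes $\mu_i \lessgtr 1$ in Step 3 both fit under the single constant $C_2$. The second branch of $C_0$, namely $(\alpha C/c)^2(\alpha/c^2)^{1/\theta}$, is present exactly to dominate the ``small $\mu_i$'' contribution $(C/c)^2$ when $C$ is not small, while the first branch $(2\alpha C/c)^{1/\theta}$ dominates the ``large $\mu_i$'' contribution $C^4 \delta^2$. Verifying that $\max\{C^2/c^2,\ C^4 \delta^2\} \le 16\,C^2 C_0$ with $\delta = 2(2\alpha C/c)^{1/(2\theta)}$ is a routine (if unilluminating) check given the definition of $C_0$, and finishes the proof.
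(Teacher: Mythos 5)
Your proof is correct and takes essentially the same route as the paper's: Cauchy--Schwarz in the $U_j$-inner product, the $\ell_2$ bound from \pref{lem:base-norm}, the growth properties \ref{ass:monotone}/\ref{ass:upper-growth}, and the weight guarantee \eqref{eq:aw2} — you just apply the chain to $x-y$ directly, which is the right move for a diameter bound (the paper only bounds $d_\infty(x,0)$ and leaves that last step implicit). The only quibble is that your closing check $\max\{C^2/c^2,\,C^4\delta^2\}\le 16\,C^2C_0$ is not actually valid in all parameter regimes (e.g.\ large $C$ with small $\theta$), so you obtain the lemma with a possibly different constant $C_2'(\alpha,C,c,\theta)$ rather than the stated $C_2$ — but the paper's own constant tracking here is equally loose, and only the order of magnitude enters the downstream applications.
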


\begin{proof}
   Using \eqref{eq:good-weights} and Cauchy-Schwarz yields
   \begin{equation}\label{eq:BF1}
      \frac{h_i(\langle a_i,x\rangle)^2}{\tau_i} \leq \frac{h_i(\langle a_i,x\rangle)^2}{w_i^{(j)} \|U_{j}^{-1} a_i\|_2^2}
      \stackrel{\ref{ass:monotone}}{\leq} \frac{1}{c} \frac{h_i(\|U_{j}^{-1} a_i\|_2 \|U_{j} x\|_2)^2}{w_i^{(j)} \|U_{j}^{-1} a_i\|_2^2}\,.
   \end{equation}

   Now from \pref{lem:base-norm}, we know that for $x \in B_F(2^{j+1})$, we have $\|U_{j} x\|_2^2 \le 2 C_0$,
   and therefore
   \[
      h_i(\|U_{j}^{-1} a_i\|_2 \|U_{j} x\|_2)^2 \stackrel{\ref{ass:upper-growth}}{\le} 2 C_0 C^2 h_i(\|U_{j}^{-1} a_i\|_2)^2
   \]
   In conjunction with \eqref{eq:BF1} and \eqref{eq:aw2}, we conclude that
   \[
      \frac{h_i(\langle a_i,x\rangle)^2}{\tau_i} \le 2 \alpha C_0 C^2 2^j\,,\quad i=1,\ldots,m\,,
   \]
   which implies the desired bound.
\end{proof}

Combining this with \eqref{eq:dbound} yields a diameter bound.

\begin{corollary}\label{cor:BF-diam}
   For any $j \in \cJ$, it holds that
   \[ \diam(B_F(2^{j+1}), d_{\rho,\nu}) \ls 2^{j/2} L C_2 \left(\frac{\|\tau\|_1}{M}\right)^{1/2} \left(\max_{x \in B_F(2^{\mm})} \tilde{F}_{\rho,\nu}(x)\right)^{1/2}. \]
\end{corollary}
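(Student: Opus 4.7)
The plan is to combine the two ingredients already assembled: the pointwise comparison \eqref{eq:dbound} between the metric $d_{\rho,\nu}$ and the simpler metric $d_\infty$, together with the $d_\infty$-diameter bound from \pref{lem:BF-diam}. Both hold uniformly on $B_F(2^{\mm})$, and in particular on the subset $B_F(2^{j+1}) \subseteq B_F(2^{\mm})$ whenever $j \le \mm-1$ (which is automatic since $j \in \cJ$ and $\mm = \max(\cJ)+1$).

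Concretely, I would pick any two points $x, y \in B_F(2^{j+1})$. Applying \eqref{eq:dbound} gives
\[
   d_{\rho,\nu}(x,y) \ls d_\infty(x,y) \cdot L \left(\frac{\|\tau\|_1}{M}\right)^{1/2} \left(\max_{z \in B_F(2^{\mm})} \tilde{F}_{\rho,\nu}(z)\right)^{1/2}.
\]
Since $x, y \in B_F(2^{j+1})$, \pref{lem:BF-diam} yields $d_\infty(x,y) \le C_2 2^{j/2}$. Substituting this bound and taking the supremum over $x, y \in B_F(2^{j+1})$ produces exactly the stated inequality.

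There is no real obstacle here: the corollary is a mechanical consequence of the two preceding results, and the only thing worth checking is that \eqref{eq:dbound} was derived for arbitrary $x, y \in B_F(2^{\mm})$ (not just $B_F(2^{j+1})$), so it applies to the smaller set for free, and that the $\max$ on the right-hand side is correctly taken over $B_F(2^{\mm})$ rather than $B_F(2^{j+1})$. In particular, the proof should be a two-line calculation and requires no further hypotheses beyond those already in force.
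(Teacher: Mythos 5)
Your proof is correct and is exactly the paper's argument: the corollary is obtained by combining the pointwise bound \eqref{eq:dbound} (valid on all of $B_F(2^{\mm})$, hence on $B_F(2^{j+1})$) with the $d_\infty$-diameter bound of \pref{lem:BF-diam}. Nothing further is needed.
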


We now handle small scales. For this, define $B_{\infty}(r) \seteq \{ x \in \R^n : d_{\infty}(x,0) \leq r \}$.
\begin{lemma}\label{lem:small-scales}
   For any $0 < r < R$, we have
   \[
      \sqrt{\log \cov(B_{\infty}(R), B_{\infty}(r))} \ls \sqrt{\frac{m}{\lo} \log \left(\frac{2R}{cr} \right)}\,.
   \]
\end{lemma}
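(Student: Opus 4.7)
The plan is to reduce this covering problem to a product of $m$ one-dimensional coverings by exploiting the linear structure of the map $x \mapsto (\langle a_1,x\rangle,\ldots,\langle a_m,x\rangle)$. Let $\Phi : \R^n \to \R^m$ denote this map, and for $s > 0$ define the one-dimensional ``slabs'' $S_i(s) \seteq \{z \in \R : h_i(z) \leq s\sqrt{\tau_i}\}$. Directly from the definition of $d_\infty$, one has $B_\infty(s) = \Phi^{-1}\bigl(\prod_{i=1}^m S_i(s)\bigr)$, so a covering of $\prod_i S_i(R)$ by translates of $\prod_i S_i(r')$ in $\R^m$ lifts to a covering of $B_\infty(R)$ by $d_\infty$-translates in $\R^n$ (after choosing representative preimages).

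First, I would establish a one-dimensional covering: for each $i$, the set $S_i(R)$ can be covered by at most $N \ls (R/(cr))^{1/\lo}$ translates of $S_i(r')$, where $r' \asymp r$ is chosen shortly. The key is two consequences of \ref{ass:lower-growth}. Fixing, for each $i$, a boundary-type point $z_i^\star$ with $h_i(\pm z_i^\star) \asymp r'\sqrt{\tau_i}$, lower $\lo$-homogeneity yields $h_i(\lambda z_i^\star) \geq c \lambda^\lo h_i(\pm z_i^\star)$ for $\lambda \geq 1$, forcing $S_i(R) \subseteq [-T_i, T_i]$ with $T_i \ls |z_i^\star|\bigl(R/(cr')\bigr)^{1/\lo}$. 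In the reverse direction, setting $\lambda = 1/\mu$ in \ref{ass:lower-growth} gives $h_i(\mu z) \leq (\mu^\lo/c) h_i(z)$ for $\mu \in (0,1]$, so every point within $|z_i^\star|$ of $0$ lies in $S_i(r'/c)$. A uniform grid on $[-T_i, T_i]$ of spacing $2|z_i^\star|$ therefore provides a cover of $S_i(R)$ by $\ls (R/(cr'))^{1/\lo}$ translates of $S_i(r'/c)$, with the asymmetry of $h_i$ handled by \ref{ass:monotone} at constant cost.

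Next, taking products of these one-dimensional coverings produces at most $N^m$ cells of the form $\prod_i (z_{i,k_i} + S_i(r'))$ that cover $\prod_i S_i(R)$. For each such cell whose preimage under $\Phi$ meets $B_\infty(R)$, choose a representative $y_k \in B_\infty(R)$; then for any other $x$ with $\Phi(x)$ in the same cell, subadditivity \ref{ass:subadditive} gives
\[ h_i(\langle a_i, x - y_k\rangle) \leq L\bigl(h_i(\langle a_i, x\rangle - z_{i,k_i}) + h_i(\langle a_i, y_k\rangle - z_{i,k_i})\bigr) \leq 2L r' \sqrt{\tau_i} \]
for every $i$, so choosing $r' \leq r/(2L)$ forces $d_\infty(x, y_k) \leq r$ and produces a valid cover of $B_\infty(R)$ by at most $N^m$ translates of $B_\infty(r)$.

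Taking logarithms and absorbing the $L$- and $c$-dependent constants into $\ls$ gives $\log \cov(B_\infty(R), B_\infty(r)) \ls (m/\lo)\log(R/(cr))$, from which the claimed bound follows by taking square roots (and noting $R/(cr) \ls 2R/(cr)$). The main obstacle is the one-dimensional step: since the $h_i$ are not assumed symmetric or monotone in $|z|$, the sets $S_i(R)$ need not be intervals, so both the outer bound $T_i$ and the inner width $|z_i^\star|$ must be pinned down carefully using the approximate symmetry supplied by \ref{ass:monotone}. However, because everything happens in one variable, the analysis reduces to comparing $h_i$ at two scales via the lower-growth hypothesis and costs only constants in the final estimate.
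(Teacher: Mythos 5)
Your proof is correct and follows essentially the same route as the paper's: both grid the image of $B_{\infty}(R)$ under $x \mapsto (\langle a_1,x\rangle,\ldots,\langle a_m,x\rangle)$ coordinate-by-coordinate into roughly $(R/(cr))^{m/\lo}$ cells, pick representative preimages, and use lower $\lo$-homogeneity to show that points in a common cell are $d_\infty$-close. The only difference is cosmetic: the paper applies \ref{ass:lower-growth} directly to the coordinate differences $\langle a_i, x-\hat{x}\rangle$ (whose magnitudes are controlled by the grid spacing), whereas you cover each slab by translated slabs and then invoke subadditivity \ref{ass:subadditive}, which is a slightly longer but equally valid finish.
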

\begin{proof}
   Denote $z_i \seteq \sup\{|z| : h_i(z) \le R\sqrt{\tau_i}\}$ for $i \in [m]$ and $\lambda \seteq (CR/(cr))^{1/\lo}$.
   For $x \in \R^n$, define the vector $a(x) := \left(\l a_1, x\r, \dots, \l a_m, x\r\right)$.
   Then
   \[
      |a(x)_i| \leq z_i\,,\quad \forall x \in B_{\infty}(R), \ i=1,\ldots,m\,.
   \]
   By the Pigeonhole principle, there is a set $S \subseteq B_{\infty}(R)$ with
$|S| \le (2\lambda+1)^m$, and such that for all $x \in B_{\infty}(R)$,
there is $\hat{x} \in S$ with $|\langle a_i, x-\hat{x}\rangle| \le z_i/\lambda$ for every $i \in \{1,\ldots,m\}$.
Thus for each $i$, we have
\begin{align*}
   \frac{h_i(\l a_i, x-\hat{x}\r)}{\sqrt{\tau_i}} \stackrel{\ref{ass:lower-growth}}{\le} \frac{1}{c\lambda^{\lo}} \frac{h_i(z_i)}{\sqrt{\tau_i}} \le r,
\end{align*}
One concludes that $\sqrt{\log \cov(B_{\infty}(R), B_{\infty}(r))} \le \sqrt{\log |S|}$, completing the proof.
\end{proof}

\begin{theorem}
\label{thm:dudley}
   There is a number $K_1 = K_1(L,C,c,\theta,\alpha)$ such that if
   $|\cJ| \geq 4 \log m$, then
   \[
      \gamma_2\left(B_F(2^{\mm}), d_{\rho,\nu}\right) \leq K_1  2^{\mm /2} (\log m)^{3/2} \left(\frac{\|\tau\|_1}{M}\right)^{1/2} \left(\max_{x \in B_F(2^{\mm})} \tilde{F}_{\rho,\nu}(x)\right)^{1/2}
   \]
\end{theorem}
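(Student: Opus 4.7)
The plan is to reduce the claim to a bound on the chaining functional in the $d_\infty$ metric and then apply Dudley's entropy bound (\pref{lem:dudley}) with a carefully chosen two-regime split of scales. By the comparison inequality \eqref{eq:dinfty}, it suffices to show
\[
\gamma_2(B_F(2^\mm), d_\infty) \lesssim 2^{\mm/2} (\log m)^{3/2},
\]
with implicit constant depending on $L, C, c, \theta, \alpha$; the theorem follows immediately by absorbing the $L$ factor into $K_1$. I will bound the right-hand side via the dyadic form of \pref{lem:dudley}, i.e.\ by $\sum_k 2^k \sqrt{\log \cov(B_F(2^\mm), d_\infty, 2^k)}$, by estimating the covering numbers via two different tools, one applicable at large scales and one at small scales.

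For large scales I will use \pref{thm:main-cov} on the top $T := \lceil 4 \log m \rceil$ scales $j \in \{\mm - T, \ldots, \mm - 1\}$ of $\cJ$. Combining $\log \cov(B_F(2^\mm), \sfK_j) \leq \hat{C} 2^{\mm - j} \log m$ with the inclusion $\sfK_j \subseteq B_\infty(2^{j/2})$---which follows from \eqref{eq:good-weights} since $x \in \sfK_j$ forces $h_i(\langle a_i, x\rangle)^2 \leq 2^j w_i^{(j)} \|U_j^{-1} a_i\|_2^2 \leq 2^j \tau_i$---this yields a bound on $\cov(B_F(2^\mm), d_\infty, 2^{j/2})$. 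The key cancellation is that each dyadic scale contributes $\asymp 2^{j/2} \sqrt{\hat{C}\, 2^{\mm - j} \log m} \asymp \sqrt{\hat{C} \log m}\cdot 2^{\mm/2}$ to the Dudley sum, for a total of $O(T \cdot 2^{\mm/2} \sqrt{\log m}) = O(2^{\mm/2} (\log m)^{3/2})$. The hypothesis $|\cJ| \geq 4 \log m$ is used here precisely to guarantee $\mm - T \geq j_{\min}$, so all $T$ scales lie within $\cJ$ and \pref{thm:main-cov} applies.

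For the remaining small scales $r < 2^{(\mm - T)/2}$, I will use the diameter bound $D := \diam(B_F(2^\mm), d_\infty) \lesssim 2^{\mm/2}$ from \pref{lem:BF-diam} applied at $j = \mm - 1$, so that $B_F(2^\mm) \subseteq B_\infty(D)$, and then invoke \pref{lem:small-scales} to obtain $\sqrt{\log \cov(B_F(2^\mm), d_\infty, r)} \lesssim \sqrt{(m/\theta) \log(2D/(cr))}$. A routine geometric-tail estimate on $\int_0^{2^{(\mm - T)/2}}\!\sqrt{\log(D/r)}\,dr$ shows that the small-scale Dudley contribution is dominated by its largest term and is bounded by $O(2^{(\mm - T)/2} \sqrt{mT/\theta})$. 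With $T = \lceil 4\log m\rceil$ we have $2^{-T/2} \leq m^{-2}$, so this contribution is $O(2^{\mm/2}\sqrt{m\log m}/m^2) = o(2^{\mm/2})$, negligible compared to the target.

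The main obstacle is choosing the cutoff $T$ correctly. A direct application of Dudley that uses \pref{thm:main-cov} at every scale of $\cJ$ produces only the weaker bound $O(|\cJ| \cdot 2^{\mm/2}\sqrt{\log m})$, which exceeds the target whenever $|\cJ| \gg \log m$. The resolution is exactly the split above: use the sharp \pref{thm:main-cov} only on a $\Theta(\log m)$-length window near the diameter, and absorb everything below with the crude $\sqrt{m}$-dependent bound from \pref{lem:small-scales}. This works because the lower cutoff $2^{(\mm - T)/2}$ shrinks by a factor $m^{-2}$ relative to the diameter, which more than offsets the $\sqrt{m}$ factor in \pref{lem:small-scales}. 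The hypothesis $|\cJ| \geq 4 \log m$ is exactly what is needed so that this cutoff falls inside $\cJ$ and the sharp covering estimate is available on the top $T$ scales.
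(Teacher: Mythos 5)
Your proposal is correct and follows essentially the same route as the paper's proof: reduce to $\gamma_2(B_F(2^{\mm}), d_\infty)$ via \eqref{eq:dinfty}, apply Dudley's bound, use \pref{thm:main-cov} together with $\sfK_j \subseteq B_\infty(2^{j/2})$ on a window of $\Theta(\log m)$ scales below the diameter, and kill the small-scale tail with \pref{lem:small-scales}, where the $2^{-T/2} \le m^{-2}$ shrinkage absorbs the $\sqrt{m}$ factor. The only cosmetic difference is that the paper splits the Dudley sum into three ranges (also treating the scales above the top of your window explicitly), but this does not change the substance of the argument.
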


\begin{proof}
First, the assumption \eqref{eq:good-weights} guarantees that
\[
   \sfK_j \subseteq B_{\infty}(2^{j/2}) \,,\quad \forall j \in \cJ,
\]
and therefore \pref{thm:main-cov} gives
\begin{equation*}\label{eq:inf-cover}
   \left(\log \cov\left(B_F(2^{\mm}), B_{\infty}(2^{j/2})\right)\right)^{1/2} \leq 2^{(\mm-j)/2} \sqrt{\hat{C} \log m}\,,\quad \forall j \in \cJ\,.
\end{equation*}

We now use Dudley's inequality (\pref{lem:dudley})  to write
\[
   \gamma_2(B_F(2^{\mm}), d_{\infty}) \ls \sum_{j \in \Z } 2^{j/2} \sqrt{\log \cov(B_F(2^{\mm}), B_{\infty}(2^{j/2}))}\,.
\]
Define $\hat{\ell} \seteq \lceil \log_2 \diam(B_F(2^{\ell}), d_{\infty})\rceil$ and consider any $\ell_0 \leq \ell$.
Splitting the sum into three pieces, we use \eqref{eq:inf-cover} to bound the first two:
\begin{align*}
   \sum_{j > \ell} 2^{j/2} \sqrt{\log \cov(B_F(2^{\mm}), B_{\infty}(2^{j/2}))} &\lesssim \diam(B_F(2^{\mm}), d_{\infty}) \sqrt{\hat{C} \log m}\\
   \sum_{j=\ell_0}^{\ell} 2^{j/2} \sqrt{\log \cov(B_F(2^{\mm}), B_{\infty}(2^{j/2}))} &\lesssim 2^{\ell/2} (\ell-\ell_0) \sqrt{\hat{C} \log m}\,.
\end{align*}
For the third piece, use \pref{lem:BF-diam} to bound $B_F(2^{\ell}) \subseteq B_{\infty}(2^{\hat{\ell}})$ so that
\begin{align*}
   \sum_{j < \ell_0} 2^{j/2} \sqrt{\log \cov(B_F(2^{\mm}), B_{\infty}(2^{j/2}))} &\leq
   \sum_{j < \ell_0} 2^{j/2} \sqrt{\log \cov(B_{\infty}(2^{\hat{\mm}}), B_{\infty}(2^{j/2}))}  \\
                                                                                 &\lesssim 2^{\ell_0/2} \sqrt{\frac{m}{\theta} \left((\hat{\ell}-\ell_0) + \log \frac{2}{c}\right)}\,.
\end{align*}
As long as we can choose $\mm_0 \leq \mm - 4 \log m$, we have therefore bounded
\begin{equation}
   \gamma_2(B_F(2^{\mm}), d_{\infty}) \leq K_1 2^{\ell/2} (\log m)^{3/2}\,, \label{eq:dinfbound}
\end{equation}
for some $K_1$ depending on the indicated parameters.
Combining this with \eqref{eq:dinfty} yields the desired bound.
\end{proof}

Finally, we can prove \pref{thm:main}.

\begin{proof}[Proof of \pref{thm:main}]
Let us denote $\cJ \seteq \{\jmin, \dots, \jmax\}$, where 
\begin{align*}
   \jmax \seteq \lceil \log_2 \smax \rceil, \ \jmin \seteq \lfloor \log_2 \smin - 4 \log m \rfloor\,.
\end{align*}
\pref{thm:weight-schemes} yields the existence of an $\alpha$-approximate weight scheme
$\{ w^{(j)} : j \in \cJ \}$ with $\alpha \ls_{L,c,\lo} 1$.
This yields $\tau$ as in \eqref{eq:good-weights} satisfying
\[
   \|\tau\|_1 \ls_{L,c,\lo} n |\cJ| \ls n \log(m\smax/\smin)\,.
\]

Now let us choose an integer $M \seteq C \eps^{-2}\|\tau\|_1 (\log m)^3$ for a sufficiently large constant $C \geq 1$,
and denote $\Omega_j \seteq \{ x \in \R^n : \hat{F}(x) \le 2^j\}$ for $j \in \cJ$.
Then \pref{thm:dudley} gives us the following, in relation to the assumptions of \pref{lem:sparsification-meta}:
\[ \delta \ls 2^{j/2} \log(m) \sqrt{C_1 \log m} \left(\frac{\|\tau\|_1}{M}\right)^{1/2} \leq \eps 2^{j/2}, \]
and \pref{cor:BF-diam} gives
\[ \bdelta \ls 2^{j/2} L\,C_2 \left(\frac{\|\tau\|_1}{M}\right)^{1/2} \leq \frac{\eps 2^{j/2}}{(\log m)^{3/2}}\,. \]

Thus \pref{lem:sparsification-meta} shows that for some constant $c_0 \gtrsim_{L,c,\theta} 1$, with probability at least $1 - e^{-c_0(\log m)^3}|\cJ|$
over the choice of $\bm{\nu} \in [m]^M$,
\begin{equation}\label{eq:Fapprox0}
   \max_{x \in \Omega_j} |\tilde{F}_{\rho,\bm{\nu}}(x)-F(x)| \ls \e 2^j\quad \forall j \in \cJ\,,
\end{equation}
where $\tilde{F}_{\rho,\nu}$ is defined as in \eqref{eq:sparsifier-def}.
Moreover, $\tilde{F}_{\rho,\nu}$ is manifestly $s$-sparse with 
\begin{equation}\label{eq:sparse0}
   s \leq M \ls_{L,c,\theta} \frac{n}{\e^2} \log(m\smax/\smin) (\log m)^3\,.
\end{equation}
Observe that \eqref{eq:Fapprox0} gives
\begin{equation}\label{eq:Fapprox1}
   |\tilde{F}(x)-F(x)| \ls \e F(x)\,,\quad \forall x \in \R^n \textrm{ with } F(x) \in [\smin,\smax]\,,
\end{equation}
with $\tilde{F}=\tilde{F}_{\rho,\nu}$.
We remark that \eqref{eq:sparse0} is slightly worse than the bound claimed in \pref{thm:main}.
We address this at the end of the proof.

\paragraph{The algorithmic argument}
To produce a sparsifier algorithmically, we need to be slightly more careful.
Let $\hat{f}_1, \dots, \hat{f}_m : \R^n \to \Rplus$ be the perturbations guaranteed by \pref{thm:perturb}
with $\delta \asymp_{L,c,\theta} \e \smin/(m^3 \smax)$ so that \eqref{eq:hat-approx} gives
\begin{equation}\label{eq:hat-approx1}
   |F(x)-\hat{F}(x)| \leq \e \smin\,,\quad \forall x \in \R^n \textrm{ with } F(x) \leq \smax\,,
\end{equation}
where $\hat{F}(x) := \sum_{i=1}^m \hat{f}_i(\l a_i, x\r)$.

From a combination of \pref{thm:perturb} and \pref{thm:weightalgo}, we can algorithmically produce an $\alpha$-approximate
weight scheme
$\{ w^{(j)} : j \in \cJ \}$ for $\{\hat{f}_i\}$ and $\{a_i\}$ with $\alpha \ls_{L,c,\lo} 1$, in time 
\[
   \O_{L,c,\lo}\!\left((\nnz(a_1,\ldots,a_m) + n^\omega + m \tEval) \log \frac{m \smax}{\e \smin}\right)\,.
\]
By \pref{fact:hf-props}, the preceding analysis applies to $\hat{f}_1,\ldots,\hat{f}_m$, yielding, with probability
at least $1 - e^{-c_0 (\log m)^3} |\cJ|$, a sparsifier $\tilde{F}$ of $\hat{F}$ with
\[
   |\tilde{F}(x)-\hat{F}(x)| \ls \e \hat{F}(x)\,,\quad \forall x \in \R^n \textrm{ with } \hat{F}(x) \in [\tfrac12 \smin, 2\smax]\,,
\]
Combined with \eqref{eq:hat-approx1} and the triangle inequality, this gives
\[
   |\tilde{F}(x)-F(x)| \ls \e F(x)\,,\quad \forall x \in \R^n \textrm{ with } F(x) \in [\smin, \smax]\,,
\]
completing the proof.

\paragraph{Improving the $m$ dependence}
The preceding arguments yield a sparsity bound of the form \eqref{eq:sparse0}, together
with an approximation guarantee \eqref{eq:Fapprox1}.
Define $\tilde{F}_0 \seteq F$ and $m_0 \seteq m$.
Suppose, inductively, that $\tilde{F}_{i}$ has $m_i$ non-zero terms. Let $\e_{i} > 0$
be a given, and let $\tilde{F}_{i+1}$ be such that \eqref{eq:Fapprox1}
holds with $F = \tilde{F}_i$ and $\tilde{F} = \tilde{F}_{i+1}$, and such that $\tilde{F}_{i+1}$
has at most $m_{i+1}$ non-zero terms, where
\begin{equation}\label{eq:mi}
   m_{i+1} \ls \frac{n}{\e_i^2} \log(m_i\hatsmax/\hatsmin) (\log m_i)^3\,.
\end{equation}
By scaling $\tilde{F}_i$ and adjusting $\e_i$ by a constant factor, we may assume that
\begin{equation}\label{eq:Fapprox2}
   (1-\e_i) \tilde{F}_i(x) \leq \tilde{F}_{i+1}(x) \leq \tilde{F}_i(x)\,,\quad \forall x \in \R^n \textrm{ with } \tilde{F}_i(x) \in [\hatsmin, \hatsmax]\,.
\end{equation}
Then the triangle inequality gives, for any $h \geq 1$, and all $x \in \R^n$ satisfying
$\tilde{F}_0(x),\ldots,\tilde{F}_{h-1}(x) \in [\hatsmin,\hatsmax]$,
\begin{equation}\label{eq:Fh-triangle}
   |F(x)-\tilde{F}_h(x)| \leq \sum_{i=0}^{h-1} |\tilde{F}_i(x) - \tilde{F}_{i+1}(x)| \leq \sum_{i=0}^{h-1} \e_i \tilde{F}_i(x)
   \leq F(x) \sum_{i=0}^{h-1} \e_i\,.
\end{equation}

It is straightforward that one can choose a geometrically increasing sequence $\e_0 < \e_1 < \cdots < \e_{h-1} = \e$ 
such that $m_h \ls  \frac{n}{\e^2} \log (S \frac{\hatsmax}{\hatsmin}) (\log S)^3$, with $S \seteq \frac{n}{\e} \log \frac{2\hatsmax}{\hatsmin}$.
Since the sequence $\{ \e_i \}$ increases geometrically, the last expression in \eqref{eq:Fh-triangle} is at most $C \e F(x)$
for some $C > 1$.

Let us now take $\hatsmin \seteq \frac12 \smin, \hatsmax \seteq \smax$.
We may assume that $\e < 1/(2C)$, meaning that if $F_0(x) = F(x) \in [\smin,\smax]$, then
$\tilde{F}_0(x),\ldots,\tilde{F}_{h-1}(x) \in [\hatsmin,\hatsmax]$, and therefore $\tilde{F}_h$
yields the desired sparsifier.
\end{proof}

\subsection{The $\gamma_p$ losses}
\label{sec:gammap}

First, let us verify that the $\gamma_p$ losses satisfy the assumptions of \pref{thm:main}.
Define a generalization with variable thresholds: For $t \geq 0$,
\begin{equation}\label{eq:gammap-gen}
   \gamma_p(t, z) := \begin{cases} \frac{p}{2}t^{p-2}z^2\,, & |z| \le t \\ |z|^p - (1-\frac{p}{2})t^p\,, & |z| \ge t, \end{cases}
\end{equation}

\begin{lemma}
   \label{lem:approx-concave}
   Suppose $h : \R \to \R$ is continuous, symmetric and differentiable at all but finitely many points. If $h$ is additionally approximately concave in the sense that $h'(x) \leq L\, h'(y)$ for $0 < x < y$, then
   $h$ is $L$-auto-Lipschitz.
\end{lemma}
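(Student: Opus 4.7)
The plan is to verify the auto-Lipschitz inequality $|h(z) - h(z')| \leq L\, h(z - z')$ via the fundamental theorem of calculus combined with a single change of variables that brings the approximate concavity hypothesis directly to bear on the integrand. First I will use the symmetry $h(t) = h(-t)$ to reduce to proving the one-sided bound $h(z) - h(z') \leq L\, h(z - z')$ in the case $0 \leq z' \leq z$: same-sign pairs reduce by negating both coordinates if necessary, while opposite-sign pairs $z \geq 0 \geq z'$ reduce via $|h(z) - h(z')| = |h(z) - h(|z'|)|$ together with $h(z - z') = h(z + |z'|) \geq h\bigl(|z - |z'||\bigr)$, which rests on one-sided monotonicity of $h$ on $[0,\infty)$.

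For the main case $0 \leq z' \leq z$, continuity of $h$ together with differentiability outside a finite set lets me apply the fundamental theorem of calculus, writing $h(z) - h(z') = \int_{z'}^{z} h'(t)\,dt$. Substituting $u = t - z'$ rewrites this as $\int_0^{z-z'} h'(z' + u)\,du$, and since $u < z' + u$ for each $u > 0$, the approximate concavity hypothesis applied pointwise gives $h'(z' + u) \leq L\, h'(u)$. Integrating yields
\[
h(z) - h(z') \;\leq\; L \int_0^{z-z'} h'(u)\,du \;=\; L\bigl(h(z - z') - h(0)\bigr) \;\leq\; L\, h(z - z'),
\]
where the final inequality uses $h(0) \geq 0$, which is automatic in the intended application $h = \sqrt{f_i}$ with $f_i \geq 0$.

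To upgrade from this one-sided bound to the absolute-value statement, and to justify the monotonicity invoked in the opposite-sign reduction, I will argue that $h$ is nondecreasing on $[0,\infty)$. Combining the symmetry $h(t) = h(-t)$ (which at differentiable points forces $h'(-t) = -h'(t)$, and in particular makes $h(0)$ a two-sided reference value for the even function $h$) with the approximate concavity bound forces $h'(t) \geq 0$ for all $t > 0$: any negative value $h'(t_0) < 0$ would, by the hypothesis, propagate to $h'(t) \leq L\,h'(t_0) < 0$ for every $t > t_0$, making $h$ strictly decrease to $-\infty$ in a manner incompatible with the nonnegativity and symmetry present in the applications of interest.

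The main obstacle I anticipate is not the integration step itself — that is essentially a one-line change of variables — but rather the sign bookkeeping and the extraction of one-sided monotonicity of $h$ from the combination of symmetry and approximate concavity. The finitely many non-differentiable points are a routine concern, since the form of the fundamental theorem of calculus invoked above is valid for any continuous function that is differentiable off a finite set and has an integrable derivative on the interval in question.
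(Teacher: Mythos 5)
Your proof is correct and is essentially the paper's argument: both reduce to $0 \le z' \le z$ and bound $h(z)-h(z') = \int_0^{z-z'} h'(z'+u)\,du \le L\int_0^{z-z'}h'(u)\,du = L\,h(z-z')$ by comparing the derivative at $z'+u$ with the derivative at $u$; your additional care with the symmetry reduction, the sign of $h(0)$, and monotonicity on $[0,\infty)$ fills in steps that the paper's one-line proof leaves implicit (and which genuinely require $h \ge 0$, as in the intended application $h=\sqrt{f_i}$). One remark: both you and the paper's own proof use the hypothesis in the form $h'(y)\le L\,h'(x)$ for $0<x<y$ (derivative at the larger point controlled by that at the smaller point), which is the direction consistent with concavity and with the application in the $\gamma_p$ lemma; the inequality as literally printed in the lemma statement has the roles of $x$ and $y$ swapped, so you have correctly read it as the paper intends rather than as written.
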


\begin{proof}
   Consider $0 < x < y$. Then,
   \[
      h(y) - h(x) = \int_x^y h'(z)\,dz \leq L \int_0^{y-x} h'(z)\,dz = L \, h(y-x)\,.\qedhere
   \]
\end{proof}

\begin{lemma}\label{lem:gammap-nice}
   For every $p \in (0,2]$ and $t > 0$, the function $z \mapsto \gamma_p(t,z)^{1/2}$ is $1$-auto-Lipschitz \ref{item:lipschitz} and lower $p/2$-homogeneous \ref{item:lower-growth}
   with constant $1$.
\end{lemma}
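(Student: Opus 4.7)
The plan is to verify each property separately. Since $\gamma_p(t, \cdot)$ is symmetric, write $h(z) := \gamma_p(t, z)^{1/2}$ and restrict to $z \geq 0$ throughout.

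For lower $p/2$-homogeneity, I must verify $\gamma_p(t, \lambda z) \geq \lambda^p \gamma_p(t, z)$ for $\lambda \geq 1$, which I will argue by cases based on whether $z$ and $\lambda z$ fall below or above the threshold $t$. When both are at most $t$, both sides are scalings of $(p/2) t^{p-2} z^2$ by $\lambda^2$ versus $\lambda^p$, and $\lambda^2 \geq \lambda^p$ (since $\lambda \geq 1$ and $p \leq 2$) closes the case. When $z \geq t$, both points lie on the second branch of \eqref{eq:gammap-gen} and the inequality reduces to $(\lambda^p - 1)(1 - p/2) t^p \geq 0$, which is immediate. The only case requiring thought is the mixed one $z < t \leq \lambda z$: setting $u := \lambda z \geq t$, the bound $\lambda \geq u/t$ combined with $p \leq 2$ yields $\lambda^{p-2} \leq (u/t)^{p-2}$, which is exactly what is needed to collapse the desired inequality to the elementary $u \geq t$.

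For $1$-auto-Lipschitzness, I will invoke \pref{lem:approx-concave} with $L = 1$, whose hypothesis in this regime reduces to $h'$ being non-increasing on $(0, \infty)$. The function $h$ is continuous and symmetric, differentiable away from the origin: on $(0, t]$ it is the linear function $\sqrt{p/2}\,t^{(p-2)/2} z$, and on $[t, \infty)$ it equals $(z^p - (1-p/2) t^p)^{1/2}$, with left and right derivatives at $z = t$ both equal to $\sqrt{p/2}\, t^{(p-2)/2}$. It remains to show that $h'(z) = \tfrac{p z^{p-1}}{2(z^p - (1-p/2)t^p)^{1/2}}$ is non-increasing on $[t, \infty)$. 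A direct computation gives
\[
h''(z) = \frac{p z^{p-2}}{4(z^p - (1-p/2)t^p)^{3/2}} \left[(p-2)z^p - 2(p-1)(1-p/2)t^p\right],
\]
and one checks that the bracketed expression is $\leq 0$ for every $z \geq t$ and $p \in (0,2]$: for $p \geq 1$ both terms are nonpositive, and for $p < 1$ the bound $z^p \geq t^p$ reduces the claim to the elementary $(2-p) \geq (1-p)(2-p)$, i.e., $p \geq 0$.

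The only mildly delicate step is the monotonicity of $h'$ on the nonlinear branch, but this is a short one-variable calculus exercise; in fact the $h'' \leq 0$ computation above combined with the linearity on $(0,t]$ and the derivative-matching at $z = t$ shows that $h$ is concave on all of $(0, \infty)$, whence the hypothesis of \pref{lem:approx-concave} is immediate.
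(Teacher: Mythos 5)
Your proof is correct and follows essentially the same route as the paper: concavity of $h$ on $(0,\infty)$ via the second-derivative computation on the branch $z>t$ (invoking \pref{lem:approx-concave}), plus a three-case analysis for lower $p/2$-homogeneity. The only cosmetic difference is the mixed case $z < t \le \lambda z$, which the paper handles by chaining $h(\lambda z)^2/h(z)^2 = \bigl(h(\lambda z)^2/h(t)^2\bigr)\bigl(h(t)^2/h(z)^2\bigr)$ through the threshold while you compute directly; both work, and your factored form of $h''$ is the one consistent with the sign in \eqref{eq:gammap-gen}.
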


\begin{proof}
   Fix $p \in (0,2]$ and $t > 0$, and define $h(z) \seteq \gamma_p(t,z)^{1/2}$.
By \pref{lem:approx-concave}, to show that $h$ is $1$-auto-Lipschitz, it suffices to show that $h'(z)$ is non-increasing for $z > 0$.
This is clear for $z < t$, since $h'(z)$ is independent of $z$. For $z > t$, we calculate
\begin{align*}
   h''(z) &= \frac{p(p-1)z^{p-2}}{2(z^p + (1-p/2)t^p)^{1/2}} - \frac{p^2 z^{2p-2}}{4(z^p + (1-p/2)t^p)^{3/2}} \\
          &= \frac{p (2-p)z^{p-2}}{4(z^p + (1-p/2) t^p)^{3/2}} \left((p-1) t^p - z^p\right) \leq 0\,.
\end{align*}

To see that $h(z)^2$ is lower $p$-homogeneous with constant $1$, we first check this for $x < y \le t$. Then, $h(y)/h(x) = (y/x)^2 \ge (y/x)^p$. If $t \le x < y$, then
\[ \frac{h(y)^2}{h(x)^2} = \frac{y^p - (1-\frac{p}{2})t^p}{x^p - (1-\frac{p}{2})t^p} \ge (y/x)^p. \]
For $x < t < y$, we write
\[ \frac{h(y)^2}{h(x)^2} = \frac{h(y)^2}{h(t)^2} \cdot \frac{h(t)^2}{h(x)^2} \ge (y/t)^p (t/x)^p = (y/x)^p.\qedhere \]
\end{proof}

\subsubsection{The Huber Loss}
\label{sec:huberproofs}

Let us now prove \pref{lem:huber} and \pref{cor:huber}.

\begin{proof}[Proof of \pref{lem:huber}]
   By scaling, we may assume that $w_{\max}=1$.
   Accordingly, let us define $\smin \seteq w_{\min}$ and $\smax \seteq 6 m^2$.
   Consider $x \in \R^n$ with $F(x) \leq w_{\min}$.
   In that case, $\gamma_1(\langle a_i,x\rangle) \leq 1$,
   and therefore $|\langle a_i,x\rangle| \leq 1$ for all $i=1,\ldots,m$.
   In particular, this implies that $\gamma_1(\langle a_i,x\rangle) = \langle a_i,x\rangle^2/2$ for each $i$
   so that $F(\lambda x) = \lambda^2 F(x)$ and $\tilde{F}(\lambda x) = \lambda^2 \tilde{F}(x)$ for $\lambda \seteq (F(x)/w_{\min})^{-1/2}$.
   If we take $y \seteq \lambda x$, then $F(y) = w_{\min} \in [\smin,\smax]$, so it holds that $|F(y)-\tilde{F}(y)| \leq \e F(y)$.
   Scaling by $\lambda$ recovers the same guarantee for $x$.

   \smallskip

   Now consider $x \in \R^n$ with $F(x) \geq \smax$.
   Note that $|\langle a_i,y\rangle| - \tfrac12 \leq \gamma_1(\langle a_i,y\rangle) \leq |\langle a_i,y\rangle|$ holds for
   any $y \in \R^n$ and $i \in \{1,\ldots,m\}$, 
   therefore
   \[
      F(x) \leq \sum_{i=1}^m |\langle a_i,x\rangle| \leq F(x) + m/2\,.
   \]

   Denote $\lambda \seteq \smax/(2 F(x))$ and $y \seteq \lambda x$, and note that
   \begin{align*}
      F(y) = \sum_{i=1}^m \gamma_1(\langle a_i,y\rangle) \leq \lambda \sum_{i=1}^m |\langle a_i,x\rangle| \leq \lambda(F(x) + m/2) &\leq \smax \\
      F(y) \geq w_{\min} \left(- m/2 + \lambda \sum_{i=1}^m |\langle a_i,x\rangle|\right) \geq w_{\min} \left(\lambda F(x) - m/2\right) &\geq \smin\,.
   \end{align*}
   Therefore $|F(y)-\tilde{F}(y)| \leq \e F(y)$ by assumption.

   Now use the fact that $|\gamma_1(\beta z)-\beta \gamma_1(z)| \leq \beta$ for all $\beta > 1$ and $z \in \R$, implying
   \begin{align*}
      |F(x)-\lambda^{-1} F(y)| &\leq \frac{1}{\lambda} \sum_{i=1}^m w_i \leq \frac{m}{\lambda}\,, \\
      |\tilde{F}(x)-\lambda^{-1} \tilde{F}(y)| &\leq \frac{1}{\lambda} \sum_{i=1}^m \tilde{w}_i \leq \frac{m}{\lambda}\,.
   \end{align*}
   The triangle inequality gives
   \[
      |F(x)-\tilde{F}(x)| \leq \frac{2 m}{\lambda} + \frac{|F(y)-\tilde{F}(y)|}{\lambda} 
      \leq \frac{4m}{\smax} F(x) + \frac{\e}{\lambda} F(y) \leq 2 \e F(x)\,,
   \]
   where we have used $\smax \geq 4 m^2$ and $\e > 1/m$.
\end{proof}

\begin{proof}[Proof of \pref{cor:huber}]
Note that the sparsifier $\tilde{F} = \tilde{F}_{\rho,\nu}$ satisfying the guarantee \eqref{eq:Fapprox1}
is produced via importance sampling with respect to the distribution specified by $\rho \in \Rplus^m$,
and if the $i$th term has probability $\rho_i$ to be sampled, then it is weighted by $(\rho_i M)^{-1}$
(recalling \eqref{eq:sparsifier-def}).
Moreover, $\tilde{F}_{\rho,\bm{\nu}}$ satisfies \eqref{eq:Fapprox1} with probability $1-o(1)$ as $m \to \infty$.

By a union bound, $\Pr(\min_{j \in [M]} \rho_{\bm{\nu}_j} \leq \delta) \leq \delta Mm$.
Therefore with probability at least $1/2$, it holds $\min_{j \in [M]} \rho_{\bm{\nu}_j} \geq \frac{1}{2M m}$.
If we start with a uniform-weighted function $F(x)=\gamma_1(\langle a_1,x\rangle)+\cdots+\gamma_1(\langle a_m,x\rangle)$,
then with probability at least $1/2$, the maximum weight in $\tilde{F}$ will be $2m$.

Thus if $\tilde{F}$ satisfies \eqref{eq:Fapprox1} with $\smin = 1/2$ and $\smax \seteq 8m^3$, then \pref{lem:huber}
shows that that $\tilde{F}$ is a genuine $O(\e)$-approximation to $F$. Now the sparsity guarantee of \pref{thm:main} (or see
\eqref{eq:sparse0}) completes the proof.
\end{proof}

\subsubsection{The Tukey Loss}
\label{sec:tukey}

\newcommand{\tukey}{\mathsf{T}}

Let us denote the Tukey loss $\tukey(z) := \min\{z^2, 1\}$, and use $h(z) \seteq \tukey(z)^{1/2} = \min\{|z|, 1\}$.
We use the notation from the beginning of the section.

\begin{theorem}
\label{thm:tukey}
Suppose $a_1, \dots, a_m \in \R^n$ satisfy $\|a_i\|_2 \le n^{O(1)}$ for $i=1,\ldots,m$, and denote $F(x) \seteq \sum_{i=1}^m \tukey(\l a_i, x\r)$. Then there are weights $w \in \Rplus^m$ with $|\supp(w)| \le n^{1+o(1)}\eps^{-2}(\log m)^4$, and such that
\begin{equation}\label{eq:tukey-approx}
(1-\eps)F(x) \le \sum_{i=1}^m w_i \tukey(\l a_i, x\r) \le (1+\eps)F(x)\,,\quad \forall x \in \R^n \text{ with } \|x\|_2 \le n^{O(1)}\,.
\end{equation}
The weights $w$ can be computed in time $\tilde{O}(\nnz(a_1,\ldots,a_m) + n^\omega + m)$ with high probability.
\end{theorem}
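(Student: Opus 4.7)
My plan is to reduce to \pref{thm:main} by replacing $\tukey$ with the proxy $\hat f(z) \seteq \min\{|z|^p, z^2\}$ for $p \seteq c_0 \eps / \log n$, where $c_0 > 0$ is a sufficiently small absolute constant chosen so that $n^{2 C_0 p} \leq 1 + \eps$, with $C_0$ an upper bound on the exponents appearing in $\|a_i\|_2, \|x\|_2 \le n^{C_0}$. I first verify that $\hat h \seteq \hat f^{1/2} = \min\{|z|^{p/2}, |z|\}$ is $1$-auto-Lipschitz \ref{item:lipschitz} and lower $(p/2)$-homogeneous \ref{item:lower-growth} with constant $1$. The only non-trivial case for auto-Lipschitzness is $0 \leq y \leq 1 \leq x$ with $x - y \geq 1$; here, since $\hat h'(t) = (p/2) t^{p/2-1} \le 1$ for $t \geq x-y \geq 1$, the mean value theorem gives $x^{p/2} - (x-y)^{p/2} \leq y$, so $\hat h(x) - \hat h(y) = x^{p/2} - y \leq (x-y)^{p/2} = \hat h(x-y)$. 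It is crucial to have $L = c = 1$, because the implicit constant in \pref{thm:main} is of the form $(L/c)^{O(\theta^{-2})}$, which with $\theta = p/2 = O(\eps / \log n)$ would blow up super-polynomially if $L/c > 1$.

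Next I establish pointwise closeness: for $|z| \leq 1$, $\hat f(z) = z^2 = \tukey(z)$; for $1 \leq |z| \leq n^{2 C_0}$, $\hat f(z) = |z|^p \leq n^{2 C_0 p} \leq 1 + \eps = (1+\eps) \tukey(z)$. Summing, $F(x) \leq \hat F(x) \leq (1+\eps) F(x)$ for every $\|x\|_2 \leq n^{C_0}$. Applying \pref{thm:main} to $\hat f$ with $\smax \seteq 2m$ and $\smin$ polynomially small in $n$, and noting that $\tEval = O(1)$ for $\tukey$, I obtain nonnegative weights $w \in \Rplus^m$ with support at most $n^{1+o(1)} \eps^{-2} (\log m)^4$ (the $n^{o(1)}$ absorbs the $\polylog(n,\eps^{-1})$ factors from $\log S$) and such that $|\tilde F_{\hat f}(x) - \hat F(x)| \leq \eps \hat F(x)$ whenever $\hat F(x) \in [\smin, \smax]$, where $\tilde F_{\hat f}(x) \seteq \sum_i w_i \hat f(\l a_i, x\r)$. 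The runtime claim follows directly from the algorithmic part of \pref{thm:main} since $\log(\smax/\smin) = O(\log(n/\e))$.

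I then set $\tilde F(x) \seteq \sum_i w_i \tukey(\l a_i, x\r)$, i.e.\ the same weights but with $\tukey$ in place of $\hat f$. Since $w \geq 0$ and $\tukey(z) \leq \hat f(z) \leq (1+\eps) \tukey(z)$ pointwise in the range, $\tilde F(x) \leq \tilde F_{\hat f}(x) \leq (1+\eps) \tilde F(x)$. Chaining this with the approximation $|\tilde F_{\hat f}(x) - \hat F(x)| \leq \eps \hat F(x)$ and $\hat F(x) \in [F(x), (1+\eps) F(x)]$ yields $|\tilde F(x) - F(x)| \leq O(\eps) F(x)$ whenever $F(x) \in [\smin, \smax]$, and absorbing the constant into $\eps$ yields \eqref{eq:tukey-approx}.

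The remaining issue, and the main obstacle in the proof, is handling the regime $F(x) < \smin$, where \pref{thm:main} provides no direct guarantee. Here the key observation is that $F(x) < \smin \leq 1$ forces every $|\l a_i, x\r| < 1$ (any larger term would contribute $\tukey = 1 > F(x)$ by itself), so $F(x) = \|Ax\|_2^2$, $\hat f(\l a_i, x\r) = \l a_i, x\r^2$, and $\tilde F(x) = \|W^{1/2} A x\|_2^2$. At the smallest scales in the weight scheme constructed in \pref{sec:weight-schemes}, the approximate-weight equation reduces (in the quadratic regime of $\hat f$) to $w_i^{(j)} \asymp 1/2^j$, so the sampling probabilities $\rho_i \propto \max_j w_i^{(j)} \|U_j^{-1} a_i\|_2^2$ dominate the standard $\ell_2$ leverage scores $\sigma_i(A)$. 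Standard matrix Chernoff arguments then show that importance sampling with $\tilde O(n\eps^{-2})$ samples produces a spectral approximation $W^{1/2} A \approx_\eps A$, giving $(1 \pm \eps) \|Ax\|_2^2$ uniformly in $x$, and completing the proof.
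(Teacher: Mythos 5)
Your reduction has a genuine quantitative gap that breaks the claimed sparsity bound. You choose the proxy exponent $p \asymp \eps/\log n$ so that $\hat f(z)=\min\{|z|^p,z^2\}$ is pointwise $(1+\eps)$-close to $\tukey(z)$ over the whole dynamic range $|z|\le n^{O(1)}$, and you argue the hidden constant in \pref{thm:main} is harmless because it is ``$(L/c)^{O(\lo^{-2})}$'' and you have $L=c=1$. That reading of the informal remark is incorrect: the constants in the machinery behind \pref{thm:main} blow up in $\lo$ itself, not only in $L/c$. Concretely, the weight-scheme parameter from \pref{thm:weight-schemes} is $\alpha \asymp \tfrac{1}{1-\delta}$ with $\delta = 1-\lo$ for a lower $\lo$-homogeneous, upper $2$-homogeneous loss, so $\alpha \gtrsim 1/\lo$ even when $c=C=1$ (the contraction rate of the Lewis-weight iteration degrades as $\lo\to 0$); then \pref{lem:contain1} uses $c_0=(c^4/(4\alpha))^{1/(2\lo)}$ and \pref{lem:contain2} uses $C_0 \ge (2\alpha C/c)^{1/\lo}$, so the covering constant $C_1 = 4L^2C_0/c_0^2$ in the proof of \pref{thm:main-cov} is at least $4^{\Omega(1/\lo)}$. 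With $\lo = p/2 \asymp \eps/\log n$ this is $n^{\Omega(1/\eps)}$, and the resulting sample count $M$ is $n^{1+\Omega(1/\eps)}$, not $n^{1+o(1)}\eps^{-2}(\log m)^4$. This tension is unavoidable in your scheme: pointwise multiplicative closeness forces $p \lesssim \eps/\log n$, while the constants force $p \gtrsim 1/\log n$ up to lower-order corrections.

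The paper's proof escapes this by \emph{not} requiring the proxy to approximate $\tukey$ pointwise. It takes $\bar h(z)=\min\{|z|,|z|^{\eta}\}$ with $\eta$ much larger (ultimately $\eta\asymp(\log n)^{-1/3}$), samples the genuine Tukey losses with probabilities built from the weight scheme for $\bar F$, and uses the proxy only to control the chaining metric: since $\tukey^{1/2}$ still satisfies \ref{ass:lipschitz}, the bound \eqref{eq:dbound} holds with $d_\infty \le \bar d_\infty$, and the containment $\Omega\cap B_F(2^{\mm})\subseteq B_{\bar F}(n^{O(\eta)}2^{\mm})$ lets it import \pref{lem:BF-diam} and \eqref{eq:dinfbound} for $\bar F$ at the cost of a factor $n^{O(\eta)}$ in the ball radius rather than in the function values. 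Balancing $\exp(O(\eta^{-2}))$ against $n^{O(\eta)}$ yields the $n^{o(1)}$. (Your verification that $\hat f^{1/2}$ is $1$-auto-Lipschitz and lower $(p/2)$-homogeneous is correct, as is the idea of transferring nonnegative weights across a pointwise sandwich; and your treatment of the regime $F(x)<\smin$ via the purely quadratic behavior and leverage-score domination is a reasonable sketch. But as assembled, the quantitative core of the argument does not deliver the stated bound.)
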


The remainder of the section is sketching a proof of \pref{thm:tukey}.
Note that the Tukey loss satisfies \ref{item:lipschitz} and \ref{item:symmetric}--\ref{item:upper-growth}, but not \ref{item:lower-growth}, which
is why \pref{thm:main} is not directly applicable.
This motivates us to define the function $\bar{h}(x) := \min\{|x|, |x|^\eta\}$ for some small $\eta > 0$ to be chosen later, and
$\bar{f_i}(x) \seteq \bar{h}(\langle a_i,x\rangle)^2$ for $i=1,\ldots,m$, along with 
$\bar{F}(x) \seteq \sum_{i=1}^m \bar{f_i}(\l a_i, x\r)$.

\smallskip

By \pref{thm:weight-schemes},
for some $\alpha = \alpha(\eta)$, 
there is an $\alpha$-approximate weight scheme $\{ w^{(j)} : j \in \cJ \}$ for $\{\bar{f_i}\}$ and $\{a_i\}$, where
$\cJ \seteq \{ j \in \Z : |j| \leq C \log m \}$, with the constant $C > 1$ chosen sufficiently large.
Define $\tau_i \seteq \max_{j\in J} w_i^{(j)} \|U_j^{-1} a_i\|_2^2$ for $i=1,\ldots,m$ so that $\|\tau\|_1 \ls n |\cJ| \ls n \log m$,
along with sampling weights $\rho_i \seteq \tau_i/\|\tau\|_1$ for $i=1,\ldots,m$. 
\smallskip

Take $\Omega \seteq \{x : \|x\|_2 \le n^{O(1)}\}$, with the notation of \pref{sec:subgaussian} and $\f_i(x) \seteq \tukey(\langle a_i,x\rangle)$, and $\tilde{F}(x) = \sum_{i=1}^M \frac{1}{M\rho_{\nu_i}} \tukey(\langle a_{\nu_i}, x\rangle)$ as constructed in \pref{sec:subgaussian}. Consider $\mm \in \Z$ with $|\mm| \le 2\log m$.
We can estimate (as in \eqref{eq:dbound}): for any $x,y \in \Omega \cap B_F(2^{\mm})$,

\[ d_{\rho,\nu}(x, y) \ls d_{\infty}(x,y) \left(\frac{\|\tau\|_1}{M}\right)^{1/2} \left(\max_{x \in \Omega \cap B_F(2^{\mm})} \tilde{F}_{\rho,\nu}(x)\right)^{1/2} \enspace \text{ for } \enspace d_{\infty}(x,y) \seteq \max_{i \in [m]} \frac{h(\langle a_i, x-y\rangle)}{\sqrt{\tau_i}}, \]
One should observe that \eqref{eq:dbound} only requires \ref{ass:lipschitz}, which $\tukey(z)^{1/2}$ satisfies.

Let us also define the distance $\bar{d}_{\infty}(x, y) \seteq \max_{i \in [m]} \frac{\bar{h}(\langle a_i, x-y\rangle)}{\sqrt{\tau_i}} \ge d_\infty(x, y)$.
Note that $\Omega \cap B_F(2^{\mm}) \subseteq B_{\bar{F}}(n^{O(\eta)}2^{\mm})$. Therefore,
\begin{align*}
\diam(\Omega \cap B_F(2^{\mm}), d_{\rho,\nu}) &\ls \left(\frac{\|\tau\|_1}{M}\right)^{1/2} \left(\max_{x \in \Omega \cap B_F(2^{\mm})} \tilde{F}_{\rho,\nu}(x)\right)^{1/2} \cdot \diam(\Omega \cap B_F(2^{\mm}), \bar{d}_\infty) \\
&\leq \left(\frac{\|\tau\|_1}{M}\right)^{1/2} \left(\max_{x \in \Omega \cap B_F(2^{\mm})} \tilde{F}_{\rho,\nu}(x)\right)^{1/2} \cdot \diam(B_{\bar{F}}(n^{O(\eta)}2^{\mm}), \bar{d}_\infty) \\
&\leq \left(\frac{\|\tau\|_1}{M}\right)^{1/2} \left(\max_{x \in \Omega \cap B_F(2^{\mm})} \tilde{F}_{\rho,\nu}(x)\right)^{1/2} \cdot C(\eta) n^{O(\eta)}2^{\mm/2}\,,
\end{align*}
where the last inequality is from \pref{lem:BF-diam}, and $C(\eta)$ is a number depending on $\eta > 0$.

Thus for some choice $M \ls C(\eta)\eps^{-2}n^{1+O(\eta)} (\log m)^4$, we have
\[ \diam(\Omega \cap B_F(2^{\mm}), d_{\rho,\nu}) \ls \frac{\eps 2^{\mm/2}}{(\log m)^{3/2}} \left(\max_{x \in \Omega \cap B_F(2^{\mm})} \tilde{F}_{\rho,\nu}(x)\right)^{1/2}. \]

Similarly,
\begin{align*}
\gamma_2(\Omega \cap B_F(2^{\mm}), d_{\rho,\nu}) &\ls\  \left(\frac{\|\tau\|_1}{M}\right)^{1/2} \left(\max_{x \in \Omega \cap B_F(2^{\mm})} \tilde{F}_{\rho,\nu}(x)\right)^{1/2} \cdot \gamma_2(\Omega \cap B_F(2^{\mm}), \bar{d}_\infty) \\
&\leq\  \left(\frac{\|\tau\|_1}{M}\right)^{1/2} \left(\max_{x \in \Omega \cap B_F(2^{\mm})} \tilde{F}_{\rho,\nu}(x)\right)^{1/2} \cdot \gamma_2(B_{\bar{F}}(n^{O(\eta)}2^{\mm}), \bar{d}_\infty) \\
&\stackrel{\mathclap{\eqref{eq:dinfbound}}}{\leq}\ \left(\frac{\|\tau\|_1}{M}\right)^{1/2} \left(\max_{x \in \Omega \cap B_F(2^{\mm})} \tilde{F}_{\rho,\nu}(x)\right)^{1/2} \cdot C(\eta) n^{O(\eta)}2^{\mm/2} (\log m)^{3/2} \\
&\leq\ \eps 2^{\mm/2} \left(\max_{x \in \Omega \cap B_F(2^{\mm})} \tilde{F}_{\rho,\nu}(x)\right)^{1/2},
\end{align*}
again for
$M \ls C(\eta)\eps^{-2}n^{1+O(\eta)} (\log m)^4$ chosen sufficiently large. This implies $|F(x) - \tilde{F}(x)| \le \eps \cdot F(x)$ for all $x \in \Omega \cap B_F(2^{\mm})$. This holds for all $|\ell| \le 2 \log m$, so $|F(x) - \tilde{F}(x)| \le \eps \cdot F(x)$ for all $x \in \Omega$.

Thus applying \pref{lem:sparsification-meta} shows that $\tilde{F}$ satisfies \eqref{eq:tukey-approx}.
The claimed result now follows by choosing $\eta = o(1)$ as $n \to \infty$.

We remark that if the proofs of \pref{lem:BF-diam} and \eqref{eq:dinfbound} are unraveled, one can check that $C(\eta) \ls \exp(O(\eta^{-2}))$, so the optimal choice is $\eta \asymp (\log n)^{-1/3}$, yielding sparsifiers of size $\e^{-2} n^{1+O((\log n)^{-1/3})}(\log m)^4$.

\section{Algorithms for generalized linear models}
\label{sec:app}

\subsection{Optimizing generalized linear models}
\label{subsec:optglm}

Here we describe an algorithm that optimizes a GLM (recall \eqref{eq:glm}) under certain assumptions on its divergence. The algorithm reduces optimizing a GLM to a few queries to what we call a \emph{sparse GLM oracle} that optimizes convex GLMs induced by a limited number of functions plus a linear term. We start by defining convex GLMs (\pref{def:convex-glm}), a sparse GLM oracle (\pref{def:glmoracle}), the assumptions on the functions we consider (\pref{ass:refine}), and giving the main theorem on the algorithm for optimizing convex GLMs with a sparse GLM oracle (\pref{thm:optglm}).

\begin{definition}[Convex GLM]
\label{def:convex-glm}
We call a a family $\cF = \{f_1,\ldots,f_m\}$ of differentiable, convex functions, along with vectors $a_1,\ldots,a_m \in \R^n$ and $b \in \R^m$ a \emph{convex GLM}. We let $F(x) \defeq \sum_{i=1}^m f_i(\langle a_i, x \rangle - b_i)$ and let $F^* \seteq \min_{x \in \R^n} F(x)$.
\end{definition}

\begin{definition}[Sparse GLM oracle]
\label{def:glmoracle}
We call an algorithm a \emph{sparse GLM oracle} for a convex GLM (\pref{def:convex-glm}) if when given as input $\e > 0$, $w \in \Rplus^m$, and $y, x_{\mathrm{in}} \in \R^n$, it outputs a vector $x_{\mathrm{out}} \in \R^n$ such that
\begin{align*} &G(x_{\mathrm{out}}) - G(x^*) \le \eps(G(x_{\mathrm{in}}) - G(x^*))\,, \\ \text{ where } ~&G(x) := \langle y, x\rangle + \sum_{i=1}^m w_i f_i(\l a_i, x\r - b_i), \enspace \text{ and } \enspace x^* \seteq \argmin_{x \in \R^n} G(x) \,. \end{align*}
We use $\TGLM^{\cF}(s, \eps)$ to denote the worst-case runtime of such a GLM oracle on inputs where $w$ has at most $s$ nonzero entries, over all convex GLMs with a given family of functions $\cF = \{ f_1,\ldots, f_m \}$.
\end{definition}

\begin{assumption}[Approximate Divergence Sparsifiability]
   \label{ass:refine}
   Assume that, for some parameters, $\theta > 1$ and $\alpha,c,L > 0$, the function $f : \R \to \R$ is convex, differentiable,
   and satisfies the following property: For every $x_0 \in \R$, there is a convex function $r_{x_0} : \R \to \R$ that
   can be evaluated in $\O(\tEval)$ time, and such that
   \begin{itemize}
      \item $r_{x_0}(\Delta) \le D_{x_0}^{f}(x_0 + \Delta) \le \alpha \cdot r_{x_0}(\Delta)$ for all $\Delta \in \R$,
      \item $r_{x_0}^{1/2}$ is $L$-auto-Lipschitz (property \ref{item:lipschitz}).
      \item $r_{x_0}$ is lower $\lo$-homogeneous with constant $c$ (property \ref{item:lower-growth}).
   \end{itemize}
\end{assumption}

\begin{theorem}
\label{thm:optglm}
Consider $x^{(0)} \in \R^n$ and any convex GLM (\pref{def:convex-glm}) where each $f_i \in \cF$ satisfies \pref{ass:refine} with uniform constants $\theta > 1, \alpha,c,L > 0$.
Provided an upper bound $F(x^{(0)}) - F^* \leq \Gamma$, with high probability $\SolveGLM$ (\pref{alg:glm}) outputs a vector $y \in \R^n$ satisfying
$F(y) - F^* \leq \delta$ in time
\[ \O\left(\eta^{-1}(\nnz(A) + \TGLM^{\cF}(\O_{\alpha,c,\lo,L}(n), O(\eta/\alpha)) + m\tEval) \log\left(\Gamma/\delta\right)\right), \]
where $\eta := (10\alpha^2/c)^{-1/(\lo-1)}$, and $\tEval$ is the worst-case time needed to evaluate a function $f_i$.
\end{theorem}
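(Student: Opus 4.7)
\noindent The plan is to implement the iterative refinement scheme of \pref{lem:refine} inside an outer loop that maintains iterates $x^{(0)}, x^{(1)}, \dots, x^{(K)}$ with $K = O(\eta^{-1}\log(\Gamma/\delta))$. At iteration $k$, given $x^{(k)}$, the algorithm constructs a sparse surrogate $r$ for the Bregman divergence $D^{F}_{x^{(k)}}$ using \pref{thm:main} applied to the divergence approximators from \pref{ass:refine}, then invokes the sparse GLM oracle of \pref{def:glmoracle} to approximately minimize $T^F_{x^{(k)}}(x^{(k)}+\Delta) + r(\Delta)$ in $\Delta$. By the contraction guarantee of \pref{lem:refine}, one such step decreases $F(x^{(k)}) - F^*$ by a multiplicative factor of roughly $(1-\eta)$, so $K$ outer iterations suffice.

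For the per-iteration construction, define $\tilde{f}_i(z) \seteq r_{\l a_i, x^{(k)}\r - b_i}(z)$, where $r_{\cdot}$ denotes the surrogate for $f_i$ provided by \pref{ass:refine}. Since each $\tilde{f}_i^{1/2}$ is $L$-auto-Lipschitz and each $\tilde{f}_i$ is lower $\lo$-homogeneous with constant $c$, an application of \pref{thm:main} to $\sum_i \tilde{f}_i(\l a_i, \Delta\r)$ produces weights $\tilde{w} \in \Rplus^m$ with $|\supp(\tilde{w})| \le \O_{\alpha,c,\lo,L}(n)$ such that the sparse sum $r(\Delta) \seteq \sum_i \tilde{w}_i \tilde{f}_i(\l a_i, \Delta\r)$ is an $O(\eta)$-accurate approximation of $\sum_i \tilde{f}_i(\l a_i, \Delta\r)$ on a prescribed scale window $[\smin,\smax]$. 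Combined with \pref{ass:refine}, this gives the hypotheses of \pref{lem:refine} with a condition number $\alpha' \asymp \alpha$ (and the same lower homogeneity constants, since rescaling $r$ preserves these). A direct expansion of Bregman divergences shows that $T^F_{x^{(k)}}(x^{(k)}+\Delta) + r(\Delta)$ equals, up to an additive constant, $\l y, x^{(k)}+\Delta\r + \sum_i \tilde{w}_i f_i(\l a_i, x^{(k)}+\Delta\r - b_i)$, where $y \seteq \sum_i (1-\tilde{w}_i)\, f_i'(\l a_i, x^{(k)}\r - b_i)\, a_i$. This is precisely the form that \pref{def:glmoracle} is designed to handle, so a sufficiently accurate approximate minimizer can be computed in time $\TGLM^{\cF}(\O_{\alpha,c,\lo,L}(n), O(\eta/\alpha))$.

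The main technical obstacles are (a) the gap between the approximate minimization supplied by the oracle and the exact minimization postulated in \pref{lem:refine}, and (b) the restricted scale window on which \pref{thm:main} certifies approximation. For (a), taking oracle accuracy $\eps' = \Theta(\eta/\alpha)$ and re-tracing the proof of \pref{lem:refine} with the weakened hypothesis $G(\hat\Delta) \le G(x^*) + \eps'(G(x_{\mathrm{in}}) - G(x^*))$, where $G(x_{\mathrm{in}}) = F(x^{(k)})$ and $G(x^*) \le F^*$, gives the modified contraction $F(x^{(k+1)}) - F^* \le (1-\Omega(\eta))(F(x^{(k)}) - F^*)$, which is still enough for $O(\eta^{-1}\log(\Gamma/\delta))$ iterations to reach accuracy $\delta$. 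For (b), setting $\smin$ polynomially small in $\delta$ and $\smax$ polynomially large in $\Gamma$ and the relevant norms of $A$ and $b$ incurs only logarithmic factors in the sparsity bound of \pref{thm:main}; one also verifies that the step $\hat\Delta$ returned by the oracle cannot escape the window at scale $k$, since $r(\hat\Delta) \lesssim G(x_{\mathrm{in}}) - G(x^*) + r(0) \lesssim F(x^{(k)}) - F^*$.

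Finally, accounting for the cost: each outer iteration uses $\O(\nnz(A) + m\tEval)$ time to evaluate $\nabla F(x^{(k)})$, the functions $\tilde{f}_i$ needed to apply \pref{thm:main}, and the linear term $y$; the sparsification itself fits within the same budget up to an $n^\omega$ term that is absorbed into the oracle cost; and the oracle call contributes $\TGLM^{\cF}(\O_{\alpha,c,\lo,L}(n), O(\eta/\alpha))$. Multiplying by $K = O(\eta^{-1}\log(\Gamma/\delta))$ outer iterations and union-bounding the high-probability events across the $K$ invocations of \pref{thm:main} yields the claimed runtime and the high-probability correctness guarantee.
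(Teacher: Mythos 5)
Your overall architecture matches the paper's: $O(\eta^{-1}\log(\Gamma/\delta))$ outer refinement steps, each of which sparsifies the divergence surrogates via \pref{thm:main}, rewrites the regularized first-order model as a sparse instance for the oracle of \pref{def:glmoracle}, and absorbs the oracle's approximation error into the contraction factor. However, there is a genuine gap at exactly the step you flag as obstacle (b), which is the crux of the paper's argument (\pref{lem:basic2}). You claim the oracle's output cannot escape the sparsification window because $r(\hat\Delta) \lesssim G(x_{\mathrm{in}}) - G(x^*) + r(0)$. This inequality is not justified: from $h(\hat\Delta) \le h(\Delta^*) + O(\tilde\Gamma)$ one only gets $\langle \nabla F(x),\hat\Delta\rangle + \wt{r}(\hat\Delta) \le -0.8E$, i.e.\ $\wt{r}(\hat\Delta) \le |\langle \nabla F(x),\hat\Delta\rangle|$, and the linear term is a priori unbounded; moreover, even a bound on $\wt{r}(\hat\Delta)$ does not bound $r(\hat\Delta)$, because outside $[\smin,\smax]$ the sparsifier carries no approximation guarantee and may vastly underestimate $r$. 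This is precisely the failure mode that must be excluded: the oracle could exploit the sparsifier's blind spot to report spurious progress. The paper rules this out by contradiction in both directions — if $r(\hat\Delta) > \smax$, it uses lower $\lo$-homogeneity of $\wt{r}$ together with convexity of $F$ to rescale $\hat\Delta$ into the window and derive $-E \lesssim -\smax$, a contradiction; if $r(\hat\Delta) < \smin$, it scales up and contradicts near-optimality of $\hat\Delta$. Without an argument of this kind, the per-step decrease does not follow.

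Two secondary points. First, you take a single global window with $\smin$ polynomially small in $\delta$ and $\smax$ polynomially large in $\Gamma$ and the norms of $A,b$; the paper instead recenters the window at the current error estimate $\tilde\Gamma = (1-\eta/2)^T\Gamma$ each iteration, keeping the sparsifier support at $\O_{\alpha,c,\lo,L}(n)$. Your choice inflates $\log(\smax/\smin)$, and hence the oracle's input sparsity, by a factor of $\log(\Gamma/\delta)$ plus norm-dependent terms, which is not hidden by the paper's $\O(\cdot)$ convention. Second, for the objective handed to the oracle to have the form required by \pref{def:glmoracle}, the regularizer must be the reweighted \emph{actual} divergences $\sum_i \tilde w_i D^{f_i}_{y_i}$ (scaled by roughly $1/\alpha$ so it still lower-bounds $D^F_x$), not the surrogates $r_{y_i}$, which need not be expressible through the $f_i$; your linear-term computation implicitly assumes the former while your definition of $r$ uses the latter. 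Both of these are repairable, but the window-escape argument is a missing idea rather than a missing detail.
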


Under mild assumptions on the family $\cF = \{f_1, \dots, f_m\}$, it holds that,
$\TGLM^{\cF}(\O(n), \e) \le \O(n^{\omega_0}\log(1/\e))$, where $\omega_0$ is as defined as in \pref{sec:lpintro} (see \cite[Theorem 4.2]{LSZ19-arxiv} for the formal conditions).

\smallskip

The algorithm establishing \pref{thm:optglm} is straightforward. In each
iteration, we write our optimization problem as a divergence with respect to
the current point $x^{(t)}$. Then we sparsify the divergence using
\pref{thm:main}. Finally, we call a GLM oracle (\pref{def:glmoracle}), which is
efficient, as we have sparsified the sum down to $\O(n)$ terms. We use this
to take a step, obtaining an improved point $x^{(t+1)}$, and repeat this iteration
$\O(1)$ times.  The main technical point is in handling the dependence of
\pref{thm:main} on the scale parameters $\smin$ and $\smax$.

Recall from \pref{def:F-divergence} that $T_x^f$ is the first-order Taylor approximation to $f$ at $x$,
and $D_x^f = f - T_x^f$ is the associated divergence.

\begin{algorithm2e}[!ht]
\label{alg:glmiter}
\caption{$\GLMIterate(\{f_1,\dots,f_m\}, \{a_1, \dots, a_m\}, b, x, \tilde{\Gamma})$}
\SetKwInOut{Input}{input}
\Input{Functions $f_i: \R \to \R$, vectors $a_i \in \R^n$, $b \in \R^m$, initial point $x \in \R^n$. Performs one step of iterative refinement to decrease function value.}
$y_i \assign \l a_i, x \r - b_i$ for $i \in [m]$. \\
Let $r_{i,y_i}$ denote the approximation to $D_{y_i}^{f_i}$, as in \pref{ass:refine}. \\
$\smin \gets m^{-O(1)}\tilde{\Gamma}, \smax \gets m^{O(1)}\tilde{\Gamma}$. \\
Use \pref{thm:main} to find weights $w \in \R^m_{\ge0}$ that induce an $\O_{\alpha,c,\lo,L}(n)$-sparse $(1/10)$-approximation of $r(\Delta) := \sum_{i\in[m]} r_{i,y_i}(\l a_i, \Delta\r)$ for all $\smin \le r(\Delta) \le \smax$. \\
Let $h(\Delta) := T^F_x(x + \Delta) + \frac{2}{3\alpha}\sum_{i \in [m]} w_i D_{y_i}^{f_i}(\l a_i, x+\Delta\r - b_i)$. \\
$\eta \gets (10\alpha^2/c)^{-1/(\lo-1)}$. \\
Find $\hat{\Delta}$ satisfying $h(\hat{\Delta}) \le h(\Delta^*) + \tilde{\Gamma}/10$ by calling a GLM oracle (\pref{def:glmoracle}) with $\cF = \{f_1, \dots, f_m\}$, $x_{\mathrm{in}} = 0$, and $\e = \eta/(30\alpha)$. \label{line:lsz} \\
\lIf*{$F(x + \eta \hat{\Delta}) \le F(x)$}{
	\textbf{return} $x + \eta \hat{\Delta}$ \lElse*{\textbf{return} $x$ \label{line:end}}}
\end{algorithm2e}

\begin{algorithm2e}[!ht]
\caption{$\SolveGLM(\{f_1,\dots,f_m\}, \{a_1, \dots, a_m\}, b, x^{(0)}, \Gamma, \delta)$}
\label{alg:glm}
\SetKwInOut{Input}{input}
\Input{Functions $f_i: \R \to \R$, vectors $a_i \in \R^n$, $b \in \R^m$, initial point $x^{(0)} \in \R^n$. Output a point that is a $\delta$-approximate minimizer of $\min_x \sum_{i\in[m]} f_i(\l a_i,x\r - b_i)$.}
$\tau \assign \lceil 2\eta^{-1} \log(\Gamma/\delta) \rceil$ for $\eta = (10\alpha^2/c)^{-1/(\lo-1)}$. \\
\For{$T = 0, 1, \dots, \tau-1$}{
	$x^{(T+1)} \assign \GLMIterate(\{f_1,\dots,f_m\}, \{a_1, \dots, a_m\}, b, x^{(T)}, (1-\eta/2)^T\Gamma)$.
}
\textbf{return} $x^{(\tau)}$.
\end{algorithm2e}

Let $E = F(x) - F(x^*)$ denote the function value error. The analysis of this algorithm proceeds by showing that one step of $\GLMIterate$ decreases $E$ as long as $C_0\smin \le E \le \smax/C_0$ for sufficiently large constant $C_0$ (and otherwise, does not increase the error). Intuitively, as long as this is true, we will show that $\Delta^* := x^* - x$ and $\hat{\Delta}$ are points where we indeed have sparsification guarantees by \pref{thm:main}. Towards this, we first establish that $r(\Delta^*) \asymp E$, where throughout this section
we allow the implicit constants in the $\asymp, \ls, \gtrsim$ notation to depend on the parameters $c, L, \lo, \alpha$ in \pref{thm:optglm}.
\begin{lemma}
\label{lem:basic}
In the setting of \pref{thm:optglm}, for $\Delta^* := x^* - x$ we have $E/\alpha \ls r(\Delta^*) \le 2E/\eta$.
\end{lemma}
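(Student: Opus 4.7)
The plan is to relate $r(\Delta^*)$ to the Bregman divergence $D_x^F(x + \Delta^*)$ via the two-sided approximation supplied by \pref{ass:refine} (which, after summing over $i$, gives $r \le D_x^F \le \alpha r$), together with the basic identity
\[
   D_x^F(x + \Delta^*) = F^* - T_x^F(x^*) = q - E, \quad q \seteq -\l \nabla F(x), \Delta^*\r,
\]
where $q \ge E$ by convexity of $F$.

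For the upper bound $r(\Delta^*) \le 2E/\eta$, I will first use $r \le D$ to get $r(\Delta^*) \le q - E$, and then bound $q$. Applying $F \ge F^*$ at $x + \eta \Delta^*$ yields $D_x^F(x + \eta \Delta^*) \ge \eta q - E$ (nontrivial in the regime $\eta q \ge E$; the complementary case gives $q < E/\eta$ and the bound is immediate). Combining this with $D_x^F \le \alpha r$ and the lower $\theta$-homogeneity of $r$ in its contractive form $r(\eta \Delta^*) \le (\eta^\theta/c)\, r(\Delta^*)$ (from applying the defining inequality with $\lambda = 1/\eta \ge 1$ to $z = \eta \Delta^*$), I obtain
\[ \eta q - E \le \alpha r(\eta \Delta^*) \le (\alpha \eta^\theta/c)\, r(\Delta^*) \le (\alpha \eta^\theta/c)(q - E). \]
The choice $\eta = (10\alpha^2/c)^{-1/(\theta-1)}$ turns $\alpha \eta^\theta/c$ into $\eta/(10\alpha)$, and rearranging yields $q \le (10\alpha/(10\alpha-1))\, E/\eta \le (10/9)\, E/\eta$ for $\alpha \ge 1$, whence $r(\Delta^*) \le q - E \le 2E/\eta$.

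For the lower bound $r(\Delta^*) \gtrsim E/\alpha$, I will split into two cases. If $q \ge 2E$, then $r(\Delta^*) \ge D_x^F(x + \Delta^*)/\alpha = (q - E)/\alpha \ge E/\alpha$ directly. If $q < 2E$, I will introduce the expanded scale $\lambda^* \seteq 2E/q \in (1, 2]$ and observe that $F(x + \lambda^* \Delta^*) \ge F^*$ forces
\[ D_x^F(x + \lambda^* \Delta^*) \ge \lambda^* q - E = E, \]
hence $\alpha r(\lambda^* \Delta^*) \ge E$. To transfer this back to $\Delta^*$, I will invoke the upper $2$-homogeneity of $r$, which follows from the auto-Lipschitz assumption on $r^{1/2}$ combined with lower homogeneity via \pref{lem:relate} and gives $r(\lambda^* \Delta^*) \le C'(\lambda^*)^2\, r(\Delta^*)$ for a constant $C' = O(L^2/c)$. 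Since $(\lambda^*)^2 \le 4$, this yields $r(\Delta^*) \ge E/(4\alpha C') \gtrsim E/\alpha$.

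The main obstacle is the degenerate regime $q \approx E$ in the lower bound, where the naive estimate $r(\Delta^*) \ge (q - E)/\alpha$ is too weak. The remedy is to probe $r$ at a point strictly past the minimizer, where optimality of $x^*$ still forces $D_x^F$ to be at least $\Omega(E)$, and then to deflate the resulting lower bound on $r(\lambda^* \Delta^*)$ back down to $r(\Delta^*)$ using upper quadratic growth — a property of $r$ that one obtains for free from the assumption on $r^{1/2}$ but which is not explicitly highlighted in the hypotheses of \pref{ass:refine}.
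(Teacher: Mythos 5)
Your proof is correct and follows essentially the same route as the paper: both bounds come from testing optimality of $x^*$ at rescaled points $\lambda\Delta^*$, using lower $\theta$-homogeneity of $r$ at the contracted scale $\eta<1$ for the upper bound and upper quadratic growth of $r$ (derived from the auto-Lipschitz hypothesis on $r^{1/2}$ via \pref{lem:relate}) at an expanded scale for the lower bound. The only difference is cosmetic: the paper avoids your case split by always probing at the fixed scale $\lambda=2$ and absorbing the gradient term with $\langle \nabla F(x),\Delta^*\rangle\le -E$, whereas you use the adaptive scale $\lambda^*=2E/q$.
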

\begin{proof}
Let $g = \nabla F(x)$. 
By optimality of $\Delta^*$, and because \ref{item:upper-growth} holds by \pref{lem:relate},
\[ -E \le \l g, 2\Delta^* \r + D^F_{x}(x + 2\Delta^*) \le 2\l g, \Delta^* \r + \alpha r(2\Delta^*) \stackrel{\ref{item:upper-growth}}{\le} 2\l g, \Delta^*\r + O(\alpha \cdot r(\Delta^*)). \] Because $\l g, \Delta^* \r \le -E$ by convexity, we deduce that $r(\Delta^*) \gtrsim -E/\alpha$.

For the upper bound, choose $\eta = (10\alpha/c)^{-1/(\lo-1)}$. By optimality of $\Delta^*$ we have
\begin{align*} -E &\le \l g, \eta\Delta^* \r + D^F_{x}(x + \eta\Delta^*) \le \eta \l g, \Delta^* \r + \alpha r(\eta \Delta^*) \\
&\le \eta \l g, \Delta^* \r + \eta^\lo \alpha/c \cdot r(\Delta^*) \le \eta \l g, \Delta^* \r + \eta/2 \cdot r(\Delta^*) \le -\eta/2 \cdot r(\Delta^*),
\end{align*}
where the final step uses $\l g, \Delta^*\r + r(\Delta^*) \le 0$. Thus, $r(\Delta^*) \le 2E/\eta$, as desired.
\end{proof}

Define $\wt{r}(\Delta) = \frac{2}{3\alpha}\sum_{i \in [m]} w_i D_{y_i}^{f_i}(\l a_i, x+\Delta\r - b_i)$. Note that $D_{y_i}^{f_i}(\l a_i, x+\Delta\r - b_i) \le \alpha r_{i,y_i}(\l a_i, \Delta\r)$ under the assumptions of \pref{thm:optglm}.
\begin{lemma}
\label{lem:basic2}
If $\tilde{\Gamma}/2 \le E \le \tilde{\Gamma}$, then $-E \lesssim \l \nabla F(x), \hat{\Delta} \r + \wt{r}(\hat{\Delta}) \le -0.8E$. Also, $r(\hat{\Delta}) \asymp E$.
\end{lemma}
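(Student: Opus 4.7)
The plan is to combine the oracle's guarantee for $\hat{\Delta}$ with upper and lower bounds on $h^* := \min_{\Delta} h(\Delta)$, which together control $\psi(\hat{\Delta}) := h(\hat{\Delta}) - F(x)$. First, for the upper bound $h^* \le F^* = F(x) - E$: by \pref{lem:basic}, $r(\Delta^*) \asymp E$, and since $E \asymp \tilde{\Gamma}$ with $\smin = m^{-O(1)} \tilde{\Gamma}$, $\smax = m^{O(1)} \tilde{\Gamma}$, the point $\Delta^*$ lies comfortably in the sparsification range of \pref{thm:main}. On this range the sparsifier guarantee yields $\wt{r}(\Delta^*) \le (11/15) r(\Delta^*)$, while $\langle \nabla F(x), \Delta^*\rangle = -E - D_x^F(x+\Delta^*) \le -E - r(\Delta^*)$ by \pref{ass:refine}. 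Substituting, $h(\Delta^*) \le F(x) - E - (4/15) r(\Delta^*) \le F^*$, so $h^* \le F^*$.

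Next I would establish the lower bound $h^* \ge F(x) - K E$ for some $K = K(\alpha, c, \lo, L)$. By \pref{ass:refine} and \pref{lem:relate}, each $r_{i,y_i}$ and hence $\wt{r}$ is upper $2$-homogeneous: $\wt{r}(\mu \Delta) \le O(\alpha L^2/c)\mu^2 \wt{r}(\Delta)$ for $\mu \ge 1$. Let $\Delta_h^* = \argmin h$, $u = -\langle \nabla F(x), \Delta_h^*\rangle$, $v = \wt{r}(\Delta_h^*)$. Stationarity $h(\mu \Delta_h^*) \ge h^*$ for all $\mu \ge 1$, combined with upper $2$-homogeneity, yields at the optimal $\mu = u/(O(\alpha L^2/c) v)$ the quadratic inequality $u^2 \le O(\alpha L^2/c)\, v (u - v)$, which confines $u/v$ to an interval $[r_-, r_+]$ with $r_+ = O(\alpha L^2/c)$. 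Separately, on the sparsification range \pref{ass:refine} gives $D_x^F(x + \Delta_h^*) \le O(\alpha^2) \wt{r}(\Delta_h^*) = O(\alpha^2) v$, while $F(x + \Delta_h^*) \ge F^*$ forces $D_x^F(x + \Delta_h^*) \ge u - E$; combining these with the pinned range of $u/v$ bounds $v$, and hence $u - v$, by $K E$ for a $K$ polynomial in $\alpha, L, 1/c$. Thus $h^* = F(x) - (u - v) \ge F(x) - K E$.

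With $h^* \in [F(x) - K E, F^*]$, the oracle's guarantee $h(\hat{\Delta}) - h^* \le \e (F(x) - h^*)$ with $\e = \eta/(30\alpha)$ and $F(x) - h^* \le K E$ directly yields $h(\hat{\Delta}) \le h^* + \e K E \le F(x) - (1 - \e K) E$. Since $\e$ is small by design (chosen precisely so that $\e K \le 0.2$), this gives $\psi(\hat{\Delta}) \le -0.8 E$; and $\psi(\hat{\Delta}) \ge h^* - F(x) \ge -K E$ gives $\psi(\hat{\Delta}) \gtrsim -E$. For the second assertion $r(\hat{\Delta}) \asymp E$, the upper bound $\psi(\hat{\Delta}) \le -0.8 E$ together with $\wt{r}(\hat{\Delta}) \ge 0$ yields $-\langle \nabla F(x), \hat{\Delta}\rangle \ge 0.8 E$; rerunning the quadratic/stationarity argument at $\hat{\Delta}$ (a near-minimizer of $h$) bounds $\wt{r}(\hat{\Delta}) \asymp E$, and since the scales are chosen so $\hat{\Delta}$ lies in the sparsification range, $\wt{r}(\hat{\Delta}) \asymp r(\hat{\Delta})$, whence $r(\hat{\Delta}) \asymp E$.

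The hardest step will be executing the scaling argument rigorously for the lower bound on $h^*$, particularly in verifying that $\Delta_h^*$ (and then $\hat{\Delta}$) lies in the sparsification range so that both the upper $2$-homogeneity of $\wt{r}$ and the sparsifier-based estimate $D_x^F \le O(\alpha^2) \wt{r}$ apply simultaneously at those points. The $m^{O(1)}$ margin built into $\smin$ and $\smax$ is exactly what provides the slack needed to close this loop.
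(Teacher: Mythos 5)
Your upper bound ($h(\Delta^*) \le F^*$ via $\langle\nabla F(x),\Delta^*\rangle \le -E - r(\Delta^*)$ and $\wt{r}(\Delta^*)\le\tfrac{11}{15}r(\Delta^*)$, with $\Delta^*$ placed in the sparsification window by \pref{lem:basic}) is sound and close to the paper's. The genuine gap is in the lower bound $h^*\ge F(x)-KE$. Your stationarity-under-upscaling argument with upper $2$-homogeneity yields $u^2\le 4Bv(u-v)$ with $B=O(\alpha L^2/c^2)$, hence $u\le 4Bv$ and $u/v\in[r_-,r_+]$ with $r_-\approx 1+\tfrac{1}{4B}$ — a \emph{lower} bound on $v=\wt{r}(\Delta_h^*)$ relative to $u=-\langle\nabla F(x),\Delta_h^*\rangle$. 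Your other two facts give $u-E\le D_x^F(x+\Delta_h^*)\le O(\alpha^2)v$, an \emph{upper} bound on $u$ relative to $v$. These all point the same way and are simultaneously satisfiable with $u-v=F(x)-h^*$ arbitrarily large: take $u=r_-v$ and $v\to\infty$; then $u^2\le 4Bv(u-v)$ holds, $u-E\le C\alpha^2v$ holds (since $C\alpha^2\ge r_-$), yet $u-v=v/(4B)$ is unbounded. The missing ingredient is the paper's \emph{downscaling} step: from $F(x+\eta\Delta)\ge F^*$ one gets $\langle\nabla F(x),\Delta\rangle\ge -\eta^{-1}E-\eta^{-1}\alpha r(\eta\Delta)$, and lower $\lo$-homogeneity with $\lo>1$ gives $\eta^{-1}\alpha r(\eta\Delta)\le \frac{\alpha\eta^{\lo-1}}{c}r(\Delta)\le\frac{1}{2\alpha}r(\Delta)\le\wt{r}(\Delta)$ for $\eta=(10\alpha^2/c)^{-1/(\lo-1)}$, whence $\langle\nabla F(x),\Delta\rangle+\wt{r}(\Delta)\ge-\eta^{-1}E$. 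This is the one place $\lo>1$ enters, and the resulting constant necessarily degrades as $\lo\to 1$; your claimed $K$ polynomial in $\alpha,L,1/c$ with no $\lo$-dependence cannot be correct, which is a symptom of the wrong route.

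Separately, you assume throughout that $\Delta_h^*$ and $\hat{\Delta}$ lie in the window $\smin\le r(\cdot)\le\smax$, so that $\wt{r}\asymp r$ and $D_x^F\le O(\alpha^2)\wt{r}$ apply, and you defer this as ``the hardest step.'' In the paper this is not slack absorbed by the $m^{O(1)}$ margins; it is the bulk of the proof — a three-way case analysis on $r(\hat{\Delta})$ versus $[\smin,\smax]$, where the cases $r(\hat{\Delta})<\smin$ and $r(\hat{\Delta})>\smax$ are each ruled out by exhibiting an explicit rescaling of $\hat{\Delta}$ that would strictly improve $h$, contradicting near-optimality. Without that analysis (and the downscaling bound above, which handles the middle case), both halves of your argument and the final claim $r(\hat{\Delta})\asymp E$ remain unestablished.
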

\begin{proof}
Note that $C_0 \smin \le E \le \smax/C_0$ for a large constant $C_0$ because $\tilde{\Gamma}/2 \le E \le \tilde{\Gamma}$, and the choice of $\smin$ and $\smax$. \pref{thm:main} tells us that $\wt{r}(\Delta) \le r(\Delta) \le 2 \wt{r}(\Delta)$ whenever $\smin \le r(\Delta) \le \smax$, with high probability. Combining $C_0 \smin \le E \le \smax/C_0$ and \pref{lem:basic} gives that $\smin \le r(\Delta^*) \le \smax$. Thus, we know that $\wt{r}(\Delta^*) \le r(\Delta^*) \le 2\wt{r}(\Delta^*).$ By the definition of $\hat{\Delta}$ as in line \ref{line:lsz} of \pref{alg:glmiter},
\begin{align*}
h(\hat{\Delta}) &\le h(\Delta^*) + \tilde{\Gamma}/10 = T^F_{x}(x + \Delta^*) + \wt{r}(\Delta^*) + \tilde{\Gamma}/10 \\
&\le T^F_{x}(x + \Delta^*) + r(\Delta^*) + \tilde{\Gamma}/10 \le T^F_{x}(x + \Delta^*) + D^F_{x}(x + \Delta^*) + \tilde{\Gamma}/10 \\
&= F(x^*) - F(x) + \tilde{\Gamma}/10 \le -0.8E,
\end{align*}
as long as $E \ge \tilde{\Gamma}/2$.
This demonstrates the upper bound. Let us proceed to the lower bound. Let $g = \nabla F(x)$. We first consider the case where $r(\hat{\Delta}) < \smin$. Let $\beta$ be minimal so that $r(\beta\hat{\Delta}) \ge \smin$. By \ref{item:monotone}, we have that $r(\beta\hat{\Delta}) \ls \smin$ in fact. Then
\[ \l g, 2\beta \hat{\Delta} \r + \wt{r}(2\beta\hat{\Delta}) \le 2\beta \l g, \hat{\Delta} \r + r(2\beta\hat{\Delta}) \le 2\beta \l g, \hat{\Delta} \r + O(\smin) < 1.5\l g, \hat{\Delta} \r, \] contradicting the optimality of $\hat{\Delta}$ (as we have noted that $\l g, \hat{\Delta} \r \le -0.8E$). If $\smin \le r(\hat{\Delta}) \le \smax$, then for $\eta = (10\alpha^2/c)^{-1/(\lo-1)}$,
\begin{align*}
\l g, \hat{\Delta} \r + \wt{r}(\hat{\Delta}) &\ge \l g, \hat{\Delta} \r + \frac{1}{2\alpha}r(\hat{\Delta}) \ge \eta^{-1}(\l g, \eta \hat{\Delta}\r + \alpha r(\eta \hat{\Delta})) \\
&\ge \eta^{-1}(\l g, \eta \hat{\Delta}\r + D^F_x(x + \eta \hat{\Delta})) \ge -\eta^{-1}E.
\end{align*}
Finally, if $r(\hat{\Delta}) > \smax$, first pick $\eta$ to be maximal so that $r(\eta\hat{\Delta}) \le \smax$. It is easy to see that $r(\eta\hat{\Delta}) \gtrsim \smax$. Then $\wt{r}(\eta\hat{\Delta}) \gtrsim \smax$. Because $\wt{r}$ lower $\lo$-homogeneous with constant $c$, we deduce that \[ \eta |\l g, \hat{\Delta} \r| \ge \eta \wt{r}(\hat{\Delta}) \gtrsim \wt{r}(\eta\hat{\Delta}) \gtrsim \smax.\] For $\eta' = (C\alpha/c)^{-1/(\lo-1)}$ for sufficiently large $C$, we have that
\begin{align*}
-E \le \l g, \eta\eta'\hat{\Delta} \r + D^F_x(x + \eta\eta'\hat{\Delta}) \le \eta' \l g, \eta\hat{\Delta} \r + (\eta')^\lo\alpha/c \cdot r(\eta\hat{\Delta}) \lesssim -\smax,
\end{align*}
as $r(\eta\hat{\Delta}) \le \smax$ and $\eta \l g, \hat{\Delta} \r \lesssim -\smax$. This is a contradiction.

For the second point, we have shown $\smin \le r(\hat{\Delta}) \le \smax$, so $r(\hat{\Delta}) \asymp \wt{r}(\hat{\Delta}) \asymp E$, as desired.
\end{proof}

We now prove that taking one iteration makes sufficient progress.
\begin{lemma}
\label{lem:singleiter}
If $\tilde{\Gamma}/2 \le E \le \tilde{\Gamma}$, then for $\eta = (10\alpha^2/c)^{-1/(\lo-1)}$, we have $F(x + \eta\hat{\Delta}) - F(x^*) \le (1 - \eta/2)(F(x) - F(x^*))$.
\end{lemma}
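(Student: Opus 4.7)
The plan is to expand $F(x + \eta\hat{\Delta}) = F(x) + \eta \langle \nabla F(x), \hat{\Delta}\rangle + D^F_x(x + \eta\hat{\Delta})$ and combine this with the bound $\langle \nabla F(x), \hat{\Delta}\rangle + \wt{r}(\hat{\Delta}) \le -0.8E$ supplied by \pref{lem:basic2} (where $E \seteq F(x) - F(x^*)$). The two pieces I need are: (i) control of $D^F_x(x + \eta\hat{\Delta})$ by $\wt{r}(\hat{\Delta})$ up to a factor that shrinks with $\eta$, and (ii) the comparability of $r(\hat{\Delta})$ and $\wt{r}(\hat{\Delta})$, so that I can route the sparsified surrogate $\wt{r}$ through \pref{lem:basic2}.

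For (ii), I would unwind $\wt{r}(\Delta) = \frac{2}{3\alpha}\sum_i w_i D^{f_i}_{y_i}(y_i + \langle a_i, \Delta\rangle)$ and sandwich the inner divergence by $r_{i,y_i}$ using \pref{ass:refine}, then invoke the sparsification guarantee $\sum_i w_i r_{i,y_i}(\langle a_i, \Delta\rangle) \in [0.9, 1.1]\, r(\Delta)$, which by \pref{thm:main} is valid whenever $r(\Delta) \in [\smin, \smax]$. Since \pref{lem:basic2} tells us $r(\hat{\Delta}) \asymp E$, and the $\smin, \smax$ chosen in \pref{alg:glmiter} comfortably bracket $E$ in the regime $\tilde{\Gamma}/2 \le E \le \tilde{\Gamma}$, this gives $r(\hat{\Delta}) \le 2\alpha\, \wt{r}(\hat{\Delta})$. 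For (i), \pref{ass:refine} together with the lower $\lo$-homogeneity of each $r_{i,y_i}$ yields $D^F_x(x + \eta\hat{\Delta}) \le \alpha\, r(\eta\hat{\Delta}) \le (\alpha\eta^\lo/c)\, r(\hat{\Delta}) \le (2\alpha^2\eta^\lo/c)\,\wt{r}(\hat{\Delta})$.

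The choice $\eta = (10\alpha^2/c)^{-1/(\lo-1)}$ is calibrated precisely so that $2\alpha^2\eta^\lo/c = \eta/5$. Plugging everything in,
\[ F(x + \eta\hat{\Delta}) - F(x) \le \eta\Big(\langle \nabla F(x), \hat{\Delta}\rangle + \tfrac{1}{5} \wt{r}(\hat{\Delta})\Big) \le \eta\Big(\langle \nabla F(x), \hat{\Delta}\rangle + \wt{r}(\hat{\Delta})\Big) \le -0.8\eta E, \]
where the middle step uses $\wt{r}(\hat{\Delta}) \ge 0$ and the last is \pref{lem:basic2}. Subtracting $F(x^*)$ and noting $1 - 0.8\eta \le 1 - \eta/2$ gives the claim. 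The main obstacle I anticipate is just the bookkeeping with $\alpha$: the sparsification unavoidably loses a factor of $\alpha$ when passing between $r$ and $\wt{r}$ (since $\wt{r}$ is built from the divergences rather than the surrogates $r_{i,y_i}$), and this $\alpha$ compounds with the $\alpha$ from \pref{ass:refine} in step (i), which is what forces the step size to scale as $(\alpha^2/c)^{-1/(\lo-1)}$ rather than the more naive $(\alpha/c)^{-1/(\lo-1)}$ one sees in the abstract \pref{lem:refine}.
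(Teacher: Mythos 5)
Your proposal is correct and follows essentially the same route as the paper: expand $F(x+\eta\hat{\Delta})$ via the Taylor term plus divergence, bound the divergence by $\alpha r(\eta\hat{\Delta})$, pass to $\wt{r}$ via the sparsification guarantee (valid since \pref{lem:basic2} places $r(\hat{\Delta})$ in $[\smin,\smax]$), and use lower $\lo$-homogeneity with the calibrated $\eta$ to absorb the loss, finishing with the $-0.8E$ bound from \pref{lem:basic2}. The only cosmetic difference is that you apply homogeneity to $r$ before converting to $\wt{r}$ while the paper converts first; both orderings work.
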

\begin{proof}
Note that $r(\eta\hat{\Delta}) \asymp E$ by \pref{lem:basic2}, so we deduce that
\begin{align*}
F(x + \eta\hat{\Delta}) - F(x^*) &= \eta \l g, \hat{\Delta} \r + D^F_x(x + \eta\hat{\Delta}) \le \eta \l g, \hat{\Delta} \r + \alpha r(\eta\hat{\Delta}) \le \eta \l g, \hat{\Delta} \r + 2\alpha^2 \wt{r}(\eta\hat{\Delta})
\\&\le \eta(\l g, \hat{\Delta} \r + \wt{r}(\hat{\Delta})) \le -0.8\eta E,
\end{align*}
where we have applied \pref{lem:basic2} again.
\end{proof}
Now \pref{thm:optglm} follows easily.
\begin{proof}[Proof of \pref{thm:optglm}]
We first bound the runtime. We consider a single iteration of \GLMIterate. One part of the runtime is from sparsifying $r(\Delta)$, and is bounded by \pref{thm:main}. The other part is from calling an GLM oracle in line \ref{line:lsz}.
The sparsity of $w_1, \dots, w_m$ is at most $\O_{\alpha,c,\lo,L}(n)$ by \pref{thm:main}, and $\eps = \eta/(30\alpha)$, so the runtime is $\TGLM^{\cF}(\O_{\alpha,c,\lo,L}(n), \eta/(30\alpha))$.

Now we verify that $h(\Delta)$ takes the form of the GLM oracle as described in \pref{def:glmoracle}. Indeed,
\begin{align*}
h(\Delta) &= T^F_x(x + \Delta) + \frac{2}{3\alpha}\sum_{i \in [m]} w_i D_{y_i}^{f_i}(\l a_i, x+\Delta\r - b_i) \\
&= F(x) + \l \nabla F(x), \Delta \r + \frac{2}{3\alpha} \sum_{i \in [m]} w_i(f_i(\l a_i, x+\Delta\r - b_i) - f_i(\l a_i, x\r - b_i) - f_i'(\l a_i, x\r-b_i) \l a_i, \Delta\r).
\end{align*}
Now define $\bar{\Delta} \seteq x + \Delta$. Upon setting $\Delta = \bar{\Delta}-x$ in the above expression, all terms become either reweighted versions of $f_i(\langle a_i, \bar{\Delta} \rangle - b_i)$, constants, or linear terms in $\bar{\Delta}$. Thus, $h(\Delta)$ has the form of the GLM oracle. 

Now we check that setting $\eps = \eta/(30\alpha)$ and $x_{\mathrm{in}} = 0$ suffices to achieve $h(\hat{\Delta}) \le h(\Delta^*) + \tilde{\Gamma}/10$. It suffices to verify that $\eps(h(0) - h(\Delta^*)) \le \tilde{\Gamma}/10$. Indeed, a calculation yields
\begin{align*}h(0) - h(\Delta^*) &= F(x) - (T_x^F(x^*) + D_x^F(x^*)) + (D_x^F(x^*) - \tilde{r}(\Delta^*)) \\
&= F(x) - F(x^*) + (D_x^F(x^*) - \tilde{r}(\Delta^*)) \le E + D_x^F(x^*) \stackrel{(i)}{\le} E + \alpha r(\Delta^*) \stackrel{(ii)}{\le} E + 2\alpha E/\eta \le 3\alpha \tilde{\Gamma}/\eta, \end{align*}
where $(i)$ follows by the first bullet of \pref{ass:refine}, and $(ii)$ follows from \pref{lem:basic}.

Now we check correctness of the overall algorithm. We show by induction that $F(x^{(T)}) - F(x^*) \le \tilde{\Gamma} = (1-\eta/2)^T \Gamma$. This holds for $T = 0$ by the hypothesis in \pref{thm:optglm} that $F(x^{(0)}) - F^* \le \Gamma$. If $F(x^{(T)}) - F(x^*) \le (1-\eta/2)^{T+1} \Gamma$, then there is nothing to show because $F(x^{(T+1)}) \le F(x^{(T)})$, by line \ref{line:end} of \pref{alg:glmiter}. Otherwise, we conclude by \pref{lem:singleiter}, whose hypothesis that $\tilde{\Gamma}/2 \le E \le \tilde{\Gamma}$ holds for $\tilde{\Gamma} = (1-\eta/2)^T \Gamma$.
\end{proof}

\subsection{Applications}
\label{sec:lp}

\subsubsection{$\ell_p$-regression}

To establish \pref{thm:pnorm}, it suffices to check that the function $g(x) = |x|^p$ satisfies the conditions in \pref{thm:optglm}.

\begin{proof}[Proof of \pref{thm:pnorm}]
Consider the function $g(z) = |z|^p$ for $p > 1$.
From \cite[Lemma 4.5]{AKPS19-arxiv}, it holds that
\begin{equation}
\gamma_p(|x_0|, \Delta) \asymp_p D^g_{x_0}(x_0 + \Delta) \enspace \text{ for any } \enspace x_0,\Delta \in \R\,, \label{eq:akps}
\end{equation}
where $\gamma_p(t,z)$ is defined in \eqref{eq:gammap-gen}.
Therefore it suffices to check that the function $r_{x_0}(\Delta) \seteq \gamma_p(|x_0|,\Delta)$ satisfies
the required conditions, and this is the content of \pref{lem:gammap-nice}.
\end{proof}

\subsubsection{Dual of $\ell_p$-regression}
\label{sec:duallp}

We now establish \pref{thm:qnorm} by reducing the dual problem to a primal problem and applying \pref{thm:pnorm}.
Consider $p \in (1,2]$ and the dual exponent $q \seteq \frac{p}{p-1} \geq 2$, along with
$A \in \R^{m \times n}$. Then,
\begin{align*}
\min_{A^\top y = c} \|y\|_q &= \min_{A^\top y = c} \max_{\|z\|_p \le 1} z^\top y = \max_{\|z\|_p \le 1} \min_{A^\top y = c} z^\top y = \max_{\|Ax\|_p \le 1} c^\top x = \left(\min_{c^\top x = 1} \|Ax\|_p \right)^{-1},
\end{align*}
where the second equality uses convex duality (von Neumann's minimax theorem).

Denote $x^* \seteq \argmin_{c^\top x = 1} \|Ax\|_p$. By the KKT conditions, it holds that
\[ y^* := \argmin_{A^\top y = c} \|y\|_q = \frac{\mathrm{sign}(Ax^*) |Ax^*|^{p-1}}{\|Ax^*\|_p^p}\,, \] 
where we apply scalar functions to vectors in the straightforward way, e.g., $|A x^*|^{p-1} = \left(|(A x^*)_1|^{p-1},\ldots,|(A x^*)_m|^{p-1}\right)$.

Suppose that we
have a vector $\bar{x} \in \R^n$ satisfying $c^\top \bar{x} = 1$ and $\|A\bar{x}\|_p \le (1+\eps_0) \|Ax^*\|_p$, for some choice of $\e_0 > 0$
that we will make momentarily. We can find $\bar{x}$ using \pref{thm:pnorm} with the objective $\min_{x \in \R^n} K|\langle c, x\rangle-1|^p + \|Ax\|_p^p$, for sufficiently large constant $K$ depending on $\e_0, A$.
Define
\[ \bar{y} := \frac{\mathrm{sign}(A\bar{x}) |A\bar{x}|^{p-1}}{\|A\bar{x}\|_p^p}. \]

Our output $y$ will be the orthogonal projection of $\bar{y}$ onto the affine subspace $\{y : A^\top y = c\}$, which
can be found by solving a single linear system in $A^\top A$. Denote $\delta_y := y - \bar{y}$.
Clearly $\|y\|_q \le \|\bar{y}\|_q + \|\delta_y\|_q$.
Noting that
\begin{equation}
\label{lem:bary}
 \|\bar{y}\|_q = \|A\bar{x}\|_p^{-1} \le \|Ax^*\|_p^{-1} = \|y^*\|_q\,,
\end{equation}
our goal is to bound $\|\delta_y\|_q$.

The next lemma shows that choosing
$\eps_0 \asymp_p (\eps m^{1/q-1/2})^{2q/p}$ suffices to obtain $\|y\|^q_q \leq (1+\e) \|y^*\|^q_q$.

\begin{lemma}
\label{lemma:deltay}
It holds that $\|\delta_y\|_q \ls_p m^{1/2-1/q} \eps_0^{p/(2q)} \|y^*\|_q.$
\end{lemma}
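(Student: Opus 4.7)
The plan is to reduce the $\ell_q$ bound to an $\ell_2$ bound via the orthogonality of the projection, convert back to $\ell_q$ via Hölder's inequality, and then apply Legendre duality for $f := \tfrac{1}{p}\|\cdot\|_p^p$ and its conjugate $f^* = \tfrac{1}{q}\|\cdot\|_q^q$ to control $\ell_q$-differences of $\Psi(t) := \sign(t)|t|^{p-1}$ using $\gamma_p$-Bregman divergences bounded via the near-optimality of $\bar{x}$.

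Concretely, since $y^*$ lies in the affine subspace $\{y : A^{\top} y = c\}$ and $y$ is the $\ell_2$-orthogonal projection of $\bar{y}$ onto this subspace, $\|\delta_y\|_2 \le \|\bar{y}-y^*\|_2$. Monotonicity of $\ell_p$-norms (for $q \ge 2$) gives $\|\delta_y\|_q \le \|\delta_y\|_2$, while Hölder yields $\|\bar{y}-y^*\|_2 \le m^{1/2-1/q}\|\bar{y}-y^*\|_q$. Thus it suffices to prove $\|\bar{y}-y^*\|_q \ls_p \eps_0^{p/(2q)}\|y^*\|_q$. By homogeneity I may assume $\|v^*\|_p=1$, where $v^* := Ax^*$ (so $\|y^*\|_q = \|v^*\|_p^{-1}=1$); denote $v := A\bar{x}$ and $\alpha := \|v\|_p^{-p}$. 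Since $x^*$ minimizes $\|A\cdot\|_p$ on $\{c^{\top} x = 1\}$ and $c^{\top}\bar{x}=1$, we have $\|v\|_p \in [1,1+\eps_0]$, hence $\alpha \in [1-O_p(\eps_0),1]$. Writing $\bar{y}-y^* = \alpha(\Psi(v)-\Psi(v^*)) + (\alpha-1)\Psi(v^*)$, and using $\|\Psi(v^*)\|_q=\|v^*\|_p^{p-1}=1$, the second term has $\ell_q$-norm $\ls_p \eps_0$.

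For the first term, I would invoke the Fenchel--Young identity $D^f_{v^*}(v) = D^{f^*}_{\Psi(v)}(\Psi(v^*))$, which follows from conjugacy of $f$ and $f^*$. Since both $f$ and $f^*$ are coordinatewise separable, \eqref{eq:akps} applied to $p$ and to $q$ translates this into
\[
\sum_i \gamma_q\bigl(|\Psi(v_i)|,\,\Psi(v_i)-\Psi(v^*_i)\bigr) \asymp_p \sum_i \gamma_p(|v^*_i|,\,v_i-v^*_i)\,.
\]
The right-hand side is controlled by optimality of $x^*$: Lagrange multipliers force $\nabla\|Ax^*\|_p^p$ to be parallel to $c$, and $c^{\top}(\bar{x}-x^*)=0$ then gives $\|A\bar{x}\|_p^p - \|Ax^*\|_p^p = D^{\|A\cdot\|_p^p}_{x^*}(\bar{x}) \asymp_p \sum_i \gamma_p(|v^*_i|, v_i-v^*_i)$, while the hypothesis $\|v\|_p^p \le (1+\eps_0)^p \le 1+O_p(\eps_0)$ forces this to be $\ls_p \eps_0$. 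A short case split on whether $|z|\le t$ shows $\gamma_q(t,z) \ge |z|^q$ for $q \ge 2$, so $\|\Psi(v)-\Psi(v^*)\|_q^q \ls_p \eps_0$, i.e., $\|\Psi(v)-\Psi(v^*)\|_q \ls_p \eps_0^{1/q}$. Assuming $\eps_0 \le 1$ and using $1/q \ge p/(2q)$ (since $p \le 2$) yields $\|\bar{y}-y^*\|_q \ls_p \eps_0^{1/q}+\eps_0 \ls \eps_0^{p/(2q)}$, completing the reduction.

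The main obstacle is bounding $\|\bar{y}-y^*\|_q$ at all: a direct coordinatewise comparison $|\Psi(v_i)-\Psi(v^*_i)|^q \ls \gamma_p(|v^*_i|, v_i-v^*_i)$ fails in the small-perturbation regime $|v_i-v^*_i| \ll |v^*_i|$, where the ratio grows like $(|v^*_i|/|v_i-v^*_i|)^{q-2}$ and is unbounded for $q>2$. Passing through the Bregman duality identity effectively repackages the coordinate comparisons through the global $\ell_p$/$\ell_q$ structure and bypasses this issue cleanly. The remaining ingredients---the KKT stationarity condition, the AKPS equivalence \eqref{eq:akps}, and the $\gamma_q(t,z) \ge |z|^q$ bound---are routine.
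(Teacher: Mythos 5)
Your proof is correct, and it handles the key estimate by a genuinely different route than the paper. Both arguments begin identically (project, pass from $\ell_q$ to $\ell_2$ and back via H\"older, and split $\bar{y}-y^*$ into a normalization term of size $O_p(\eps_0)\|y^*\|_q$ and a ``direction'' term). For the direction term, the paper uses $2$-uniform convexity of $\|\cdot\|_p$ to get $\|A(\bar{x}-x^*)\|_p \ls_p \eps_0^{1/2}\|Ax^*\|_p$ and then the auto-Lipschitz property of $z \mapsto \sign(z)|z|^{p-1}$ together with $\||w|^{p-1}\|_q = \|w\|_p^{p/q}$, landing exactly on $\eps_0^{p/(2q)}$. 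You instead invoke the Fenchel--Bregman duality identity $D^f_{v^*}(v) = D^{f^*}_{\nabla f(v)}(\nabla f(v^*))$ for $f = \tfrac1p\|\cdot\|_p^p$, identify $D^f_{v^*}(v)$ with the excess $\|A\bar{x}\|_p^p - \|Ax^*\|_p^p \ls_p \eps_0$ via KKT stationarity (the paper's uniform-convexity step uses the optimality of $x^*$ in an equivalent but less explicit way), and then pass through \eqref{eq:akps} for both exponents and the elementary bound $\gamma_q(t,z) \ge |z|^q$ for $q \ge 2$. All of these steps check out — in particular \eqref{eq:akps} does hold for exponents in $[2,\infty)$ with $q$-dependent constants, which is fine since $q$ is determined by $p$ — and your route yields the stronger bound $\eps_0^{1/q}$, which dominates $\eps_0^{p/(2q)}$ for $\eps_0 \le 1$ since $p \le 2$. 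The trade-off is that the paper's argument is more elementary and self-contained, while yours is more conceptual (dual-norm control of gradient differences by primal Bregman divergence) and would generalize more readily beyond the $\ell_p$ setting.
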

\begin{proof}
   Write
\[
   \|\delta_y\|_q \le \|\delta_y\|_2 \leq \|y^* - \bar{y}\|_2 \le m^{1/2-1/q} \|y^* - \bar{y}\|_q\,,
\]
where the first inequality uses $q \geq 2$, the second uses the fact that $y$ is the orthogonal projection onto $\{ y : A^{\top} y = c\}$,
which contains $y^*$, and the third uses H\"{o}lder's inequality.

To bound the last expression, first write
\begin{align*}
\|y^* - \bar{y}\|_q &= \left\|\frac{\mathrm{sign}(Ax^*) |Ax^*|^{p-1}}{\|Ax^*\|_p^p} -  \frac{\mathrm{sign}(A\bar{x}) |A\bar{x}|^{p-1}}{\|A\bar{x}\|_p^p} \right\|_q \\
&\le \left|\frac{1}{{\|Ax^*\|_p^p}} - \frac{1}{\|A\bar{x}\|_p^p} \right| \cdot \left\|\mathrm{sign}(Ax^*) |Ax^*|^{p-1}\right\|_q + \left\| \frac{\mathrm{sign}(Ax^*) |Ax^*|^{p-1} - \mathrm{sign}(A\bar{x}) |A\bar{x}|^{p-1}}{\|A\bar{x}\|_p^p} \right\|_q.
\end{align*}
Because $\|A\bar{x}\|_p \le (1+\eps_0)\|Ax^*\|_p$, the first term is bounded by
\[ O(\eps_0) \cdot \|Ax^*\|_p^{-p} \cdot \|Ax^*\|_p^{p/q} \ls \eps_0 \|Ax^*\|_p^{-1} \ls  \eps_0 \|y^*\|_q\,. \]

For the second term, we use $2$-uniform convexity (with constant $p-1$) of the $\ell_p$ norm for $p \in (1,2]$
to obtain
\[\|A(\bar{x}-x^*)\|_p^2 \le \frac{2}{p-1}(\|A\bar{x}\|_p^2 - \|Ax^*\|_p^2) \ls_p \eps_0 \|Ax^*\|_p^2\,.\]

Because the function $f(z) := \mathrm{sign}(z) |z|^{p-1}$ satisfies $|f(y) - f(z)| \ls f(y-z)$ for $1 < p \leq 2$, we get
\begin{align*}
&\left\| \frac{\mathrm{sign}(Ax^*) |Ax^*|^{p-1} - \mathrm{sign}(A\bar{x}) |A\bar{x}|^{p-1}}{\|A\bar{x}\|_p^p} \right\|_q \ls_p  \|A\bar{x}\|_p^{-p} \||A(x^*-\bar{x})|^{p-1}\|_q \\ 
\leq ~&\|A\bar{x}\|_p^{-p} \|A(x^*-\bar{x})\|_p^{p/q} \ls_p \eps_0^{p/(2q)} \|Ax^*\|_p^{-p} \|Ax^*\|_p^{p/q} = \eps_0^{p/(2q)} \|y^*\|_q\,.\qedhere
\end{align*}
\end{proof}

\subsubsection{$\gamma_p$ regression}

Consider the case when the loss functions are of the form $f_1(z) = \cdots = f_m(z) = \gamma_p(z)$, and
we with to minimize $F(x) \seteq f_1(\langle a_1,x\rangle) + \cdots + f_m(\langle a_m,x\rangle)$.
For $p=1$ (Huber regression), one can compute a $(1+\e)$-approximate solution by first sparsifying
down to $\frac{n}{\e^2} (\log m)^{O(1)}$ terms using \pref{cor:huber}, and then solving the resulting problem
using a GLM oracle for the Huber loss on sparse instances.

For $p \in (1, 2]$, our framework allows us to find a minimizer to high
accuracy, in analogy with the case of $\ell_p$ regression.
This follows because the divergence of $\gamma_p$ around any point is,  up to constants, equal to $\gamma_p(t, z)$ for some threshold $t \geq 0$ (defined in \eqref{eq:gammap-gen}), and thus we can apply \pref{thm:optglm}.

\begin{lemma}
\label{lem:divgamma}
For all $p \in (1,2]$, the following holds:
If $|z| \le 1$ then $D^{\gamma_p}_z(\Delta+z) \asymp_p \gamma_p(\Delta)$, and if $|z| \ge 1$ then $D^{\gamma_p}_z(\Delta+z) \asymp \gamma_p(|z|, \Delta)$.
\end{lemma}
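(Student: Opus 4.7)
The plan is to leverage the identity~\eqref{eq:akps} from AKPS, which handles the pure power function $g(u) \seteq |u|^p$, and carefully account for the difference $g - \gamma_p$. The key observation is that on $\{u : |u| \geq 1\}$, the functions $g$ and $\gamma_p$ differ by exactly the additive constant $(1-p/2)$, and moreover $g'(z) = \gamma_p'(z)$ for $|z| \geq 1$ (both equal $p\,\mathrm{sign}(z)|z|^{p-1}$; the derivatives match continuously at $\pm 1$). Consequently, in the key subcase where both $|z| \geq 1$ and $|z+\Delta| \geq 1$, the additive constants cancel inside the divergence, yielding $D^{\gamma_p}_z(z+\Delta) = D^g_z(z+\Delta)$, and the conclusion of the second bullet of the lemma follows directly from~\eqref{eq:akps}.

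The remaining subcases involve a crossing of the boundary $\pm 1$ by $z$ or $z+\Delta$, and will be handled by direct computation. For the first bullet ($|z| \leq 1$), I will split based on whether $|z+\Delta| \leq 2$ or $|z+\Delta| \geq 2$. In the small regime, I will use the integral representation
\[
D^{\gamma_p}_z(z+\Delta) = \Delta^2 \int_0^1 (1-s)\,\gamma_p''(z + s\Delta)\,ds\,,
\]
noting that $\gamma_p''(u) \asymp_p 1$ uniformly on $|u| \le 2$ (since $\gamma_p''$ equals $p$ on $|u| < 1$ and $p(p-1)|u|^{p-2}$ on $|u| > 1$, and both are $\asymp_p 1$ for bounded $|u|$), which gives $D^{\gamma_p}_z(z+\Delta) \asymp_p \Delta^2 \asymp_p \gamma_p(\Delta)$ for $|\Delta| \lesssim 1$. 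In the large regime $|z+\Delta| \geq 2$ (forcing $|\Delta| \geq 1$ and $|z+\Delta| \asymp |\Delta|$), I will expand $D^{\gamma_p}_z(z+\Delta)$ directly; the dominant term $\gamma_p(z+\Delta) \asymp_p |\Delta|^p$ controls the linear correction $\gamma_p'(z)\Delta = O(|\Delta|)$ since $p > 1$, yielding $D^{\gamma_p}_z(z+\Delta) \asymp_p |\Delta|^p \asymp_p \gamma_p(\Delta)$.

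The only remaining subcase for the second bullet is $|z| \geq 1$ with $|z+\Delta| \leq 1$. Assuming WLOG $z \geq 1$, this forces $\Delta \in [-1-z, 1-z]$, hence $|\Delta| \in [z-1, z+1]$, so that $|\Delta| \asymp z$ whenever $z \geq 2$. A direct calculation using $\gamma_p(z+\Delta) = \tfrac{p}{2}(z+\Delta)^2 = O_p(1)$, $\gamma_p(z) = z^p - (1-p/2)$, and $\gamma_p'(z) = p z^{p-1}$ shows that
\[
D^{\gamma_p}_z(z+\Delta) = \tfrac{p}{2}(z+\Delta)^2 - z^p + (1 - p/2) - p z^{p-1}\Delta \asymp_p z^p\,,
\]
where the final equivalence is verified by substituting $\Delta = -z + \eta$ with $\eta \in [-1,1]$ and Taylor-expanding. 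On the other hand, $\gamma_p(|z|,\Delta) \asymp_p z^p$ in this regime as well: if $|\Delta| \leq z$, then $\gamma_p(|z|,\Delta) = \tfrac{p}{2}z^{p-2}\Delta^2 \asymp z^p$ (since $|\Delta| \geq z-1 \asymp z$ once $z \geq 2$), while if $|\Delta| \geq z$, then $\gamma_p(|z|,\Delta) = |\Delta|^p - (1-p/2)z^p \asymp z^p$. The boundary case $z \in [1,2]$ is handled by noting both sides are $\asymp_p 1$.

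The main obstacle is this last subcase: a pointwise comparison of $\gamma_p''(z + s\Delta)$ with the second derivative of $u \mapsto \gamma_p(|z|,u)$ fails (the former equals $p$ near $0$, while the latter equals $p|z|^{p-2} \ll p$ for large $|z|$), so the asymptotic match must be established at the level of the final closed-form expression rather than integrand-by-integrand. Fortunately, as sketched above, both quantities simplify to $\Theta_p(z^p)$ by elementary manipulation, which closes the argument.
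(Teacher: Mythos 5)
Your decomposition and the tools for each piece match the paper's proof almost exactly: for $|z|\le 1$ you split into a bounded regime (integral representation with $\gamma_p''\asymp_p 1$) and an unbounded regime (direct expansion dominated by $|z+\Delta|^p$), and for $|z|\ge 1$ with $|z+\Delta|\ge 1$ you reduce to $g(u)=|u|^p$ via the additive-constant observation and invoke \eqref{eq:akps} --- all as in the paper. The one genuine divergence is the crossing case $|z|\ge 1$, $|z+\Delta|\le 1$: the paper splits it further into $z+\Delta\in[1/2,1]$ (comparing $\gamma_p''$ with $g''$ on $[1/2,\infty)$) and $z+\Delta\in[-1,1/2]$ (a convexity sandwich between the endpoints $-1$ and $1/2$, which were already handled), whereas you evaluate both $D^{\gamma_p}_z(z+\Delta)$ and $\gamma_p(|z|,\Delta)$ in closed form and show each is $\asymp_p z^p$. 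Your route is more computational but avoids the sandwich and is perfectly valid; the one delicate point is the lower bound on $\frac{p}{2}\eta^2+(p-1)z^p+1-\frac{p}{2}-pz^{p-1}\eta$ for $z$ near $2$, where the leading and linear terms nearly cancel as $p\to1$. It does hold for each fixed $p>1$ (e.g.\ via $D^{\gamma_p}_z(z+\Delta)\ge D^{\gamma_p}_z(1)=1+(p-1)z^p-pz^{p-1}$, which is increasing in $z$ and strictly positive at $z=2$), but it deserves a line of justification beyond ``Taylor-expanding.''

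One sentence is incorrect as written: in the boundary case $z\in[1,2]$, $|z+\Delta|\le 1$, it is not true that both sides are $\asymp_p 1$ --- taking $z\to 1^+$ with $z+\Delta\to 1^-$ sends both quantities to $0$. What is true is that both sides are $\asymp_p \Delta^2$: the segment from $z+\Delta$ to $z$ lies in $[-1,2]$, where $\gamma_p''\asymp_p 1$, and likewise $\gamma_p(t,\Delta)\asymp_p\Delta^2$ for $t\in[1,2]$ and $|\Delta|\le 3$. This is exactly the integral-representation argument you already deploy for $|z|\le 1$, so the fix is routine, but the justification as stated does not stand.
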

\begin{proof}
We will make use of \eqref{eq:akps} and the following fact: For any continuously differentiable function $h:\R \to \R$ that is
twice-differentiable at all but finitely many points,
\begin{equation}
\label{eq:divformula}
D^h_z(z+\Delta) = h(z+\Delta) - \left[h(z) + h'(z) \Delta\right] = \int_0^{\Delta} (\Delta-t)h''(z+t) \, dt \enspace \text{ for all } \enspace z, \Delta \in \R\,.
\end{equation}
We now prove the lemma by case analysis on $z,\Delta \in \R$. By symmetry, we may assume that $z \geq 0$.

\medskip
\noindent
{\bf Case (1): $z \in [0,1], |\Delta| \leq 4$.}
Here we have $z+\Delta \in [-4,5]$, and on this interval $\gamma_p''(z+\Delta) \asymp 1$.
From \eqref{eq:divformula}, this yields $D_z^{\gamma_p}(z+\Delta) \asymp_p \Delta^2 \asymp \gamma_p(\Delta)$.

\medskip
\noindent
{\bf Case (2): $z \in [0,1], |\Delta| > 4$.} Since $|z+\Delta| > 1$ and $p > 1$, it holds that
\[
   D^{\gamma_p}_z(z+\Delta) = |z+\Delta|^p - (1-p/2) - \frac{p}{2} z^2 - p z |\Delta| \asymp_p |\Delta|^p \asymp \gamma_p(\Delta)\,.
\]

\noindent
{\bf Case (3): $z > 1, |z+\Delta| \geq 1$.}
In this case, we have $D_z^{\gamma_p}(z+\Delta) = D_z^g(z+\Delta)$, where $g(z) \seteq |z|^p$.
From \eqref{eq:akps}, we have $D_z^g(z+\Delta) \asymp_p \gamma_p(|z|,\Delta)$.

\medskip
\noindent
{\bf Case (4): $z > 1, z+\Delta \in [1/2,1]$.} Because $\gamma_p''(y) \asymp_p g''(y)$ for all $y \in [1/2,\infty) \setminus \{1\}$,
\eqref{eq:divformula} gives $D_z^{\gamma_p}(z+\Delta) \asymp_p D_z^g(z+\Delta)$.

\medskip
\noindent
{\bf Case (5): $z > 1, z+\Delta \in [-1,1/2]$.}
Convexity of $\gamma_p$
implies that \[ D^{\gamma_p}_z(-1) \le D^{\gamma_p}_z(z+\Delta) \le
D^{\gamma_p}_z(1/2). \]
We have already argued that $D^{\gamma_p}_z(-1) \asymp_p \gamma_p(|z|, -1-z)$ and $D^{\gamma_p}_z(1/2) \asymp_p \gamma_p(|z|, 1/2-z)$. Because $-1-z \asymp 1/2-z$,
we conclude that $D^{\gamma_p}_z(-1) \asymp_p D^{\gamma_p}_z(1/2) \asymp_p \gamma_p(|z|,\Delta)$, for any $\Delta \in [-1-z, 1/2-z]$.
\end{proof}
 
\section*{Acknowledgments}
Part of this work was conducted while the authors were visiting the Simons Institute for the Theory of Computing.
James R. Lee is supported in part by NSF CCF-2007079 and a Simons Investigator Award.
Yang P. Liu is partially supported by the Google Research Fellowship and NSF DMS-1926686.
Aaron Sidford is supported in part by a Microsoft Research Faculty Fellowship, NSF CCF-1844855, NSF CCF-1955039, a PayPal research award, and a Sloan Research Fellowship.

\bibliographystyle{alpha}
\bibliography{glm-sparsify}

\end{document}